\documentclass[11pt]{article}
\usepackage{graphicx} 
\usepackage{bbm}
\usepackage[margin=1in]{geometry}
\usepackage{color}
\usepackage{amsfonts}
\usepackage{algorithm}
\usepackage{algorithmic}
\usepackage{latexsym, amssymb, amsmath, amscd, amsthm, amsxtra}
\usepackage{mathtools}
\usepackage{enumerate}
\usepackage[all]{xy}
\usepackage{mathrsfs}
\usepackage{fancyhdr}
\usepackage{listings}
\usepackage{hyperref}
\usepackage{enumitem}
\usepackage{multirow}
\usepackage{tikz}

\pagestyle{plain}

\pagestyle{plain}

\def\P{\mathbb{P}}

\def\E{\mathbb{E}}

\def\11{\mathbbm{1}}

\def\ER{Erd\H{o}s-R\'enyi\ }

\def\Eb{\mathbb E}

\newcommand{\Pb}{\mathbb P}
\newcommand{\Qb}{\mathbb Q}

\newtheorem{thm}{Theorem}[section]

\newtheorem{proposition}[thm]{Proposition}

\newtheorem{lemma}[thm]{Lemma}
\newtheorem{cor}[thm]{Corollary}

\newtheorem{remark}[thm]{Remark}
\newtheorem{DEF}[thm]{Definition}


\numberwithin{equation}{section}

\makeatletter
\newenvironment{breakablealgorithm}
{
		\begin{center}
			\refstepcounter{algorithm}
			\hrule height.8pt depth0pt \kern2pt
			\renewcommand{\caption}[2][\relax]{
				{\raggedright\textbf{\ALG@name~\thealgorithm} ##2\par}%
				\ifx\relax##1\relax 
				\addcontentsline{loa}{algorithm}{\protect\numberline{\thealgorithm}##2}%
				\else 
				\addcontentsline{loa}{algorithm}{\protect\numberline{\thealgorithm}##1}%
				\fi
				\kern2pt\hrule\kern2pt
			}
		}{
		\kern2pt\hrule\relax
	\end{center}
}
\makeatother

\hypersetup{colorlinks=true,linkcolor=blue}

\title{Detecting Correlation Efficiently in Stochastic Block Models: Breaking Otter's Threshold in the Entire Supercritical Regime}

\usepackage{authblk}

\author[1]{Guanyi Chen}
\author[1]{Jian Ding\thanks{Partially supported by the National Key R$\&$D program of China (Project No. 2023YFA1010103), the NSFC Key Program (Project No. 12231002) and the New Cornerstone Science Foundation through the XPLORER PRIZE.}}
\author[1]{Shuyang Gong\thanks{Partially supported by the National Key R\&D program of China (Project No. 2020YFA0712902). Part of the work was carried out when S. Gong was visiting Duke University. S. Gong would like to thank Yihong Wu and Jiaming Xu for their kind help and warm guidance during the visit.}}
\author[1]{Zhangsong Li}

\affil[1]{School of Mathematical Sciences, Peking University}

\date{\today}
\begin{document}
\maketitle

\begin{abstract}
    Consider a pair of sparse correlated stochastic block models $\mathcal S(n,\tfrac{\lambda}{n},\epsilon;s)$ subsampled from a common parent stochastic block model with two symmetric communities, average degree $\lambda=O(1)$, divergence parameter $\epsilon\in (0,1)$ and subsampling probability $s$. For all $\epsilon\in(0,1)$ and $\Delta>0$, we construct a statistic based on the combination of two low-degree polynomials and show that there exists a sufficiently small constant $\delta=\delta(\epsilon,\lambda,\Delta)>0$ such that if $\epsilon^2 \lambda s>1+\Delta$ and $s>\sqrt{\alpha}-\delta$ where $\alpha\approx 0.338$ is Otter's constant, this statistic can distinguish this model and a pair of independent stochastic block models $\mathcal S(n,\tfrac{\lambda s}{n},\epsilon)$ with probability $1-o(1)$. We also provide an efficient algorithm that approximates this statistic in polynomial time. 
    
    The crux of our statistic's construction lies in a carefully curated family of multigraphs called \emph{decorated trees}, which enables effective aggregation of the community signal and graph correlation by leveraging the counts of the same decorated tree while suppressing the undesirable correlations among counts of different decorated trees. We believe such construction may be of independent interest.
\end{abstract}

\tableofcontents

\section{Introduction}{\label{sec:intro}}

Graph matching (also known as graph alignment) seeks a correspondence between the vertices of two graphs that maximizes the total number of common edges. When the two graphs are exactly isomorphic to each other, this problem reduces to the classical graph isomorphism problem, for which the best known algorithm runs in quasi-polynomial time \cite{Babai16}. In general, graph matching is an example of the \emph{quadratic assignment problem} \cite{BCPP98}, which was known to be NP-hard to solve or even to approximate \cite{MMS10}. 

Motivated by applications in various applied areas including computational biology \cite{SXB08}, social networks \cite{NS08, NS09}, and natural language processing \cite{HNM05}, a recent line of work is devoted to the study of statistical theory and efficient algorithms for graph matching under statistical models, assuming the two graphs are randomly generated with correlated edges under a hidden vertex correspondence. From a theoretical perspective, perhaps the most studied setting is the correlated \ER graph model \cite{PG11}, in which two correlated \ER graphs are observed, with edges correlated through latent vertex bijection $\pi_*$. Two important and entangling issues for this model, i.e., the information-theoretic threshold and the computational phase transition, have been extensively studied recently. Thanks to collective efforts from the community, we now have obtained rather satisfying understanding on information-theoretic thresholds for the problems of correlation detection and vertex matching \cite{CK16, CK17, HM23, GML21, WXY22, WXY23, DD23a, DD23b, Du25+}. On the side of computation, in extensive works including \cite{BCL+19, DMWX21, FMWX23a, FMWX23b, BSH19, DCK+19, MX20, GM20, GML20+, MRT21, MRT23, MWXY21+, GMS22+, MWXY23, DL22+, DL23+}, substantial progress on algorithms was achieved, which may be roughly summarized as follows: in the sparse regime, efficient matching algorithms were designed when the correlation exceeds the square root of Otter’s constant (the Otter's constant is approximately 0.338) \cite{GML20+, GMS22+, MWXY21+, MWXY23}; in the dense regime, efficient matching algorithms were proposed when the correlation exceeds an arbitrarily small constant \cite{DL22+, DL23+}. The separation between the sparse and dense regimes above, roughly speaking, depends on whether the average degree grows polynomially or sub-polynomially. In addition, another important direction is to establish lower bounds for computation complexity: while it is challenging to prove hardness for a typical instance of the problem even under the assumption of P$\neq$NP, in a recent work \cite{DDL23+} evidence based on the analysis of low-degree polynomials indicates that the aforementioned algorithms may essentially capture the correct computational thresholds. 

Since our current understanding for the information-computation transition of the correlated \ER model is more or less satisfactory assuming belief in the low-degree hardness conjecture, an important future direction is to understand the matching problem in other important correlated random graph models. This research direction is motivated by an important question that whether \emph{additional structural information} in pairs of graphs will enhance algorithmic performance and, consequently, affect the computation threshold. We believe this question holds substantial importance from both applied and theoretical perspectives. On the one hand, realistic networks are usually captured better by more structured graph models, such as the random geometric graph model \cite{WWXY22, GL24+}, the random growing graph model \cite{RS22} and the stochastic block model. On the other hand, the study of algorithmic threshold on structural models enables us to better characterize the \emph{algorithmic universality class} of the correlated \ER model--that is, to identify which models exhibit algorithmic behaviors similar to the correlated \ER model.

In this paper, we consider a pair of correlated sparse stochastic block models (SBMs) that are subsampled from a common parent stochastic block model with two symmetric communities, average degree $\lambda=O(1)$, divergence parameter $\epsilon$ and subsampling probability $s$. Specifically, denoting by $\operatorname{U}_n$ the set of unordered pairs $(i,j)$ where $1\le i<j\le n$, we can define this model as follows. 

\begin{DEF}[Stochastic block model] {\label{def-SBM}}
    Given an integer $n\ge 1$ and two parameters $\lambda>0, \epsilon \in (0,1)$, we define a random graph $G$ on $[n]=\{ 1,\ldots,n \}$ as follows. First, we select a labeling $\sigma_* \in \{ -1,+1 \}^{n}$ uniformly at random. For each distinct pair $(i,j) \in \operatorname{U}_n$, we independently add an edge $(i,j)$ with probability $\frac{(1+\epsilon)\lambda}{n}$ if $\sigma_*(i) = \sigma_*(j)$, and with probability $\frac{(1-\epsilon)\lambda}{n}$ if $\sigma_*(i) \neq \sigma_*(j)$. We denote $\mathcal{S}(n,\frac{\lambda}{n},\epsilon)$ to be the law of $G$.
\end{DEF}

\begin{DEF}[Correlated stochastic block model] {\label{def-correlated-SBM}}
    Given an integer $n\ge 1$ and three parameters $\lambda>0, \epsilon,s \in (0,1)$, let $J_{i,j}$ and $K_{i,j}$ be independent Bernoulli variables with parameter $s$ for $(i,j)\in \operatorname{U}_n$. In addition, let $\pi_*$ be an independent uniform permutation on $[n]=\{1,\dots,n\}$. Then, we define a triple of correlated random graphs $(G,A,B)$ such that $G$ is sampled from a stochastic block model $\mathcal{S}(n,\tfrac{\lambda}{n},\epsilon)$, and conditioned on $G$ (note that we identify a graph with its adjacency matrix)
    \begin{align*}
        A_{i,j} = G_{i,j}J_{i,j} \,, \quad B_{i,j} = G_{\pi_*^{-1}(i),\pi_*^{-1}(j)} K_{i,j} \,.
    \end{align*}
    We denote the joint probability of $(\pi_*,\sigma_*,G,A,B)$ by $\Pb_{*,n}:=\Pb_{*,n,\lambda,\epsilon,s}$, and the marginal probability of $(A,B)$ by $\Pb_n:=\Pb_{n,\lambda,\epsilon,s}$. We use $\mathcal{S}(n,\tfrac{\lambda}{n},\epsilon;s)$ to denote distribution of the pair $(A,B)$.
\end{DEF}

Given two graphs $(A,B)$ on the vertex set $[n]$, our goal is to study the following hypothesis testing problem: determine whether $(A,B)$ is sampled from $\mathbb P_{n}$ or $\mathbb Q_n$, where $\mathbb Q_n$ is the distribution of two independent stochastic block models $\mathcal S(n,\tfrac{\lambda s}{n},\epsilon)$. In our previous work \cite{CDGL24+} (where we considered symmetric SBMs with $k$ communities for $k\geq 2$), we considered the problem of testing $\Pb_n$ against $\widetilde{\Qb}_n$, where $\widetilde{\Qb}_n$ is the law of two independent \ER graphs $\mathcal G(n, \tfrac{\lambda s}{n})$. We showed that in the subcritical regime $\epsilon^2 \lambda s<1$ where community recovery in a single graph is (information-theoretically) impossible, there is evidence suggesting that no efficient algorithm can distinguish $\mathbb P_n$ and $\widetilde{\Qb}_n$ provided $s<\sqrt{\alpha}$, where $\alpha$ is Otter's threshold. In addition, \cite[Theorem~2]{MNS15} implies that when $k=2$ (as we focus on in this paper) $\widetilde{\Qb}_n$ and $\Qb_n$ are mutually contiguous in the subcritical regime (although their result focuses on a single graph, extending it to two \emph{independent} graphs is straightforward). Altogether, we see that there are evidences suggesting that no efficient algorithm can distinguish any two measures from $\{ \mathbb P_n,\mathbb Q_n,\widetilde{\mathbb Q}_n \}$ as long as $s<\sqrt{\alpha}$. This aligns with the lower bound established in \cite{DDL23+} for the problem of distinguishing a pair of correlated \ER graphs from a pair of independent \ER graphs. Thus, we see that in the subcritical regime, this model belongs to the same algorithmic universality class with the correlated \ER model. In stark contrast, in this paper, we focus on the supercritical regime $\epsilon^2 \lambda s >1$ where efficient community detection and (weak) recovery in each individual graph are possible. Our main result stated below suggests that the testing problem is strictly easier than that in the subcritical regime in the algorithmic sense.

\begin{thm}{\label{MAIN-THM}}
    Suppose $\epsilon\in(0,1)$ and $\Delta>0$ are two constants. There exist two positive constants $\delta=\delta(\epsilon , \Delta)$ and $C=C(\epsilon,\delta,\Delta)$ such that when $\epsilon^2\lambda s = 1 + \Delta$ and $s>\sqrt{\alpha}-\delta$ where $\alpha \approx 0.338$ is Otter's threshold, there exists a statistic $f=f(A,B)$ that can distinguish $\mathbb P_n$ and $\mathbb Q_n$ with probability tending to $1$ as $n\to\infty$. In addition, this statistic $f$ is a polynomial of degree $\omega(\log n)$ that can be approximately computed in time $O(n^{C})$, where the approximation can also distinguish $\mathbb P_n$ and $\mathbb Q_n$ with probability tending to $1$.
\end{thm}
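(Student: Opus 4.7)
The plan is to construct the statistic in the form $f(A,B) = \sum_{H \in \mathcal{H}} w_H \, N_H(A) \, N_H(B)$, where $\mathcal{H}$ is a carefully curated family of \emph{decorated trees}: trees on $K = \omega(\log n)$ vertices whose vertex set carries additional $\{+1,-1\}$ labels that serve as a template for the latent community assignment $\sigma_*$. Each count $N_H(A)$ is a decoration-weighted sum over injective embeddings $\phi:V(H)\to[n]$, consisting of the edge product $\prod_{uv\in E(H)}A_{\phi(u)\phi(v)}$ multiplied by a vertex-level low-degree estimator of $\sigma_*(\phi(v))$ (for instance the sign of a short non-backtracking walk sum) raised to the power given by the decoration at $v$. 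Thus each $N_H$ is itself a polynomial in $A$, and $f$ is a product-combination of two such polynomials, matching the ``combination of two low-degree polynomials'' description promised in the abstract.

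First I would isolate the signal of a single decorated tree. Under $\Pb_n$ and conditional on $(\sigma_*,\pi_*)$, each edge of $H$ appears simultaneously in $A$ and $B\circ\pi_*$ with probability of order $s^2(1\pm\epsilon)\lambda/n$, and the decoration contributes an additional per-edge multiplicative factor of the form $1+c\epsilon^2$ once averaged against the community estimators, provided these estimators are informative. The latter holds above the Kesten--Stigum threshold $\epsilon^2\lambda s>1$, which is automatic for the $\lambda$ of the theorem. Comparing this with the $\Qb_n$-diagonal second moment $\E_{\Qb_n}[N_H(A)^2]\,\E_{\Qb_n}[N_H(B)^2]$ shows that each edge of $H$ gains an extra factor $(1+c\epsilon^2)$ in the per-tree signal-to-noise ratio compared with the undecorated tree counts used in the \Erdos--R\'enyi analysis. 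Iterating over the $K$ edges of $H$ gives a per-tree gain of $(1+c\epsilon^2)^K$, which, when summed over a sufficiently rich $\mathcal{H}$, overcomes the Otter loss $\alpha^{K/2}$ and therefore distinguishes $\Pb_n$ from $\Qb_n$ even for $s$ strictly below $\sqrt{\alpha}$, as soon as $\lambda$ is large enough.

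The main obstacle, and the place where the curation of $\mathcal{H}$ is essential, is to control the $\Qb_n$-variance of $f$, and in particular its off-diagonal contribution
\begin{equation*}
    \sum_{H\ne H'} w_H w_{H'}\, \E_{\Qb_n}\!\bigl[N_H(A) N_{H'}(A)\bigr]\, \E_{\Qb_n}\!\bigl[N_H(B) N_{H'}(B)\bigr].
\end{equation*}
Expanding each expectation via a multigraph decomposition over the overlap structures of pairs of embeddings reveals that, in the undecorated case, exactly these cross-terms force the Otter barrier $s\geq\sqrt{\alpha}$. The decorations give the needed slack: unless the overlap of the two embeddings is \emph{decoration-compatible} --- meaning the $\pm1$ labels of $H$ and $H'$ agree on the overlapping vertices --- the sign contributions of the community estimators average to zero under the uniform distribution of $\sigma_*$. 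I would therefore choose $\mathcal{H}$ to consist of rigidly decorated trees admitting no nontrivial sign-preserving automorphism and pruned so that decoration-compatible overlaps between distinct trees are combinatorially forced to be negligible. Verifying that the surviving off-diagonal mass is genuinely dominated by the diagonal signal requires a careful term-by-term tracking through the multigraph expansion, and constitutes the technical heart of the argument.

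Finally, since $|E(H)|=\omega(\log n)$ the naive evaluation of $N_H$ is not polynomial time, so I would replace each $N_H$ by an Alon--Yuster--Zwick color-coding estimator: randomly partition $[n]$ into $|V(H)|$ color classes, restrict to colorful embeddings, evaluate their count by tree dynamic programming in time $n^{O(1)}$, and average over independent colorings. Propagating the extra variance introduced by color-coding through the first- and second-moment bounds above produces an estimator $\tilde f$ evaluable in time $n^{O(1)}$ that still distinguishes $\Pb_n$ from $\Qb_n$ with probability $1-o(1)$. The constants $(\delta,\Delta,C)$ in the theorem are then fixed by simultaneously balancing the community-signal boost against the Otter loss, the off-diagonal cancellations, and the color-coding variance.
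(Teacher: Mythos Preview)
Your proposal captures the right high-level intuition---combine community signal with tree counting---but the specific construction diverges from the paper's in a way that leaves a genuine gap.

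\textbf{What the paper actually does.} The paper's ``decorated tree'' is \emph{not} a tree with $\pm1$ vertex labels. It is a tree $\mathsf T(\mathbf H)$ on $\aleph$ vertices together with a sparse \emph{pairing} $\mathsf P(\mathbf H)$ of $\iota\aleph$ vertex pairs (with $\iota$ tiny), and the count $\phi_S(A)$ sums over multigraphs $S$ consisting of $\mathsf T(S)$ plus long self-avoiding paths of length $\ell$ joining each pair. No separate community estimator $\hat\sigma$ is ever computed; the paths themselves carry the community signal (via Lemma~3.3, a length-$\ell$ path between $u$ and $v$ contributes a factor behaving like $(\epsilon^2\lambda s/n)^{\ell/2}\sigma_u\sigma_v$). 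The entire object is one polynomial whose moments are analyzed directly.

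\textbf{Where your argument breaks.} Your variance step asserts that if the $\pm1$ decorations of $H$ and $H'$ disagree on an overlap vertex, ``the sign contributions of the community estimators average to zero under the uniform distribution of $\sigma_*$.'' But $\hat\sigma(\phi(v))$ is a function of the same graph $A$ that supplies the edge product $\prod A_{\phi(u)\phi(v)}$, so you cannot average over $\sigma_*$ while holding the edge product fixed. The paper flags exactly this obstruction in Section~1.2 and Section~2.4: conditioning on $(\hat\sigma_A,\hat\sigma_B)$ ``will force enumerations of certain types of non-isomorphic graphs to have correlations tending to $1$ with each other.'' Your off-diagonal cancellation is therefore not justified, and with $2^K$ decorations per tree the surviving cross-terms swamp the signal.

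\textbf{Why sparsity and separation are essential.} The paper gets orthogonality between different pairings $W_{\mathtt i},W_{\mathtt j}$ of the \emph{same} tree only by enforcing (Theorem~2.8) that $|\mathsf{Vert}(W_{\mathtt i})\triangle\mathsf{Vert}(W_{\mathtt j})|\ge\iota\aleph/2$ and that all pairing vertices are mutually at tree-distance $\ge(\log\log\iota^{-1})^{10}$; Lemma~3.6 then turns this into an $\epsilon^{\Theta(\text{dist}\cdot|\triangle|)}$ damping of the cross term. This mechanism fails completely if you decorate every vertex, since adjacent decorated vertices give only $\epsilon^{O(1)}$ decay. Correspondingly the paper's gain is not your $(1+c\epsilon^2)^K$ per tree but only $\exp(\iota(\log\log\iota^{-1})^4)$ per vertex---enough to push $\delta(\epsilon)$ strictly positive, but far smaller than your claim, which is why the result is stated only for $s>\sqrt\alpha-\delta$ with $\delta$ small and $\lambda$ large.
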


\begin{remark}
    Our result suggests that when the signal for community structure of a single graph is strong enough such that (weak) community recovery in a single graph is possible on a positive fraction of vertices, the recovered community structure can be leveraged to enhance the task of testing network correlation. We would like to emphasize that this is by all means a highly nontrivial task, and we refer to Sections~\ref{subsec:innovations} and \ref{subsec:discussions} for elaborative discussions on the challenges. 
\end{remark}

\begin{remark}
    In \cite{CR24}, the authors seem to tend to feel that in the logarithmic degree regime the correlated SBMs belong to the same algorithmic universality class of the sparse correlated \ER model, and thus all inference tasks are computationally impossible when $s<\sqrt{\alpha}$. However, our result suggests a different possibility by demonstrating a new phase transition phenomenon in this model in the constant-degree regime.
\end{remark}

\subsection{Other related works}{\label{subsec:backgrounds}}

$\textbf{Community recovery in sparse stochastic block models.}$ The stochastic block model (SBM) is a fundamental probabilistic generative model for networks with community structure. Introduced in \cite{HLL83}, it has garnered significant attention in recent decades. In particular, it serves as a natural theoretical testbed for evaluating and comparing clustering algorithms on average-case networks (see, e.g., \cite{DF89, BCL+84, Boppana87}). The SBM provides a framework for understanding when community signal can be extracted from network data, as it exhibits sharp informational and computational phase transitions for various inference tasks. These phase transitions were first conjectured by \cite{DKMZ11} and were later rigorously established in \cite{Massoulie14, MNS18, BLM15, BDG+16, AS16, AS18, BMNN16, AS18, CS21+, MSS23, MSS24+}. 

Here, we focus on the SBM with two symmetric communities, arguably the simplest setting. In this setting, a community recovery algorithm takes as input the graph $G \sim \mathcal S(n,\tfrac{\lambda}{n},\epsilon)$ without knowledge of the underlying community labeling $\sigma_*$ (the ground truth), and outputs an estimated community labeling $\widehat{\sigma}$. The success of an algorithm is measured by the overlap between the estimated community labeling and the ground truth, defined as $\mathsf{Overlap}(\sigma_*,\widehat \sigma) := \frac{1}{n}| \sum_{i=1}^{n} \sigma_*(i) \widehat{\sigma}(i)|$ (we take the absolute value in this formula since the labelings $\sigma_*$ and $-\sigma_*$ represent the same partition of communities, and therefore it is only possible to recover $\sigma_*$ up to a global flip). We say that an estimator $\widehat{\sigma}$ achieves \emph{weak recovery} if $\mathsf{Overlap}(\sigma_*,\widehat \sigma)$ stays uniformly away from 0 as $n\to \infty$. For the stochastic block model with two symmetric communities, it is well known that in the constant-degree regime there is a sharp information-theoretic and computational threshold for weak recovery in the SBM \cite{MNS15, MNS18, Massoulie14} at $\epsilon^2 \lambda = 1$. This threshold is called the Kesten-Stigum (KS) threshold, which was first discovered in the context of the broadcasting process on trees \cite{KS66}. We also point out that when the model involves $k$ communities for $k\geq 2$, the situation becomes more sophisticated. For instance, for $k \geq 5$, the information-theoretic threshold drops below the KS threshold \cite{MSS23}, while evidence suggests that the computational threshold remains unchanged \cite{HS17}, indicating the emergence of a statistical-computational gap.

$\textbf{Correlated stochastic block models.}$ The goal of our work is to explore how the network correlation detection threshold at $\sqrt{\alpha}$ is influenced when input data consists of multiple \emph{correlated} SBMs. First studied in \cite{OGE16}, this model of correlated SBMs is a natural generalization of correlated \ER random graphs, where in particular each marginal \ER graph is now replaced by a stochastic block model $\mathcal S(n,\tfrac{\lambda s}{n},\epsilon)$. This model was then studied in \cite{RS21, GRS22, YSC23, CR24, CDGL24+} and in particular \cite{CDGL24+} provided evidence that in the subcritical regime $\epsilon^2 \lambda s<1$, the computational correlation detection threshold is still $\sqrt{\alpha}$.

However, as shown in Theorem~\ref{MAIN-THM}, we show that in the supercritical regime (i.e., when the ``community signal strength'' is strictly above the Kesten-Stigum threshold), the correlation detection threshold is indeed influenced by the community structure and thus is strictly below Otter's threshold. 
Intuitively, this effect arises because knowing the community labels of $A$ and $B$ restricts the latent matching $\pi_*$: it can only pair each vertex in $A$ with another vertex in $B$ sharing the same label.

$\textbf{Low-degree polynomial algorithms.}$ The main idea behind our algorithm is to construct a polynomial that is a \emph{combination} of two low-degree polynomials, one is from counting self-avoiding/non-backtracking paths (which was widely used in efficient community recovery \cite{MNS16, MNS18, HS17}) and the other is from counting trees (which was used in \cite{MWXY21+, MWXY23} for correlation detection/graph matching). Inspired by the sum-of-squares hierarchy, the low-degree polynomial method offers a promising framework for establishing computational lower bounds in high-dimensional inference problems \cite{BHK+19, HS17, HKP+17, Hopkins18}. Roughly speaking, to study the computational efficiency of the hypothesis testing problem between two probability measures $\Pb$ and $\Qb$, our idea is to construct a low-degree polynomial $f$ (usually with degree $O(\log n)$ where $n$ is the size of data) that separates $\Pb$ and $\Qb$ in a certain sense. The key quantity is the low-degree advantage
\begin{equation}{\label{eq-def-SNR}}
    \mathsf{Adv}(f) := \frac{\mathbb E_{\Pb}[f]}{ \sqrt{ \mathbb E_{\Qb}[f^2] } } \,.
\end{equation}
The low-degree polynomial method suggests that, if the low-degree advantage of $f$ is uniformly bounded for all $f$ with degree $O(\log n)$, then no polynomial-time algorithm can strongly distinguish $\Pb$ and $\Qb$. This approach is appealing partly because it has yielded tight hardness results for a wide range of problems. Prominent examples include detection and recovery problems such as planted clique, planted dense subgraph, community detection, sparse-PCA and tensor-PCA (see \cite{HS17, HKP+17, Hopkins18, KWB22, SW22, DMW23+, BKW19, MW21+, DKW22, BAH+22, MW22, DMW23+, KMW24+}), optimization problems such as maximal independent sets in sparse random graphs \cite{GJW20, Wein22}, and constraint satisfaction problems such as random $k$-SAT \cite{BH22}. 

Another appealing feature of low-degree polynomial method is its ability to provide valuable guidance for designing efficient algorithms. Specifically, if there exists a polynomial $\mathrm{deg}(f)=O(\log n)$ such that $\mathsf{Adv}(f)=\omega(1)$, then it strongly suggests the existence of an efficient algorithm that can distinguish $\Pb$ and $\Qb$ using $f$. Since this approach has proven successful in both community recovery \cite{HS17} and graph matching \cite{MWXY21+, MWXY23}, it is tempting to find a low-degree polynomial with a large low-degree advantage. However, our approach is more subtle as the polynomial we construct is a \emph{combination} of two low-degree polynomials: one for community recovery and the other for correlation detection. As a result, unlike in the standard low-degree framework, the polynomial we construct has a degree of $\omega(\log n)$ but can still be computed efficiently.

\subsection{Key challenges and algorithmic innovations}{\label{subsec:innovations}}

The major algorithmic innovation of this work is a new construction based on \emph{subgraph counts}. To this end, we first briefly explain how the threshold $\sqrt{\alpha}$ emerges in the detection problem in correlated \ER models. Suppose that marginally both $A$ and $B$ are \ER graphs $\mathcal G(n,\tfrac{\lambda s}{n})$ and their edge correlation is given by $\operatorname{Cov}(A_{\pi_*(i),\pi_*(j)},B_{i,j})=[1+o(1)] s$. A natural attempt is to count the (centered) graphs $\mathbf H$ in both $A$ and $B$ for each unlabeled graph $\mathbf H$, i.e., we consider the statistic
\begin{align*}
    g_{\mathbf H} = \Big( \sum_{ S \cong \mathbf H } \prod_{(i,j) \in E(S)} (A_{i,j}-\tfrac{\lambda s}{n}) \Big) \Big( \sum_{ K \cong \mathbf H } \prod_{(i,j) \in E(K)} (B_{i,j}-\tfrac{\lambda s}{n}) \Big) \,,
\end{align*}
where $S \cong \mathbf H$ denotes that we will sum over all subgraphs of the complete graph $\mathsf K_n$ that are isomorphic to $\mathbf H$, and $E(S)$ is the edge set of $S$. A standard calculation yields that (under the law of correlated \ER graphs) the expectation of $g_{\mathbf H}$ with suitable standardization is given by $s^{|E(\mathbf H)|}$. In addition, the signal contained in each $g_{\mathbf H}$ is approximately non-repeating, which implies that $g_{\mathbf H}$'s constitute a family of ``almost orthogonal'' statistics. Thus, if we take our statistic to be a suitable linear combination of $\{ g_{\mathbf H}: \mathbf H \in \mathcal H \}$ where $\mathcal H$ is a family of unlabeled graphs, the total signal-to-noise ratio is given by
\begin{align*}
    \sum_{ \mathbf H \in \mathcal H } s^{2|E(\mathbf H)|} \,.
\end{align*}
The key idea is that since marginally $A$ and $B$ are sparse graphs, we expect a ``proper'' choice of $\mathcal H$ should guarantee that all $\mathbf H \in \mathcal H$ has low edge density, and thus it can be shown that under this restriction, the cardinality of $|\mathcal H|$ should be approximately the same as the cardinality of unlabeled trees, which grows at rate $\alpha^{-1}$ \cite{Otter48}. This informally suggests that $s=\sqrt{\alpha}$ is the ``correct'' threshold of the detection problem in correlated \ER graphs.  

We now focus on our case where marginally $A$ and $B$ are supercritical SBMs such that weak community recovery in $A$ and $B$ is possible. The initial motivation of our result is the following (albeit simple) observation: assuming that all the community labelings in $A$ and $B$ (we denote them as $\sigma_A$ and $\sigma_B$) are known to us, we can count a larger class $\mathcal H'$ which consists of all unlabeled trees \emph{decorated with $\{ -1,+1 \}$} labels. This is because when counting subgraphs $S,K$ in $A,B$ respectively we can regard the community labels $\sigma_{A},\sigma_{B}$ as a ``decoration'' of $S,K$, and the information from the community labels guarantees that if $S$ and $K$ are isomorphic graphs but have non-isomorphic label decorations, the signal contained in counting $S$ and $K$ are nearly orthogonal. Thus, a naive attempt is to first run the weak community recovery algorithm in both $A$ and $B$ which produces an estimator $(\widehat{\sigma}_A, \widehat{\sigma}_B)$ of $(\sigma_A,\sigma_B)$, and then counting $\mathcal H'$ in $A$ and $B$ simply by viewing $(\widehat{\sigma}_A, \widehat{\sigma}_B)$ as its ``decorations''. However, there are several obstacles that impede the rigorous analysis of this naive approach. Firstly, the subgraph counting procedure on $\mathcal H'$ is highly fragile and thus we cannot simply ignore those vertices $i$ such that $\widehat{\sigma}_A(i) \neq \sigma_A(i)$ or $\widehat{\sigma}_B(i) \neq \sigma_B(i)$ (which constitutes a positive fraction of all vertices). Moreover, the community recovery step has a strong influence on later subgraph counts, which means that $(\widehat{\sigma}_A, \widehat{\sigma}_B)$ is strongly correlated with $g_{\mathbf H}(A,B)$. In fact, it seems that conditioning on $(\widehat{\sigma}_A, \widehat{\sigma}_B)$ forces the enumerations of certain types of non-isomorphic graphs to have correlations tending to $1$ with each other. 

To address these issues, rather than performing the two procedures in succession, we propose a unified approach that runs two procedures \emph{simultaneously}. The key conceptual innovation in our work is to construct a family of subgraphs that \emph{combines} the subgraph counts relevant to both tasks. Specifically, our approach involves counting a carefully chosen family of unlabeled multigraphs, which (informally speaking) is formed by attaching non-backtracking paths to an unlabeled tree; the flexibility to choose the place of attachment enriches the enumeration of such multigraphs and helps us to gain an extra exponential factor compared to the number of all unlabeled trees. Furthermore, we can approximate the count of such multigraphs with sufficient accuracy in polynomial time by leveraging the method of color coding \cite{AYZ95, HS17}. The major difficulty lies in analyzing the behavior of our statistic. While centering the adjacency matrices and counting signed subgraphs are helpful, we still have excessive correlations among different subgraph counts and these correlations are difficult to control. To address this challenge, a further innovation of this work is to count a special family $\mathcal H$ of all such multigraphs, which we call \emph{decorated trees}; see Definition~\ref{def-decorated-trees} for the formal definition. As discussed in Section~\ref{subsec:discussions}, this choice plays a crucial role in curbing the undesired correlation between different subgraph counts. Moreover, despite the fact that we pose several structural requirements on such decorated trees, by choosing the parameters appropriately, we can nevertheless ensure that the number of such decorated trees still grows exponentially faster than that of unlabeled trees. 

Finally, we point out that our work is significantly more complicated than the detection and matching algorithms for correlated \ER graphs \cite{MWXY21+, MWXY23} and their extension to correlated SBMs \cite{CR24} (which are based on counting a specific family of unlabeled trees) at a technical level. Indeed, a key simplification in \cite{MWXY21+, MWXY23} comes from the fact that the edges within each graph are (marginally) independent, which allows for straightforward cancellations in many parts of their analysis. In contrast, our setting involves edges that are correlated through latent community labels. Thus, we need to carefully analyze the correlation from the latent matching and the latent community labels simultaneously. The authors of~\cite{CR24} made initial progress on this challenge; however, their analysis focus on the logarithmic degree regime, where the strong community signal enables a de-correlation technique on tree counts. In our constant-degree regime, the community signal is inherently weaker, necessitating a far more delicate approach. In addition, as we hope to break the tree-counting threshold, our key innovation is to construct a family of unlabeled graphs that grow faster than unlabeled trees and allow us to leverage the community structure more effectively. Our choice of decorated trees is substantially more intricate than their choice of trees, and our decorated trees have size $\omega(\log n)$, which introduces a significantly larger number of potential overlapping structures between different decorated trees. Handling these overlapping structures poses a major technical challenge, as it requires highly delicate treatment to control their enumeration and  correlation. We provide a brief overview of how we control such overlapping structures in Section~\ref{subsec:discussions}.

\subsection{Discussions and perspectives}{\label{subsec:open-probs}}

Our work reiterates a number of future research directions as we discuss below.

\underline{Extension to general block models.} Although we believe that our results and many of our arguments can be extended to the more general setting of correlated stochastic block models with $k>2$ communities, certain parts of our proof rely crucially on the assumption of a two-community model. We expect that establishing analogous results in the general case would require substantial additional effort and thus we leave this direction for future work.

\underline{The partial recovery problem.} A natural next step is to achieve partial recovery below the Otter threshold by combining our approach with the methods of \cite{MWXY23} (which solves the partial recover problem in correlated \ER graphs by counting a specific family of unlabeled trees called chandeliers). A promising strategy would be to adapt these methods by counting decorated chandeliers in the correlated SBM setting. However, additional technical challenges arise in the analysis of this approach due to the more complicated overlapping structures, so we leave this goal for future work.

\underline{Settling the sharp computational threshold.} We conjecture that, at least for the correlated SBMs with two symmetric communities $\mathcal S(n,\tfrac{\lambda}{n};\epsilon,s)$, the exact computational threshold depends only on $\kappa$, where $\kappa=\kappa(\lambda,\epsilon,s)$ is the optimal fraction of vertices whose community labels can be recovered in a single SBM $\mathcal S(n,\tfrac{\lambda s}{n};\epsilon)$.\footnote{We note that the precise value of $\kappa$ was determined only for sufficiently large $\lambda$ \cite{MNS16}.} Proving this conjecture seems to require ideas beyond the scope of our current work, so we also leave it for future work.

\subsection{Notation and paper organization}{\label{subsec:notations}}

In this subsection, we record a list of notations that we shall use throughout the paper. Let $\mathrm{Ber}(p)$ be the Bernoulli distribution with parameter $p$, and let $\mathrm{Binom}(n,p)$ be the binomial distribution with $n$ trials and success probability $p$. Recall that $\Pb=\Pb_n,\Qb=\Qb_n$ are two probability measures on a pair of random graphs on $[n]=\{1,\ldots,n\}$. Denote $\mathfrak{S}_n$ the set of permutations in $[n]$ and $\mu=\mu_n$ the uniform distribution on $\mathfrak S_n$. In addition, denote by $\nu=\nu_n$ the uniform distribution on $\{-1,+1\}^n$. For two probability measures $\mu$ and $\nu$, we define $\mathsf{TV}(\mu,\nu)$ to be their total variation distance. We will use the following notation conventions for graphs.
\begin{itemize}
    \item {\em Labeled graphs}. Denote by $\mathsf K_n$ the complete graph with vertex set $[n]$ and edge set $\operatorname{U}_n$. For any graph $H$, let $V(H)$ denote the vertex set of $H$ and let $E(H)$ denote the edge set of $H$. For $v \in V(H)$, denote $\mathsf{Deg}_H(v)$ as the degree of $v$ in $H$, i.e. the number of edges connected to $v$ in $H$. We say $H$ is a subgraph of $G$, denoted by $H\subset G$, if $V(H) \subset V(G)$ and $E(H) \subset E(G)$. We say $\varphi:V(H) \to V(S)$ is an injection, if for all $(i,j) \in E(H)$ we have $(\varphi(i),\varphi(j)) \in E(S)$. For $H,S \subset \mathsf K_n$, denote by $H \cap S$ the graph with vertex set given by $V(H) \cap V(S)$ and edge set given by $E(H)\cap E(S)$, and denote by $S \cup H$ the graph with vertex set given by $V(H) \cup V(S)$ and edge set given by $E(H) \cup E(S)$. Let $H \triangle S$ be the graph induced by all the edges $E(H) \triangle E(S)$. For any graph $H$, denote the excess of $H$ by $\tau(H)=|E(H)|-|V(H)|$. Given $u \in V(H)$, define $\mathsf{Nei}_H(u)$ to be the set of neighbors of $u$ in $H$. For two vertices $u,v\in V(H)$, we define $\mathsf{Dist}_{H}(u,v)$ to be the graph distance. Denote the diameter of a connected graph $H$ by $\mathsf{Diam}(H)=\max_{u,v \in V(H)} \mathsf{Dist}_H(u,v)$. 
    \item {\em Graph isomorphisms and unlabeled graphs.} Two graphs $H$ and $H'$ are isomorphic, denoted by $H\cong H'$, if there exists a bijection $\pi:V(H) \to V(H')$ such that $(\pi(u),\pi(v)) \in E(H')$ if and only if $(u,v)\in E(H)$. Denote by $\mathcal H$ the isomorphism class of graphs; it is customary to refer to these isomorphic classes as unlabeled graphs. Let $\operatorname{Aut}(H)$ be the number of automorphisms of $H$ (graph isomorphisms to itself). For any graph $H$, define $\mathsf{Fix}(H)=\{ u \in V(H): \varphi(u)=u, \forall \varphi \in \operatorname{Aut}(H) \}$.
    \item {\em Vertex induced subgraphs}. For a graph $H=(V,E)$ and a subset $A \subset V$, define $H_A =(A,E_A)$ to be the vertex induced subgraph of $H$ in $A$, where $E_A = \{ (u,v) \in E: u,v \in A \}$. Also, define $H_{\setminus A} = (V,E_{\setminus A})$ to be the subgraph of $H$ obtained by deleting all edges with both endpoints in $A$. Note that $E_A \cup E_{\setminus A} = E$.
    \item {\em Isolated vertices}. For $u \in V(H)$, we say $u$ is an isolated vertex of $H$ if there is no edge in $E(H)$ incident to $u$. Denote $\mathcal I(H)$ as the set of isolated vertices of $H$.
    \item {\em Paths, self-avoiding paths and non-backtracking paths.} We say a subgraph $H \subset \mathsf K_n$ is a path with endpoints $u,v$ (possibly with $u=v$), if there exist $w_1, \ldots, w_m \in [n] \neq u,v$ such that $V(H)=\{ u,v,w_1,\ldots,w_m \}$ and $E(H)=\{ (u,w_1), (w_1,w_2) \ldots, (w_m,v) \}$ (we allow the occurrence of multiple vertices or edges). We say $H$ is a self-avoiding path if $w_0, w_1,\ldots,w_m,w_{m+1}$ are distinct (where we denote $w_0=u$ and $w_{m+1}=v$), and we say $H$ is a non-backtracking path if $w_{i+1} \neq w_{i-1}$ for $1 \leq i \leq m$. Denote $\operatorname{EndP}(P)$ as the set of endpoints of a path $P$. 
    \item {\em Cycles and independent cycles.} We say a subgraph $H$ is an $m$-cycle if $$V(H)=\{ v_1, \ldots, v_m \}\,,E(H)=\{ (v_1,v_2), \ldots, (v_{m-1},v_m), (v_m,v_1) \}.$$ 
    For a subgraph $K \subset \mathsf K_n$, we denote $\mathsf{Cycle}(K)$ as the set of cycles in $K$. For a subgraph $K\subset H$, we say $K$ is an independent cycle of $H$, if $K$ is a cycle and no edge in $E(H)\setminus E(K)$ is incident to $V(K)$. Denote $\mathsf{Cycle}_{\mathsf{ind}}(H)$ be the set of independent cycles in $H$.
    \item {\em Leaves.} A vertex $u \in V(H)$ is called a leaf of $H$, if the degree of $u$ in $H$ is $1$; denote $\mathcal L(H)$ as the set of leaves of $H$. 
    \item {\em Trees and unlabeled rooted trees}. We say a graph $\mathbf T=(V(\mathbf T),E(\mathbf T))$ is a tree, if $\mathbf T$ is connected and has no cycles. We say a pair $(\mathbf T,\mathfrak{R}(\mathbf T))$ is an unlabeled rooted tree with root $\mathfrak R(\mathbf T)$, if $\mathbf T$ is a tree and $\mathfrak R(\mathbf T) \in V(\mathbf T)$. For an unlabeled rooted tree $\mathbf T$ and $u\in V(\mathbf T)$, we define $\mathsf{Dep}_\mathbf T(u)=\mathsf{Dist}_\mathbf T(\mathfrak R(\mathbf T),u)$ to be the depth of $u$ in $\mathbf T$, and define $\mathsf{Depth}(\mathbf T) = \max_{u \in V(\mathbf T)} \mathsf{Dep}_\mathbf T(u)$. For $u,v \in V(\mathbf T)$, denote by $u \rightarrow v$ (or equivalently, $v \leftarrow u$) if $v$ is a child of $u$, and denote by $u \hookrightarrow v$ (or equivalently, $v \hookleftarrow u$) if $v$ is a descendant of $u$. In addition, denote $\mathfrak p_{\mathbf T}(u,v)$ as the shortest path between $u$ and $v$ in $\mathbf T$. For an unlabeled rooted tree $(T,\mathfrak R(\mathbf T))$ and $u \in V(\mathbf T)$, define $\mathsf{Branch}_\mathbf T(u) = \mathsf{Deg}_\mathbf T(u) - \mathbf{1}_{\{ u \neq \mathfrak R(\mathbf T)\}}$. For convenience, denote $\mathcal L'(\mathbf T) = \mathcal L(\mathbf T) \setminus \{\mathfrak R(\mathbf T)\}$. 
    \item {\em Descendant tree.} For an unlabeled rooted tree $(\mathbf T,\mathfrak R(\mathbf T))$ and $u \in V(\mathbf T)$, denote $\mathsf{Des}_\mathbf T(u)$ the descendant tree of $u$ in $\mathbf T$, that is, the unlabeled rooted tree with root $u$ and with vertices given by the descendants of $u$ in the unlabeled rooted tree $\mathbf T$. We call $\mathsf{Des}_\mathbf T(u)$ the descendant tree of $u$ in $\mathbf T$. For unlabeled rooted trees $\mathbf T,\mathbf T'$ we write $\mathbf T \hookrightarrow \mathbf T'$ (or equivalently, $\mathbf T' \hookleftarrow \mathbf T$) if $\mathbf T'$ is a descendant tree of $\mathbf T$.
    \item {\em Subtrees of the root.} For an unlabeled rooted tree $(\mathbf T,\mathfrak R(\mathbf T))$, denote 
    \begin{align}\label{def-sub-on-root}
         \mathsf{Subs}(\mathbf T) = \big\{ \mathsf{Des}_\mathbf T(u): u \in \mathsf{Nei}_\mathbf T(\mathfrak R(\mathbf T)) \big\} \,.
    \end{align}
    as the set of subtrees of the root $\mathfrak R(\mathbf T)$. Also denote
    \begin{align}\label{def-sub-on-root-size-i}
         \mathsf{Subs}_i(\mathbf T) = \big\{ \mathbf T' \in \mathsf{Subs}(\mathbf T): |V(\mathbf T')| = i \big\} \,.
    \end{align}
    \item {\em Arm-paths and arm-trees.} For an unlabeled rooted tree $(\mathbf T,\mathfrak R(\mathbf T))$, we say a self-avoiding path $P=(u_1,\ldots,u_m)$ with $m \geq 2$ is an arm-path of $\mathbf T$ starting from $u_1$ with length $m-1$, if $P \subset \mathbf T$, $\mathfrak R(\mathbf T) \not\in \{ u_2 , \ldots , u_m \}$ and $\mathsf{Deg}_\mathbf T(u_i) = 2 - \mathbf{1}_{i\in\{1, m\}}$ for $1 \leq i \leq m$. For $N \geq 1$, let 
    \begin{equation}
        \label{eq-AN}
         \mathbf A_N =\mbox{the unlabeled rooted tree isomorphic to a path with } N-1 \mbox{ edges} \,.
    \end{equation}
    Given an unlabeled rooted tree $\mathbf T$ and $\mathbf u \in V(\mathbf T)$, denote by $\mathbf T \oplus \mathcal L_{\mathbf u}^{(x)}$ the graph obtained by sticking an arm-path with length $x$ starting from $\mathbf u$. Also, for an unlabeled rooted tree $(\mathbf T,\mathfrak R(\mathbf T))$ define
    \begin{align}
        &\mathsf{ASubs}(\mathbf T) = \big\{ \mathbf T' \in \mathsf{Subs}(\mathbf T): \mathbf T' \mbox{ is an arm-path of } \mathbf T \big\} \,, \label{def-armsub-on-root} \\
        &\mathsf{ASubs}_i(\mathbf T) = \big\{ \mathbf T' \in \mathsf{ASubs}(\mathbf T): |V(\mathbf T')| = i \big\} \,. \label{def-armsub-on-root-size-i}
    \end{align}
    Also, we define
    \begin{equation}
        \mathsf{Subs}^*(\mathbf T) = \mathsf{Subs}(\mathbf T) \setminus \mathsf{ASubs}(\mathbf T)\,, \quad \mathsf{Subs}_i^*(\mathbf T) = \mathsf{Subs}_i(\mathbf T) \setminus \mathsf{ASubs}_i(\mathbf T)\,.\label{def-sub-except-arm-on-rooted-trees}
    \end{equation}
    \item {\em Arm candidates.} For any unlabeled rooted tree $(\mathbf T,\mathfrak R(\mathbf T))$ and $k \geq 0$, define the arm-candidate vertex set of $\mathbf T$ with intensity $k$ by
    \begin{align}\label{eq-def-frakAc}
        \mathfrak{Ac}_k(\mathbf T) = \{ v \in V(\mathbf T): |\mathsf{ASubs}(\mathsf{Des}_v(\mathbf T))| = k\} \,,
    \end{align}
    and for any $v \in V(\mathbf T)$ define $\mathfrak k(v)=\mathfrak k_\mathbf T(v)$ to be the unique $k$ such that $v \in \mathfrak{Ac}_k(\mathbf T)$. Also, for any $v \in V(\mathbf T)$ define the essential degree of $v$ in $\mathbf T$ by $\mathsf{Deg}^{\text{ess}}_\mathbf T(v)= \mathsf{Branch}_\mathbf T(v) - \mathfrak k(v)$.
    \item {\em Chained pairs.} For an unlabeled rooted tree $(\mathbf T,\mathfrak R(\mathbf T))$, $u,v \in V(\mathbf T)$, $\mathtt m > 0$ and $a,b \geq 0$, we say $(u,v)$ is an (unordered) $\mathtt m$-chained pair of $\mathbf T$ with order $(a,b)$, if $\mathsf{Dist}_{\mathbf T}(u,v)=\mathtt m$, $\{\mathfrak k(u),\mathfrak k(v)\} = \{a,b\}$ as unordered tuples, $\mathsf{Branch}_\mathbf T(w)=1$ for all $w \in V(\mathfrak p_\mathbf T(u,v)) \setminus \operatorname{EndP}(\mathfrak p_\mathbf T(u,v))$, and $\mathsf{Branch}_{\mathbf T}(u) \wedge \mathsf{Branch}_{\mathbf T}(v) \geq 2$. Given an unlabeled rooted tree $(\mathbf T,\mathfrak R(\mathbf T))$, define $\mathsf{Ch}^{(a,b)}_{\mathtt m}(\mathbf T)$ the set of $\mathtt m$-chained pairs $(u,v)$ of $\mathbf T$ with order $(a,b)$ with $\mathsf{Branch}_\mathbf T(u)\wedge \mathsf{Branch}_\mathbf T(v) \geq 2$, and define
    \begin{align}{\label{eq-def-mathsf-Ch-a,b,*}}
        \mathsf{Ch}^{(a,b;*)}_{\mathtt m}(\mathbf T) = \big\{ (u,v) \in \mathsf{Ch}^{(a,b)}_{\mathtt m}(\mathbf T) : \mathsf{Branch}_\mathbf T(u) = \mathsf{Branch}_\mathbf T(v) = 2 \big\} \,.
    \end{align}
    \item {\em Essential root.} For an unlabeled rooted tree $(\mathbf T,\mathfrak R(\mathbf T))$, define 
    \begin{equation}{\label{eq-def-mathfrak-m-T}}
        \mathfrak m(\mathbf T) = \inf_{v \in V(\mathbf T),|\mathsf{Subs}(\mathsf{Des}_{\mathbf T}(v))| \geq 2} \big\{ \mathsf{Dep}_\mathbf T(v) \big\}
    \end{equation}
    to be the depth at which $\mathbf T$ ``starts to branch", and define 
    \begin{equation}\label{eq-def-mathfrak-v-T}
        \mathfrak v(\mathbf T) = \arg \min_{v \in V(\mathbf T),|\mathsf{Subs}(\mathsf{Des}_\mathbf T(v))| \geq 2} \big\{ \mathsf{Dep}_\mathbf T(v) \big\}
    \end{equation}
    to be the essential root of $\mathbf T$.
    \item {\em Multigraphs.} We say $S$ is a multigraph, if $S=\big( V(S), E(S), \{ E(S)_{i,j}: (i,j) \in E(S) \} \big)$ where $E(S)_{i,j} \in \mathbb N$ denotes the multiplicity of the edge $(i,j)$. For a multigraph $S$, denote $\widetilde{S} \subset \mathsf K_n$ with $E(\widetilde S)=E(S)$ and $V(\widetilde S)=V(S)$ as the simple graph (ignore the multiplicity) corresponding to $S$. For two multigraphs $H,S$, we say $H \subset S$ if $V(H) \subset V(S)$ and $E(H)_{i,j} \leq E(S)_{i,j}$ for all $i,j \in V(H)$. Also denote $S \cup H$ to be the multigraph with 
    \begin{align*}
        & V(S \cup H) = V(S) \cup V(H) \,, \ E(S \cup H) = E(S) \cup E(H) \\
        \mbox{ and } & E(S \cup H)_{i,j} = E(S)_{i,j} + E(H)_{i,j} \,,
    \end{align*} 
    and denote $S \setminus H$ to be the multigraph with 
    \begin{align*}
        & V(S \setminus H) \subset V(S) \cup V(H) \,, \ E(S \setminus H) \subset E(S) \cup E(H) \\
        \mbox{ and } & E(S \setminus H)_{i,j} = (E(S)_{i,j} - E(H)_{i,j})\vee 0 \,.
    \end{align*} 
    In addition, for a multigraph $S$, denote $\mathcal L(S)$ to be the set of leaves that have degree $1$ in $S$ (counting multiplicity).
\end{itemize}

For two real numbers $a$ and $b$, we let $a \vee b = \max \{ a,b \}$ and $a \wedge b = \min \{ a,b \}$. For two sets $A$ and $B$, we define $A\sqcup B$ to be the disjoint union of $A$ and $B$ (so the notation $\sqcup$ only applies when $A, B$ are disjoint). We use standard asymptotic notations: for two sequences $a_n$ and $b_n$ of positive numbers, we write $a_n = O(b_n)$, if $a_n<Cb_n$ for an absolute constant $C$ and for all $n$ (similarly we use the notation $O_h$ if the constant $C$ is not absolute but depends only on $h$); we write $a_n = \Omega(b_n)$, if $b_n = O(a_n)$; we write $a_n = \Theta(b_n)$, if $a_n =O(b_n)$ and $a_n = \Omega(b_n)$; we write $a_n = o(b_n)$ or $b_n = \omega(a_n)$, if $a_n/b_n \to 0$ as $n \to \infty$. In addition, we write $a_n \circeq b_n$ if $a_n = [1+o(1)] b_n$, and we write $a_n \equiv b_n$ if $a_n = [1+O(n^{-1/2})] b_n$. For a set $\mathsf A$, we will use both $\# \mathsf A$ and $|\mathsf A|$ to denote its cardinality. We denote $\mathsf{id} \in \mathfrak S_n$ as the identity map.

The rest of this paper is organized as follows: in Section~\ref{sec:construction-stat} we describe the construction of decorated trees and the statistic $f(A,B)$, and state our main result (as presented in Theorem~\ref{THM-validity-of-statistics}). In Section~\ref{sec:bound-1st-moment} we bound the first moment of the statistic, and in Section~\ref{sec:bound-var} we control the variance of the statistic. In Section~\ref{sec:approx-via-color-coding} we describe an efficient algorithm to approximate the statistic in polynomial running time. Several auxiliary results are moved to the appendix to ensure a smooth flow of presentation.

\section{Main results and discussions}{\label{sec:construction-stat}}

\subsection{On the assumptions of parameters}{\label{subsec:choice-parameters}}

Before presenting the statistic $f(A,B)$, we first show how to choose $\delta$ in Theorem~\ref{MAIN-THM} and make several assumptions on various auxiliary parameters that will be used throughout the paper. Firstly, we may always assume that $\Delta<0.1$ is a small fixed constant. To this end, we first need the following results regarding unlabeled trees which were established in \cite{Otter48}.
\begin{lemma}{\label{lem-num-trees}}
    Denote $\mathcal V_N$ as the set of unlabeled trees on $N$ vertices and denote $\mathcal R_N$ the set of unlabeled rooted trees on $N$ vertices. Then there exists a deterministic constant $1>C_0>0.01$ and a deterministic constant $1>C_1 \geq 0.81$ such that as $N \to\infty$
    \begin{align}
        |\mathcal V_N| \circeq C_0 \alpha^{-N}/N^{2.5} \,, \quad |\mathcal R_N| \circeq C_0 C_1 \alpha^{-N}/N^{1.5} \,. 
    \end{align}
    where (again and throughout the paper) $\alpha\approx 0.338$ is the Otter's constant.
\end{lemma}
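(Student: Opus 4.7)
The plan is to follow the classical generating-function approach of Otter. Introduce the ordinary generating functions $R(x) = \sum_{N\geq 1} |\mathcal R_N| x^N$ and $T(x) = \sum_{N\geq 1} |\mathcal V_N| x^N$. The asymptotics will be extracted by locating the dominant singularity of $R(x)$ at some $\alpha>0$ and applying singularity analysis. First I would derive the functional equation for $R(x)$: a rooted unlabeled tree on $N$ vertices decomposes uniquely as its root together with an unordered multiset of rooted subtrees of total size $N-1$, so P\'olya's multiset construction yields
\[
R(x) = x \cdot \exp\Big( \sum_{k\geq 1} \frac{R(x^k)}{k} \Big).
\]

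Next, writing the equation schematically as $R(x) = x \Phi(R(x),x)$, I would verify the hypotheses of the smooth implicit-function schema (positivity, aperiodicity, and a nondegenerate characteristic system). These give that $R$ admits an analytic continuation to a slit $\Delta$-domain around $\alpha$ and, near $\alpha$, a square-root singular expansion
\[
R(x) = \tau - b\sqrt{1 - x/\alpha} + O\bigl(1 - x/\alpha\bigr),
\]
with explicit constants $\tau,b>0$. The standard transfer theorem then produces $|\mathcal R_N| \sim \tfrac{b}{2\sqrt{\pi}} \alpha^{-N} N^{-3/2}$, so one identifies $C_0 C_1 = b/(2\sqrt{\pi})$. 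For the free-tree count I would then invoke Otter's dissymmetry identity
\[
T(x) = R(x) - \tfrac12 R(x)^2 + \tfrac12 R(x^2),
\]
in which the $\sqrt{1-x/\alpha}$ parts of $R(x)$ and $\tfrac12 R(x)^2$ cancel by construction (this is exactly why Otter introduced the identity), leaving $T(x)$ with a leading $(1-x/\alpha)^{3/2}$ singularity. A second application of the transfer theorem yields $|\mathcal V_N| \sim C_0 \alpha^{-N} N^{-5/2}$, with $C_0$ determined by the subleading singular coefficient of $T$.

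The main obstacle is verifying the singular schema rigorously: one must check that $\alpha$ is the \emph{unique} dominant singularity on $|x|=\alpha$ (an aperiodicity argument for the tree species), that the characteristic system for $(\alpha,\tau)$ has a positive solution with nonvanishing derivative so the square-root branch is genuine, and that the exponential sum $\sum_{k\ge 1} R(x^k)/k$ contributes only analytic corrections at $x=\alpha$ (because $\alpha<1$ forces $x^k$ for $k\ge 2$ to be strictly inside the disc of convergence). Once these analytic facts are in place, the transfer theorem handles the remainder, and the explicit numerical bounds $C_0>0.01$ and $C_1\ge 0.81$ are obtained by truncating the functional equation to finitely many terms and iterating numerically, matching the classical values $\alpha\approx 0.3383$, $b\approx 2.68$ computed in \cite{Otter48}.
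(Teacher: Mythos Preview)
The paper does not prove this lemma at all: it is stated as a known result ``established in \cite{Otter48}'' and used as a black box. Your proposal correctly sketches the classical generating-function argument (P\'olya's functional equation for $R(x)$, square-root singularity analysis, Otter's dissymmetry identity for $T(x)$, and the resulting $(1-x/\alpha)^{3/2}$ cancellation), which is precisely how Otter's 1948 paper derives these asymptotics; so you have supplied strictly more than the paper itself does.
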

Provided with Lemma~\ref{lem-num-trees}, we can choose a sufficiently large constant $M\geq 100$ such that the following holds: for all $k \geq M$
\begin{align}
    \alpha^{-k}/k^{2.5} \geq |\mathcal V_k| \geq \alpha^{-k}/100k^{2.5} \mbox{ and } \alpha^{-k}/k^{1.5} \geq |\mathcal R_k| \geq \alpha^{-k}/200k^{1.5} \,. \label{eq-choice-N-2}  
\end{align}
Also we choose a sufficiently large constant $\mathfrak c$ such that 
\begin{align}\label{eq-choice-mathfrak-c}
    \mathfrak c>10^{10}\mbox{ and }|\mathcal R_N| \leq\tfrac{\mathfrak c-10}{10} \cdot 3^N \mbox{ for all } N\geq 1 
\end{align}
and a sufficiently large positive integer $\mathtt m$ such that
\begin{align}{\label{eq-choice-mathtt-m}}
    \mathtt m > (\mathfrak c\lambda s^{-1})^{1000}\cdot\big(\epsilon(1-\epsilon)\Delta\big)^{-1000} \mbox{ and } \epsilon^2\lambda s(1-\epsilon^{\mathtt m}) >1+\tfrac{\Delta}{2} \,.
\end{align}
In addition, we choose a sufficiently small positive rational $\iota=\iota(M,\epsilon)$ such that 
\begin{align}{\label{eq-choice-iota}}
    \log\log\log\log(\iota^{-1}) > \max\big\{ M,\mathtt m, \tfrac{1}{\epsilon}, \tfrac{1}{\log(\epsilon^{-1})}, \Delta^{-10}  \big\}  \,.  
\end{align}
Also, we choose a sufficiently small $\delta=\delta(M,\epsilon,\iota)$ such that 
\begin{align}{\label{eq-choice-delta-1}}
    \alpha^{-1} (\sqrt{\alpha}-\delta)^2 e^{0.9\iota\log(\iota^{-1})} \big( 1-\tfrac{1}{\epsilon^2 \lambda s(1-\epsilon^{\mathtt m})} \big)^{4\iota} > 1\,,
\end{align}
We now assume that throughout the rest part of the paper
\begin{equation}{\label{eq-assumption-s,lambda}}
     \epsilon^2\lambda s>1+\Delta \mbox{ and } \sqrt{\alpha}-\delta<s \,.
\end{equation}
Since $\epsilon^2\lambda s>1$ according to \eqref{eq-assumption-s,lambda}, we can choose $\ell = \ell_n \leq 100(\Delta^{-1}+1)\log n$ and choose $\aleph=\aleph_n$ such that $\iota\aleph ,\tfrac{\aleph-1}{2}\in\mathbb N$ and
\begin{equation}{\label{eq-choice-aleph}}
    (\epsilon^2 \lambda s)^{\ell} > n^4 \mbox{ and } \omega(1) = \aleph_n = o(\log\log\log n)  \,.
\end{equation}

\subsection{Construction of decorated trees}{\label{subsec:decorated-trees}}

We first introduce the notion of decorated trees and then discuss the specific selection of trees. We refer the reader to Section~\ref{subsec:discussions} for underlying intuitions behind these constructions.

\begin{DEF}{\label{def-pairing}}
    For a tree $T$ with $\aleph$ vertices, let $\xi=\{ (u_1,v_1), \ldots, (u_{\iota\aleph},v_{\iota\aleph}) \}$ such that $u_1,\ldots,u_{\iota\aleph}$, $v_1,\ldots,v_{\iota\aleph}$ are distinct vertices in $V(T)$. We say $\xi$ is a pairing of $T$ and we define $\mathsf{Vert}(\xi) = \{ u_1,\ldots,u_{\iota\aleph},v_1,\ldots,v_{\iota\aleph} \}$.
\end{DEF}

We can now introduce decorated trees.

\begin{DEF}{\label{def-decorated-trees}}
    For $\mathcal T_{\aleph} \subset \mathcal V_{\aleph}$, define $\mathcal S(\mathcal T_{\aleph})=\{ \mathcal S(\mathbf T): \mathbf T \in \mathcal T_{\aleph} \}$ where 
    \begin{align*}
        \mathcal S(\mathbf T) = \big\{ \mathtt W_{\mathtt 1}, \ldots, \mathtt W_{\mathtt M} \big\} \mbox{ with } \mathtt M\in\mathbb N \mbox{ and } \mathtt W_i \mbox{'s are pairings of } \mathbf T \,. 
    \end{align*}
    We define \emph{the family of decorated trees} associated with $\mathcal T_{\aleph}$ and $\mathcal S(\mathcal T_{\aleph})$, denoted by $\mathcal H=\mathcal H(\mathcal T_{\aleph},\mathcal S(\mathcal T_{\aleph}))$, to be the collection of all $\mathbf H = \big( \mathsf{T}(\mathbf H), \mathsf P(\mathbf H) \big)$ satisfying the following conditions:
    \begin{enumerate}
        \item[(1)] $\mathsf{T}(\mathbf H)$ is an unlabeled $\aleph$-tree in $\mathcal T_{\aleph}$;
        \item[(2)] $\mathsf P (\mathbf H)=\{ (u_1,v_1), \ldots, (u_{\iota\aleph},v_{\iota\aleph}) \} \in \mathcal S(\mathsf T(\mathbf H))$.
    \end{enumerate}
    In addition, we say a multigraph $S \Vdash \mathbf H$, if (counting multiplicity of edges) $S$ can be decomposed into a tree $\mathsf T(S)$ and $\iota\aleph$ self-avoiding paths $\mathsf L_1(S),\ldots,\mathsf L_{\iota\aleph}(S)$ each with length $\ell$ such that the following conditions hold (see Figure~\ref{fig:decorated-trees} for an illustration):
    \begin{enumerate}
        \item[(1)] $\operatorname{EndP}(\mathsf L_i(S))=\{ u_i,v_i \}$ where $u_i,v_i \in V(\mathsf T(S))$;
        \item[(2)] Denoting $\mathsf P(S)=\{ (u_1,v_1),\ldots,(u_{\iota\aleph},v_{\iota\aleph}) \}$, there exists a bijection $\varphi: V(\mathsf T(S)) \to V(\mathsf T(\mathbf H))$ such that $\varphi$ maps $\mathsf T(S)$ to $\mathsf T(\mathbf H)$ and maps $\mathsf P(S)$ to $\mathsf P(\mathbf H)$.
    \end{enumerate}
    \begin{figure}[!ht]
    \centering
    \vspace{0cm}
    \includegraphics[height=4.8cm,width=9.5cm]{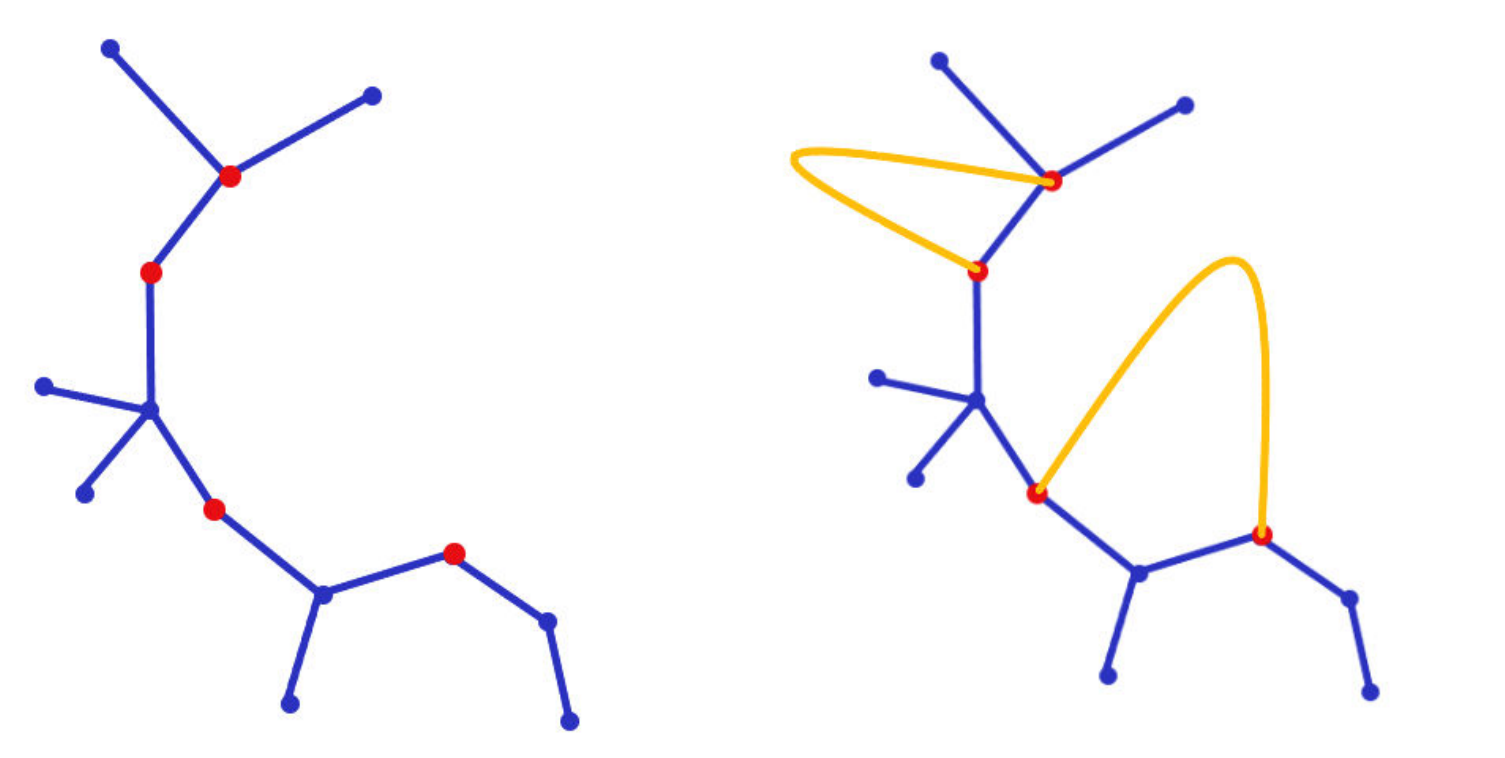}
    \caption{\noindent An example of a decorated tree. Left: example of an element $\mathbf H\in\mathcal H$, where the blue part is the tree $\mathsf T(\mathbf H)$ and the red vertices constitute the pairing $\mathsf P(\mathbf H)$. Right: example of a multigraph $S\Vdash_{A} \mathbf H$ where the yellow parts are the self-avoiding paths attached to $\mathsf T(S)$. }
    \label{fig:decorated-trees}
\end{figure}
\end{DEF}

Given any multigraph $S\subset \mathsf K_n$ (including simple graphs), we define
\begin{equation}{\label{eq-def-beta-S}}
    \beta_S(X)=\prod_{(i,j) \in E(S)} \Bigg( \frac{ X_{i,j}-\tfrac{\lambda s}{n} }{ \sqrt{ \tfrac{\lambda s}{n} (1-\tfrac{\lambda s}{n}) } } \Bigg)^{E(S)_{i,j}}\,.
\end{equation}
Then, given $S \Vdash \mathbf H$ where $\mathbf H$ is a decorated tree, we define a centered version of $\beta_{\mathsf L_k(S)}(X)$ by
\begin{equation}{\label{eq-def-psi-S}}
    \psi_{\mathsf{L}_k(S)}(X)=\beta_{\mathsf L_k(S)}(X)- \epsilon^{\mathtt m} \big( \tfrac{\epsilon^2 \lambda s}{n} \big)^{\ell/2} = \prod_{ (i,j) \in E(\mathsf{L}_k(S)) } \frac{ (X_{i,j}-\tfrac{\lambda s}{n})  }{ \sqrt{\tfrac{\lambda s}{n} (1-\tfrac{\lambda s}{n})} }- \epsilon^{\mathtt m} \big( \tfrac{\epsilon^2 \lambda s}{n} \big)^{\ell/2}  \,,
\end{equation}
where $\epsilon^{\mathtt m} \big( \tfrac{\epsilon^2 \lambda s}{n} \big)^{\ell/2}$ is the centering we add to the path $\mathsf L_k(S)$, and $\mathtt m$ is the distance of the endpoints of this path.\footnote{For two vertices $u,v$ with $\mathsf{Dist}_{\mathsf T(S)}(u,v)=\mathtt m$, a self-avoiding path $\mathsf L_k(S)$ of length $\ell$ connecting $u,v$, and a shortest path $\mathfrak p_{\mathsf T(S)}(u,v)$ between them on $\mathsf{T}(S)$, it can be shown that 
\[
    \mathbb E\Big[ \beta_{\mathsf L_k(S)}(A)\beta^2_{\mathfrak p_{\mathsf T}(u,v)}(A) \mid \sigma(u),\sigma(v) \Big] \overset{\circ}{=} \epsilon^\mathtt m \big( \tfrac{\epsilon^2\lambda s}{n} \big)^{\ell/2} \,.
\]
Later we will let $\mathsf{Dist}_{\mathsf T(S)}(u_i,v_i)=\mathtt m$ and that's how the centering arises.} 
For $1 \leq k \leq \iota\aleph$, and define
\begin{equation}{\label{eq-def-f-S}}
    \phi_S(X) = \beta_{\mathsf T(S)}(X) \cdot \prod_{ k=1 }^{ \iota \aleph } \psi_{\mathsf{L}_k(S)}(X) \,.
\end{equation}
Finally, given $S_1 \Vdash \mathbf H, S_2 \Vdash \mathbf H$ where $\mathbf H$ is a decorated tree, we write $\phi_{S_1,S_2}(A,B)=\phi_{S_1}(A) \phi_{S_2}(B)$ and define our statistic as 
\begin{equation}{\label{eq-def-f}}
    f=f(A,B) = \sum_{ \mathbf H \in \mathcal H } \frac{ s^{\aleph-1} \operatorname{Aut}(\mathsf{T}(\mathbf H)) (\epsilon^2 \lambda s)^{\ell\iota\aleph} }{ n^{\aleph+ \ell\iota\aleph} } \sum_{ S_1,S_2 \Vdash \mathbf H } \phi_{S_1,S_2}(A,B) \,.
\end{equation}

\subsection{Choices of trees and pairings}{\label{subsec:choice-trees-sets}}

Although it might seem natural to choose $\mathcal T_{\aleph}=\mathcal V_{\aleph}$ and $\mathcal S(\mathbf T)$ as the set of all pairings of $\mathbf T$, this approach poses several technical challenges when analyzing $f(A,B)$. To address these difficulties, we restrict our choices of $\mathcal T_{\aleph}$ and $\mathcal S(\mathbf T)$ to unlabeled trees and pairings with some desired properties. Before precisely describing our choices of trees and pairings, we introduce some additional notations. Recall that we use $\mathcal R_{\aleph}$ to denote the set of all unlabeled rooted trees with $\aleph$ vertices.
\begin{DEF}
{\label{def-similar-relation}}
    For any unlabeled rooted trees $\mathbf T,\mathbf T'$, we denote $\mathbf T \sim \mathbf T'$ if there exist two integers $k \leq \log^{-2}(\iota^{-1}) \cdot |V(\mathbf T)|, k'\leq \log^{-2}(\iota^{-1}) \cdot |V(\mathbf T')|$ and
    \begin{align*}
        & \{ \mathbf u_1,\ldots,\mathbf u_k \}\subset V(\mathbf T) \setminus \mathcal L(\mathbf T),\{ \mathbf u_1',\ldots,\mathbf u_{k'}'\} \subset V(\mathbf T') \setminus \mathcal L(\mathbf T') \,, \\
        &3 \leq \mathsf{Deg}_{\mathbf T}(\mathbf u_1), \ldots ,\mathsf{Deg}_{\mathbf T}(\mathbf u_k), \mathsf{Deg}_{\mathbf T'}(\mathbf u'_1), \ldots ,  \mathsf{Deg}_{\mathbf T'}(\mathbf u'_{k'}) \leq 4\,, \\
        & 0 \leq x_1,\ldots,x_k, x_1',\ldots,x_{k'}' \leq \log^{2}(\iota^{-1})
    \end{align*}
    such that (recall the definition of arm-path in Section~\ref{subsec:notations}) 
    \begin{align*}
        \mathbf T \oplus \mathcal L_{\mathbf u_1}^{(x_1)} \oplus \ldots \oplus \mathcal L_{\mathbf u_k}^{(x_k)} \cong \mathbf T' \oplus \mathcal L_{\mathbf u_1'}^{(x_1')} \oplus \ldots \oplus \mathcal L_{\mathbf u_{k'}'}^{(x_{k'}')} \,.
    \end{align*}
\end{DEF}

\begin{DEF}{\label{def-major-subtree}}
    Given an unlabeled rooted tree $\mathbf T$, define $\mathbf T_{\iota}$ to be the induced subtree of $\mathbf T$ where the vertex set of $\mathbf T_\iota$ is defined as
    \begin{align*}
        V(\iota) = \big\{ \mathbf v \in V(\mathbf T) : |V(\mathsf{Des}_{\mathbf T}(\mathbf v))| \geq \log^{4}(\iota^{-1}) \big\} \,.
    \end{align*}
    It is straightforward to check that $\mathbf u \hookrightarrow \mathbf v$ and $\mathbf v \in V(\iota)$ imply $\mathbf u \in V(\iota)$. Recalling definitions of chained pairs around \eqref{eq-def-mathsf-Ch-a,b,*}, for $a,b \geq 0$ we define
    \begin{align}
        & \mathfrak{Ch}^{(a,b)}_\mathtt m (\mathbf T) = \{(u,v) \in \mathsf{Ch}^{(a,b)}_\mathtt m (\mathbf T): u,v \in V(\mathbf T_\iota),\mathfrak R(\mathbf T) \not\in \mathfrak p_{\mathbf T}(u,v), 2 \leq\mathsf{Branch}_{\mathbf T}(u), \mathsf{Branch}_{\mathbf T}(v)\leq 3\} \,,  \label{def-chained-signal-pair-in-T_iota}\\
        & \mathfrak{Ch}^{(0,0;*)}_\mathtt m (\mathbf T) = \{(u,v) \in \mathsf{Ch}^{(0,0;*)}_\mathtt m (\mathbf T): u,v \in V(\mathbf T_\iota)\,,\mathfrak R(\mathbf T) \not\in \mathfrak p_{\mathbf T}(u,v)\}\,.\label{def-chained-noise-pair-in-T_iota}
    \end{align}
\end{DEF}

\begin{DEF}{\label{def-tilde-T-K}}
    Define $\widetilde{\mathcal{T}}_{\aleph}$ to be the collection of trees $\mathbf T \in \mathcal R_{\aleph}$ that satisfy the following conditions:
    \begin{enumerate}
        \item[(1)] For every $u \in V(\mathbf T)$, $\mathsf{Deg}^{\mathrm{ess}}_{\mathbf T}(u) \leq \log^2(\iota^{-1})$.
        \item[(2)] The tree $\mathbf T$ contains no arm-path with length at least $\log^2(\iota^{-1})$.
        \item[(3)] $|V(\mathbf T_{\iota})| \geq \frac{ \aleph }{ \log^{9}(\iota^{-1}) }$.
        \item[(4)] For any descendant trees $  \mathbf S,\mathbf S'\hookleftarrow \mathbf T$ such that $|V(\mathbf S)|,|V(\mathbf S')| \geq \log^{2}(\iota^{-1})$, if $\mathbf S \neq \mathbf S'$ and $\mathbf S,\mathbf S'$ share the same parent in $\mathbf T$, then $\mathbf S \not \sim \mathbf S'$. 
        \item[(5)] The root $\mathfrak R(\mathbf T)$ has exactly two children trees $\mathbf T_1,\mathbf T_2$ with $V(\mathbf T_1)= \lfloor \tfrac{\aleph-1}{2} \rfloor, V(\mathbf T_2)= \lceil \tfrac{\aleph-1}{2} \rceil$ respectively such that for all $ \mathbf S \hookleftarrow \mathbf{T}_2$ we have $\mathbf T_1 \not \sim \mathbf S$ and for all $\mathbf S'\hookleftarrow\mathbf T_1$ we have $\mathbf T_2 \not \sim \mathbf S'$.
        \item[(6)] $|\mathfrak{Ch}^{(0,0;*)}_\mathtt m(\mathbf T)| = \mathfrak C_\aleph$ where $\mathfrak C_\aleph$ is a nonnegative integer determined in \eqref{eq-def-mathfrak-C-Au}, and satisfies (recall \eqref{eq-choice-mathfrak-c} for the definition of $\mathfrak c$)
        \begin{align}\label{eq-chained-pairs-control-key}
            \mathfrak C_\aleph-\mathfrak c^{-9} \sum_{0\leq a,b\leq 1} |\mathfrak{Ch}^{(a,b)}_\mathtt m(\mathbf T)| \geq \tfrac{\aleph}{\log^9(\iota^{-1})}\,.
        \end{align}
        \item[(7)] $\operatorname{Aut}(\mathbf T) \in [\mathfrak{Au}_\aleph , 2 \mathfrak{Au}_\aleph]$ where $\mathfrak{Au}_\aleph$ is a positive integer determined in \eqref{eq-def-mathfrak-C-Au}.
    \end{enumerate}
\end{DEF}

The following result provides an estimate for the cardinality of $\widetilde{\mathcal T}_{\aleph}$.
\begin{thm}{\label{thm-num-desired-tree}}
    We have 
    \begin{equation*}
        |\widetilde{\mathcal T}_{\aleph}| \geq \alpha^{-\aleph} \cdot \exp\big\{ -10 e^{-\frac{1}{20}\log^2(\iota^{-1})} \aleph \big\} \,,
    \end{equation*}
    Thus, denoting by $\mathcal T_{\aleph}$ the set of $\mathbf T \in \mathcal V_{\aleph}$ such that there exists an unlabeled rooted tree $(\mathbf T,\mathbf{o}) \in \widetilde{\mathcal T}_{\aleph}$, we have 
    \begin{equation}
        |\mathcal T_{\aleph}| \geq (\alpha+o(1))^{-\aleph} \cdot \exp\big\{ -10 e^{-\frac{1}{20}\log^2(\iota^{-1})} \aleph \big\} \,.
    \end{equation} 
\end{thm}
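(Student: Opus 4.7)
The plan is to lower bound $|\widetilde{\mathcal T}_\aleph|$ by a direct construction: elements of $\widetilde{\mathcal T}_\aleph$ are built by gluing a fresh root to a pair of rooted trees $\mathbf T_1\in\mathcal R_{\lfloor(\aleph-1)/2\rfloor}$ and $\mathbf T_2\in\mathcal R_{\lceil(\aleph-1)/2\rceil}$, each satisfying appropriately adapted analogs of conditions (1)--(4) of Definition~\ref{def-tilde-T-K}, and then verifying that almost all such pairs also satisfy the mutual $\sim$-incompatibility required by condition (5). Writing $L:=\log^2(\iota^{-1})$ and $\widetilde{\mathcal B}_i\subset\mathcal R_k$ for rooted trees on $k\geq M$ vertices violating the analog of condition $(i)\in\{1,2,3,4\}$, the reduction is to establish $|\widetilde{\mathcal B}_i|\leq |\mathcal R_k|\cdot \exp(-c_i \alpha^L k)$ for absolute constants $c_i>0$. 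Since $\alpha^L\gg e^{-L/10}$ and the target slack is $\exp(10 e^{-L/10}\aleph)$, this leaves extraordinary room even after a union bound over all five conditions and the polynomial factor $1/(200k^{1.5})$ in \eqref{eq-choice-N-2}. The transition $|\mathcal T_\aleph|\geq |\widetilde{\mathcal T}_\aleph|/\aleph$ (each unlabeled tree has at most $\aleph$ rootings) is then absorbed into the $(\alpha+o(1))^{-\aleph}$ rate since $\aleph=\omega(1)$.

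For conditions (1), (2), and (3), I would invoke the P\'olya-style generating function $R(x) = x\prod_{k\geq 1}(1-x^k)^{-|\mathcal R_k|}$ for rooted trees, whose dominant singularity lies at $\alpha$. Restricting the maximum degree to $L$ truncates the multinomial expansion in the functional equation (shifting the singularity by $O(\alpha^L)$), forbidding arm-paths of length $\geq L$ contributes an analogous perturbation, and the ``major core'' requirement in (3) is a large-deviation event on the size-biased distribution of descendant-subtree sizes in a random P\'olya-distributed rooted tree (a single random ``spine'' already contributes $\Omega(\aleph/\sqrt{L})$ vertices with descendant tree of size $\geq L$). Standard transfer theorems or direct first-moment calculations then yield $|\widetilde{\mathcal B}_i|\leq \alpha^{-k}\exp(-\Omega(\alpha^L)k)$ for $i\in\{1,2,3\}$, eating up only an $\exp(-\Omega(\alpha^L)\aleph)$ fraction of candidate pairs.

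The main obstacle is condition (4), and the same estimate is needed for the $\sim$-incompatibility clause of condition (5). The key is to bound the cardinality of a $\sim$-equivalence class: by Definition~\ref{def-similar-relation}, the class of a rooted tree on $k$ vertices is parameterized by choosing at most $k/\log(\iota^{-1})$ attachment vertices and arm-path lengths in $[0,L]$ on both sides of the isomorphism, giving
\begin{equation*}
    |\{\mathbf T'\in\mathcal R_k : \mathbf T'\sim \mathbf T\}| \;\leq\; \binom{k}{\leq k/\log(\iota^{-1})}^{2}\cdot (L+1)^{2k/\log(\iota^{-1})} \;\leq\; \exp\!\Big(\tfrac{5k\log L}{\log(\iota^{-1})}\Big) \;=\; \exp(o(k))\,,
\end{equation*}
where the last equality uses $\log L/\log(\iota^{-1}) = 2\log\log(\iota^{-1})/\log(\iota^{-1}) = o(1)$ from \eqref{eq-choice-iota}. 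Comparing with $|\mathcal R_k|\geq \alpha^{-k}/(200k^{1.5})$ from \eqref{eq-choice-N-2}, any fixed $\sim$-class occupies at most an $\exp(-\Omega(k))$ fraction of $\mathcal R_k$; a union bound over pairs of sibling descendant subtrees (for (4)) and over pairs consisting of $\mathbf T_1$ and a descendant subtree of $\mathbf T_2$ of size $\geq L$ (for (5)) then shows that the fraction of constructed pairs failing either clause is $\exp(-\Omega(\aleph))$. Collecting all five bounds yields $|\widetilde{\mathcal T}_\aleph|\geq \alpha^{-\aleph}\exp(-10 e^{-L/10}\aleph)$, and dividing by at most $\aleph$ rootings per unlabeled tree gives the asserted estimate on $|\mathcal T_\aleph|$.
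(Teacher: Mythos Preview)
Your overall architecture---joining a fresh root to two half-trees $\mathbf T_1,\mathbf T_2$ satisfying (1)--(4), then excluding pairs that violate (5) via the $\sim$-class size estimate---is exactly the paper's, and your $\sim$-class bound is essentially Lemma~\ref{lem-remove-sim-relation}. But the quantitative claim for (1)--(2) is inverted, and the comparison $\alpha^L\gg e^{-L/10}$ is backwards: $\log\alpha\approx-1.08$ gives $\alpha^L=e^{-1.08L}\ll e^{-0.1L}$. Your bound $|\widetilde{\mathcal B}_i|\leq|\mathcal R_k|\exp(-c_i\alpha^L k)$ would force the bad fraction to vanish as $k\to\infty$, whereas in fact once $k\gg e^{cL}$ a uniformly random tree in $\mathcal R_k$ \emph{typically} has a vertex of degree exceeding $L$. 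The correct form (the paper's Lemma~\ref{lem-enu-trees-condition-1,2}, importing (1) from \cite{GS94} and handling (2) by a direct recursion on the Otter decomposition) is a lower bound on the \emph{good} set, $|\mathcal R_k^{(1,2)}|\geq|\mathcal R_k|\exp(-e^{-L/8}k)$, and it is this multiplicative loss that the slack $\exp(-10e^{-L/10}\aleph)$ absorbs. You also need no separate argument for (3): once (1) and (2) hold, every vertex of $\mathbf T_\iota$ carries at most $L$ subtrees outside $\mathbf T_\iota$, each of size $<L$, so $|V(\mathbf T_\iota)|\geq\aleph/L^2$ deterministically (Lemma~\ref{lem-items-1,2-imply-item-3}).

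For condition (4) the union bound genuinely fails in the regime you need. The per-pair $\sim$-probability for two siblings both of size $\approx L$ is only $\approx\alpha^L$, and in a tree on $k$ vertices there can be $\Theta(k)$ such pairs, so the union bound on the bad event gives $\lesssim k\alpha^L$; once $k\gg\alpha^{-L}$ this exceeds $1$ and says nothing, yet what you need is a lower bound on the good fraction of the form $\exp(-e^{-L/10}k)$, which is itself exponentially small in $k$. The paper obtains $|\mathcal R_k^{(1,2,4)}|\geq|\mathcal R_k^{(1,2)}|\exp(-e^{-L/10}k)$ by induction on $k$ (Lemma~\ref{lem-eum-trees-condition-1,2,4}): decompose a size-$(k{+}1)$ tree as root plus children subtrees, use that conditioned on their sizes the children are independent uniform in the corresponding $\mathcal R^{(1,2)}_{\gamma_i}$, apply the inductive hypothesis to each child, and invoke the $\sim$-class bound only among the root's own children. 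Your handling of (5) is correct and coincides with the paper's subtraction argument.
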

The proof of Theorem~\ref{thm-num-desired-tree} is postponed to Section~\ref{subsec:proof-tree-enu-thm} of the appendix. Now we discuss the choice of vertex subsets.

\begin{thm}{\label{thm-desired-vertex-sets}}
    For each $\mathbf T \in \widetilde{\mathcal T}_{\aleph}$, we can choose a subset $\mathfrak{Ch'_\mathtt m}(\mathbf T) \subset \mathfrak{Ch}^{(0,0;*)}_\mathtt m(\mathbf T)$ and choose $W_{\mathtt 1}(\mathbf T),\ldots,W_{\mathtt M}(\mathbf T) \subset V(\mathbf T)$ to be pairings of $\mathbf T$ with $\mathtt M = \exp\big( -10e^{-\log^{2}(\iota^{-1})} \aleph \big)\cdot \binom{ \mathfrak C_\aleph/4 }{\iota\aleph}$ such that the following conditions hold:
    \begin{enumerate}
        \item[(1)] $|\mathfrak{Ch'_\mathtt m}(\mathbf T)| = \frac{\mathfrak C_\aleph}{4}$ and $u,v \neq \mathfrak R(\mathbf T)$ for all $(u,v) \in \mathfrak{Ch'_\mathtt m}(\mathbf T)$. In addition, no two pairings in $\mathfrak{Ch'_\mathtt m}(\mathbf T)$ share a common vertex.
        \item[(2)] $W_\mathtt i(\mathbf T) \subset \mathfrak{Ch}'_\mathtt m (\mathbf T)$ for all $\mathtt 1 \leq \mathtt i \leq \mathtt M$.
        \item[(3)] For all $\mathtt 1 \leq \mathtt i \leq \mathtt M$ and all descendant trees $  \mathbf T' \hookleftarrow \mathbf {T}$ with $V(\mathbf T') \geq \log^{4}(\iota^{-1})$, we have
        \begin{align*}
            \big| \mathsf{Vert}(W_{\mathtt i}) \cap V(\mathbf T') \big| \leq \log^{-2}(\iota^{-1}) \cdot |V(\mathbf T')| \,.
        \end{align*}
    \end{enumerate}
\end{thm}
The proof of Theorem~\ref{thm-desired-vertex-sets} is postponed to Section~\ref{subsec:proof-set-enu-thm} of the appendix. We can now state our choice of trees and vertex sets, as incorporated in the following lemma.

\begin{lemma}{\label{lem-enu-decorated-trees}}
    For each $\mathbf T \in \mathcal T_{\aleph}$, define $\mathcal S(\mathbf T)=\{ W_{\mathtt 1},\ldots, W_{\mathtt M} \}$ where $W_{\mathtt 1},\ldots, W_{\mathtt M}$ are the same as in Theorem~\ref{thm-desired-vertex-sets}. Define $\mathcal H=\mathcal H(\iota,\aleph,M;\mathcal T_{\aleph}, \mathcal S(\mathcal T_{\aleph}))$. We then have
    \begin{align*}
        |\mathcal H| = (\alpha+o(1))^{-\aleph} \cdot \tbinom{ \mathfrak C_{\aleph}/4 }{ \iota\aleph } \cdot \exp\big\{ -20 e^{-\frac{1}{20}\log^2(\iota^{-1})} \aleph \big\} \geq\Big( \alpha^{-1}\exp\big( 0.9\iota\log(\iota^{-1}) \big) \Big)^{\aleph} \,.
    \end{align*}
\end{lemma}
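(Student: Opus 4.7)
The plan is to combine the two enumeration results established earlier: Theorem~\ref{thm-num-desired-tree} for $|\mathcal T_\aleph|$ and Theorem~\ref{thm-desired-vertex-sets} for the number of pairings per tree. Since each decorated tree $\mathbf H \in \mathcal H$ is by construction a pair $(\mathsf T(\mathbf H), \mathsf P(\mathbf H))$ with $\mathsf T(\mathbf H) \in \mathcal T_\aleph$ and $\mathsf P(\mathbf H) \in \mathcal S(\mathsf T(\mathbf H))$, I expect $|\mathcal H| \geq |\mathcal T_\aleph| \cdot \mathtt M$. The work then splits into (i) verifying this counting identity holds without collapse under automorphisms, and (ii) comparing the resulting exponential rate to the claimed bound.

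For (i), I first observe that the $\mathtt M$ pairings $W_{\mathtt 1}, \ldots, W_{\mathtt M}$ of a fixed tree $\mathbf T$ are pairwise distinct as sets: condition (3) of Theorem~\ref{thm-desired-vertex-sets} gives $|\mathsf{Vert}(W_{\mathtt i}) \triangle \mathsf{Vert}(W_{\mathtt j})| \geq \tfrac{\iota\aleph}{2} > 0$ for $\mathtt i \neq \mathtt j$. The more delicate point is that no nontrivial automorphism of the unlabeled tree $\mathbf T$ can carry some $W_{\mathtt i}$ to $W_{\mathtt j}$. Here I plan to exploit the rigidity built into Definition~\ref{def-tilde-T-K}: condition (5) pins the root (it is the unique vertex whose removal produces two non-similar subtrees of the prescribed sizes $\lfloor\tfrac{\aleph-1}{2}\rfloor$ and $\lceil\tfrac{\aleph-1}{2}\rceil$), and condition (4) then propagates this rigidity down through $\mathbf T_\iota$: at every vertex of $\mathbf T_\iota$ already fixed, its large descendant subtrees are pairwise non-similar and hence each fixed setwise, and by induction on depth this extends to pointwise fixation on $V(\mathbf T_\iota)$. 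Since Theorem~\ref{thm-desired-vertex-sets}(1) guarantees $\mathsf{Vert}(W_{\mathtt i}) \subset V(\mathbf T_\iota)$, every automorphism acts trivially on each $W_{\mathtt i}$, so the assignment $(\mathbf T, \mathtt i) \mapsto (\mathbf T, W_{\mathtt i})$ is injective into $\mathcal H$.

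For (ii), combining this injectivity with the bound of Theorem~\ref{thm-num-desired-tree} and the definition of $\mathtt M$ gives
\begin{equation*}
|\mathcal H| \;\geq\; \exp\!\Big\{\iota(\log\log(\iota^{-1}))^4 \aleph\Big\} \cdot (\alpha+o(1))^{-\aleph} \cdot \exp\!\Big\{-10 e^{-\frac{1}{10}\log^2(\iota^{-1})}\aleph\Big\}.
\end{equation*}
Setting $x = \log(\iota^{-1})$, the choice \eqref{eq-choice-iota} makes $x$ astronomically large, so that $10 e^{-x^2/10}$ is super-exponentially smaller than $\iota(\log\log(\iota^{-1}))^4 = e^{-x}(\log x)^4$. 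Hence the correction factor $\exp\{-10 e^{-x^2/10}\aleph\}$ is absorbed, while $(\alpha+o(1))^{-\aleph}$ contributes $\alpha^{-\aleph}$ up to negligible subexponential factors. Collecting these yields the target lower bound $|\mathcal H| \geq (\alpha^{-1}\exp(\iota(\log\log(\iota^{-1}))^4))^{\aleph}$.

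The main obstacle is the injectivity argument in (i): the naive estimate $|\mathcal H| \geq |\mathcal T_\aleph| \cdot \mathtt M$ is only justified after ruling out automorphism-induced identifications, which requires the recursive descent through $\mathbf T_\iota$ using conditions (4)-(5) of Definition~\ref{def-tilde-T-K} together with the containment $\mathsf{Vert}(W_{\mathtt i}) \subset V(\mathbf T_\iota)$ from Theorem~\ref{thm-desired-vertex-sets}(1). The rate comparison in (ii) is a routine consequence once the deliberately extreme smallness of $\iota$ imposed by \eqref{eq-choice-iota} is exploited.
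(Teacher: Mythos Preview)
Your proposal is correct and follows the same approach as the paper, which simply cites Theorems~\ref{thm-num-desired-tree} and~\ref{thm-desired-vertex-sets}. Your injectivity argument in (i) is more careful than necessary here: since $\mathcal H$ is by Definition~\ref{def-decorated-trees} literally the set of pairs $(\mathbf T, W)$ rather than isomorphism classes of decorated objects, the identity $|\mathcal H| = |\mathcal T_\aleph|\cdot \mathtt M$ holds once the $W_{\mathtt i}$'s are pairwise distinct, which condition~(3) of Theorem~\ref{thm-desired-vertex-sets} already guarantees; the automorphism rigidity you outline is indeed true and is established and used later (Lemma~\ref{lem-useful-property-trees-and-sets}(2)), but it is not needed for this cardinality bound.
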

\begin{proof}
    The first inequality follows directly from combining Theorems~\ref{thm-num-desired-tree} and \ref{thm-desired-vertex-sets}. As for the second inequality, note that \eqref{eq-chained-pairs-control-key} yields that $\mathfrak C_{\aleph} \geq \tfrac{\aleph}{\log^9(\iota^{-1})}$. Thus,
    \begin{align*}
        \tbinom{ \mathfrak C_{\aleph}/4 }{ \iota\aleph } \geq \tbinom{ \aleph/4\log^9(\iota^{-1}) }{ \iota\aleph } \geq \exp\Big( \iota\aleph \cdot \log\big( \tfrac{ 1 }{ 4\iota\log^9(\iota^{-1}) } \big) \Big) \geq \exp\big( 0.99\iota\log(\iota^{-1})\aleph \big) \,,
    \end{align*}
    leading to the second inequality easily.
\end{proof}

Now we can rigorously state the main result of this paper, which verifies the success of the statistic $f(A,B)$ under the above choices of $\mathcal T_{\aleph}$ and $\mathcal S(\mathcal T_{\aleph})$.
\begin{thm}{\label{THM-validity-of-statistics}}
    Under the conditions \eqref{eq-choice-N-2}--\eqref{eq-choice-aleph}, by choosing $\mathcal H=\mathcal H(\iota,\aleph,\mathtt M;\mathcal T_{\aleph}, \mathcal S(\mathcal T_{\aleph}))$ according to Definition~\ref{def-tilde-T-K} and Theorem~\ref{thm-desired-vertex-sets}, we have 
    \begin{align*}
        \frac{ \mathbb E_{\mathbb P}[ f ]^2 }{ \mathbb E_{\mathbb Q}[ f^2 ] } = \omega(1) \mbox{ and } \frac{ \mathbb E_{\mathbb P}[ f ]^2 }{ \mathbb E_{\mathbb P}[ f^2 ] } = 1+o(1) \,.
    \end{align*}
\end{thm}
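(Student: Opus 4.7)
The proof of Theorem~\ref{THM-validity-of-statistics} reduces to three moment estimates: a lower bound on $\mathbb{E}_{\mathbb{P}}[f]$ and matching upper bounds on $\mathbb{E}_{\mathbb{Q}}[f^2]$ and $\mathbb{E}_{\mathbb{P}}[f^2]$. I would split these across the two planned sections: Section~\ref{sec:bound-1st-moment} treats the first moment and Section~\ref{sec:bound-var} treats the two variance computations. The overall strategy mirrors the standard correlated-subgraph-counting approach, but with delicate bookkeeping that tracks simultaneously the correlation signal (factor $s$ per tree edge after centering) and the community signal (factor $\epsilon^{2\ell}$ per attached non-backtracking path pair) supplied by the decorated-tree construction.

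First I would analyze $\mathbb{E}_{\mathbb{P}}[f]$. For a single pair $S_1 \Vdash_A \mathbf{H}$, $S_2 \Vdash_B \mathbf{H}$, I would condition on $(\pi_*, \sigma_*)$ and separate the tree and path contributions. The dominant contribution comes from configurations in which $\pi_*$ maps $\mathsf{T}(S_1)$ onto $\mathsf{T}(S_2)$ as a decorated tree: the $\aleph-1$ paired tree edges then produce a factor of $s^{\aleph-1}$ after normalization, exactly cancelling the $s^{\aleph-1}$ prefactor in (\ref{eq-def-f}). Each attached non-backtracking path of length $\ell$ in $A$ (resp.\ $B$) has conditional mean $\epsilon^{\ell}\sigma_*(u)\sigma_*(v)(\lambda s/n)^{\ell/2}$ per path after normalization, and summing over the $\approx (\iota^{2} n)^{\ell-1}$ choices of internal vertices in $\mathcal{J}_A$ (resp.\ $\mathcal{J}_B$), together with the cancellation $\sigma_*^{2}=1$ once the $A$- and $B$-paths share endpoints via $\pi_*$, yields the factor $(\epsilon^2 \lambda s)^{\ell\iota\aleph}/n^{\ell\iota\aleph}$ that cancels the remaining denominator. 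Summing over $\mathbf{H}\in\mathcal{H}$ and invoking Lemma~\ref{lem-enu-decorated-trees} then gives $\mathbb{E}_{\mathbb{P}}[f] \gtrsim |\mathcal{H}| \geq \bigl(\alpha^{-1}\exp(\iota(\log\log(\iota^{-1}))^4)\bigr)^{\aleph}$.

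For $\mathbb{E}_{\mathbb{Q}}[f^2]$ I would expand the square to a sum over tuples $(\mathbf{H},\mathbf{H}',S_1,S_1',S_2,S_2')$. Independence of $A,B$ under $\mathbb{Q}$ factorizes the expectation, and the diagonal terms with $(S_1,S_2)=(S_1',S_2')$ reproduce a leading contribution of the correct order for matching $\mathbb{E}_{\mathbb{P}}[f]^2$. The real work is to show off-diagonal contributions are subleading. I would classify overlap patterns by the intersection $S_1\cap S_1'$ (and symmetrically $S_2\cap S_2'$) decomposed into its tree and path pieces, and show that each excess edge in these overlaps yields a $1/n$ saving that overwhelms the combinatorial count of overlap patterns. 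The structural rigidity enforced in Definition~\ref{def-tilde-T-K} (bounded degrees, no long arm-paths, distinct descendant subtrees, asymmetric children of the root) and the spread conditions on pairings in Theorem~\ref{thm-desired-vertex-sets} are designed precisely so that overlap enumerations admit such bounds; the fixed-vertex properties in Lemma~\ref{lem-useful-property-trees-and-sets} and the summation bound Corollary~\ref{lem-useful-Aut-bound} are the tools that convert the raw enumerations into controlled estimates. The analysis of $\mathbb{E}_{\mathbb{P}}[f^2]$ follows the same template but additionally tracks overlaps induced through $\pi_*$ between $S_1\cup S_1'$ and $S_2\cup S_2'$; the diagonal again yields $\mathbb{E}_{\mathbb{P}}[f]^2(1+o(1))$ and the off-diagonal is controlled similarly.

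The main obstacle, as emphasized in Section~\ref{subsec:discussions}, is controlling these off-diagonal overlap terms. Because the decorated trees have $\aleph = \omega(1)$ tree-vertices and total edge count $\omega(\log n)$, a naive enumeration of overlap configurations is exponentially too large. The rigidity of $\mathcal{T}_{\aleph}$ and $\mathcal{S}(\mathbf{T})$ is crucial to prevent scenarios where automorphisms of $\mathsf{T}(\mathbf{H})$ collapse many distinct $(S_1, S_1')$ pairs into one overlap class, which would otherwise produce spurious large contributions. I expect the hardest step to be the off-diagonal estimate for $\mathbb{E}_{\mathbb{P}}[f^2]$: there one must classify partial $\pi_*$-alignments between $S_1\cup S_1'$ and $S_2\cup S_2'$ that are not fully diagonal, handle the interaction between correlation and community signals along these partial alignments, and then leverage the decorated-tree properties together with the fine path-counting machinery to turn excess-edge $1/n$ savings into a bound tight enough to match the leading $\mathbb{E}_{\mathbb{P}}[f]^2$ contribution.
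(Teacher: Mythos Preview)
Your high-level architecture---first-moment lower bound, then two second-moment upper bounds---matches the paper. But there is a genuine conceptual gap in your $\mathbb{Q}$-second-moment paragraph, and a missing case that the paper's construction is specifically engineered to handle.

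\textbf{The diagonal of $\mathbb{E}_{\mathbb{Q}}[f^2]$ does \emph{not} match $\mathbb{E}_{\mathbb{P}}[f]^2$.} You write that the diagonal terms ``reproduce a leading contribution of the correct order for matching $\mathbb{E}_{\mathbb{P}}[f]^2$.'' This is backwards. The leading (diagonal-type) contribution to $\mathbb{E}_{\mathbb{Q}}[f^2]$ scales like $s^{2(\aleph-1)}|\mathcal{H}|\cdot\bigl(\iota^4(\epsilon^2\lambda s)^{\ell}/n\bigr)^{4\iota\aleph}$, carrying a \emph{single} factor of $|\mathcal{H}|$, whereas $\mathbb{E}_{\mathbb{P}}[f]^2$ carries $|\mathcal{H}|^2$ and an extra $s^{2(\aleph-1)}$. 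The ratio is thus $\Theta\bigl(s^{2\aleph}|\mathcal{H}|\bigr)$, and the entire point of the decorated-tree construction is that Lemma~\ref{lem-enu-decorated-trees} gives $|\mathcal{H}|\ge\bigl(\alpha^{-1}e^{\iota(\log\log\iota^{-1})^4}\bigr)^{\aleph}$, so that together with condition~\eqref{eq-choice-delta-1} one gets $s^{2\aleph}|\mathcal{H}|\to\infty$ even for $s$ slightly below $\sqrt{\alpha}$. Your proposal never articulates this mechanism, which is the heart of ``breaking Otter's threshold.''

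\textbf{Same tree, different pairing.} You omit the case $(S,K)$ with $\mathsf{T}(S)=\mathsf{T}(K)$ but $\mathsf{P}(S)\neq\mathsf{P}(K)$ (the set $\mathfrak{Q}_{\mathbf{H},\mathbf{I}}$ in the paper). Here the tree parts coincide completely, so there is no ``excess edge'' saving, yet the term could in principle be as large as the diagonal times the number $\mathtt{M}$ of pairings---which would destroy the argument. The paper kills this term via the separation condition Item~(3) of Theorem~\ref{thm-desired-vertex-sets}: distinct pairings differ on at least $\iota\aleph/2$ vertices spaced $\ge(\log\log\iota^{-1})^{10}$ apart in the tree, and Lemma~\ref{lem-delicate-exp-upper-bound-on-trees} converts this into an extra factor $\epsilon^{\Theta(\iota(\log\log\iota^{-1})^{10}\aleph)}$ that beats $\mathtt{M}$. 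This is exactly why the pairings must be carefully designed rather than arbitrary.

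\textbf{Off-diagonal control is sharper than ``one excess edge $=1/n$''.} For the genuinely bad overlaps (say $S_1\cap S_2$ not a tree containing $\mathsf{T}(S_1)\cup\mathsf{T}(S_2)$, or $S_1\notin\mathfrak{R}^*_{\mathbf{H}}$), the paper does \emph{not} proceed edge-by-edge. It proves a structural inequality (Lemmas~\ref{lem-character-non-principle}, \ref{lem-character-Q-H-I}, \ref{lem-characterize-setminus-R-H-I}) of the form $c\iota\aleph-\tau(G_{\cup})\le\bigl(c\ell\iota\aleph+O(\aleph)-|E(G_{\cup})|\bigr)/(\ell/2)$, relating the excess to the edge deficit with a gain of $\ell$, and combines this with enumeration bounds (Lemmas~\ref{lem-enu-decorated-trees-in-given-graph}, \ref{lem-enu-union-of-decorated-trees}) to show the total contribution is $n^{-1+o(1)}$ times the main term. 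Your heuristic would not deliver this: the configurations have $\omega(\log n)$ edges, so a loose $1/n$ per excess edge does not control the $n^{o(1)}$-sized combinatorial factors without the $\ell$-amplification in these lemmas.
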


\begin{remark}{\label{rmk-algorithm}}
    It is straightforward by Chebyshev's inequality that Theorem~\ref{THM-validity-of-statistics} implies that the testing error satisfies
    \begin{align*}
        \Qb\big( f(A,B)\ge \tau \big) + \Pb\big( f(A,B)\le \tau \big)= o(1) \,,
    \end{align*}
    where the threshold $\tau$ is chosen as $\tau = C\mathbb{E}_{\Pb} [f_{\mathcal{T}}(A,B)]$ for any fixed constant $0<C<1$. 
\end{remark}

\subsection{Discussions}{\label{subsec:discussions}}

Before moving on to the proof of Theorem~\ref{THM-validity-of-statistics}, we feel that it is necessary to explain a bit more about the construction of our statistic $f(A,B)$ and the seemingly daunting choices of $\mathcal T_{\aleph}, \mathcal S(\mathcal T_{\aleph})$. Recall that we are in the supercritical regime $\epsilon^2 \lambda s>1$ where weak community recovery in $A$ and $B$ is possible. Also, assuming that all the community labels in $A$ and $B$ (we denote them by $\sigma_A$ and $\sigma_B$) are known to us, it is easy to check that we can achieve detection below Otter's threshold between $\Pb$ and $\Qb$ using the following statistic: 
\begin{align*}
    g(A,B)= \sum_{ \mathbf H_{c} \in \mathcal H_c } \frac{ s^{\aleph-1} \operatorname{Aut}(\mathbf H_c) }{ n^{\aleph} } \sum_{S_1,S_2 \cong \mathbf H_c} \phi_{S_1,S_2}(A,B) \,,
\end{align*}
where $\mathcal H_c$ is the set of all unlabeled two-colored trees and for all $S_1,S_2 \in \mathsf K_n$, and we say $S_1 \cong \mathbf H_c$ if there exists an isomorphism $\varphi:V(S_1) \to V(\mathbf H_c)$ such that $\sigma_A(i) = \sigma_{\mathbf H_c}(\varphi(i))$ for all $i\in V(S_1)$. Thus, a naive attempt is to first run the weak community recovery algorithm in both $A$ and $B$ which produces an estimator $(\widehat{\sigma}_A, \widehat{\sigma}_B)$ of $(\sigma_A,\sigma_B)$, and then plug this estimator into $g(A,B)$ to obtain a testing variable $\widehat{g}(A,B)$. However, as explained in Section~\ref{subsec:innovations}, it seems of substantial challenge to analyze this naive approach directly because we need to tackle the correlations caused by the community recovery step. 

To overcome this issue, we note that both the task of community recovery (by counting non-backtracking or self-avoiding paths) or correlation detection (by counting trees) can be captured by some low-degree polynomials of $(A,B)$. Thus, instead of running these two algorithms sequentially, we seek for a polynomial which \emph{combines} the low-degree polynomials corresponding to community recovery and correlation detection, and then we analyze this polynomial directly. This explains the intuition behind our definition of decorated trees in Definition~\ref{def-decorated-trees}, where the self-avoiding paths $\mathsf L_i(\mathbf H)$ record the information of the community and the trees $\mathsf T(\mathbf H)$ are used for correlation detection. Although this intuition is helpful, analyzing the statistic obtained by counting all decorated trees remains challenging. Another key to the success of our detection algorithm is to exploit the correlation of subgraph counts in $A$ and $B$ as much as possible, while suppressing the undesirable correlation between different subgraph counts. We next explain why restricting to the special family $\mathcal T_{\aleph}, \mathcal S(\mathcal T_{\aleph})$ is crucial, and outline some basic guidelines for choosing its parameters. To illustrate, let us consider the second moment under $\Pb$ (which reduces to calculating $\mathbb E_{\Pb}[\phi_{S_1,S_2}\phi_{K_1,K_2}]$). For simplicity, assume $\pi = \mathsf{id}$. Our baseline analysis focuses on the case in which $S_1,S_2,K_1,K_2$ are simple (i.e., each edge has multiplicity $1$) and $S_1 \cap S_2$ (respectively, $K_1 \cap K_2$) is a tree containing $\mathsf T(S_1)$ (respectively, $\mathsf T(K_1)$; see Figure~\ref{fig:overall}(a)). In this scenario, $S_1 \cap S_2$ must be $\mathsf T(S_1)$ with some self-avoiding paths attached. Using Corollary~\ref{lem-useful-Aut-bound}, we can bound the enumeration of such $(S_1,S_2;K_1,K_2)$ and the expectation in this case can be calculated precisely. Extending from this simple case (referred to as the baseline) to more general cases involves five key observations for bounding the total variance (although the actual proof does not follow this classification exactly):
\begin{itemize}
    \item If one of the self-avoiding paths (for example, $\mathsf L_i(S_1)$) intersects with the tree (for example, $\mathsf T(S_1)$)  (see Figure~\ref{fig:overall}(b)), there will be additional cycles and thus increase the excess of the graph $G_{\cup}=\widetilde S_1\cup \widetilde S_2 \cup \widetilde K_1 \cup \widetilde K_2$, gaining an extra $1/n$ factor in the second moment bound compared to the contribution of the baseline as (non-rigorously speaking) the enumeration of such graphs depends on the number of vertices of $G_{\cup}$ and the expectation in each case depends on the number of edges of $G_{\cup}$.
    \item If the two trees $\mathsf T(K_1), \mathsf T(K_2)$ (or $\mathsf T(S_1), \mathsf T(S_2)$) do not completely overlap (see Figure~\ref{fig:overall}(c)), there will also be additional cycles and thus increase the excess of the graph $G_{\cup}=S_1\cup S_2 \cup K_1 \cup K_2$, gaining extra factors of $1/n$ in the second moment bound compared to the contribution of the baseline.
    \item If $(S_1,S_2)$ and $(K_1,K_2)$ have overlapping vertices (see Figure~\ref{fig:overall}(d)), this will decrease the enumeration of $(S_1,S_2,K_1,K_2)$, gaining extra factors of $1/n$ in the second moment bound compared to the contribution of the baseline.
    \item If $S_1 \cap S_2$ or $K_1 \cap K_2$ contains a whole length-$\ell$ self-avoiding path (see Figure~\ref{fig:overall}(e)), this will decrease $|E(G_{\cup})|$ by $\ell$ and decrease $\tau(G_{\cup})$ by $1$, gaining extra factors of $n(\epsilon^2 \lambda s)^{-\ell}<1/n$ (thanks to \eqref{eq-choice-aleph}) in the second moment bound compared to the contribution of the baseline (note that the contribution from the pair of paths is $O(1)$ when they completely overlap with each other, compared to the baseline case where the contribution is $\frac{(\epsilon^2\lambda s)^\ell}{n^{\ell}}\cdot n^{\ell-1}$)). 
    \item If $\mathsf T(S_1)=\mathsf T(K_1)$ and $\mathsf T(S_2)=\mathsf T(K_2)$ (see Figure~\ref{fig:overall}(f)), using Item~(2) in Theorem~\ref{thm-desired-vertex-sets} the expectation is smaller, gaining extra factors $\epsilon^{\mathtt m\iota\aleph}$ in the second moment bound compared to the contribution of the baseline.
\end{itemize}
\begin{figure}[ht]
    \centering
    \vspace{0cm}
    \includegraphics[height=7.5cm,width=12.8cm]{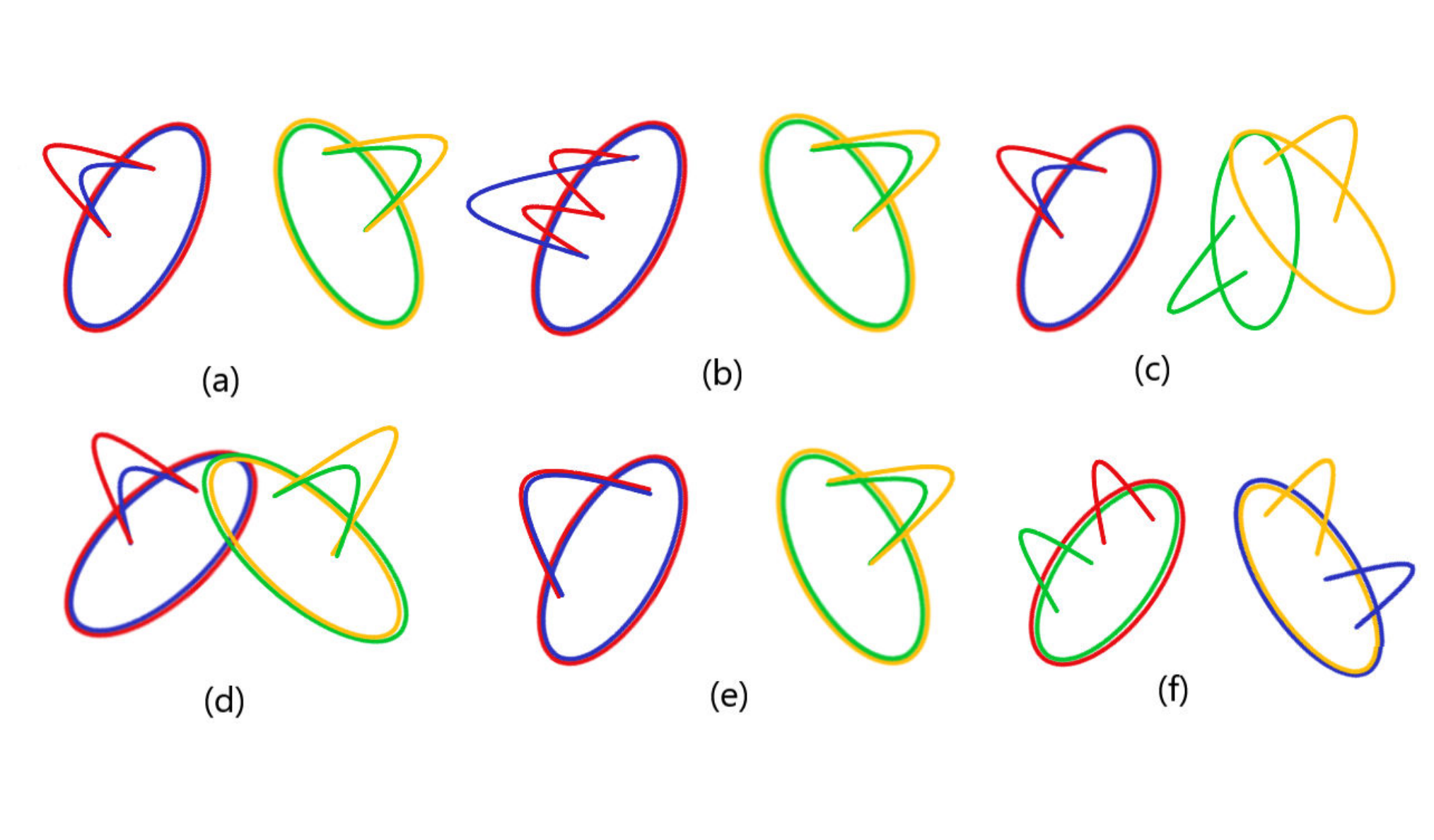}
    \caption{\noindent Examples of overlapping patterns of $S_1,S_2$ and $K_1,K_2$, shown in red/blue and green/orange respectively; the ellipses represent the trees and the curves represent the self-avoiding paths attached to it. (a) $S_1 \cap S_2, T_1 \cap T_2$ are non-intersecting trees containing $\mathsf T(S_1),\mathsf T(K_1)$, respectively (the baseline case); (b) The self-avoiding paths and the trees intersect, creating cycle(s); (c) The two trees $\mathsf T(K_1), \mathsf T(K_2)$ (or $\mathsf T(S_1), \mathsf T(S_2)$) do not completely overlap, creating cycle(s); (d) The four trees $\mathsf T(S_1), \mathsf T(S_2), \mathsf T(K_1), \mathsf T(K_2)$ have non-empty common part; (e) Two self-avoiding paths overlap completely; (f) $\mathsf T(S_1)=\mathsf T(K_1)$ and $\mathsf T(S_2)=\mathsf T(K_2)$.}
    \label{fig:overall}
\end{figure}

\section{Estimation of first moment}{\label{sec:bound-1st-moment}}

The main goal of this section is to prove the following proposition. 
\begin{proposition}{\label{prop-first-moment}}
    We have the following estimate:
    \begin{align}
        s^{2(\aleph-1)} |\mathcal H| \cdot \Big( \tfrac{  (\epsilon^2 \lambda s)^{\ell} (1-\epsilon^{2\mathtt m})^{1/2} }{ n } \Big)^{2\iota\aleph} & \leq \mathbb E_{\mathbb P}[f] \label{eq-Pb-moment-lower-bound} \\
        &\leq s^{2(\aleph-1)} |\mathcal H| \cdot \Big( \tfrac{ (\epsilon^2 \lambda s)^{\ell} (1-\epsilon^{2\mathtt m})^{1/2} }{n} \Big)^{2\iota\aleph} (3e^{2}\Delta^{-1})^{2\iota\aleph} \,.  \label{eq-Pb-moment-upper-bound} 
    \end{align}
\end{proposition}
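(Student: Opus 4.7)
The plan is to compute $\mathbb{E}_{\mathbb P}[f]$ by expanding the definition and isolating the dominant configuration of $(S_1,S_2,\pi_*)$, which turns out to be the \emph{tree-aligned, path-disjoint} configuration: $\pi_*$ maps $\mathsf T(S_1)$ isomorphically onto $\mathsf T(S_2)$, while the paths $\mathsf L_i(S_1)$ and $\pi_*^{-1}(\mathsf L_i(S_2))$ occupy pairwise disjoint edges of the parent graph $G$.

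The first step is to condition on $(\sigma_*,\pi_*)$ and use $A_{ij}=G_{ij}J_{ij}$, $B_{ij}=G_{\pi_*^{-1}(i),\pi_*^{-1}(j)}K_{ij}$. Since the $G_e$ are independent across distinct $G$-positions, $\mathbb{E}_{\mathbb P}[\phi_{S_1}(A)\phi_{S_2}(B)\mid\sigma_*,\pi_*]$ factors over the edges of $E(S_1)\cup\pi_*^{-1}(E(S_2))$. A direct per-edge calculation yields, to leading order in $1/n$, a factor $s(1+\epsilon\sigma_*(i)\sigma_*(j))$ for each edge in the intersection $E(S_1)\cap\pi_*^{-1}(E(S_2))$ and a factor $\epsilon\sigma_*(i)\sigma_*(j)\sqrt{\lambda s/n}$ for each edge in the symmetric difference. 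In the main configuration every tree edge lies in the intersection while every path edge lies in the symmetric difference. For each path pair connecting a tree pair $(u_k,v_k)$, the $\ell$ edge factors on the $A$-side telescope via $\sigma_*(w_j)^2=1$ at intermediate vertices to $\epsilon^\ell(\lambda s/n)^{\ell/2}\sigma_*(u_k)\sigma_*(v_k)$; the analogous $B$-side contribution yields the same endpoint signs since $\sigma_B\circ\pi_*=\sigma_*$, and the product is the deterministic $(\epsilon^2\lambda s/n)^\ell$. Averaging the remaining $\sigma_*$ over the tree vertices, the identity $\mathbb{E}_\sigma\prod_{(i,j)\in E(\mathsf T)}(1+\epsilon\sigma(i)\sigma(j))=1$ (a tree has no nontrivial even-degree edge subset) gives $\mathbb{E}_{\mathbb P}[\phi_{S_1}\phi_{S_2}]\approx s^{\aleph-1}(\epsilon^2\lambda s/n)^{\ell\iota\aleph}$ in this configuration.

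The next step is an enumeration. For each $\mathbf H\in\mathcal H$ the number of $S_1\Vdash_A\mathbf H$ is approximately $\frac{n^\aleph}{\operatorname{Aut}(\mathsf T(\mathbf H))}(\iota^4n^{\ell-1})^{\iota\aleph}$, coming from placing the $\aleph$ tree vertices into $[n]\setminus\mathcal J_A$ (yielding the $(1-\iota^2)^\aleph\ge e^{-2\iota^2\aleph}$ correction) and forcing the two endpoint-neighbors of each length-$\ell$ path into $\mathcal J_A$ (an $|\mathcal J_A|^2 n^{\ell-3}\approx \iota^4n^{\ell-1}$ factor per path); the count for $S_2\Vdash_B\mathbf H$ is identical. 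Finally, the probability under a uniform $\pi_*$ of mapping $\mathsf T(S_1)$ isomorphically onto $\mathsf T(S_2)$ equals $\operatorname{Aut}(\mathsf T(\mathbf H))(n-\aleph)!/n!\approx\operatorname{Aut}(\mathsf T(\mathbf H))/n^\aleph$. Multiplying these with the prefactor $s^{\aleph-1}\operatorname{Aut}(\mathsf T(\mathbf H))(\epsilon^2\lambda s)^{\ell\iota\aleph}/n^{\aleph+\ell\iota\aleph}$, the $\operatorname{Aut}$ factors cancel cleanly and each $\mathbf H$ contributes $s^{2(\aleph-1)}(\iota^4(\epsilon^2\lambda s)^\ell/n)^{2\iota\aleph}$, which sums over $\mathcal H$ to the main asymptotic of both bounds.

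The main obstacle will be the uniform upper-bound control of the non-dominant configurations: partially overlapping trees (each unmatched tree edge in $E(S_1)\triangle\pi_*^{-1}(E(S_2))$ costs a $\sqrt{1/n}$ factor), partially overlapping path pairs (which create cycles and reduce the enumeration room), residual virtual-edge corrections of the form $1+\epsilon^\ell\sigma_*(u_k)\sigma_*(v_k)$ when some paths happen to lie in the intersection, and $\pi_*$-choices that force extra vertex collisions between $V(S_1)$ and $\pi_*^{-1}(V(S_2))$ outside the tree. I plan to show that each such deviation either loses a $1/n$ factor from enumeration or gains a small $\epsilon^{\Omega(\log\log(\iota^{-1}))}$ factor from the residual community averaging (via Item~(4) of Theorem~\ref{thm-desired-vertex-sets}), and to absorb all such corrections into the $e^{5\iota(\log\log(\iota^{-1}))\aleph}$ slack using Corollary~\ref{lem-useful-Aut-bound} to control summations over arm-path lengths attached to $\mathsf T(\mathbf H)$ together with the rigidity consequences of Lemma~\ref{lem-useful-property-trees-and-sets}. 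The lower bound is then immediate from the main-configuration evaluation, with the $(1-\iota^2)^{O(\aleph)}\ge e^{-4\iota^2\aleph}$ correction coming from the two $\mathcal J_A,\mathcal J_B$ exclusions.
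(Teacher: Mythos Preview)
Your main-term computation and identification of the dominant configuration are correct and match the paper's. The paper simplifies first by using symmetry to fix $\pi_*=\mathsf{id}$ (rather than averaging over it), which is equivalent to your approach but cleaner; it then splits $f$ according to whether $(S_1,S_2)$ lies in the ``good'' event $\mathcal A=\{\pi_*(S_1)\cap S_2$ is a tree containing $\pi_*(\mathsf T(S_1))\cup\mathsf T(S_2)\}$ or its complement.

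There are two genuine gaps. For the lower bound you say it is ``immediate from the main configuration,'' but this requires either that all other configurations contribute nonnegatively or that you have already proved they are $o(\text{main term})$ in absolute value. The paper takes the first route via Lemma~\ref{lem-expectation-positive}: expanding $\mathbb E_{\Pb_\pi}[\phi_{S_1,S_2}]$ over the $\sigma$-basis gives only products of nonnegative $u_{r,t},v_{r,t}$ (Lemma~\ref{lem-joint-moment-A-B}) times $\mathbb E[\prod\sigma_i\sigma_j]\in\{0,1\}$, hence $\mathbb E_{\Pb_\pi}[\phi_{S_1,S_2}]\ge 0$ for every $\pi$. You need this (or the full error bound) before the lower bound is ``immediate.''

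For the upper bound, your plan to use an $\epsilon^{\Omega(\log\log(\iota^{-1}))}$ gain via Item~(4) of Theorem~\ref{thm-desired-vertex-sets} is misplaced: that mechanism is a \emph{second}-moment tool (used when $(S,K)$ share a tree but have different pairings). In the first moment $S_1,S_2\Vdash$ the \emph{same} $\mathbf H$, so once the trees align the pairing vertices align too (Lemma~\ref{lem-useful-property-trees-and-sets}(2)), and the endpoint $\sigma$-factors from the $A$-side and $B$-side paths cancel exactly; your ``$1+\epsilon^\ell\sigma_u\sigma_v$'' has leading term $1$, not $\epsilon^{\mathrm{dist}}$. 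What actually kills the bad configurations $\mathcal A^c$ is a structural inequality (Lemma~\ref{lem-character-non-principle}): for any such $(S_1,S_2)$ with $G_\cup=\widetilde S_1\cup\widetilde S_2$, one has $2\iota\aleph-\tau(G_\cup)\le\tfrac{2}{\ell}(2\ell\iota\aleph+2\aleph-|E(G_\cup)|)$. Combined with the crude per-edge bound (Lemma~\ref{lem-upper-bound-exp}) and the choice $(\epsilon^2\lambda s)^\ell>n^4$ from \eqref{eq-choice-aleph}, this makes the bad part $n^{-1+o(1)}$ times the main term. Your ``each deviation loses $1/n$'' heuristic is too coarse to see this $\ell$-dependent tradeoff, and in particular does not explain why a fully-overlapping path (which replaces a $(\epsilon^2\lambda s/n)^\ell$ factor by $s^\ell\approx 1$) is still negligible. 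Your plan to control arm-path summations via Corollary~\ref{lem-useful-Aut-bound} and Lemma~\ref{lem-useful-property-trees-and-sets} is exactly right for the $\mathcal A$-part and is how the $e^{5\iota\log\log(\iota^{-1})\aleph}$ factor arises.
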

To prove Proposition~\ref{prop-first-moment}, we first need the following two lemmas:
\begin{lemma}{\label{lem-joint-moment-A-B}}
    For any $r+t \geq 1$, there exist $u_{r,t},v_{r,t}$ such that
    \begin{align}
        \mathbb E_{\Pb_{\sigma,\pi}}\Bigg[ \Bigg( \frac{ A_{i,j} - \tfrac{\lambda s}{n} }{ \big( \tfrac{\lambda s}{n} (1-\tfrac{\lambda s}{n}) \big)^{1/2} } \Bigg)^r \Bigg( \frac{ B_{\pi(i),\pi(j)} - \tfrac{\lambda s}{n} }{ \big( \tfrac{\lambda s}{n} (1-\tfrac{\lambda s}{n}) \big)^{1/2} } \Bigg)^t \ \Bigg] =  u_{r,t} + v_{r,t} \sigma_i \sigma_j \,. \label{eq-joint-moment-A-B}
    \end{align}
    In addition, we have
    \begin{align}
        & u_{0,1} = u_{1,0} = 0\,,\quad v_{0,1} = v_{1,0} = \big(\tfrac{n}{\epsilon^2 \lambda s} \big)^{-1/2} \,; \label{eq-strong-bound-u-v-r,t} \\
        & u_{0,2}=u_{2,0}=1\,, \quad v_{0,2}=v_{2,0}=\epsilon\,, \quad u_{1,1}=s \,, \quad v_{1,1}=\epsilon s \,. \label{eq-strong-second-bound-u-v-r,t} \\
        & 0 \leq u_{r,t}, v_{r,t} \leq \big( \tfrac{n}{\epsilon^2 \lambda s} \big)^{(r+t-2)/2} \mbox{ for } t+r\geq 3\,. \label{eq-bound-u-v-r,t}
    \end{align}
\end{lemma}

\begin{lemma}{\label{lem-expectation-over-chain}}
    For a path $P$ with $V(P)= \{ v_0, \ldots, v_l \} $ and $\operatorname{EndP}(P)=\{ v_0,v_l \}$, we have that for all $a_i,b_i \in \mathbb R$
    \begin{equation}{\label{eq-expectation-over-chain}}
        \mathbb{E}_{\sigma \sim \nu} \Big[ \prod_{i=1}^{l} \big( a_{i} + b_{i} \cdot \sigma_{i-1} \sigma_i \big) \mid \sigma_0, \sigma_l \Big] = \prod_{i=1}^{l} a_i + \sigma_0 \sigma_l \cdot \prod_{i=1}^{l} b_i \,.
    \end{equation}
\end{lemma}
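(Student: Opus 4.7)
The plan is to prove the identity by directly expanding the product and exploiting the Fourier structure of $\nu$. First I would write
\begin{align*}
\prod_{i=1}^l (a_i + b_i \sigma_{i-1}\sigma_i) = \sum_{S \subseteq \{1,\ldots,l\}} \Big( \prod_{i \notin S} a_i \Big) \Big( \prod_{i \in S} b_i \Big) \prod_{i \in S} \sigma_{i-1} \sigma_i ,
\end{align*}
so that each term in the expansion is indexed by a subset $S$ of edges of the path $P$. Collecting the powers of each $\sigma_j$, if we let $d_j(S)$ be the number of edges in $S$ incident to $v_j$, then the trailing factor equals $\prod_{j=0}^{l} \sigma_j^{d_j(S)}$.

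Second, I would observe that under $\nu$ the labels $\sigma_0,\ldots,\sigma_l$ are i.i.d.\ uniform on $\{\pm 1\}$, so conditioning on $\sigma_0,\sigma_l$ leaves $\sigma_1,\ldots,\sigma_{l-1}$ independent uniform on $\{\pm 1\}$. Consequently the conditional expectation of $\prod_{j=0}^l \sigma_j^{d_j(S)}$ equals $\sigma_0^{d_0(S)}\sigma_l^{d_l(S)}$ when $d_j(S)$ is even for every interior $j$, and vanishes otherwise.

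Third, I would identify the surviving subsets $S$ using the linearity of the path. For each interior vertex $v_j$ ($1\le j\le l-1$) the only two incident edges are $j$ and $j+1$, and $d_j(S)\in\{0,2\}$ forces them to be simultaneously in or simultaneously out of $S$. Propagating this equivalence from $v_1$ along the path shows that only $S=\emptyset$ and $S=\{1,\ldots,l\}$ contribute. The former gives $\prod_{i=1}^l a_i$; the latter gives $\prod_{i=1}^l b_i$ times $\sigma_0^{d_0}\sigma_l^{d_l}=\sigma_0\sigma_l$, since each interior vertex has $d_j=2$ (so $\sigma_j^2=1$) and the endpoints have $d_0=d_l=1$. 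Summing the two contributions yields the claimed identity.

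There is essentially no obstacle: the statement is a transfer-matrix/telescoping identity, and the argument uses only the elementary orthogonality of characters of $\{\pm 1\}^{l-1}$. An equivalent proof by induction on $l$ — conditioning on $\sigma_{l-1}$, using $\mathbb{E}[\sigma_{l-1}\mid\sigma_0,\sigma_l]=0$, and reducing to length $l-1$ — would work just as well, but the direct expansion has the advantage of making transparent why the mixed cross terms cancel.
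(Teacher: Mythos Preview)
Your proof is correct and follows essentially the same approach as the paper: both expand the product over subsets $S\subseteq\{1,\ldots,l\}$ and use independence of the interior labels to show that only $S=\emptyset$ and $S=\{1,\ldots,l\}$ survive. The paper states the vanishing of the cross terms in one line (``by independence''), while you spell out the parity argument along the path, but the argument is the same.
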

The proofs of Lemmas~\ref{lem-joint-moment-A-B} and \ref{lem-expectation-over-chain} are incorporated in Sections~\ref{subsec:proof-lem-3.3} and \ref{subsec:proof-lem-3.4}, respectively. We now state several useful bounds that control the expectation of $\beta_{S_1}(A)\beta_{S_2}(B)$. 

\begin{lemma}{\label{lem-upper-bound-exp}}
    For connected multigraphs $S_1, S_2$  we have 
    \begin{align}
        \mathbb E_{\Pb_{\pi}}\big[ \beta_{S_1}(A)\beta_{S_2}(B) \big]=0 \mbox{ if } \mathcal L(\pi(S_1)) \cup \mathcal L(S_2) \not\subset V(\pi(S_1))\cap V(S_2) \,.  \label{eq-upper-bound-exp-case-1}
    \end{align} 
    In addition, suppose that $|\mathcal L(\widetilde S_1)|, |\mathcal L(\widetilde S_2)| \leq \aleph$, and for all $(i,j) \in E(S_1) \cup E(S_2)$, denote 
    \begin{align}\label{eq-def-chi}
        \chi(i,j)=(\chi_1(i,j),\chi_2(i,j))=( E(S_1)_{i,j}, E(S_2)_{i,j} ) \,.
    \end{align}
    Then for all $V \subset V(S_1) \cup V(\pi^{-1}(S_2))$ we have that $\mathbb E_{\Pb_{\pi}}\Big[\beta_{S_1}(A)\beta_{S_2}(B) \cdot \prod_{i \in V} \sigma_i \Big]$ is bounded by (recall that $\tau(H)=|E(H)|-|V(H)|$ and recall \eqref{eq-joint-moment-A-B})
    \begin{align}
        2^{ 5\tau(\pi(S_1)\cup S_2)+5|V|+12\aleph } \cdot \prod_{ (i,j) \in E(S_1) \cup E(S_2) } \max\big\{ u_{ \chi_1(i,j),\chi_2(i,j) }, v_{\chi_1(i,j),\chi_2(i,j)} \big\} \,. \label{eq-upper-bound-exp-case-2}
    \end{align}
    In particular, $\mathbb E_{\Pb_{\pi}}\Big[\beta_{S_1}(A)\beta_{S_2}(B) \cdot \prod_{i \in V} \sigma_i \Big]$ is bounded by
    \begin{align}
        2^{ 5\tau(\pi(S_1)\cup S_2)+5|V|+12\aleph } \cdot \big( \tfrac{n}{\epsilon^2 \lambda s} \big)^{ \sum_{(i,j) \in E(S_1) \cup E(S_2)} (\chi_1(i,j)+\chi_2(i,j)-2)/2 } \,.  \label{eq-upper-bound-exp-case-2-plus}
    \end{align}
\end{lemma}

\begin{lemma}{\label{lem-delicate-exp-upper-bound-on-trees}}
    Given any tree $T$ and a subset $\mathtt U \subset V(T)$ such that $|\mathtt U|$ is even. For any $\mathtt V \subset V(T)$ with even number of elements we define
    \begin{align}{\label{def-mathcal-X-of-sets}}
        \mathcal X(\mathtt V) = \Big\{ \big( \mathtt u_1, \mathtt w_1, \ldots, \mathtt u_{|\mathtt V|/2}, \mathtt w_{|\mathtt V|/2} \big): \big\{ \mathtt u_1 , \mathtt w_1 ,\ldots, \mathtt u_{|\mathtt V|/2} , \mathtt w_{|\mathtt V|/2} \big\} = \mathtt V \Big\} \,.
    \end{align}
    Suppose that 
    \begin{align*}
        \min_{(\mathtt u_1,\mathtt w_1,\ldots,\mathtt u_{|\mathtt U|/2}, \mathtt w_{|\mathtt U|/2} ) \in \mathcal X(\mathtt U)} \Big\{ \sum_{\mathtt i=1}^{|\mathtt U|/2} \mathsf{Dist}_T(\mathtt u_\mathtt i , \mathtt w_\mathtt i) \Big\} = \mathtt d\,,
    \end{align*}
    then we have 
    \begin{align}
        &\mathbb E\Big[ \beta_T(A)^2 \cdot \prod_{\mathtt u \in \mathtt U} \sigma_{\mathtt u} \Big] \circeq \epsilon^{ \mathsf{d} } \,, \label{eq-upper-bound-exp-on-trees} \\
        &\mathbb E_{\Pb_{\mathsf{id}}}\Big[ \beta_T(A)\beta_T(B)\cdot \prod_{\mathtt u \in \mathtt U} \sigma_{\mathtt u} \Big] \circeq s^{|E(T)|} \cdot \epsilon^{ \mathsf{d} } \,. \label{eq-upper-bound-exp-on-trees-2}
    \end{align}
\end{lemma}

The proofs of Lemmas~\ref{lem-upper-bound-exp} and \ref{lem-delicate-exp-upper-bound-on-trees} are incorporated in \ref{subsec:proof-lem-3.6} and \ref{subsec:proof-lem-3.7} of the appendix, respectively. We now return to the proof of Proposition~\ref{prop-first-moment}. Since $f$ is invariant under permutations, it suffices to prove the bounds for $\mathbb E_{\Pb_{\mathsf{id}}}[f]$ in place of $\mathbb E_{\Pb}[f]$ in Proposition~\ref{prop-first-moment}. Define $\mathfrak R_{\mathbf H}^*$ to be the subset of $\mathfrak R_{\mathbf H}:=\{ S: S \Vdash \mathbf H \}$ such that each vertex in $V(S)$ appears exactly once in 
\begin{align*}
    V(\mathsf{T}(S)), V(\mathsf{L}_1(S)) \setminus \operatorname{EndP}(\mathsf{L}_1(S)), \ldots, V(\mathsf{L}_{\iota\aleph}(S)) \setminus \operatorname{EndP}(\mathsf{L}_{\iota\aleph}(S)) \,.
\end{align*} 
Note that for $S \in \mathfrak R_{\mathbf H}^*$ all of the edges in $\mathsf T(S)$ and $\mathsf L_i(S)$ for $i=1,\dots,\iota\aleph$ are not multi-edges, and thus $\widetilde S=S$. In addition, define $\mathcal A$ to be a map from a pair of subgraphs $(S_1,S_2) \in \mathfrak R_{\mathbf H} \times \mathfrak R_{\mathbf I}$ to subsets of permutations by
\begin{align}
    \mathcal A(S_1,S_2) =
    \begin{cases}
        \big\{ \pi \in \mathfrak S_n: \pi(S_1) \cap S_2 \mbox{ is a tree containing } \pi(\mathsf T(S_1)) \cup \mathsf T(S_2) \big\} \,, & S_1,S_2 \in \mathfrak R_{\mathbf H}^* \times \mathfrak R_{\mathbf I}^*  \,; \\
        \emptyset \,, & \mbox{otherwise} \,,
    \end{cases}  \label{eq-def-mathcal-A-diamond}
\end{align}
In addition, define (recall $\mathsf P(S)$ in Definition~\ref{def-decorated-trees}) 
\begin{align}
    &\mathtt K_1(S_1,S_2;\pi) = \big\{ i: \pi(\mathsf L_i(S_1)) \cap S_2 = \operatorname{EndP}(\pi(\mathsf L_i(S_1))) \not\in \mathsf P(S_2) \big\}\,, \label{eq-def-K1-s1-S2-pi} \\
    &\mathtt K_2(S_1,S_2;\pi) = \big\{ i: \mathsf L_i(S_2) \cap \pi( S_1) = \operatorname{EndP}(\mathsf L_i(S_2)) \not\in \mathsf P(\pi(S_1)) \big\}\,,  \label{eq-def-K2-s1-S2-pi}\\
    &\mathtt K(S_1,S_2;\pi) = \#\mathtt K_1(S_1,S_2;\pi) + \#\mathtt K_2(S_1,S_2;\pi) \,, \label{eq-def-K-s1-S2-pi}\\
    &\mathtt K_c(S_1,S_2;\pi) = \# \big\{ i: \pi(\mathsf L_i(S_1)) \cap S_2 = \operatorname{EndP}(\pi(\mathsf L_i(S_1))) \in \mathsf P(S_2) \big\} \,, \label{eq-def-Kc-s1-S2-pi}
\end{align}
and denote the principal subset of $\mathcal A(S_1,S_2)$ by 
\begin{align}
    \mathcal A_*(S_1,S_2) = \big\{ \pi \in \mathcal A(S_1,S_2) : \mathtt K(S_1,S_2;\pi) = 0 \big\}\,, \label{eq-def-mathcal-A*-S1-S2}
\end{align}
(see Figures~\ref{fig:1} and \ref{fig:2} for illustrations of possible patterns).
We now introduce the following lemmas.
\begin{lemma}{\label{lem-est-intersection-istree-P-Q}}
    If $S,K$ are simple graphs such that $S \cap K=T$ and $S\cup K=T \sqcup (\sqcup_{i=1}^{m} \mathcal L_i)$ where $\{ \mathcal{L}_i:1\leq i \leq m\}$ are disjoint self-avoiding paths with $V(\mathcal L_i) \cap V(T)=\operatorname{EndP}(\mathcal L_i)$. Then we have the following:
    \begin{align}
        & \mathbb E_{\Pb_{\mathsf{id}}}\big[ \beta_{S}(A) \beta_K(B) \big] \circeq \big( \tfrac{\epsilon^2\lambda s}{n} \big)^{\frac{1}{2} \sum_{i \leq m} |E(\mathcal L_i)| } \mathbb E_{\Pb_{\mathsf{id}}}\Big[ \beta_T(A) \beta_T(B) \cdot \prod_{i\leq m} \prod_{u \in \operatorname{EndP}(\mathcal L_i)} \sigma_u \Big] \,, \label{eq-phi-P-intersec-istree}\\
        & \mathbb E\big[ \beta_{S}(A) \beta_K(A) \big] \circeq \big( \tfrac{\epsilon^2\lambda s}{n} \big)^{ \frac{1}{2}\sum_{i \leq m} |E(\mathcal L_i)| } \mathbb E\Big[ \beta_T(A)^2 \cdot \prod_{i\leq m} \prod_{u \in \operatorname{EndP}(\mathcal L_i)} \sigma_u \Big] \,. \label{eq-phi-Q-intersec-istree}
    \end{align}
\end{lemma}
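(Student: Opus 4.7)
The plan is to condition on community labels so that edges become independent, factor the expectation as a product of per-edge contributions via Lemma~\ref{lem-joint-moment-A-B}, and then collapse the interior of each attached path using Lemma~\ref{lem-expectation-over-chain}.

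For the first identity \eqref{eq-phi-P-intersec-istree}, I would first condition on the shared ground-truth labeling $\sigma_*$ under $\Pb_{\mathsf{id}}$. Since each pair $(A_{u,v},B_{u,v})$ depends only on the independent triple $(G_{u,v},J_{u,v},K_{u,v})$, distinct edges are conditionally independent, so the conditional expectation factors as
\begin{align*}
\mathbb E_{\Pb_{\sigma_*,\mathsf{id}}}[\phi_S(A)\phi_K(B)] = \prod_{(u,v)\in E(T)}(u_{1,1}+v_{1,1}\sigma_u\sigma_v)\cdot \prod_{i=1}^m\prod_{(u,v)\in E(\mathcal L_i)}(u_{1,0}+v_{1,0}\sigma_u\sigma_v),
\end{align*}
where I have used that $S\cap K=T$ forces each $\mathcal L_i$ to lie entirely in $S\setminus K$ or in $K\setminus S$, so the corresponding single-edge factor is given by Lemma~\ref{lem-joint-moment-A-B} with $(r,t)=(1,0)$ (resp.\ $(0,1)$). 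A direct computation from $\mathbb E[A_{u,v}\mid\sigma_*]=\tfrac{\lambda s}{n}(1+\epsilon\sigma_u\sigma_v)$ yields $u_{1,0}=0$ exactly and $v_{1,0}=\epsilon\sqrt{\lambda s/n}\,(1+O(\lambda s/n))$.

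The next step is to apply Lemma~\ref{lem-expectation-over-chain} to each path $\mathcal L_i$, integrating out the interior labels with endpoints $x_i,y_i$ held fixed. Since $a_j=u_{1,0}=0$, the formula collapses to $v_{1,0}^{|E(\mathcal L_i)|}\sigma_{x_i}\sigma_{y_i}$, which equals $(\epsilon^2\lambda s/n)^{|E(\mathcal L_i)|/2}\sigma_{x_i}\sigma_{y_i}$ up to a multiplicative correction $(1+O(\lambda s/n))^{|E(\mathcal L_i)|/2}$. After extracting these $m$ prefactors, the residual expectation over $\sigma_*|_{V(T)}$ against $\prod_{(u,v)\in E(T)}(u_{1,1}+v_{1,1}\sigma_u\sigma_v)\cdot\prod_i\sigma_{x_i}\sigma_{y_i}$ recombines (by the same edge-factorisation identity applied in reverse on $T$) into $\mathbb E_{\Pb_{\mathsf{id}}}\bigl[\phi_T(A)\phi_T(B)\prod_i\sigma_{x_i}\sigma_{y_i}\bigr]$, giving \eqref{eq-phi-P-intersec-istree}. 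For \eqref{eq-phi-Q-intersec-istree} the pipeline is identical, with the shared $\sigma_*$ replaced by two independent labelings $\sigma_A,\sigma_B$ governing the two marginal SBMs: conditional on $(\sigma_A,\sigma_B)$ the edge factorisation still holds, the tree factor on each $(u,v)\in E(T)$ now splits as $(u_{1,0}+v_{1,0}\sigma_A(u)\sigma_A(v))\cdot(u_{1,0}+v_{1,0}\sigma_B(u)\sigma_B(v))$ by independence of $A$ and $B$, and the path collapse is unchanged; the $\sigma_u$ on the RHS is read as $\sigma_A(u)$ or $\sigma_B(u)$ according to whether the underlying path sits on the $A$- or the $B$-side.

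The main obstacle is justifying the $\circeq$. The compounded multiplicative error equals $(1+O(\lambda s/n))^{\frac12\sum_i|E(\mathcal L_i)|}$; since $S,K\in\mathfrak R_{\mathbf H}^*$ contribute at most $2\iota\aleph$ attached paths each of length $\le\ell$, we have $\sum_i|E(\mathcal L_i)|\le 2\iota\aleph\ell$, and the parameter choice $\ell\le\log n$, $\aleph=o(\log\log\log n)$ from \eqref{eq-choice-aleph} makes the exponent $o(\log n\cdot\log\log\log n)$, so the whole correction is $1+o(1)$. Beyond this bookkeeping and the correct interpretation of the endpoint factors $\sigma_u$ under $\Qb$, I do not foresee further conceptual difficulties.
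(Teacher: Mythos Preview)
Your proposal is correct and follows essentially the same approach as the paper: condition on the community labels, use the per-edge moments from Lemma~\ref{lem-joint-moment-A-B} (in particular $u_{1,0}=0$, $v_{1,0}\circeq\sqrt{\epsilon^2\lambda s/n}$), collapse each attached path via Lemma~\ref{lem-expectation-over-chain}, and recombine. The paper's proof is organised slightly differently---it conditions only on $\{\sigma_u:u\in\mathtt U\}$ where $\mathtt U=\bigcup_i\operatorname{EndP}(\mathcal L_i)$ and directly invokes conditional independence of the tree block $\phi_T(A)\phi_T(B)$ and each path block, rather than first factoring edge by edge---but the content is identical, and your handling of the $\circeq$ error and of the $\Qb$ case is in line with what the paper does (the paper simply says the $\Qb$ argument ``follows similarly'').
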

\begin{lemma}{\label{lem-est-intersection-general-istree-P-Q}}
    Suppose $\mathbf H \,,\mathbf I \in \mathcal H$ and $(S_1, S_2) \in \mathfrak R^*_{\mathbf H} \times \mathfrak R^*_{\mathbf I}$, we have
    \begin{align*}
        \mathbb E_{\Pb}\big[ \beta_{\pi(S_1)}(A)\beta_{S_2}(A) \big], \ \mathbb E_{\Pb_{\pi}}\big[ \beta_{S_1}(A)\beta_{S_2}(B) \big] \geq 0
    \end{align*}
    for all $\pi \in \mathfrak S_n$, and 
    \begin{align*}
        \mathbb E_{\Pb}\big[ \phi_{\pi(S_1)}(A)\phi_{S_2}(A) \big], \ \mathbb E_{\Pb_{\pi}}\big[ \phi_{S_1}(A)\phi_{S_2}(B) \big] \geq 0
    \end{align*}
    for all $\pi\in \mathcal A(S_1,S_2)$. Moreover, for $\pi \in \mathcal A(S_1,S_2)$ there exist $h_n, \tilde h_n = O(\frac{1}{\sqrt{n}})$ such that
    \begin{align}
        \mathbb E_\Qb \big[ \phi_{S_1}(A)\phi_{S_2}(A) \big] \circeq h_n^{\mathtt K(S_1 , S_2 ; \mathsf{id})} (1 -\epsilon^{2\mathtt m})^{\mathtt K_c(S_1 , S_2 ; \mathsf{id})} \mathbb E_\Qb [\beta_{S_1}(A)\beta_{S_2}(A)]\, \label{eq-S,K-intersection-general-A,A}
    \end{align}
    and
    \begin{align}
        \mathbb E_{\Pb_{\pi}}\big[ \phi_{S_1}(A)\phi_{S_2}(B) \big] \circeq \tilde h_n^{\mathtt K(S_1 , S_2 ; \pi)} (1 -\epsilon^{2\mathtt m})^{\mathtt K_c(S_1 , S_2 ; \pi)} \mathbb E_{\Pb_{\pi}}[\beta_{S_1}(A)\beta_{S_2}(B)]\,. \label{eq-S,K-intersection-general-A,B}
    \end{align}
\end{lemma}
The proof of Lemma~\ref{lem-est-intersection-istree-P-Q} is provided in Section~\ref{subsec:proof-lem-est-intersection-istree-P-Q}, and the proof of Lemma~\ref{lem-est-intersection-general-istree-P-Q} is provided in Section~\ref{subsec:proof-lem-3.5}. For any $\{ \mathsf{A}(S_1,S_2) \subset \mathfrak S_n: S_1,S_2 \in \cup_{ \mathbf H \in \mathcal H } \mathfrak R_{\mathbf H} \}$, we define 
\begin{equation}{\label{eq-def-f-*}}
    f_{\mathsf A}= \sum_{ \mathbf H \in \mathcal H } \frac{ s^{\aleph-1} \operatorname{Aut}(\mathsf{T}(\mathbf H)) (\epsilon^2 \lambda s)^{\ell\iota\aleph} }{ n^{\aleph+ \ell\iota\aleph} } \sum_{S_1,S_2 \in \mathfrak R_{\mathbf H}} \phi_{S_1,S_2}(A,B) \cdot \mathbf 1_{\{ \mathsf{id}\in\mathsf A(S_1,S_2) \}} \,.
\end{equation}
We first deal with $\mathbb E_{\Pb_{\mathsf{id}}}[f_{\mathcal A_*}]$, as incorporated in the following lemma. 
\begin{lemma}{\label{lem-first-moment-good-part}}
    For all $S_1,S_2\in\mathfrak R^*_{\mathbf{H}}$, we have (recall \eqref{eq-def-mathcal-A-diamond}) 
    \begin{align}
        s^{2(\aleph-1)} |\mathcal H| \cdot \Big( \tfrac{ (\epsilon^2 \lambda s)^{\ell} (1-\epsilon^{2\mathtt m})^{1/2} }{ n } \Big)^{2\iota\aleph} &\leq \mathbb E_{\Pb_{\mathsf{id}}}\big[ f_{\mathcal A_*} \big]\nonumber \\
        &\leq s^{2(\aleph-1)} |\mathcal H| \cdot \Big( \tfrac{ (\epsilon^2 \lambda s)^{\ell} (1-\epsilon^{2\mathtt m})^{1/2} }{n} \Big)^{2\iota\aleph} \cdot (3e^{2}\Delta^{-1})^{2\iota \aleph} \,.\nonumber
    \end{align}
\end{lemma}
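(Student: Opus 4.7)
The plan is to analyze $\mathbb{E}_{\Pb_{\mathsf{id}}}[f_{\mathcal A}]$ directly (using the permutation-symmetry reduction $\mathbb{E}_{\Pb}=\mathbb{E}_{\Pb_{\mathsf{id}}}$) by separating out a ``baseline'' contribution in which $\mathsf T(S_1),\mathsf T(S_2)$ are identified and the attached paths are all edge-disjoint. First I would unfold \eqref{eq-def-f-*} to write
\[ \mathbb{E}_{\Pb_{\mathsf{id}}}[f_{\mathcal A}]=\sum_{\mathbf H\in \mathcal H} C(\mathbf H)\sum_{(S_1,S_2)\in \mathcal B(\mathbf H)}\mathbb{E}_{\Pb_{\mathsf{id}}}[\phi_{S_1,S_2}], \]
with $C(\mathbf H)=s^{\aleph-1}\operatorname{Aut}(\mathsf T(\mathbf H))(\epsilon^2\lambda s)^{\ell\iota\aleph}n^{-\aleph-\ell\iota\aleph}$ and $\mathcal B(\mathbf H):=\{(S_1,S_2)\in\mathfrak R^*_{\mathbf H}\times\mathfrak R^*_{\mathbf H}:\,S_1\cap S_2 \text{ is a tree containing } \mathsf T(S_1)\cup \mathsf T(S_2)\}$. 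My first structural claim is that, up to a negligible exception, membership in $\mathcal B(\mathbf H)$ forces $\mathsf T(S_1)=\mathsf T(S_2)$ and $\mathsf P(S_1)=\mathsf P(S_2)$ as labeled data: any edge of $\mathsf T(S_1)\setminus \mathsf T(S_2)$ sitting inside some $\mathsf L_j(S_2)$ would require interior vertices in $\mathcal J_B$ while its endpoints lie in $V(\mathsf T(S_1))\subset [n]\setminus \mathcal J_A$, which (because $|\mathcal J_A\cap \mathcal J_B|\approx \iota^4 n$ while the paths are long) is a negligible possibility, and a symmetric argument yields the identification of the two trees.

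With $T:=\mathsf T(S_1)=\mathsf T(S_2)$ identified, I partition $\mathcal B(\mathbf H)$ by the edge-overlap pattern of the $2\iota\aleph$ attached paths. The dominant ``baseline'' is the case where all $2\iota\aleph$ paths are pairwise edge-disjoint, so $S_1\cap S_2=T$ and $S_1\cup S_2=T\sqcup(\sqcup_i \mathcal L_i)$. Here Lemma~\ref{lem-est-intersection-istree-P-Q} applies directly:
\[ \mathbb{E}_{\Pb_{\mathsf{id}}}[\phi_{S_1,S_2}]\;\circeq\;\Bigl(\tfrac{\epsilon^2\lambda s}{n}\Bigr)^{\iota\aleph\ell}\mathbb{E}_{\Pb_{\mathsf{id}}}\Big[\phi_T(A)\phi_T(B)\prod_{i\le 2\iota\aleph}\prod_{u\in \operatorname{EndP}(\mathcal L_i)}\sigma_u\Big]. \]
Since $\mathsf P(S_1)=\mathsf P(S_2)$, every vertex of the common pairing occurs in exactly two of the sets $\operatorname{EndP}(\mathcal L_i)$, so $\prod\sigma_u=1$ pointwise. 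The residual $\mathbb{E}_{\Pb_{\mathsf{id}}}[\phi_T(A)\phi_T(B)]$ is evaluated via Lemma~\ref{lem-joint-moment-A-B} with $r=t=1$: conditionally on $\sigma$ it equals $\prod_{(i,j)\in E(T)}(u_{1,1}+v_{1,1}\sigma_i\sigma_j)$ with $u_{1,1}\circeq s$, $v_{1,1}\circeq s\epsilon$, and averaging over $\sigma\sim\nu$ kills every non-empty edge subset by a tree-leaf cancellation, leaving $u_{1,1}^{\aleph-1}\circeq s^{\aleph-1}$.

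For the enumeration, the number of labeled $(\mathsf T,\mathsf P)$-embeddings realizing $\mathbf H$ into $[n]\setminus \mathcal J_A$ is $\circeq n^\aleph/\operatorname{Aut}(\mathbf H)$, and for each pair $(u_i,v_i)$ the number of length-$\ell$ self-avoiding paths from $u_i$ to $v_i$ with the two vertices adjacent to the endpoints lying in $\mathcal J_A$ is $\circeq |\mathcal J_A|^2 n^{\ell-3}/2=\iota^4 n^{\ell-1}/2$; analogously in $\mathcal J_B$ for $S_2$. Multiplying by $C(\mathbf H)$ and the baseline expectation yields a per-$\mathbf H$ baseline contribution
\[ \frac{s^{2(\aleph-1)}\operatorname{Aut}(\mathsf T(\mathbf H))}{\operatorname{Aut}(\mathbf H)\cdot 4^{\iota\aleph}}\Bigl(\tfrac{\iota^4(\epsilon^2\lambda s)^\ell}{n}\Bigr)^{2\iota\aleph}, \]
and summing over $\mathbf H$ via Lemma~\ref{lem-enu-decorated-trees}, together with the bound on $\operatorname{Aut}(\mathsf T(\mathbf H))/\operatorname{Aut}(\mathbf H)$ coming from Lemma~\ref{lem-useful-property-trees-and-sets}\,(6)--(7), will produce both inequalities.

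The main technical obstacle I expect is the upper bound, specifically the control of non-baseline contributions coming from $(S_1,S_2)$ whose attached paths share one or more edges (necessarily with interior vertices in $\mathcal J_A\cap\mathcal J_B$). Each such overlap simultaneously reduces the enumeration (by roughly $\iota^4 n$ per shared interior vertex) and shifts the expectation into higher joint moments governed by the estimate $(n/(\epsilon^2\lambda s))^{(r+t-2)/2}$ of Lemma~\ref{lem-joint-moment-A-B}; one must show these two effects cancel tightly enough that all overlap contributions collectively sum to at most $e^{5\iota\log\log(\iota^{-1})\aleph}$ above the baseline. This is where Corollary~\ref{lem-useful-Aut-bound} should be invoked to geometrically damp the sum over overlap structures, and the canonical features of $\mathcal T_\aleph$ (Definition~\ref{def-tilde-T-K}) together with the separation properties of the pairings (Theorem~\ref{thm-desired-vertex-sets}) will be needed to keep the lower bound within the $e^{-2\iota^2\aleph}$ tolerance.
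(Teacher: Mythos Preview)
Your lower-bound approach is essentially the paper's: restrict to the sub-event $\mathcal A_\diamond\subset\mathcal A$ where $S_1\cap S_2=\mathsf T(S_1)=\mathsf T(S_2)$, invoke positivity (Lemma~\ref{lem-expectation-positive}) to drop the rest, and compute via Lemma~\ref{lem-est-intersection-istree-P-Q}. (Two minor slips: there is no $/2$ in the path count since the endpoints $(u_i,v_i)$ are labelled, so the $4^{\iota\aleph}$ denominator should be removed; and by Property~(2) of Lemma~\ref{lem-useful-property-trees-and-sets}, $\mathsf{Vert}(\mathsf P(\mathbf H))\subset\mathsf{Fix}(\mathsf T(\mathbf H))$, so $\operatorname{Aut}(\mathbf H)=\operatorname{Aut}(\mathsf T(\mathbf H))$ and your ratio is just $1$.)

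The upper bound has a real gap: your structural claim that $\mathsf T(S_1)=\mathsf T(S_2)$ is forced (even ``up to a negligible exception'') by the $\mathcal J_A/\mathcal J_B$ restriction is incorrect. In Definition~\ref{def-decorated-trees} only the \emph{two} interior vertices of $\mathsf L_j(S_2)$ adjacent to its endpoints are required to lie in $\mathcal J_B$; all other interior vertices are completely unrestricted. So an edge of $\mathsf T(S_1)\setminus\mathsf T(S_2)$ can sit in the middle of some $\mathsf L_j(S_2)$ with no $\mathcal J$-constraint whatsoever, and your ``interior vertices in $\mathcal J_B$'' / ``$|\mathcal J_A\cap\mathcal J_B|\approx\iota^4 n$'' reasoning does not apply. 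What is actually forced is only that $S_1\cap S_2=\mathsf T(S_1)\oplus(\oplus_{u\in\mathsf{Vert}(\mathsf P(S_1))}\mathcal L_u^{(x_u)})$ for some lengths $x_u\ge 0$; the tree $\mathsf T(S_2)$ can then be \emph{any} copy of $\mathsf T(\mathbf H)$ inside this enlarged tree, and by Property~(7) of Lemma~\ref{lem-useful-property-trees-and-sets} there are $\operatorname{Aut}(\mathsf T(\mathbf H)\oplus(\oplus_u\mathcal L_u^{(x_u)}))/\operatorname{Aut}(\mathsf T(\mathbf H))$ such copies---this can be exponentially large in $\iota\aleph$.

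This flexibility in $\mathsf T(S_2)$, not the ``$\mathcal J_A\cap\mathcal J_B$ path overlaps'' you anticipate, is exactly the source of the $e^{5\iota\log\log(\iota^{-1})\aleph}$ slack. The paper bounds the automorphism ratio by $e^{4\iota\log\log(\iota^{-1})\aleph}$ via Property~(6), notes that once $\mathsf T(S_2)$ is fixed $\mathsf P(S_2)$ is determined (again by Property~(2)), counts the remainder of $S_2$'s paths as $n^{(\ell-1)\iota\aleph-\sum x_u}\iota^{2\#\{u:x_u=0\}}$, and then sums the geometric series in $\{x_u\}$ using $\iota^2\epsilon^2\lambda s\gg 1$. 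So you have the right tool (Corollary~\ref{lem-useful-Aut-bound}) in hand, but it enters to control the embedding count of $\mathsf T(S_2)$, not to damp a separate overlap enumeration.
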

\begin{proof}
    On the one hand, define $\mathcal{A}_\diamond$ to be a map from any pair $(S_1,S_2)$ to subsets of permutations by
    \begin{equation}{\label{eq-def-mathcal-A-diamond-*}}
        \mathcal A_{\diamond}(S_1,S_2) =
        \begin{cases}
            \big\{ \pi \in \mathfrak S_n: \pi(S_1) \cap S_2 = \pi(\mathsf{T}(S_1)) = \mathsf{T}(S_2) \big\} \,, & S_1,S_2 \in \mathfrak R_{\mathbf H}^*  \,; \\
            \emptyset \,, & \mbox{otherwise} \,.
        \end{cases}
    \end{equation}
    It is clear that $\mathcal A_{\diamond} (S_1,S_2)\subset \mathcal A_*(S_1,S_2)$ for all $(S_1,S_2)$. Recall \eqref{eq-def-f-*}. Using Lemma~\ref{lem-est-intersection-general-istree-P-Q}, we have
    \begin{align}
        \mathbb E_{\Pb_{\mathsf{id}}}\big[ f_{\mathcal A_*} \big] \geq \mathbb E_{\Pb_{\mathsf{id}}}\big[ f_{\mathcal A_{\diamond}} \big] \,. \label{eq-lower-bound-first-moment-1}
    \end{align}
    In addition, we have that for $S_1,S_2 \in \mathfrak R^*_{\mathbf H}$, $\mathsf{id} \in \mathcal A_{\diamond}(S_1,S_2)$ if and only if $\mathsf T(S_1)=\mathsf T(S_2)=S_1 \cap S_2$. Thus, we have
    \begin{align*}
        \mathbb E_{\Pb_{\mathsf{id}}}\big[ f_{\mathcal A_{\diamond}} \big] = \sum_{ \mathbf H \in \mathcal H } \frac{ s^{\aleph-1} \operatorname{Aut}(\mathsf{T}(\mathbf H)) (\epsilon^2 \lambda s)^{\ell\iota\aleph} }{ n^{\aleph+ \ell\iota\aleph} } \sum_{ \substack{S_1,S_2 \in \mathfrak R^*_{\mathbf H} \\ S_1 \cap S_2=\mathsf T(S_1)=\mathsf T(S_2)} } \mathbb E_{\Pb_{\mathsf{id}}}\big[ \phi_{S_1,S_2}(A,B) \big] \,.
    \end{align*}
    We now calculate $\mathbb E_{\Pb_{\mathsf{id}}}[\phi_{S_1,S_2}(A,B)]$ for $S_1 \cap S_2=\mathsf T(S_1)=\mathsf T(S_2)$ and $S_1,S_2 \in \mathfrak R_{\mathbf H}^*$. From Theorems~\ref{thm-num-desired-tree} and \ref{thm-desired-vertex-sets}, we see that $\mathsf{Vert}(\mathsf P(S_1)) \subset \mathsf{Fix}(\mathsf T(S_1))$, and thus $S_1 \cap S_2=\mathsf T(S_1)=\mathsf T(S_2)$ and $S_1,S_2 \in \mathfrak R_{\mathbf H}^*$ imply that $\mathsf P(S_1)=\mathsf P(S_2)$, as illustrated in Figure~\ref{fig:1}.  
    \begin{figure}[!ht]
        \centering
        \vspace{0cm}
        \includegraphics[height=4.34cm,width=13.78cm]{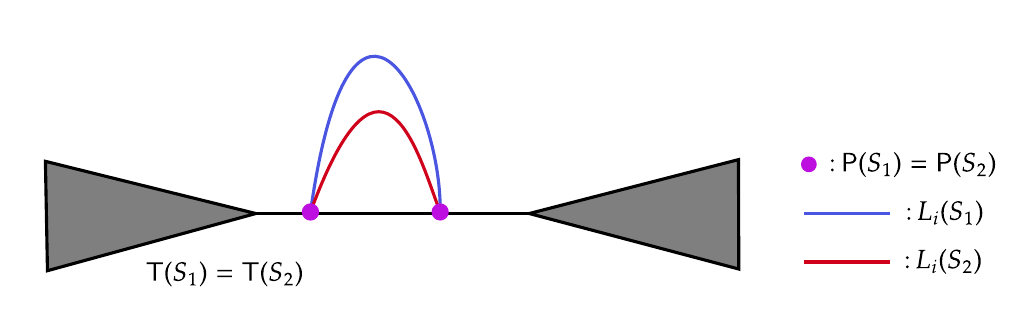}
        \caption{Intersection pattern for $\mathsf{id} \in \mathcal A_{\diamond}(S_1,S_2)$}
        \label{fig:1}
    \end{figure}
    Thus, applying Lemma~\ref{lem-est-intersection-istree-P-Q} and Lemma~\ref{lem-est-intersection-general-istree-P-Q}, we have $\mathbb E_{\Pb_{\mathsf{id}}}[ \phi_{S_1,S_2}(A,B) ]$ equals $(1+o(1))(1-\epsilon^{2 \mathtt m})^{\iota\aleph}$ times
    \begin{align}
        \big( \tfrac{\epsilon^2 \lambda s}{n} \big)^{\ell\iota\aleph} \cdot \mathbb E_{\Pb_{\mathsf{id}}}\Bigg[ \prod_{(i,j) \in E(\mathsf T(S_1))} \tfrac{ A_{i,j}-\frac{\lambda s}{n} }{ \sqrt{ \frac{\lambda s}{n} (1-\frac{\lambda s}{n}) } } \prod_{ (i,j) \in E(\mathsf T(S_2)) } \tfrac{ B_{i,j}-\frac{\lambda s}{n} }{ \sqrt{ \frac{\lambda s}{n} (1-\frac{\lambda s}{n}) } } \Bigg] = \big( \tfrac{\epsilon^2 \lambda s}{n} \big)^{\ell\iota\aleph} s^{\aleph-1} \,.  \label{eq-expectation-pi-in-A-diamond}
    \end{align}
    Thus, we have $\mathbb E_{\Pb_{\mathsf{id}}}\big[ f_{\mathcal A_{\diamond}} \big]$ equals $(1+o(1))(1-\epsilon^{2\mathtt m})^{\iota\aleph}$ times
    \begin{align}
        & \sum_{ \mathbf H \in \mathcal H } \frac{ s^{\aleph-1} \operatorname{Aut}(\mathsf{T}(\mathbf H)) (\epsilon^2 \lambda s)^{\ell\iota\aleph} }{ n^{\aleph+ \ell\iota\aleph} } \sum_{ \substack{S_1,S_2 \in \mathfrak R^*_{\mathbf H} \\ S_1 \cap S_2=\mathsf T(S_1)=\mathsf T(S_2)} } \big( \tfrac{\epsilon^2 \lambda s}{n} \big)^{\ell\iota\aleph} s^{\aleph-1} \nonumber \\
        =\ & \sum_{ \mathbf H \in \mathcal H } \frac{ s^{2(\aleph-1)} \operatorname{Aut}(\mathsf{T}(\mathbf H)) (\epsilon^2 \lambda s)^{2\ell\iota\aleph} }{ n^{\aleph+2\ell\iota\aleph} } \#\big\{ (S_1,S_2): S_1,S_2 \in \mathfrak R^*_{\mathbf H}, S_1 \cap S_2=\mathsf T(S_1)=\mathsf T(S_2) \big\} \nonumber \\
        \circeq\ & \sum_{ \mathbf H \in \mathcal H } \frac{ s^{2(\aleph-1)} \operatorname{Aut}(\mathsf{T}(\mathbf H)) (\epsilon^2 \lambda s)^{2\ell\iota\aleph} }{ n^{\aleph+2\ell\iota\aleph} } \cdot \frac{ n^{\aleph} }{ \operatorname{Aut}(\mathsf{T}(\mathbf H)) } \big( n^{\ell-1} \big)^{2\iota\aleph} \nonumber \\
        =\ & s^{2(\aleph-1)} \big( \tfrac{(\epsilon^2 \lambda s)^{\ell}}{n} \big)^{2\iota\aleph} |\mathcal H|  \,. \label{eq-lower-bound-first-moment-2}
    \end{align}
    Plugging \eqref{eq-lower-bound-first-moment-2} into \eqref{eq-lower-bound-first-moment-1} yields the desired lower bound for $\mathbb E_{\Pb_{\mathsf{id}}}[f_{\mathcal A_*}]$. 
    
    For the upper bound, recall that for all $S_1,S_2 \in \mathfrak R^*_{\mathbf H}$ such that $\mathsf{id}\in\mathcal A_*(S_1,S_2)$, we have $S_1 \cap S_2$ is a tree containing $\mathsf T(S_1) \cup \mathsf T(S_2)$. Thus, denoting $\mathsf P(S_1)=\{ (u_1,u_2), \ldots, (u_{2\iota\aleph-1},u_{2\iota\aleph}) \}$, there must exist paths $\mathcal L_1,\ldots,\mathcal L_{2\iota\aleph}$ with $u_i \in \operatorname{EndP}(\mathcal L_i)$ such that 
    \begin{align*}
        S_1 \cap S_2 = \mathsf{T}(S_1) \oplus \Big( \oplus_{u \in \mathsf{Vert}(\mathsf P (S_1))} \mathcal L_{u} \Big) \,,
    \end{align*}
    as illustrated in Figure~\ref{fig:2}. 
    \begin{figure}[!ht]
        \centering
        \vspace{0cm}
        \includegraphics[height=11.29cm,width=13.84cm]{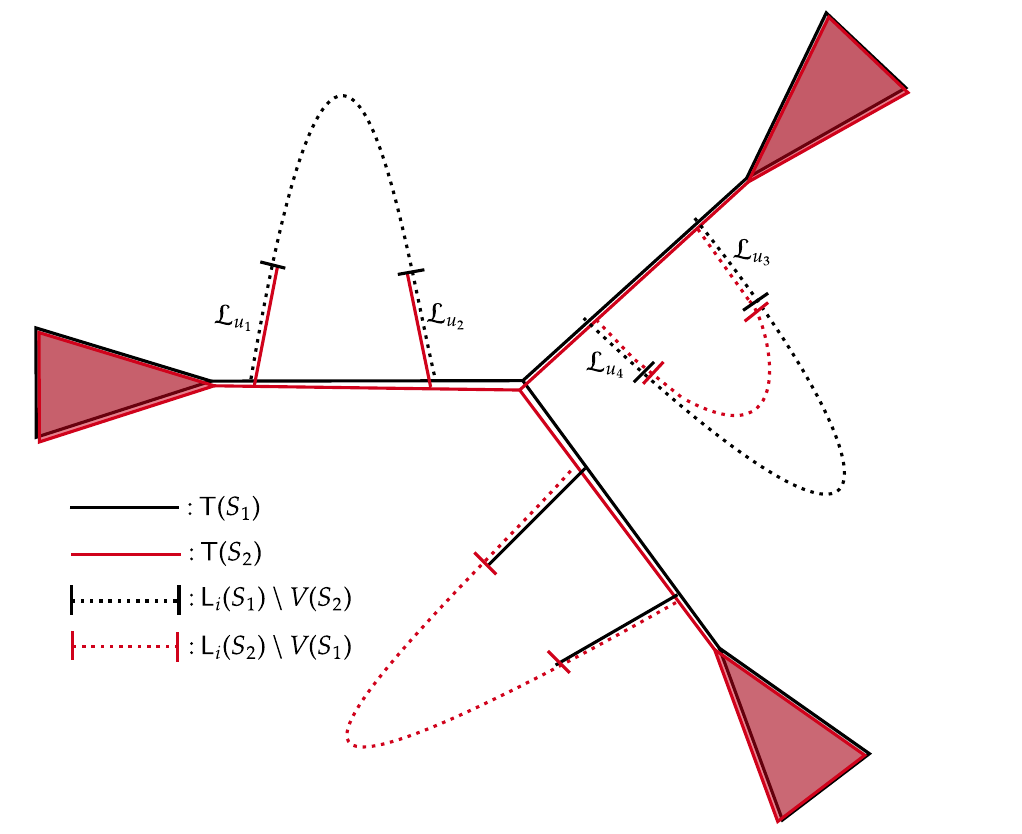}
        \caption{Intersection pattern for $\mathsf{id}\in \mathcal A_*(S_1,S_2)$}
        \label{fig:2}
    \end{figure} 
    Define 
    \begin{align}
        \mathcal O(\{x_u: u \in \mathsf{Vert}(\mathsf P(S_1)) \}) &= \{ S_2 \in \mathfrak R^*_{\mathbf H}:\mbox{ there exist paths } \mathcal L_u \mbox{ with }u \in \operatorname{EndP}(\mathcal L_u)\} \nonumber \\
        \mbox{ such that }|E(\mathcal L_u)|&=x_u\mbox{ and }S_1 \cap S_2 = \mathsf{T}(S_1) \oplus ( \oplus_{u \in \mathsf{Vert}(\mathsf P(S_1))} \mathcal L_{u} )\}\,. \label{eq-def-mathcal-O-set}
    \end{align}
    We first bound the cardinality of $\mathcal O(\{x_u: u \in \mathsf{Vert}(\mathsf P(S_1)) \})$. First note that we must have $\mathsf T(S_2) \subset \mathsf{T}(S_1) \oplus ( \oplus_{u \in \mathsf{Vert}(\mathsf P(S_1))} \mathcal L_{u} )$, and thus the number of possible choices of $\mathsf T(S_2)$ is bounded by
    \begin{align*}
        \#\big\{ T \subset \mathsf{T}(S_1) \oplus ( \oplus_{u \in \mathsf{Vert}(\mathsf P(S_1))} \mathcal L_{u} ): T \cong \mathsf T(\mathbf H) \big\} \leq \frac{ \operatorname{Aut}(\mathsf{T}(\mathbf H) \oplus ( \oplus_{u \in \mathsf{Vert}(\mathsf P(\mathbf H))} \mathcal L_{u} )) }{ \operatorname{Aut}(\mathsf T(\mathbf H)) } \,,
    \end{align*}
    where the inequality follows from Item~(5) in Lemma~\ref{lem-useful-property-trees-and-sets}. In addition, since $S_2 \Vdash \mathbf H$ and $\mathsf{Vert}(\mathsf P(S_2)) \subset \mathsf{Fix}(\mathsf T(S_2))$, we have that the relative position of $\mathsf P(S_2)$ on the tree $\mathsf T(S_2)$ is determined, which implies that the choice of $\mathsf P(S_2)$ is fixed once we have chosen $\mathsf T(S_2)$. Finally, it is straightforward to check that 
    \begin{align}
        & \#\Big( \cup_{1 \leq i \leq \iota\aleph} V(\mathsf L_i(S_2)) \setminus V(S_1) \Big) = (\ell-1)\iota\aleph-\sum x_u \,, \nonumber \\
        & \mathtt K(S_1 ,S_2 ; \mathsf{id}) = 0\,,\quad\mathtt K_c(S_1 ,S_2 ; \mathsf{id}) \geq \iota\aleph - \#\{ u : x_u > 0\} \geq \iota\aleph - \sum x_u \,. \label{eq-P-first-Acal-star-K-estimates}
    \end{align} 
    Thus, given $\mathsf T(S_2)$, the number of choices for $\{ \mathsf L_i(S_2): 1 \leq i \leq \iota \aleph \}$ is bounded by $n^{ (\ell-1)\iota\aleph-\sum x_u }$. As a result, we have
    \begin{align*}
        \#\mathcal O(\{x_u: u \in \mathsf{Vert}(\mathsf P(S_1)) \}) \leq \frac{ \operatorname{Aut}(\mathsf T(\mathbf H) \oplus (\oplus_{u\in\mathsf{Vert}(\mathsf P(\mathbf H))} \mathcal L_{u}))\cdot  n^{ (\ell-1)\iota\aleph-\sum x_u }  }{ \operatorname{Aut}(\mathsf T(\mathbf H)) } \,.
    \end{align*}
    Also, by \eqref{eq-S,K-intersection-general-A,B}, \eqref{eq-phi-P-intersec-istree} and \eqref{eq-upper-bound-exp-on-trees-2} we obtain $\mathbb E_{\Pb_{\mathsf{id}}}\big[ \phi_{S_1,S_2}(A,B) \big]$ is bounded by $[1+o(1)]$ times
    \begin{align}\label{eq-expectation-pi-in--A}
        &s^{\aleph-1+\sum_{u\in\mathsf{Vert}(\mathsf P(S_2))} x_u} \big( \tfrac{\epsilon^2\lambda s}{n} \big)^{\iota\aleph-\sum_{u\in\mathsf{Vert}(\mathsf P(S_2))} x_u} (1-\epsilon^{2\mathtt m})^{\mathtt K_c(S_1 , S_2;\mathsf{id})}\cdot h_n^{\mathtt K(S_1 , S_2 ; \mathsf{id})} \cdot \big( \tfrac{\epsilon^2\lambda s }{n} \big)^{(\ell -1)\iota\aleph}\nonumber \\
        \leq\ & s^{\aleph-1+\sum_{u\in\mathsf{Vert}(\mathsf P(S_2))} x_u} \big( \tfrac{\epsilon^2\lambda s(1-\epsilon^{2\mathtt m})}{n} \big)^{\iota\aleph-\sum_{u\in\mathsf{Vert}(\mathsf P(S_2))} x_u} \cdot \big( \tfrac{\epsilon^2\lambda s }{n} \big)^{(\ell -1)\iota\aleph}\,,
    \end{align}
    where the inequality holds by \eqref{eq-P-first-Acal-star-K-estimates}. Thus (writing $\mathbf{x}\geq 0$ for $x_u\geq 0$ for all $u\in\mathsf{P}(S_2)$),
    \begin{align}\label{eq-first-moment-P-id--A}
        & \mathbb E_{\Pb_{\mathsf{id}}}\big[ f_{\mathcal A_*} \big] = \sum_{ \mathbf H \in \mathcal H } \frac{ s^{\aleph-1} \operatorname{Aut}(\mathsf{T}(\mathbf H)) (\epsilon^2 \lambda s)^{\ell\iota\aleph} (1-\epsilon^{2\mathtt m})^{\iota\aleph}}{ n^{\aleph+ \ell\iota\aleph} } \sum_{ S_1 \in \mathfrak R^*_{\mathbf H} } \sum_{\mathbf{x} \geq 0} \sum_{ S_2 \in \mathcal O(\{x_u\}) } \mathbb E_{\Pb_{\mathsf{id}}}\big[ \phi_{S_1,S_2}(A,B) \big] \nonumber \\
        \leq\ & \sum_{ \mathbf H \in \mathcal H } \operatorname{Aut}(\mathsf{T}(\mathbf H)) \frac{ s^{2(\aleph-1)}  (\epsilon^2 \lambda s)^{2\ell\iota\aleph}(1-\epsilon^{2\mathtt m})^{\iota\aleph} }{ n^{\aleph+ (\ell+1)\iota\aleph} } \sum_{ S_1 \in \mathfrak R^*_{\mathbf H} } \sum_{\mathbf{x} \geq 0} \frac{ \operatorname{Aut}(\mathsf T(\mathbf H) \oplus (\oplus_{u\in\mathsf{Vert}(\mathsf P(\mathbf H))} \mathcal L_{u})) }{ \operatorname{Aut}(\mathsf T(\mathbf H)) (\epsilon^2 \lambda s(1 - \epsilon^{2 \mathtt m}))^{\sum{x_u}} } \nonumber \\
        \leq\ & \sum_{ \mathbf H \in \mathcal H } \operatorname{Aut}(\mathsf{T}(\mathbf H)) \frac{ e^{4\iota\aleph} s^{2(\aleph-1)} (\epsilon^2 \lambda s)^{2\ell\iota\aleph}(1-\epsilon^{2\mathtt m})^{\iota\aleph} }{ n^{\aleph+ (\ell+1)\iota\aleph} } (1-\tfrac{1}{\epsilon^2 \lambda s(1 - \epsilon^{2 \mathtt m})})^{-2\iota\aleph} |\mathfrak R^*_{\mathbf H}| \nonumber \\
        \leq\ & \sum_{ \mathbf H \in \mathcal H } s^{2(\aleph-1)} \Big( \tfrac{ e^2 (\epsilon^2 \lambda s)^{\ell}(1-\epsilon^{2\mathtt m})^{1/2}}{n} \Big)^{2\iota\aleph} (1-\tfrac{1}{\epsilon^2 \lambda s(1 - \epsilon^{2 \mathtt m})})^{-2\iota\aleph} \nonumber \\
        \leq\ & s^{2(\aleph-1)} \cdot (3e^{2}\Delta^{-1})^{2\iota\aleph} \Big( \tfrac{(\epsilon^2 \lambda s)^{\ell}(1-\epsilon^{2\mathtt m})^{1/2}}{n} \Big)^{2\iota\aleph} |\mathcal H|  \,. 
    \end{align}
    where the second inequality follows from Item~(4) in Lemma~\ref{lem-useful-property-trees-and-sets} and \eqref{eq-assumption-s,lambda}, the third inequality follows from  
    \begin{equation}{\label{eq-bound-card-R^*_H}}
        |\mathfrak R^{*}_{\mathbf H}| \leq \frac{ n^{\aleph} }{ \operatorname{Aut}(\mathsf T(\mathbf H)) } \cdot \big(n^{\ell-1} \big)^{\iota\aleph} \,,
    \end{equation}
    and the fourth inequality follows from \eqref{eq-choice-mathtt-m}. This leads to the desired upper bound of $\mathbb E_{\Pb_{\mathsf{id}}}[f_{\mathcal A_*}]$.
\end{proof}

Based on Lemma~\ref{lem-first-moment-good-part}, it remains to deal with the parts $\pi_* \in\mathcal A(S_1,S_2) \setminus \mathcal A_*(S_1,S_2)$ and $\pi_* \not\in \mathcal A(S_1,S_2)$. For notational convenience, for $S_1,S_2\in \cup_{\mathbf H\in\mathcal{H}}\mathfrak R_{\mathbf{H}}$ we denote 
\begin{align}\label{eq-def-tilde-mathcal-A-mathcal-A-c}
    \widetilde{\mathcal{A}}(S_1,S_2)= \mathcal{A}(S_1,S_2) \setminus \mathcal{A}_*(S_1,S_2)\,,\quad \mathcal{A}^c(S_1,S_2)=\mathfrak S_n\setminus \mathcal{A}(S_1,S_2)\,.
\end{align}
And we further define
\begin{align}
    &\mathbb E^{|\cdot|}_{\Pb}\big[ f_{\widetilde{\mathcal A}} \big] = \sum_{ \mathbf H \in \mathcal H } \frac{ s^{\aleph-1} \operatorname{Aut}(\mathsf{T}(\mathbf H)) (\epsilon^2 \lambda s)^{\ell\iota\aleph} }{ n^{\aleph+ \ell\iota\aleph} }  \sum_{ S_1,S_2 \in \mathfrak R_{\mathbf H} } \Big| \mathbb E_{\Pb}\big[ \mathbf{1}_{\{ \pi_* \in \widetilde{\mathcal A}(S_1,S_2) \}} \phi_{S_1,S_2}(A,B) \big]  \Big|\,, \label{eq-def-abs-E-Pb-f-tilde-A} \\
    &\mathbb E^{|\cdot|}_{\Pb}\big[ f_{\mathcal A^c} \big] = \sum_{ \mathbf H \in \mathcal H } \frac{ s^{\aleph-1} \operatorname{Aut}(\mathsf{T}(\mathbf H)) (\epsilon^2 \lambda s)^{\ell\iota\aleph} }{ n^{\aleph+ \ell\iota\aleph} } \sum_{ S_1,S_2 \in \mathfrak R_{\mathbf H} } \Big| \mathbb E_{\Pb}\big[ \mathbf{1}_{\{ \pi_* \in \mathcal A^c(S_1,S_2) \}} \phi_{S_1,S_2}(A,B) \big]  \Big|\,.  \label{eq-def-abs-E-Pb-f-A-c}
\end{align}

\begin{lemma}{\label{lem-first-moment-bad-part}}
    Recall \eqref{eq-def-f-*}. We have 
    \begin{align} 
        & \mathbb E^{|\cdot|}_{\Pb}\big[ f_{\widetilde{\mathcal A}} \big] = o(1) \cdot s^{2(\aleph-1)}  \Big( \tfrac{ (\epsilon^2 \lambda s)^{\ell} (1-\epsilon^\mathtt m )^{1/2} }{ n } \Big)^{2\iota\aleph} |\mathcal H| {\label{eq-first-moment-semi-bad-part}}\,,\\
        & \mathbb E^{|\cdot|}_{\Pb}\big[ f_{\mathcal A^c} \big] = o(1) \cdot s^{2(\aleph-1)}  \Big( \tfrac{ (\epsilon^2 \lambda s)^{\ell} (1-\epsilon^\mathtt m )^{1/2} }{ n } \Big)^{2\iota\aleph} |\mathcal H| \,.{\label{eq-first-moment-bad-part}}
    \end{align}
\end{lemma}

\begin{proof}[Proof of Proposition~\ref{prop-first-moment}]
    It suffices to note that Proposition~\ref{prop-first-moment} follows directly from combining Lemmas~\ref{lem-first-moment-good-part} and \ref{lem-first-moment-bad-part}.
\end{proof}

The rest of this section is devoted to the proof of Lemma~\ref{lem-first-moment-bad-part}. To this end, note that by symmetry we have $\mathbb E_{\Pb}^{|\cdot|}[f_{\widetilde{\mathcal A}}]=\mathbb E^{|\cdot|}_{\Pb_\pi}[f_{\widetilde{\mathcal A}}]$ and $\mathbb E_{\Pb}^{|\cdot|}[f_{\mathcal A^{c}}]=\mathbb E^{|\cdot|}_{\Pb_\pi}[f_{\mathcal A^c}]$ for any $\pi\in\mathfrak S_n$. Thus, it suffices to show \eqref{eq-first-moment-semi-bad-part} and \eqref{eq-first-moment-bad-part} for $\Pb_{\mathsf{id}}$. We prove \eqref{eq-first-moment-semi-bad-part} first. Recall \eqref{eq-def-mathcal-O-set}. By \eqref{eq-S,K-intersection-general-A,B}, \eqref{eq-phi-P-intersec-istree} and \eqref{eq-upper-bound-exp-on-trees-2}, for all $S_1 , S_2 \in \mathfrak R^*_\mathbf H$ such that $\mathsf{id} \in \widetilde{\mathcal A}(S_1,S_2)$ and $S_2 \in \mathcal O(\{x_u: u \in \mathsf{Vert}(\mathsf P(S_1)) \})$ we have $\big|\mathbb E_{\Pb_{\mathsf{id}}}\big[ \phi_{S_1,S_2}(A,B) \big]\big|$ is bounded by $[1+o(1)]$ times 
\begin{align}\label{eq-expectation-pi-in--tilde-A}
    &s^{\aleph-1+\sum_{u\in\mathsf{Vert}(\mathsf P(S_2))} x_u} \big( \tfrac{\epsilon^2\lambda s}{n} \big)^{\iota\aleph-\sum_{u\in\mathsf{Vert}(\mathsf P(S_2))} x_u} (1-\epsilon^{2\mathtt m})^{\mathtt K_c(S_1 , S_2;\mathsf{id})}\cdot |h_n|^{\mathtt K(S_1 , S_2 ; \mathsf{id})} \cdot \big( \tfrac{\epsilon^2\lambda s }{n} \big)^{(\ell -1)\iota\aleph}\nonumber \\
    \leq\ & s^{\aleph-1+\sum_{u\in\mathsf{Vert}(\mathsf P(S_2))} x_u} \big( \tfrac{\epsilon^2\lambda s(1-\epsilon^{2\mathtt m})}{n} \big)^{\iota\aleph-\sum_{u\in\mathsf{Vert}(\mathsf P(S_2))} x_u} \cdot \big(\tfrac{|h_n|}{1-\epsilon^{2\mathtt m}}\big)^{\mathtt K(S_1 , S_2 ; \mathsf{id})} \cdot \big( \tfrac{\epsilon^2\lambda s }{n} \big)^{(\ell -1)\iota\aleph}\nonumber \\
    \leq\ & s^{\aleph-1+\sum_{u\in\mathsf{Vert}(\mathsf P(S_2))} x_u} \big( \tfrac{\epsilon^2\lambda s(1-\epsilon^{2\mathtt m})}{n} \big)^{\iota\aleph-\sum_{u\in\mathsf{Vert}(\mathsf P(S_2))} x_u} \cdot \tfrac{|h_n|}{1-\epsilon^{2\mathtt m}} \cdot \big( \tfrac{\epsilon^2\lambda s }{n} \big)^{(\ell -1)\iota\aleph}\,,
\end{align}
where the first inequality holds by
\begin{align*}
    \mathtt K_c(S_1 , S_2;\mathsf{id}) + \mathtt K(S_1 , S_2;\mathsf{id}) &\geq \#\{ i \in [\iota\aleph] : \mathsf L_i(S_2) \cap \pi( S_1) = \operatorname{EndP}(\mathsf L_i(S_2))\} \\
    &\geq \iota\aleph - \#\{ u \in \mathsf{Vert}(\mathsf P (S_2)):x_u > 0\} \geq \iota\aleph - \sum_{u \in \mathsf{Vert}(\mathsf P (S_2))} x_u\,,
\end{align*}
and the second inequality holds by the fact that $\mathtt K(S_1 , S_2 ; \mathsf{id}) \geq 1$ whenever $\mathsf{id} \in \widetilde{\mathcal A}(S_1 , S_2)$. Therefore, similar to \eqref{eq-first-moment-P-id--A} we have (writing $\mathbf{x}\geq 0$ for $x_u\geq 0$ for all $u\in\mathsf{P}(S_2)$),
\begin{align}\label{eq-first-moment-P-id--tilde-A}
    &\mathbb E^{|\cdot|}_{\Pb_{\mathsf{id}}}\big[ f_{\widetilde{\mathcal A}} \big] = \sum_{ \mathbf H \in \mathcal H } \frac{ s^{\aleph-1} \operatorname{Aut}(\mathsf{T}(\mathbf H)) (\epsilon^2 \lambda s)^{\ell\iota\aleph} (1-\epsilon^{2\mathtt m})^{\iota\aleph}}{ n^{\aleph+ \ell\iota\aleph} } \sum_{ S_1 \in \mathfrak R^*_{\mathbf H} } \sum_{\mathbf{x} \geq 0} \sum_{ \substack{S_2 \in \mathcal O(\{x_u\}) \\ \mathsf{id} \in \widetilde{\mathcal A}(S_1,S_2)}}\big| \mathbb E_{\Pb_{\mathsf{id}}}\big[ \phi_{S_1,S_2}(A,B) \big] \big| \nonumber \\
    \leq\ & 2|h_n|\sum_{ \mathbf H \in \mathcal H } \operatorname{Aut}(\mathsf{T}(\mathbf H)) \frac{ s^{2(\aleph-1)}  (\epsilon^2 \lambda s)^{2\ell\iota\aleph}(1-\epsilon^{2\mathtt m})^{\iota\aleph} }{ n^{\aleph+ (\ell+1)\iota\aleph} } \sum_{ S_1 \in \mathfrak R^*_{\mathbf H} } \sum_{\mathbf{x} \geq 0} \frac{ \operatorname{Aut}(\mathsf T(\mathbf H) \oplus (\oplus_{u\in\mathsf{Vert}(\mathsf P(\mathbf H))} \mathcal L_{u})) }{ \operatorname{Aut}(\mathsf T(\mathbf H)) (\epsilon^2 \lambda s(1 - \epsilon^{2 \mathtt m}))^{\sum{x_u}} } \nonumber \\
    \leq\ & 2|h_n| s^{2(\aleph-1)} e^{18\iota\aleph} \Big( \tfrac{(\epsilon^2 \lambda s)^{\ell}(1-\epsilon^{2\mathtt m})^{1/2}}{n} \Big)^{2\iota\aleph} |\mathcal H| = o(1) \cdot s^{2(\aleph-1)} \Big( \tfrac{(\epsilon^2 \lambda s)^{\ell}(1-\epsilon^{2\mathtt m})^{1/2}}{n} \Big)^{2\iota\aleph} |\mathcal H|\,, 
\end{align}
which yields \eqref{eq-first-moment-semi-bad-part}. Next we prove \eqref{eq-first-moment-bad-part}. Recall Definition~\ref{def-decorated-trees}. For $S \Vdash \mathbf H$ where $\mathbf H$ is a decorated tree and $I \subset [\iota \aleph]$, denote 
\begin{align}\label{eq-def-I-over-S}
    S(I) =\cup_{i \in I} \mathsf L_i(S) \,,\quad S^I = S(I) \cup \mathsf T(S)\,.
\end{align}
By \eqref{eq-def-beta-S}, \eqref{eq-def-psi-S} and \eqref{eq-def-f-S}, we see that
\begin{align}
    \phi_S(X) &= \beta_{\mathsf T(S)}(X) \sum_{I \subset [\iota\aleph]} \Big( -\epsilon^{\mathtt m} \big( \tfrac{\epsilon^2 \lambda s}{n} \big)^{\ell/2} \Big)^{\iota\aleph-|I|} \beta_{S(I)}(X)\nonumber \\
    &= \sum_{I \subset [\iota\aleph]} \Big( -\epsilon^{\mathtt m} \big( \tfrac{\epsilon^2 \lambda s}{n} \big)^{\ell/2} \Big)^{\iota\aleph-|I|} \beta_{S^I}(X)\,. \label{eq-phi-as-linear-combination-of-beta}
\end{align}
For $S_1,S_2 \in \mathfrak R_{\mathbf H}$ and $I,J \subset [\iota\aleph]$, define $G^{I,J}_{\cup}$ to be the (simple) graph obtained by
\begin{align*}
    G^{I,J}_{\cup} = \widetilde{S}^I_1 \cup \widetilde{S}^J_2  \,,
\end{align*}
and in particular we let $G_\cup = G^{[\iota\aleph],[\iota\aleph]}_\cup$. From Item~(3) in Lemma~\ref{lem-useful-property-trees-and-sets}, we know that there must exist $v \in \big(\mathcal L( \mathsf T(S_1 ))\cup \mathcal L( \mathsf T(S_2 ))\big)\setminus \big(V(S_1 \cap S_2)\big)$ if $\mathcal L(S_1) \cup \mathcal L( S_2) \not\subset V(S_1 \cap S_2)$, which implies that $v \in \mathcal L(S_1^I \cup S_2^J)$ for all $I,J \subset [\iota\aleph]$ and therefore $\mathbb E_{\mathbb P_\mathsf{id}}[ \phi_{S_1,S_2}(A,B) ]=0$ from \eqref{eq-phi-as-linear-combination-of-beta} and Lemma~\ref{lem-upper-bound-exp}. In addition, we have $\mathcal L(G_{\cup}) \subset \mathcal L(\widetilde S_1) \cup  \mathcal L(\widetilde S_2)$ and thus $|\mathcal L(G_{\cup})| \leq 2\aleph$. Also we have
\begin{align}
    &\big(\epsilon^\mathtt m (\tfrac{\epsilon^2 \lambda s}{n})^{\ell/2} \big)^{2\iota\aleph - |I| - |J|}\E_{\P_\mathsf{id}} [\beta_{S^I_1}(A)\beta_{S^J_2}(B)] \nonumber \\
    =\ & \mathbb{E}_{\mathbb P_\mathsf{id}} \Bigg[  \prod_{(i,j) \in E(\widetilde{S}^I_1)} \Big( \tfrac{ A_{i,j}-\frac{\lambda s}{n} }{ \sqrt{\lambda s/n} } \Big)^{E(S^I_1)_{i,j}} \prod_{(i,j) \in E(\widetilde S^J_2)} \Big( \tfrac{ B_{i,j}-\frac{\lambda s}{n} }{ \sqrt{\lambda s/n} } \Big)^{E(S^J_2)_{i,j}}  \Bigg] * \big(\epsilon^\mathtt m (\tfrac{\epsilon^2 \lambda s}{n})^{\ell/2} \big)^{2\iota\aleph - |I| - |J|}\nonumber  \\
    =\ & \mathbb{E}_{\sigma\sim\nu} \Bigg\{  \prod_{(i,j) \in E(G_{\cup} \setminus \widetilde S^J_2)} \mathbb{E}_{\mathbb P_{\mathsf{id},\sigma}} \Bigg[ \Big( \tfrac{ A_{i,j}-\frac{\lambda s}{n} }{ \sqrt{\lambda s/n} } \Big)^{E(S^I_1)_{i,j}} \Bigg] \prod_{(i,j) \in E(G_{\cup} \setminus \widetilde S^I_1)} \mathbb{E}_{\mathbb P_{\mathsf{id},\sigma}} \Bigg[ \Big( \tfrac{ B_{i,j}-\frac{\lambda s}{n} }{ \sqrt{\lambda s/n} } \Big)^{E(S^J_2)_{i,j}} \Bigg] \nonumber \\ 
    & \prod_{(i,j) \in E(\widetilde S^I_1 \cap \widetilde S^J_2)} \mathbb{E}_{\mathbb P_{\mathsf{id},\sigma}} \Bigg[ \Big( \tfrac{ A_{i,j}-\frac{\lambda s}{n} }{ \sqrt{\lambda s/n} } \Big)^{E(S^I_1)_{i,j}} \Big( \tfrac{ B_{i,j}-\frac{\lambda s}{n} }{ \sqrt{\lambda s/n} } \Big)^{E(S^J_2)_{i,j}} \Bigg]  \Bigg\} * \big(\epsilon^\mathtt m (\tfrac{\epsilon^2 \lambda s}{n})^{\ell/2} \big)^{2\iota\aleph - |I| - |J|} \,. \label{eq-Pb-f-setminus-*-relax-1}
\end{align}
Using Lemmas~\ref{lem-joint-moment-A-B} and \ref{lem-upper-bound-exp} (with $V=\emptyset$), we see that \eqref{eq-Pb-f-setminus-*-relax-1} is bounded by (below we denote $E(S)_{i,j}=0$ if $(i,j) \not \in E(S)$)
\begin{align*}
    2^{5\tau(G^{I,J}_{\cup})+12\aleph} \prod_{(i,j) \in E(G^{I,J}_{\cup})} \Big( \tfrac{\sqrt{n}}{\sqrt{\epsilon^2\lambda s}} \Big)^{E(S^I_1)_{i,j}+E(S^J_2)_{i,j}-2} \big(\epsilon^\mathtt m (\tfrac{\epsilon^2 \lambda s}{n})^{\ell/2} \big)^{2\iota\aleph - |I| - |J|}\,. 
\end{align*}
Using the fact that $\tau(G^{I,J}_{\cup}) \leq \tau(G_{\cup})$ and
\begin{align*}
    \sum_{(i,j) \in E(G^{I,J}_{\cup})} (E(S^I_1)_{i,j}+E(S^J_2)_{i,j}-2)&= 2(\aleph-1)+\ell(|I|+|J|)-2|E(G^{I,J}_{\cup})| \\
    &\leq 2(\aleph - 1 + \iota\ell\aleph) - 2|E(G_\cup)| - \ell(|I|+|J|) \,,
\end{align*}
we obtain that \eqref{eq-Pb-f-setminus-*-relax-1} is further bounded by 
\begin{align}
    2^{5\tau(G_{\cup})+12\aleph} n^{\aleph-1+\iota\ell\aleph-|E(G_{\cup})|} / (\epsilon^2\lambda s)^{\aleph-1+\iota\ell\aleph-|E(G_{\cup})|} \,.  \label{eq-Pb-f-setminus-*-relax-2}
\end{align}
Combined with \eqref{eq-def-f} and \eqref{eq-phi-as-linear-combination-of-beta}, it yields that 
\begin{align}
    & \mathbb{E}^{|\cdot |}_{\mathbb P_{\mathsf{id}}}[f_{\mathcal A^c}] \leq \sum_{\mathbf H \in \mathcal H} \sum_{\substack{S_1,S_2\Vdash \mathbf H \\ \mathsf{id} \not\in\mathcal A(S_1,S_2)} } 2^{2\iota\aleph} \cdot \frac{ 2^{5\tau(G_{\cup})+12\aleph} s^{\aleph-1} (\epsilon^2 \lambda s)^{\ell\iota\aleph} \operatorname{Aut}(\mathsf T(\mathbf H)) }{ n^{\aleph+\ell\iota\aleph} } \cdot \big( \tfrac{ \epsilon^2\lambda s }{ n } \big)^{ |E(G_{\cup})|-\ell\iota\aleph-\aleph+1 } \nonumber \\
    =\ &n^{-1+o(1)} \sum_{ \substack{ |\mathcal L(G_{\cup})|\leq \aleph \\ |E(G_{\cup})| \leq 2\ell\iota\aleph } } 2^{5\tau(G_{\cup})} \big( \tfrac{ \epsilon^2\lambda s }{ n } \big)^{ |E(G_{\cup})| } \operatorname{ENUM}(G_{\cup}) \,, \label{eq-Pb-f-setminus-*-relax-3} 
\end{align}
where the factor $2^{2\iota\aleph}$ comes from the enumeration of the pairs $(I,J)$ and
\begin{equation}{\label{eq-def-ENUM-G}}
\begin{aligned}
    \operatorname{ENUM}(G_{\cup}) := \#\Big( \cup_{\mathbf H \in \mathcal H} \big\{ & (S_1,S_2) \in \mathfrak R_{\mathbf H} \times \mathfrak R_{\mathbf H}: \widetilde{S}_1 \cup \widetilde{S}_2= G_{\cup}, \\
    & \mathcal L(S_1) \cup \mathcal L(S_2) \subset V(S_1) \cap V(S_2), \mathsf{id} \not\in\mathcal A(S_1,S_2) \big\}  \Big) \,.
\end{aligned}
\end{equation} 
Now it suffices to control \eqref{eq-Pb-f-setminus-*-relax-3}, for which the following lemma is useful.
\begin{lemma}{\label{lem-character-non-principle}}
    For all $S_1,S_2 \in \mathfrak R_{\mathbf H}$ such that $\mathsf{id} \not\in \mathcal A(S_1,S_2)$ and $\mathcal L(S_1) \cup \mathcal L(S_2) \subset V(S_1 \cap S_2)$, we have
    \begin{equation}\label{eq-character-non-principle}
        2\iota\aleph-\tau(G_{\cup}) \leq \tfrac{2\iota\ell\aleph+2\aleph-|E(G_{\cup})|}{\ell/2} \,.
    \end{equation}
\end{lemma}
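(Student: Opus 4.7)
\emph{Approach.} My plan is to first apply the Euler--Poincar\'e inclusion--exclusion identity
\[
\chi(G_\cup) = \chi(\widetilde S_1) + \chi(\widetilde S_2) - \chi(\widetilde S_1 \cap \widetilde S_2),
\]
which, upon using $\tau = -\chi$ for connected graphs (the hypothesis $\mathcal L(S_1)\cup\mathcal L(S_2)\subseteq V(S_1)\cap V(S_2)$ together with $\mathcal L(S_i)\ne\emptyset$ forces $V(\widetilde S_1)\cap V(\widetilde S_2)\ne\emptyset$ and hence $G_\cup$ connected), yields $\tau(G_\cup) = \tau(\widetilde S_1) + \tau(\widetilde S_2) - \tau(\widetilde S_1 \cap \widetilde S_2)$. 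Introducing the total vertex/edge ``collapses''
\[
\Delta_V := 2\aleph(1+\iota(\ell-1)) - |V(G_\cup)|, \quad \Delta_E := 2(\aleph-1+\iota\aleph\ell) - |E(G_\cup)|,
\]
and splitting $\Delta_V = \alpha_1+\alpha_2+v$, $\Delta_E = \beta_1+\beta_2+c$ with $\alpha_i,\beta_i$ the intra-$S_i$ collapses (vanishing iff $S_i\in\mathfrak R^*_{\mathbf H}$) and $v,c$ the vertex/edge overlaps between $\widetilde S_1,\widetilde S_2$, routine algebra using $\tau(\widetilde S_i) = \iota\aleph - 1 + \alpha_i - \beta_i$ reduces \eqref{eq-character-non-principle} to the equivalent form
\[
(\ell-2)(\Delta_E+2) \;\le\; \ell\,\Delta_V.
\]

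\emph{Case analysis.} The hypothesis $\mathsf{id}\notin\mathcal A(S_1,S_2)$ bifurcates into: (I) some $S_i\notin\mathfrak R^*_{\mathbf H}$; (II.a) both $S_i\in\mathfrak R^*_{\mathbf H}$ and $H:=S_1\cap S_2$ contains a cycle; (II.b) both $S_i\in\mathfrak R^*_{\mathbf H}$, $H$ is a forest, but $\mathsf T(S_i)\not\subseteq S_{3-i}$ for some $i$. The unifying observation is that in each subcase the overlap structure is forced to contain the entire edge set of at least one length-$\ell$ path, contributing $\ge\ell-1$ to $\Delta_V$. In case~II.a I would use the fundamental-cycle decomposition of $S_1$: any cycle in $H\subseteq S_1$ is a $\mathbb F_2$-symmetric difference of fundamental cycles of $S_1$, each of which employs one full path $\mathsf L_k(S_1)$; thus $b_1(H)=p$ forces $\ge p$ distinct paths of $S_1$ to lie fully in $H$, and since $S_1\in\mathfrak R^*$ these paths have pairwise disjoint internal vertices, giving $v \ge p(\ell-1)+2$. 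In case~II.b the key structural observation is that an internal vertex of a path in $\mathfrak R^*$ has no neighbors in $S_i$ outside that path, so any path edge of $\mathsf L_k(S_i)$ lying in $H$ forces the whole path into $H$; combined with $\mathcal L(\mathsf T(S_i))=\mathcal L(S_i)\subseteq V(H)$ (using that path endpoints are not tree leaves by Item~(1) of Theorem~\ref{thm-desired-vertex-sets} and the definition of $\mathbf T_\iota$), a missing tree edge forces at least one detour path fully into $H$, yielding $v\ge \aleph+\ell-1$. In case~I, each intra-$S_i$ vertex identification creates a cycle spanning most of a path, producing the needed slack between $\alpha_i$ and $\beta_i$.

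\emph{Main obstacle.} The delicate step is the quantitative bookkeeping in case~II.a when $b_1(H)$ is small but $H$ has many connected components: the inequality unwinds to $2v\ge (\ell-2)(b_1(H)-k+2)$ where $k$ is the number of components of $H$, and while $v\ge p(\ell-1)+2$ handles $k\le 4$ with large slack, for $k\ge 5$ one needs to extract the extra contribution $v\ge k$ coming from one representative vertex per component and combine it with the path bound without double-counting tree endpoints shared among different fundamental cycles. A parallel subtlety arises in case~I, where Item~(2) of Definition~\ref{def-tilde-T-K} (bounded arm-path length) and Item~(4) of Theorem~\ref{thm-desired-vertex-sets} (pairings far apart in the tree) are needed to guarantee that each intra-$S_i$ vertex identification produces a cycle of length $\ell - O(\log^2(\iota^{-1}))$ rather than a short cycle; short cycles would spoil the $\alpha_i-\beta_i$ balance and are the main thing that must be ruled out to close the argument.
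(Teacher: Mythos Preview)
Your reduction to $(\ell-2)(\Delta_E+2)\le \ell\Delta_V$ is correct, and your treatment of Case~II (both $S_i\in\mathfrak R^*_{\mathbf H}$) is broadly sound, though it differs from the paper's. The paper does not separate into ``$H$ has a cycle'' versus ``$H$ is a forest''; instead it runs the single filtration $H_0=H\cap\mathsf T(S_1)$, $H_i=H_{i-1}\cup(H\cap\mathsf L_i(S_1))$ and proves the one-line inductive inequality $\tau(H_{i+1})-\tau(H_i)\le \tfrac{|E(H_{i+1})|-|E(H_i)|}{\ell/2}-\mathbf 1_{\{\mathsf L_{i+1}(S_1)\subset H\}}$, then finishes by observing that $\mathsf{id}\notin\mathcal A$ forces either $\tau(H_0)\le -2$ or at least one full path in $H$. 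Your fundamental-cycle argument (any cycle in $H\subset S_1$ is a $\mathbb F_2$-sum of fundamental cycles, hence uses full paths) is a valid alternative for II.a. One correction in II.b: the sentence ``any path edge of $\mathsf L_k(S_i)$ lying in $H$ forces the whole path into $H$'' is false as written --- a single edge of $\mathsf L_k$ can lie in $S_2$ without its neighbours doing so. The correct argument (which you hint at with ``detour'') is purely topological: if $H$ is a connected tree and $H_0=H\cap\mathsf T(S_1)$ is disconnected, then some sub-path of some $\mathsf L_k$ must bridge two components of $H_0$, and since in $\mathfrak R^*$ the only vertices of $\mathsf L_k$ lying in $V(\mathsf T(S_1))$ are its two endpoints, that bridging sub-path must be all of $\mathsf L_k$.

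Your Case~I, however, has a genuine gap. You assert that each intra-$S_i$ vertex identification ``produces a cycle of length $\ell-O(\log^2(\iota^{-1}))$'' and invoke Items~(2) of Definition~\ref{def-tilde-T-K} and (4) of Theorem~\ref{thm-desired-vertex-sets}. But those items control \emph{tree} distances (of order at most $(\log\log(\iota^{-1}))^{10}$), not the positions along the length-$\ell$ paths where the identification occurs. If $w_2^{(1)}$ (the second internal vertex of $\mathsf L_1$) is identified with $w_2^{(2)}$, the resulting cycle has length $2+2+\mathsf{Dist}_{\mathsf T}(u_1,u_2)=O((\log\log(\iota^{-1}))^{10})\ll \ell$, and neither cited item prevents this. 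So the mechanism you propose for generating slack in $(\ell-2)\beta_i$ versus $\ell\alpha_i$ does not exist. The paper sidesteps this entirely: for $S_1\notin\mathfrak R^*_{\mathbf H}$ it sets $H_{\mathsf T}=H\cup\mathsf T(S_1)$, applies the path-decomposition Lemma~\ref{lem-decomposition-H-plus-path} to write $\widetilde S_1\setminus H_{\mathsf T}$ as $\mathtt t=\tau(\widetilde S_1)-\tau(H_{\mathsf T})$ self-avoiding paths each of length $\le\ell$, and reads off $|E(\widetilde S_1)|-|E(H_{\mathsf T})|\le \ell\,(\tau(\widetilde S_1)-\tau(H_{\mathsf T}))$ directly. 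This together with $\tau(\widetilde S_1)\ge\iota\aleph$ and $|E(\widetilde S_1)|\le\ell\iota\aleph+\aleph$ gives the bound without ever analysing individual identifications.
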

 
The proof of Lemma~\ref{lem-character-non-principle} is incorporated in Section~\ref{subsec:proof-lem-3.10}. We can now complete the proof of Lemma~\ref{lem-first-moment-bad-part}.

\begin{proof}[Proof of Lemma~\ref{lem-first-moment-bad-part}]
    By Lemma~\ref{lem-character-non-principle}, we have (in what follows we say $\mathbf{G}$ satisfies \eqref{eq-character-non-principle} if \eqref{eq-character-non-principle} holds after replacing $G_\cup$ by $\mathbf{G}$ and we denote by $\mathrm{ENUM}(G)$ as \eqref{eq-def-ENUM-G} after replacing $G_\cup$ with $G$))
    \begin{align}
        \eqref{eq-Pb-f-setminus-*-relax-3} = n^{-1+o(1)} \sum_{ \substack{ |\mathcal L(\mathbf G)| \leq \aleph, |E(\mathbf G)| \leq 2\ell\iota\aleph+2\aleph \\ \mathbf G \text{ satisfying } \eqref{eq-character-non-principle} } } \sum_{ G \cong \mathbf G } 2^{5\tau(G)} \big( \tfrac{ \epsilon^2\lambda s }{ n } \big)^{ |E(G)| } \operatorname{ENUM}(G) \,, \label{eq-Pb-f-setminus-*-relax-4}
    \end{align}
    where the summation is taken over all unlabeled graph $\mathbf G$ such that $\operatorname{ENUM}(G)>0$ for $G \cong \mathbf G$.
    In addition, using Lemma~\ref{lem-enu-decorated-trees-in-given-graph}, for $G$ satisfying \eqref{eq-character-non-principle}, we have 
    \begin{align*}
         \operatorname{ENUM}(G) \leq \binom{|E(G)|}{\aleph}^2\cdot \aleph ^{4\iota\aleph} 2^{2\iota\aleph(\aleph+5(\tau(G)+1))} = n^{o(1)}\,.
    \end{align*}
    Thus, we have 
    \begin{align}
        \eqref{eq-Pb-f-setminus-*-relax-4} &\leq n^{-1+o(1)} \cdot \sum_{ \substack{ |\mathcal L(\mathbf G)|\leq \aleph, |E(\mathbf G)| \leq 2\ell\iota\aleph+2\aleph \\ \mathbf G \text{ satisfying } \eqref{eq-character-non-principle} \\ \operatorname{ENUM}(\mathbf G )> 0 } } 2^{5\tau(\mathbf G)} \big( \tfrac{ \epsilon^2\lambda s }{ n } \big)^{ |E(\mathbf G)| } \cdot \#\big\{ G \subset \mathsf K_n: G \cong \mathbf G \big\} \nonumber \\
        &\leq n^{-1+o(1)} \cdot \sum_{ \substack{ |\mathcal L(\mathbf G)|\leq\aleph, |E(\mathbf G)| \leq 2\ell\iota\aleph+2\aleph \\ \mathbf G \text{ satisfying } \eqref{eq-character-non-principle} \\ \operatorname{ENUM}(\mathbf G )> 0 } }  2^{5\tau(\mathbf G)} \big( \tfrac{ \epsilon^2\lambda s }{ n } \big)^{ |E(\mathbf G)| } n^{|V(\mathbf G)|} 
        \nonumber \\
        &\leq n^{-1+o(1)} (\epsilon^2 \lambda s)^{2\ell\iota\aleph} \sum_{ \substack{ |\mathcal L(\mathbf G)|\leq\aleph, |E(\mathbf G)| \leq 2\ell\iota\aleph+2\aleph \\ \mathbf G \text{ satisfying } \eqref{eq-character-non-principle} \\ \operatorname{ENUM}(\mathbf G )> 0} } 2^{5\tau(\mathbf G)} n^{-\tau(\mathbf G)} 
        (\epsilon^2 \lambda s)^{|E(\mathbf G)|-2\ell\iota\aleph} \nonumber \\
        &\leq n^{-1+o(1)} (\epsilon^2 \lambda s)^{2\ell\iota\aleph} \sum_{\substack{0 \leq x \leq 2\ell\iota\aleph + 2\aleph \\ 2(2\ell\iota\aleph + 2\aleph-x)/\ell \geq 2\iota\aleph-y } } \big( \tfrac{32(\ell\aleph)^3}{n} \big)^{y} (\epsilon^2 \lambda s)^{x-2\ell\iota\aleph}\nonumber \\
        &\leq n^{-1+o(1)} (\epsilon^2 \lambda s)^{2\ell\iota\aleph} \sum_{0 \leq x \leq 2\ell\iota\aleph + 2\aleph } \big( \tfrac{32(\ell\aleph)^3}{n} \big)^{2\iota\aleph} (\tfrac{(\epsilon^2 \lambda s)^{\ell}}{n^{2}})^{(x-2\ell\iota\aleph)/\ell}\nonumber \\
        &\leq n^{-1-2\iota\aleph+o(1)} (\epsilon^2 \lambda s)^{2\ell\iota\aleph} \,, \label{eq-Pb-f-setminus-*-relax-5}
    \end{align}
    where the second inequality follows from $\#\{ G \subset \mathsf K_n: G \cong \mathbf G \}= \frac{n^{|V(\mathbf G)|}}{\operatorname{Aut}(\mathbf G)}$, the fourth inequality holds by Lemma~\ref{lem-enu-union-of-decorated-trees} and the last inequality follows from $(\ell\aleph)^{\aleph}=n^{o(1)}$. Plugging \eqref{eq-Pb-f-setminus-*-relax-5} into \eqref{eq-Pb-f-setminus-*-relax-3}, we obtain
    \begin{align*}
        \eqref{eq-Pb-f-setminus-*-relax-3} \leq n^{-1+o(1)} \cdot \Big( \tfrac{(\epsilon^2 \lambda s)^{\ell}}{n} \Big)^{2\iota\aleph} \overset{\eqref{eq-choice-aleph}}{=} o(1) \cdot s^{2(\aleph-1)} |\mathcal H| \cdot \Big( \tfrac{ (\epsilon^2\lambda s)^{\ell} (1-\epsilon^\mathtt m)^{1/2} }{n} \Big)^{2\iota\aleph} \,,
    \end{align*}
    as desired. 
\end{proof}

\section{Estimation of second moment}{\label{sec:bound-var}}

The main goal of this section is to prove the following proposition. 
\begin{proposition}{\label{prop-second-moment}}
    We have the following estimates:
    \begin{itemize}
        \item[(1)] $\mathbb E_{\mathbb Q}[f^2] = o(1) \cdot \mathbb E_{\Pb}[f]^2$;
        \item[(2)] $\operatorname{Var}_{\mathbb P}[f] = o(1) \cdot \mathbb E_{\Pb}[f]^2$.
    \end{itemize}
\end{proposition}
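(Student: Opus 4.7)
The plan is to expand $\mathbb{E}_{\mathbb{P}}[f^2]$ and $\mathbb{E}_{\mathbb{Q}}[f^2]$ as quadruple sums indexed by $(\mathbf{H},\mathbf{H}')\in\mathcal{H}^2$ and configurations $((S_1,S_2),(K_1,K_2))$ with $S_1,K_1\Vdash_A$ and $S_2,K_2\Vdash_B$ the respective decorated trees, and to classify summands according to the overlap pattern of $G_{\cup}:=\widetilde S_1\cup \widetilde S_2$ against $H_{\cup}:=\widetilde K_1\cup \widetilde K_2$, following the schematic in Figure~\ref{fig:overall}. The goal is to isolate a \emph{baseline} configuration whose contribution equals $(1+o(1))\mathbb{E}_{\mathbb{P}}[f]^2$ and to dominate every deviation from the baseline by $o(1)\cdot\mathbb{E}_{\mathbb{P}}[f]^2$. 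By the permutation symmetry of $f$, we again condition on $\pi_*=\mathsf{id}$ when computing conditional expectations.

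Part~(1) is the easier half. Since $A$ and $B$ are independent under $\mathbb{Q}$, the joint expectation factorizes as $\mathbb{E}_{\mathbb{Q}}[\phi_{S_1}(A)\phi_{K_1}(A)]\cdot\mathbb{E}_{\mathbb{Q}}[\phi_{S_2}(B)\phi_{K_2}(B)]$, and each factor is a single-SBM covariance controlled by Lemmas~\ref{lem-upper-bound-exp} and \ref{lem-delicate-exp-upper-bound-on-trees}. The essential point is that, without any matching signal, there is no diagonal gain from $\pi_*$: the dominant configuration requires $S_j,K_j$ to be leaf-compatible and share their trees, giving a total of order $|\mathcal{H}|\cdot s^{4(\aleph-1)}(\iota^{4}(\epsilon^2\lambda s)^{\ell}/n)^{4\iota\aleph}$, which is smaller than $\mathbb{E}_{\mathbb{P}}[f]^2$ by the factor $1/|\mathcal{H}|=o(1)$ from Lemma~\ref{lem-enu-decorated-trees}. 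Off-diagonal terms $\mathbf{H}\neq\mathbf{H}'$ gain an additional $\epsilon$-decay through Item~(3) of Theorem~\ref{thm-desired-vertex-sets}.

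For part~(2), the baseline configuration corresponds to Figure~\ref{fig:overall}(a): $S_i,K_i\in\mathfrak{R}^*_{\mathbf{H}}$ simple, $\pi_*\in\mathcal{A}(S_1,S_2)\cap\mathcal{A}(K_1,K_2)$, and $V(G_{\cup})\cap V(H_{\cup})=\emptyset$. Applying Lemma~\ref{lem-est-intersection-istree-P-Q} independently to each pair reproduces the first-moment computation of Lemma~\ref{lem-first-moment-good-part} squared, yielding $(1+o(1))\mathbb{E}_{\mathbb{P}}[f]^2$ which cancels inside $\operatorname{Var}_{\mathbb{P}}[f]$. All remaining contributions are organized by the deviations listed in Section~\ref{subsec:discussions}. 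For deviations of types (b)--(d), the plan is to extend the excess-versus-edge-count bookkeeping of Lemma~\ref{lem-character-non-principle} to the combined union $G_{\cup}\cup H_{\cup}$: each extra cycle forced by a path-tree crossing, by $\mathsf{T}(S_1)\neq\mathsf{T}(S_2)$, or by vertex overlap between $(S_1,S_2)$ and $(K_1,K_2)$ costs a factor $1/n$ via the $n^{-\tau}(\epsilon^2\lambda s)^{|E|}$ trade-off in \eqref{eq-Pb-f-setminus-*-relax-2}. Deviation~(e), in which a full length-$\ell$ path coincidence arises, is handled by the gain $n(\epsilon^2\lambda s)^{-\ell}\leq n^{-1}$ afforded by \eqref{eq-choice-aleph}.

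The principal obstacle is deviation~(f): the case where $\mathsf{T}(S_1)=\mathsf{T}(K_1)$ and $\mathsf{T}(S_2)=\mathsf{T}(K_2)$ but the two pairings $\mathsf{P}(\mathbf{H})\neq\mathsf{P}(\mathbf{H}')$ differ. Here enumeration alone saves nothing because the two decorated trees share the same skeleton; instead one must harvest an $\epsilon$-decay from the community-signal factors via Lemma~\ref{lem-delicate-exp-upper-bound-on-trees}. The engineering of $\mathcal{S}(\mathbf{T})$ in Theorem~\ref{thm-desired-vertex-sets} is designed precisely for this: Item~(3) guarantees $|\mathsf{Vert}(W_{\mathtt{i}})\triangle\mathsf{Vert}(W_{\mathtt{j}})|\geq\iota\aleph/2$ disagreement vertices between any two distinct pairings, while Item~(4) separates paired vertices in $\mathbf{T}$ by graph distance $\geq(\log\log(\iota^{-1}))^{10}$. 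Feeding this into Lemma~\ref{lem-delicate-exp-upper-bound-on-trees} produces a decay of order $\epsilon^{\Omega(\iota(\log\log(\iota^{-1}))^{10}\aleph)}$, which by \eqref{eq-choice-iota} exponentially dominates the number $\mathtt{M}=\exp(\iota(\log\log(\iota^{-1}))^{4}\aleph)$ of competing pairings per tree. Orchestrating this delicate cancellation simultaneously with the residual enumeration from cases (b)--(e), and handling multigraph (i.e.\ $S_i\notin\mathfrak{R}^*_{\mathbf{H}}$) contributions through an analogue of the second half of Lemma~\ref{lem-character-non-principle}, is the main technical load of Section~\ref{sec:bound-var}.
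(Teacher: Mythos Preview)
Your plan is essentially the paper's own strategy, and the key ingredients you identify---the factorization under $\mathbb Q$ yielding a $1/(s^{2(\aleph-1)}|\mathcal H|)$ loss, the four-graph extension of Lemma~\ref{lem-character-non-principle} (this is Lemma~\ref{lem-characterize-setminus-R-H-I} in the paper), and the $\epsilon$-decay for coinciding trees via Lemma~\ref{lem-delicate-exp-upper-bound-on-trees} together with Items~(3)--(4) of Theorem~\ref{thm-desired-vertex-sets}---are exactly right.

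Two points where your outline diverges from what the paper actually does. First, your handling of deviation~(d) (``each vertex overlap costs a factor $1/n$ via excess bookkeeping'') misses one structured overlap that the excess argument alone does not kill: the \emph{cross} configuration where $S_1\cap K_2$ and $S_2\cap K_1$ are each trees containing the respective $\mathsf T$'s, while $V(S_1)\cap V(K_1)=V(S_2)\cap V(K_2)=\emptyset$. Here the union has the same excess profile as two independent first-moment baselines, so no $1/n$ is gained; what saves you is that this forces $\mathsf T(\mathbf H)\sim\mathsf T(\mathbf I)$ in the sense of Definition~\ref{def-similar-relation}, and then Lemma~\ref{lem-remove-sim-relation} bounds the number of such pairs. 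The paper isolates this by defining the set $\mathcal A_\star$ (which contains both the direct and cross matchings) and treating \eqref{eq-Pb-var-good-part} separately before invoking excess bookkeeping on the complement.

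Second, the paper does not rerun the $\epsilon$-decay argument for case~(f) inside Part~(2). Instead, once $\mathsf T(S_1)=\mathsf T(K_1)$, $\mathsf T(S_2)=\mathsf T(K_2)$ and the two sides are vertex-disjoint, one has $\mathbb E_{\mathbb P_{\mathsf{id}}}[\phi_{S_1,S_2}\phi_{K_1,K_2}]=\mathbb E_{\mathbb Q}[\phi_{S_1}(A)^2]\,\mathbb E_{\mathbb Q}[\phi_{S_2}(B)^2]$, so the entire block \eqref{eq-Pb-var-good-part'} is dominated by $\mathbb E_{\mathbb Q}[f^2]$ and Part~(1) closes it. Your direct approach would also work, but this reduction is cleaner.
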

In particular, combining Proposition~\ref{prop-second-moment} with Proposition~\ref{prop-first-moment} and Lemma~\ref{lem-enu-decorated-trees} (which implies that $\mathbb E_{\mathbb P}[f]\to \infty$) yields Theorem~\ref{THM-validity-of-statistics} (note that Item (1) also implies that $\operatorname{Var}_{\Qb}[f],\mathbb E_{\Qb}[f]^2 \leq \mathbb E_{\Qb}[f^2] = o(1) \cdot \mathbb E_{\Pb}[f]^2$). The remainder of this section is devoted to proving Proposition~\ref{prop-second-moment}.

\subsection{Proof of Item (1)}{\label{subsec:bound-Q-var}}

Recall \eqref{eq-def-f}. By the independence of $A$ and $B$ under $\Qb$, we have
\begin{align}
    \mathbb E_{\Qb}[f^2] &= \sum_{\mathbf H,\mathbf I \in \mathcal H} \frac{ s^{2(\aleph-1)} (\epsilon^2 \lambda s)^{2\ell\iota\aleph} \operatorname{Aut}(\mathsf T(\mathbf H)) \operatorname{Aut}(\mathsf T(\mathbf I)) }{ n^{2(\aleph+\ell\iota\aleph)} } \sum_{ \substack{S_1,S_2\Vdash \mathbf H \\ K_1,K_2 \Vdash \mathbf I} } \mathbb E_{\Qb}[\phi_{S_1,S_2} \phi_{K_1,K_2}] \nonumber \\
    &= \sum_{\mathbf H,\mathbf I \in \mathcal H} \frac{ s^{2(\aleph-1)} (\epsilon^2 \lambda s)^{2\ell\iota\aleph} \operatorname{Aut}(\mathsf T(\mathbf H)) \operatorname{Aut}(\mathsf T(\mathbf I)) }{ n^{2(\aleph+\ell\iota\aleph)} } \Big( \sum_{ S \Vdash \mathbf H,K \Vdash \mathbf I } \big| \mathbb E_{\Qb}[\phi_S(A) \phi_K(A)] \big| \Big)^2 \,. \label{eq-Qb-second-moment-relax-1}
\end{align}
Define $\mathfrak P_{\mathbf H,\mathbf I} = \big\{ (S,K): S \in \mathfrak R_{\mathbf H}, K \in \mathfrak R_{\mathbf I} \big\}$. In addition, we denote
\begin{align}
    \mathfrak P^*_{\mathbf H,\mathbf I} = \big\{ (S,K): S \in \mathfrak R_{\mathbf H}^*, K \in \mathfrak R_{\mathbf I}^*, S \cap K \mbox{ is a tree containing } \mathsf T(S) \cup \mathsf T(K) \big\} \,. \label{eq-def-mathsf-P-*}
\end{align}
Then using the inequality $(a+b)^2\leq 2(a^2+b^2)$, we have that \eqref{eq-Qb-second-moment-relax-1} is bounded by $2$ times 
\begin{align}
    & \sum_{\mathbf H,\mathbf I \in \mathcal H} \frac{ s^{2(\aleph-1)} (\epsilon^2 \lambda s)^{2\ell\iota\aleph} \operatorname{Aut}(\mathsf T(\mathbf H)) \operatorname{Aut}(\mathsf T(\mathbf I)) }{ n^{2(\aleph+\ell\iota\aleph)} } \Big( \sum_{ (S,K) \in \mathfrak P_{\mathbf H,\mathbf I}^* } \big|\mathbb E_{\Qb}[\phi_S(A) \phi_K(A)]\big| \Big)^2  \label{eq-Qb-var-good-part}  \\
    +& \sum_{\mathbf H,\mathbf I \in \mathcal H} \frac{ s^{2(\aleph-1)} (\epsilon^2 \lambda s)^{2\ell\iota\aleph} \operatorname{Aut}(\mathsf T(\mathbf H)) \operatorname{Aut}(\mathsf T(\mathbf I)) }{ n^{2(\aleph+\ell\iota\aleph)} } \Big( \sum_{ (S,K) \not\in \mathfrak P_{\mathbf H,\mathbf I}^* } \big|\mathbb E_{\Qb}[\phi_S(A) \phi_K(A)]\big| \Big)^2 \,.  \label{eq-Qb-var-bad-part} 
\end{align}
In addition, define 
\begin{align}
    & \mathfrak P_{\mathbf H} = \big\{ (S,K) \in \mathfrak P_{\mathbf H,\mathbf H}^*: \mathsf T(S)=\mathsf T(K), \mathsf P(S)=\mathsf P(K) \big\} \,;  \label{eq-def-P-H}  \\
    &\mathfrak Q_{\mathbf H,\mathbf I} = \big\{ (S,K) \in \mathfrak P_{\mathbf H,\mathbf I}^* \setminus \mathfrak P_{\mathbf H}: \exists i\in[\iota\aleph]\mbox{ s.t. } \operatorname{EndP}(\mathsf L_i(K))\not\in \mathsf P(S), E(\mathsf L_i(K) \cap \mathsf T(S)) = \emptyset \big\} \,;  \label{eq-def-Q-H-I} \\
    &\mathfrak U_{\mathbf H,\mathbf I} = \mathfrak P_{\mathbf H,\mathbf I}^* \setminus (\mathfrak P_{\mathbf H} \cup \mathfrak Q_{\mathbf H,\mathbf I} ) \,.  \label{eq-def-V-H-I}
\end{align}
Then by Item (7) of Definition~\ref{def-tilde-T-K} (which enables us to bound $\operatorname{Aut}(\mathsf T(\mathbf H)) \operatorname{Aut}(\mathsf T(\mathbf I))$ by $2\operatorname{Aut}(\mathsf T(\mathbf H))^2$) and Cauchy's inequality, we have that \eqref{eq-Qb-var-good-part} is bounded by $6$ times
\begin{align}
    & \sum_{\mathbf H \in \mathcal H} \frac{ s^{2(\aleph-1)} (\epsilon^2 \lambda s)^{2\ell\iota\aleph} \operatorname{Aut}(\mathsf T(\mathbf H))^2 }{ n^{2(\aleph+\ell\iota\aleph)} } \Big( \sum_{ (S,K) \in \mathfrak P_{\mathbf H} } \big| \mathbb E_{\Qb}[\phi_S(A) \phi_K(A)] \big| \Big)^2 \label{eq-Qb-var-good-part-1} \\
    + & \sum_{\mathbf H,\mathbf I \in \mathcal H} \frac{ s^{2(\aleph-1)} (\epsilon^2 \lambda s)^{2\ell\iota\aleph} \operatorname{Aut}(\mathsf T(\mathbf H))^2 }{ n^{2(\aleph+\ell\iota\aleph)} } \Big( \sum_{ (S,K) \in \mathfrak Q_{\mathbf H,\mathbf I} } \big| \mathbb E_{\Qb}[\phi_S(A) \phi_K(A)] \big| \Big)^2 \label{eq-Qb-var-good-part-2} \\
    + & \sum_{\mathbf H,\mathbf I \in \mathcal H} \frac{ s^{2(\aleph-1)} (\epsilon^2 \lambda s)^{2\ell\iota\aleph}\operatorname{Aut}(\mathsf T(\mathbf H))^2   }{ n^{2(\aleph+\ell\iota\aleph)} } \Big( \sum_{ (S,K) \in \mathfrak U_{\mathbf H,\mathbf I} } \big|\mathbb E_{\Qb}[\phi_S(A) \phi_K(A)]\big| \Big)^2 \label{eq-Qb-var-good-part-3} \,. 
\end{align}
Thus, it suffices to bound \eqref{eq-Qb-var-good-part-1}, \eqref{eq-Qb-var-good-part-2}, \eqref{eq-Qb-var-good-part-3} and \eqref{eq-Qb-var-bad-part} by $o(1) \cdot \mathbb E_{\Pb}[f]^2$ separately. We first control \eqref{eq-Qb-var-good-part-1} via the following lemma:
\begin{lemma}{\label{lem-bound-Qb-var-good-part}}
    We have
    \begin{equation}{\label{eq-bound-Qb-var-good-part}}
        \eqref{eq-Qb-var-good-part-1} \leq s^{2(\aleph-1)} \big( 1-\tfrac{1}{\epsilon^2\lambda s(1-\epsilon^\mathtt m)} \big)^{-4\iota\aleph}
        \Big( \tfrac{(\epsilon^2 \lambda s)^{\ell}}{n} \Big)^{4\iota\aleph} (1-\epsilon^{2\mathtt m})^{2\iota\aleph} |\mathcal H| \,.
    \end{equation}
    In particular, combined with Proposition~\ref{prop-first-moment} and Lemma~\ref{lem-enu-decorated-trees}, it yields that 
    \begin{align}\label{proof-second-moment-item-1-1}
        \eqref{eq-Qb-var-good-part-1} = o(1) \cdot \mathbb E_{\Pb}[f]^2 \,.
    \end{align}
\end{lemma}
\begin{proof}
    Note that for $(S,K) \in \mathfrak P_{\mathbf H}$, there exists a collection of self-avoiding paths $\{ \mathcal L_u:u \in \mathsf{Vert}(\mathsf P(S)) \}$ satisfying $u \in \operatorname{EndP}(\mathcal L_u)$ for all $u \in \mathsf{Vert}(\mathsf P(S))$ such that 
    \begin{align*}
        S \cap K = \mathsf T(S) \oplus ( \oplus_{u \in \mathsf{Vert}(\mathsf P(S))} \mathcal L_u ) \,.
    \end{align*}
    (See Figure~\ref{fig:3} for an illustration.)
    \begin{figure}[!ht]
        \centering
        \vspace{0cm}
        \includegraphics[height=4.33cm,width=13.92cm]{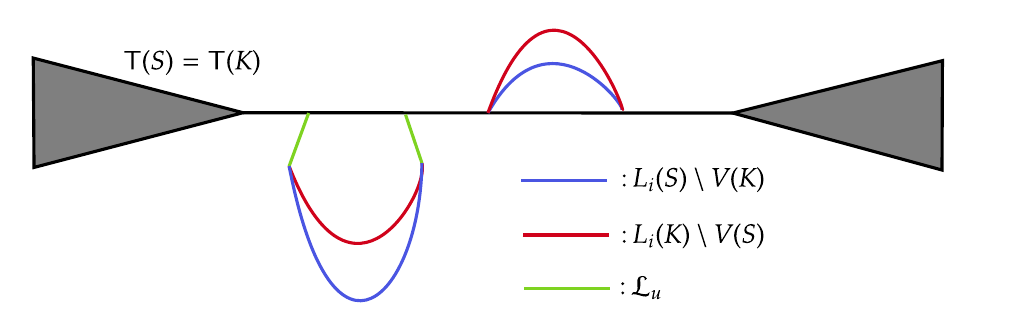}
        \caption{Intersection pattern for $(S,K)\in \mathfrak P_{ \mathbf{H} }$}
        \label{fig:3}
    \end{figure}
    For each $S \in \mathfrak R_{\mathbf H}^*$ and each non-negative sequence $\{x_u\}$, define $\mathfrak P_{\mathbf H}(S,\{ x_u \})$ to be the set of all $K \subset \mathsf K_n$ such that
    \begin{align*}
          (S,K) \in \mathfrak P_{\mathbf H}, K \cap S = \mathsf T(S) \oplus ( \oplus_{u \in \mathsf{Vert}(\mathsf P(S))} \mathcal L_u ) \mbox{ with } |E(\mathcal L_u)|=x_u  \,.
    \end{align*}
    We have
    \begin{align*}
        \#\mathfrak P_{\mathbf H}(S,\{ x_u \}) \circeq n^{ (\ell-1)\iota\aleph - \sum x_u }  \,. 
    \end{align*}
    In addition, by \eqref{eq-phi-Q-intersec-istree} and \eqref{eq-S,K-intersection-general-A,A} and recalling \eqref{eq-def-Kc-s1-S2-pi}, for $K \in \mathfrak P_{\mathbf H}(S,\{ x_u \})$ we have 
    \begin{align*}
        \mathbb E_{\Qb}\big[ \phi_S(A) \phi_K(A) \big] \circeq \big( \tfrac{\epsilon^2 \lambda s}{n} \big)^{ \ell\iota\aleph-\sum x_u } (1-\epsilon^{2\mathtt m})^{\mathtt K_c(S,K;\mathsf{id})} \leq \big( \tfrac{\epsilon^2 \lambda s}{n} \big)^{ \ell\iota\aleph-\sum x_u } (1-\epsilon^{2\mathtt m})^{\iota\aleph-\sum x_u } \,,
    \end{align*} 
    Thus, we have (writing $\mathbf{x}\geq 0$ for $x_u\geq 0$ for all $u\in\mathsf{P}(S)$)
    \begin{align*}
        & \sum_{S,K \in \mathfrak P_{\mathbf H}} \big|\mathbb E_{\Qb}[\phi_S(A)\phi_K(A)]\big| = \sum_{ S \in \mathfrak R_{\mathbf H}^* } \sum_{\mathbf x \geq 0} \sum_{ K \in \mathfrak P_{\mathbf H}(S,\{ x_u \}) } \big|\mathbb E_{\Qb}[\phi_S(A)\phi_K(A)]\big| \\
        \leq\ & \sum_{ S \in \mathfrak R_{\mathbf H}^* } \sum_{\mathbf x \geq 0} n^{ (\ell-1)\iota\aleph - \sum x_u } \cdot \big( \tfrac{\epsilon^2 \lambda s}{n} \big)^{ \ell\iota\aleph-\sum x_u } (1-\epsilon^{2\mathtt m})^{\iota\aleph - \sum x_u}  \\
        \leq\ & \big| \mathfrak R_{\mathbf H}^* \big| \cdot n^{-\iota\aleph} \sum_{\mathbf x \geq 0} \big(\epsilon^2 \lambda s \big)^{\ell\iota\aleph-\sum x_u} (1-\epsilon^{2\mathtt m})^{\iota\aleph - \sum x_u} \\
        \leq\ & \frac{ n^{\aleph} }{ \operatorname{Aut}(\mathsf T(\mathbf H)) } \cdot \big( n^{\ell-1} \big)^{\iota\aleph}\cdot n^{-\iota\aleph} (\epsilon^2\lambda s)^{\ell\iota\aleph} (1-\epsilon^{2\mathtt m})^{\iota\aleph}\Big(
        \sum_{\mathbf x\geq 0}\big(\epsilon^2\lambda s(1-\epsilon^{2\mathtt m})\big)^{-x_u} \Big)  \\
        \leq\ & \frac{n^{\aleph +(\ell-1)\iota\aleph}}{\operatorname{Aut}(\mathsf T(\mathbf H))} n^{-\iota\aleph} \big( 1-\tfrac{1}{ \epsilon^2\lambda s(1-\epsilon^\mathtt m)} \big)^{-2\iota\aleph} (\epsilon^2 \lambda s)^{\ell\iota\aleph} (1-\epsilon^{2\mathtt m})^{\iota\aleph}\,,
    \end{align*}
    where the third inequality follows from \eqref{eq-bound-card-R^*_H}. Plugging this bound into \eqref{eq-Qb-var-good-part-1}, we get that
    \begin{align}
        \eqref{eq-Qb-var-good-part-1} &\leq \sum_{\mathbf H \in \mathcal H} \frac{ s^{2(\aleph-1)} (\epsilon^2 \lambda s)^{2\ell\iota\aleph} \operatorname{Aut}(\mathsf T(\mathbf H))^2 }{ n^{2(\aleph+\ell\iota\aleph)} } \Big( \frac{n^{-\iota\aleph}\big( 1-\tfrac{1}{\epsilon^2\lambda s(1-\epsilon^\mathtt m)} \big)^{-2\iota\aleph} (\epsilon^2 \lambda s)^{\ell\iota\aleph} (1-\epsilon^{2\mathtt m})^{\iota\aleph}}{n^{-\aleph -(\ell - 1)\iota\aleph} \operatorname{Aut}(\mathsf T(\mathbf H))} \Big)^2 \nonumber \\
        &= s^{2(\aleph-1)} \big( 1-\tfrac{1}{\epsilon^2\lambda s(1-\epsilon^\mathtt m)} \big)^{-4\iota\aleph}
        \Big( \tfrac{(\epsilon^2 \lambda s)^{\ell}}{n} \Big)^{4\iota\aleph} (1-\epsilon^{2\mathtt m})^{2\iota\aleph} |\mathcal H| \,, \label{eq-Qb-var-good-part-1*}
    \end{align}
    as desired.
\end{proof}

We then deal with \eqref{eq-Qb-var-good-part-2} and \eqref{eq-Qb-var-good-part-3}. For $(S,K) \in \mathfrak Q_{\mathbf H, \mathbf I} \cup \mathfrak U_{\mathbf H, \mathbf I}$, let 
\begin{align*}
\mathtt V = \mathsf{Vert}(\mathsf P(S)) \,.
\end{align*}
Also, since $(S,K) \in \mathfrak Q_{\mathbf H, \mathbf I} \cup \mathfrak U_{\mathbf H, \mathbf I}$, $S \triangle K$ is a union of self-avoiding paths
$\{ \mathcal K_i : 1 \leq i \leq \mathtt z\}$ with $\mathcal K_i \cap (S \cap K) = \operatorname{EndP}(\mathcal K_i)$. Define 
\begin{align*}
    \mathtt U = \cup_{1 \leq i \leq \mathtt z} \operatorname{EndP}(\mathcal K_i)\,, \quad
    \mathtt W = \triangle_{1 \leq i \leq \mathtt z} \operatorname{EndP}(\mathcal K_i)\,,
\end{align*}
and thus $\mathtt W$ is the set of vertices which appear odd number of times in $\operatorname{EndP}(\mathcal{K}_i)$'s. Note that there exist $\{ x_v \geq 0: v \in \mathtt{V} \}$ and self-avoiding paths $\{\mathcal{L}_v\}$ satisfying $v\in \operatorname{EndP}(\mathcal{L}_v)$ for all $v$ such that $|E(\mathcal L_v)|=x_v$ and
\begin{align*}
    S \cap K = \mathsf T(S) \oplus \big( \oplus_{v \in \mathtt V} \mathcal L_v \big)
\end{align*}
(see Figures~\ref{fig:4} and \ref{fig:5} for an illustration).
\begin{figure}[ht]
    \centering
    \vspace{0cm}
    \includegraphics[height=5.4cm,width=14.1cm]{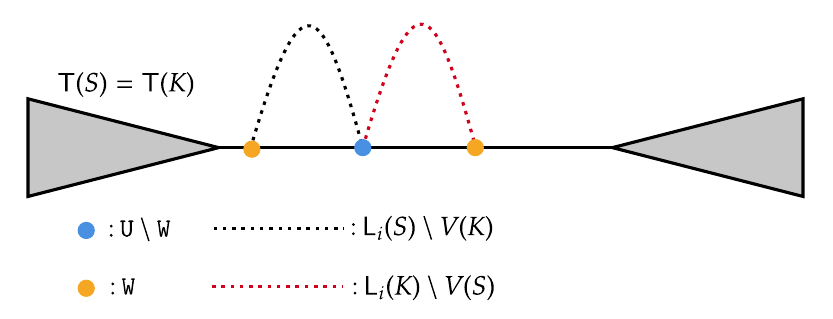}
    \caption{Intersection pattern for $K\in\mathfrak Q_{\mathbf H,\mathbf I}(S,\{x_v\},y)$}
    \label{fig:4}
\end{figure}
\begin{figure}[ht]
    \centering
    \vspace{0cm}
    \includegraphics[height=5.89cm,width=14.89cm]{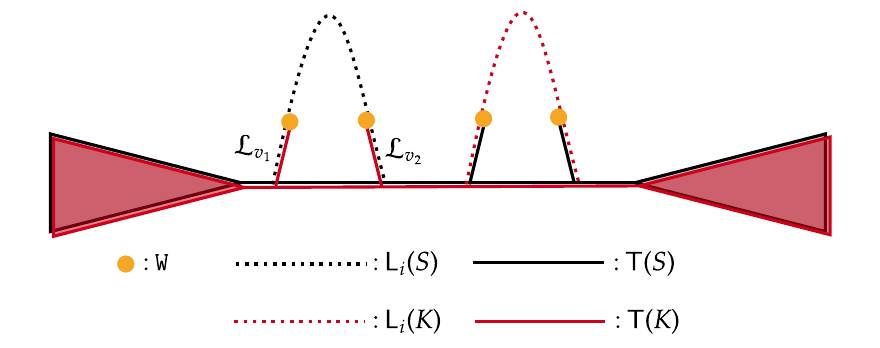}
    \caption{Intersection pattern for $K\in\mathfrak U_{\mathbf H,\mathbf I}(S,\{x_v\},y)$}
    \label{fig:5}
\end{figure}

\noindent In this case we have $\sum_{1 \leq i \leq \mathtt z} |E(\mathcal K_i)| = \ell\iota\aleph - \sum x_v$. Define $\mathfrak Q_{\mathbf H,\mathbf I}(S,\{ x_u\},y)$ to be the set of $K \in \mathfrak R_{\mathbf I}^*$ such that 
\begin{align*}
    (S,K) \in \mathfrak Q_{\mathbf H,\mathbf I}, \quad |E(\mathcal L_u)| = x_u, \quad \#\big\{ i \in [\iota\aleph] : E(\mathsf L_i (K)\cap \mathsf T(S)) \neq \emptyset \big\} = y \,,
\end{align*} 
and similarly define $\mathfrak U_{\mathbf H,\mathbf I}(S,\{ x_u\},y)$ be the set of $K \in \mathfrak R_{\mathbf I}^*$ such that 
\begin{align*}
    (S,K) \in \mathfrak U_{\mathbf H,\mathbf I}, \quad |E(\mathcal L_u)| = x_u, \quad \#\big\{ i \in [\iota\aleph] : E(\mathsf L_i (K)\cap \mathsf T(S)) \neq \emptyset \big\} = y \,.
\end{align*}
We first show the following bounds on the enumerations of $\mathfrak Q_{\mathbf H,\mathbf I}(S,\{ x_u \},y)$ and $\mathfrak U_{\mathbf H,\mathbf I}(S,\{ x_u \},y)$.
\begin{lemma}{\label{lem-enu-mathfrak-Q-mathfrak-U}}
    We have
    \begin{align}\label{eq-enum-frakchoose-pairings-Q-H,I}
        \sum_{\mathbf I \in \mathcal H}\#\big(\mathfrak Q_{\mathbf H,\mathbf I}(S,\{ x_u \},y) \cup \mathfrak U_{\mathbf H,\mathbf I}(S,\{ x_u \},y)\big) \leq n^{(\ell-1)\iota\aleph- \sum x_u+o(1)}\,.
    \end{align}
    In addition, given any $\{ x_u \}$ and $y$ we have (recall the definition of $\mathfrak c$ in \eqref{eq-choice-mathfrak-c})
    \begin{align}
        &\#\big\{ \mathbf I \in \mathcal H: \mathfrak U_{\mathbf H,\mathbf I}(S,\{ x_u \},y) \neq \emptyset \big\} \leq 2^{2\iota\aleph} \tbinom{ 100\mathfrak c^{9}\mathfrak C_\aleph }{y}\,. \label{eq-enum-frakchoose-pairings-U-valid-I-enum} \\
        &\#\mathfrak U_{\mathbf H,\mathbf I}(S,\{ x_u \},y) \leq e^{4\iota\aleph}n^{(\ell-1)\iota\aleph- \sum x_u} \,. \label{eq-enum-frakchoose-pairings-U-valid-I-enum-+}
    \end{align}
\end{lemma}
The proof of Lemma~\ref{lem-enu-mathfrak-Q-mathfrak-U} is incorporated in Section~\ref{subsec:proof-lem-4.3}. Now we can bound \eqref{eq-Qb-var-good-part-2} and \eqref{eq-Qb-var-good-part-3} via the following lemma.
\begin{lemma}{\label{lem-Qb-var-good-part-2,3}}
    We have 
    \begin{align}
        \eqref{eq-Qb-var-good-part-2} &\leq n^{-1+o(1)} \cdot (1-\tfrac{1}{\epsilon^2\lambda s})^{-4\iota\aleph} s^{2(\aleph-1)}|\mathcal H|^2 \big( \tfrac{  (\epsilon^2 \lambda s)^{\ell} }{ n } \big)^{4\iota\aleph} \,, \label{eq-Qb-var-good-part-2*}   \\
        \eqref{eq-Qb-var-good-part-3} &\leq s^{2(\aleph-1)} |\mathcal H| \big( \tfrac{(\epsilon^2 \lambda s)^{\ell} }{ n } \big)^{4\iota\aleph} 2^{-\iota\aleph} \tbinom{\mathfrak C_\aleph / 4}{\iota\aleph} \,. \label{eq-Qb-var-good-part-3*} 
    \end{align}
    In particular, with Proposition~\ref{prop-first-moment} and Lemma~\ref{lem-enu-decorated-trees} we have $\eqref{eq-Qb-var-good-part-2}, \eqref{eq-Qb-var-good-part-3}=o(1) \cdot \mathbb E_{\Pb}[f]^2$.
\end{lemma}
\begin{proof} 
    Note that conditioning on $\sigma_{\mathtt U}=\{ \sigma_u: u \in \mathtt U \}$, we have that
    \begin{align*}
        \Bigg\{ & \prod_{(i,j) \in E(\mathsf T(S))} \Big( \tfrac{ A_{i,j}-\frac{\lambda s}{n} }{ \sqrt{ \frac{\lambda s}{n} (1-\frac{\lambda s}{n}) } } \Big)^2, \prod_{m \leq \iota\aleph} \prod_{(i,j) \in E(\mathsf L_m(S))} \tfrac{ A_{i,j}-\frac{\lambda s}{n} }{ \sqrt{ \frac{\lambda s}{n} (1-\frac{\lambda s}{n}) } } \prod_{(i,j) \in E(\mathsf L_m(K))} \tfrac{ A_{i,j}-\frac{\lambda s}{n} }{ \sqrt{ \frac{\lambda s}{n} (1-\frac{\lambda s}{n}) } } \Bigg\}
    \end{align*}
    are conditionally independent. Also, similar to \eqref{eq-expectation-pi-in-A-diamond}, we have
    \begin{align*}
        & \mathbb E\Bigg\{  \prod_{m \leq \iota\aleph} \prod_{(i,j) \in E(\mathsf L_m(S))} \tfrac{ A_{i,j}-\frac{\lambda s}{n} }{ \sqrt{ \frac{\lambda s}{n} (1-\frac{\lambda s}{n}) } } \prod_{(i,j) \in E(\mathsf L_m(K))} \tfrac{ A_{i,j}-\frac{\lambda s}{n} }{ \sqrt{ \frac{\lambda s}{n} (1-\frac{\lambda s}{n}) } } \mid \sigma_{\mathtt U}  \Bigg\} \\
        =\ & \big( \tfrac{\epsilon^2 \lambda s}{n} \big)^{ \sum_{1 \leq i \leq \mathtt z}|E(\mathcal K_i)| } \prod_{1 \leq i \leq \mathtt z} \prod_{u \in \operatorname{EndP}(\mathcal K_i )} \sigma_u = \big( \tfrac{\epsilon^2 \lambda s}{n} \big)^{ \ell\iota\aleph-\sum x_v } \prod_{u \in \mathtt W} \sigma_u \,.
    \end{align*}
    Thus, for $K \in \mathfrak Q_{\mathbf H,\mathbf I}(S,\{ x_u\},y)\cup \mathfrak U_{\mathbf H,\mathbf I}(S,\{ x_u\},y)$ we have 
    \begin{align}
        & \mathbb E_{\Qb}[ \beta_S(A)\beta_K(A) ] = \mathbb E_{\sigma_{\mathtt U}\sim \nu_{\mathtt U}}\Big\{ \mathbb E_{\Qb}\big[ \beta_S(A)\beta_K(A) \mid \sigma_{\mathtt U} \big] \Big\} \nonumber \\
        =\ & \big( \tfrac{\epsilon^2 \lambda s}{n} \big)^{ \ell\iota\aleph-\sum x_v } \mathbb E_{\sigma_{\mathtt U}\sim \nu_{\mathtt U}} \Bigg\{ \mathbb E\Big[ \prod_{(i,j) \in E(\mathsf T(S))} \Big( \tfrac{ A_{i,j}-\frac{\lambda s}{n} }{ \sqrt{ \frac{\lambda s}{n} (1-\frac{\lambda s}{n}) } } \Big)^2 \mid \sigma_{\mathtt U} \Big] \prod_{u \in \mathtt W} \sigma_u \Bigg\} \nonumber \\
        =\ & \big( \tfrac{\epsilon^2 \lambda s}{n} \big)^{ \ell\iota\aleph-\sum x_v } \mathbb E_{\sigma_{\mathtt U}\sim \nu_{\mathtt U}}\Bigg\{ \prod_{(i,j) \in E(\mathsf T(S))} \Big( \tfrac{ A_{i,j}-\frac{\lambda s}{n} }{ \sqrt{ \frac{\lambda s}{n} (1-\frac{\lambda s}{n}) } } \Big)^2 \prod_{u \in \mathtt W} \sigma_u \Bigg\}  \,. \nonumber
    \end{align}
    Using Lemma~\ref{lem-label-matching-exact}, we see that for any $\{ (\mathtt u_1 ,\mathtt v_1), \ldots, (\mathtt u_\mathtt i,\mathtt v_\mathtt i) \} \in \mathcal X(\mathtt W)$ (recall that we use $\mathcal X(\mathtt W)$ to denote partitions of $\mathtt W$ into disjoint pairs), we have $\sum_{\mathtt j=1}^{\mathtt i} d(\mathtt u_\mathtt j ,\mathtt v_\mathtt j ) \geq y\mathtt m$. Thus we obtain that
    \begin{align}
        \mathbb E_{\Qb}[ \beta_S(A)\beta_K(A) ]  \leq \big( \tfrac{\epsilon^2 \lambda s}{n} \big)^{ \ell\iota\aleph-\sum x_v } \epsilon^{ y\mathtt m}  \,. \label{eq-Qb-var-good-part-case-II-relax-1}
    \end{align}
    Also, note that for $K \in \mathfrak Q_{\mathbf H,\mathbf I}(S,\{x_v\}, y)$ we have $\big|\mathbb E_{\Qb}[ \phi_S(A) \phi_K(A) ]\big| \leq n^{-\tfrac{1}{2}+o(1)} \mathbb E_{\Qb}[ \beta_S(A) \beta_K(A) ]$ by Lemma~\ref{lem-est-intersection-general-istree-P-Q}. Thus, we have that \eqref{eq-Qb-var-good-part-2} is bounded by (writing $\mathbf x\geq 0$ for $x_v \geq 0$ for all $v \in \mathtt V$)
    \begin{align}
        & \sum_{\mathbf H, \mathbf I \in \mathcal H} \frac{ s^{2(\aleph-1)} (\epsilon^2 \lambda s)^{2\ell\iota\aleph}\operatorname{Aut}(\mathsf T(\mathbf H))^2 }{ n^{2(\aleph+\ell\iota\aleph)} } \Big( \sum_{S \in \mathfrak R_{\mathbf H}^*} \sum_{\substack{\mathbf x\geq 0\\0 \leq y \leq \iota\aleph}} \sum_{K \in \mathfrak Q_{\mathbf H,\mathbf I}(S,\{x_v\} , y)} \big|\mathbb E_{\Qb}[ \phi_S(A) \phi_K(A) ]\big| \Big)^2 \nonumber \\
        \leq\ & \sum_{  \mathbf H, \mathbf I \in \mathcal H} \frac{ s^{2(\aleph-1)} (\epsilon^2 \lambda s)^{2\ell\iota\aleph}\operatorname{Aut}(\mathsf T(\mathbf H))^2  }{ n^{2(\aleph+\ell\iota\aleph)} } \Big(\sum_{\substack{\mathbf x\geq 0\\0 \leq y \leq \iota\aleph}} |\mathfrak R_\mathbf H^*| \cdot n^{-\tfrac{1}{2}+o(1)} \sum_{K \in \mathfrak Q_{\mathbf H,\mathbf I} (S,\{x_v\} , y)} \E_{\Qb} [\beta_S(A)\beta_K(A)]\Big)^2  \nonumber \\
        \overset{\eqref{eq-enum-frakchoose-pairings-Q-H,I}}{\leq}\ & \sum_{  \mathbf H, \mathbf I \in \mathcal H} \frac{ s^{2(\aleph-1)} (\epsilon^2 \lambda s)^{2\ell\iota\aleph}\operatorname{Aut}(\mathsf T(\mathbf H))^2  }{ n^{2(\aleph+\ell\iota\aleph)} }\ * \nonumber \\
        & \sum_{\substack{\mathbf x\geq 0\\0 \leq y \leq \iota\aleph}}\Big( \frac{n^{(\ell-1)\iota\aleph+\aleph}}{\operatorname{Aut}(\mathsf T(\mathbf H))} \cdot n^{(\ell-1)\iota\aleph-\sum x_v + o (1)}\cdot n^{-\tfrac{1}{2}+o(1)}\big( \tfrac{\epsilon^2 \lambda s}{n} \big)^{ \ell\iota\aleph-\sum x_v } \Big)^2  \nonumber \\
        \leq\ & (1-\tfrac{1}{\epsilon^2\lambda s})^{-4\iota\aleph} s^{2(\aleph-1)}|\mathcal H|^2 \big( \tfrac{  (\epsilon^2 \lambda s)^{\ell} }{ n } \big)^{4\iota\aleph} n^{-1+o(1)} \,, \nonumber
    \end{align}
    leading to \eqref{eq-Qb-var-good-part-2*}. Now we deal with \eqref{eq-Qb-var-good-part-3}. Using Cauchy-Schwartz inequality, we see that \eqref{eq-Qb-var-good-part-3} is bounded by
    \begin{align}
        & \sum_{\mathbf H, \mathbf I \in \mathcal H} \frac{ s^{2(\aleph-1)} (\epsilon^2 \lambda s)^{2\ell\iota\aleph}\operatorname{Aut}(\mathsf T(\mathbf H))^2 }{ n^{2(\aleph+\ell\iota\aleph)} } \Bigg( \sum_{S \in \mathfrak R_{\mathbf H}^*} \sum_{\substack{\mathbf x\geq 0\\0 \leq y \leq \iota\aleph}} \sum_{K \in \mathfrak U_{\mathbf H,\mathbf I}(S,\{x_v\} , y)} \big|\mathbb E_{\Qb}[ \phi_S(A) \phi_K(A) ]\big| \Bigg)^2 \nonumber \\ 
        \leq\ & \sum_{\mathbf H, \mathbf I \in \mathcal H} \frac{ s^{2(\aleph-1)} (\epsilon^2 \lambda s)^{2\ell\iota\aleph}\operatorname{Aut}(\mathsf T(\mathbf H))^2 }{ n^{2(\aleph+\ell\iota\aleph)} } \Bigg( \sum_{\substack{\mathbf x\geq 0\\0 \leq y \leq \iota\aleph}} (\epsilon^2 \lambda s)^{-\sum x_u} \ * \nonumber  \\
        &  \sum_{\substack{\mathbf x\geq 0\\0 \leq y \leq \iota\aleph}}\Big( (\epsilon^2 \lambda s)^{(\sum x_u)/2} \cdot \sum_{S \in \mathfrak R_{\mathbf H}^*} \sum_{K \in \mathfrak U_{\mathbf H,\mathbf I}(S,\{x_v\} , y)} \big|\mathbb E_{\Qb}[ \phi_S(A) \phi_K(A) ]\big| \Big)^2 \Bigg) \nonumber \\
        \leq\ & (1-\tfrac{1}{\epsilon^2\lambda s})^{-2\iota\aleph}(\iota\aleph+1) \cdot \sum_{\mathbf H, \mathbf I \in \mathcal H} \frac{ s^{2(\aleph-1)} (\epsilon^2 \lambda s)^{2\ell\iota\aleph}\operatorname{Aut}(\mathsf T(\mathbf H))^2 }{ n^{2(\aleph+\ell\iota\aleph)} } \ * \nonumber \\
        & \sum_{\substack{\mathbf x\geq 0\\0 \leq y \leq \iota\aleph}} \Big( \sum_{S \in \mathfrak R_{\mathbf H}^*} \sum_{K \in \mathfrak U_{\mathbf H,\mathbf I}(S,\{x_v\} , y)} (\epsilon^2 \lambda s)^{(\sum x_u)/2} \big|\mathbb E_{\Qb}[ \phi_S(A) \phi_K(A) ]\big| \Big)^2 \nonumber \\
        \leq\ & (1-\tfrac{1}{\sqrt{\epsilon^2\lambda s}})^{-2\iota\aleph} (\iota\aleph+1) \sum_{\mathbf H \in \mathcal H} \sum_{\substack{\mathbf x\geq 0\\0 \leq y \leq \iota\aleph}} \sum_{ \substack{ \mathbf I \in \mathcal H \\ \mathfrak U_{\mathbf H,\mathbf I} \neq \emptyset } } \frac{ s^{2(\aleph-1)} (\epsilon^2 \lambda s)^{2\ell\iota\aleph}\operatorname{Aut}(\mathsf T(\mathbf H))^2 }{ n^{2(\aleph+\ell\iota\aleph)} } \ * \nonumber \\
        & \Big( \sum_{S \in \mathfrak R_{\mathbf H}^*} \sum_{K \in \mathfrak U_{\mathbf H,\mathbf I}(\{x_v\} , y)} (\epsilon^2 \lambda s)^{(\sum x_u)/2} \big|\mathbb E_{\Qb}[ \phi_S(A) \phi_K(A) ]\big| \Big)^2 \,.  \label{eq-Qb-var-good-part-3-relax-1}
    \end{align}
    Using Lemma~\ref{lem-enu-mathfrak-Q-mathfrak-U}, we get that
    \begin{align*}
        \eqref{eq-Qb-var-good-part-3-relax-1} &\overset{\eqref{eq-Qb-var-good-part-case-II-relax-1}}{\leq} \Big( \sum_{S \in \mathfrak R_{\mathbf H}^*} \sum_{K \in \mathfrak U_{\mathbf H,\mathbf I}(\{x_v\},y)} (\epsilon^2 \lambda s)^{(\sum x_u)/2} \epsilon^{y \mathtt m} \big( \tfrac{\epsilon^2 \lambda s}{n} \big)^{ \ell\iota\aleph-\sum x_v } \Big)^2 \\
        &\overset{\eqref{eq-enum-frakchoose-pairings-U-valid-I-enum-+}}{\leq} \Big( \sum_{S \in \mathfrak R_{\mathbf H}^*} e^{4\iota\aleph} 3^{2\iota\aleph} n^{(\ell-1)\iota\aleph-\sum x_v} (\epsilon^2 \lambda s)^{(\sum x_u)/2} \epsilon^{y \mathtt m} \big( \tfrac{\epsilon^2 \lambda s}{n} \big)^{ \ell\iota\aleph-\sum x_v } \Big)^2 \\
        &= \Big( \tfrac{ n^{(\ell-1)\iota\aleph+\aleph} }{ \operatorname{Aut}(\mathsf{T}(\mathbf H)) } \cdot e^{4\iota\aleph} 3^{2\iota\aleph}  (\epsilon^2 \lambda s)^{-(\sum x_u)/2} \epsilon^{y \mathtt m} \big( \tfrac{(\epsilon^2 \lambda s)^\ell}{n} \big)^{ \iota\aleph } \Big)^2 \,.
    \end{align*}
    Thus, we get that \eqref{eq-Qb-var-good-part-3} is bounded by
    \begin{align*}
        & (1-\tfrac{1}{\sqrt{\epsilon^2\lambda s}})^{-2\iota\aleph} (\iota\aleph+1) \sum_{\mathbf H \in \mathcal H} \sum_{\substack{\mathbf x\geq 0\\0 \leq y \leq \iota\aleph}} \sum_{ \substack{ \mathbf I \in \mathcal H \\ \mathfrak U_{\mathbf H,\mathbf I} \neq \emptyset } } \frac{ s^{2(\aleph-1)} (\epsilon^2 \lambda s)^{2\ell\iota\aleph}\operatorname{Aut}(\mathsf T(\mathbf H))^2 }{ n^{2(\aleph+\ell\iota\aleph)} } \ *  \\
        & \Big( \tfrac{ n^{(\ell-1)\iota\aleph+\aleph} }{ \operatorname{Aut}(\mathsf{T}(\mathbf H)) } \cdot e^{4\iota\aleph} (\epsilon^2 \lambda s)^{-(\sum x_u)/2} \epsilon^{y \mathtt m} \big( \tfrac{(\epsilon^2 \lambda s)^\ell}{n} \big)^{ \iota\aleph } \Big)^2  \\
        \leq\ & (1-\tfrac{1}{\sqrt{\epsilon^2\lambda s}})^{-2\iota\aleph} (\iota\aleph+1) \sum_{\mathbf H \in \mathcal H} \sum_{\substack{\mathbf x\geq 0\\0 \leq y \leq \iota\aleph}} \sum_{ \substack{ \mathbf I \in \mathcal H \\ \mathfrak U_{\mathbf H,\mathbf I} \neq \emptyset } } \frac{ s^{2(\aleph-1)} (\epsilon^2 \lambda s)^{4\ell\iota\aleph-\sum x_v} 2^{16\iota\aleph}  \epsilon^{2y\mathtt m} }{ n^{4\iota\aleph} } \\
        \leq\ & (1-\tfrac{1}{\sqrt{\epsilon^2\lambda s}})^{-2\iota\aleph} (\iota\aleph+1) \sum_{\mathbf H \in \mathcal H} \sum_{\substack{\mathbf x\geq 0\\0 \leq y \leq \iota\aleph}}  \frac{ s^{2(\aleph-1)} (\epsilon^2 \lambda s)^{4\ell\iota\aleph-\sum x_v} 2^{18\iota\aleph} \tbinom{ 100\mathfrak c^{9}\mathfrak C_\aleph }{y} \epsilon^{2y\mathtt m} }{ n^{4\iota\aleph} } \\
        \leq\ & (1-\tfrac{1}{\sqrt{\epsilon^2\lambda s}})^{-4\iota\aleph} (\iota\aleph+1) \cdot \frac{ s^{2(\aleph-1)} |\mathcal H| (\epsilon^2 \lambda s)^{4\ell\iota\aleph} 2^{18\iota\aleph} }{ n^{4\iota\aleph} }  \sum_{0 \leq y \leq \iota\aleph} \tbinom{ 100\mathfrak c^{9}\mathfrak C_\aleph }{y} \epsilon^{2y\mathtt m} \\
        \leq\ & s^{2(\aleph-1)} |\mathcal H| \big( \tfrac{(\epsilon^2 \lambda s)^{\ell} }{ n } \big)^{4\iota\aleph} 2^{-\iota\aleph} \tbinom{\mathfrak C_\aleph / 4}{\iota\aleph}  \,,
    \end{align*}
    where the second inequality follows from \eqref{eq-enum-frakchoose-pairings-U-valid-I-enum} and the last inequality holds by the fact that 
    \begin{align*}
        \sum_{0 \leq y \leq \iota\aleph} \tbinom{ 100\mathfrak c^{9}\mathfrak C_\aleph }{y} \epsilon^{2y\mathtt m} \leq (\iota\aleph + 1) \tbinom{ 100\mathfrak c^{9}\mathfrak C_\aleph }{\iota\aleph} \epsilon^{2\mathtt m \iota\aleph} \overset{\eqref{eq-choice-mathtt-m}}{\leq}  \tbinom{ \mathfrak C_\aleph/4 }{\iota\aleph}\epsilon^{\mathtt m \iota\aleph} \,,
    \end{align*}
    and the fact that $\epsilon^{\mathtt m \iota\aleph} \leq(1-\tfrac{1}{\sqrt{\epsilon^2\lambda s}})^{4\iota\aleph} 2^{-29\iota\aleph}$ by \eqref{eq-choice-mathtt-m} and \eqref{eq-choice-iota}. This leads to \eqref{eq-Qb-var-good-part-3*}.
\end{proof}
Now it remains to deal with \eqref{eq-Qb-var-bad-part}. For $S \in \mathfrak R_{\mathbf H}$, $K \in \mathfrak R_{\mathbf I}$ and $I,J \subset [\iota\aleph]$, define $G^{I,J}_{\cup}$ to be the (simple) graph obtained by (recall \eqref{eq-def-I-over-S})
\begin{align*}
    G^{I,J}_{\cup} = \widetilde{S}^I_1 \cup \widetilde{S}^J_2  \,,
\end{align*}
and in particular we define $G_\cup = G^{[\iota\aleph],[\iota\aleph]}_\cup$. From Item~(3) in Lemma~\ref{lem-useful-property-trees-and-sets}, we know that there must exist $v \in \big(\mathcal L( \mathsf T(S))\cup \mathcal L( \mathsf T(K ))\big)\setminus \big(V(S \cap K)\big)$ if $\mathcal L(S) \cup \mathcal L( K) \not\subset V(S \cap K)$, which implies that $v \in \mathcal L(S^I \cup K^J)$ for all $I,J \subset [\iota\aleph]$ and therefore $\mathbb E_{\mathbb Q}[ \phi_{S}(A)\phi_{K}(A) ]=0$ from \eqref{eq-phi-as-linear-combination-of-beta} and Lemma~\ref{lem-upper-bound-exp}. Therefore, we only need to consider the case that $\mathcal L(S) \cup \mathcal L(K) \subset V(S \cap K)$, in which we have $\mathcal L(G_{\cup}) \subset \mathcal L(\widetilde S) \cap \mathcal L(\widetilde K)$ and thus $|\mathcal L(G_{\cup})| \leq \aleph$. Next we give a union upper bound for $\E_\Qb [\phi_S(A)\phi_K(A)]$. Note that
\begin{align}
    &\big(\epsilon^\mathtt m (\tfrac{\epsilon^2 \lambda s}{n})^{\ell/2} \big)^{2\iota\aleph - |I| - |J|}\E_{\P_\mathsf{id}} [\beta_{S^I}(A)\beta_{K^J}(A)] \nonumber \\
    =\ & \mathbb{E}_{\mathbb Q} \Bigg[  \prod_{(i,j) \in E(\widetilde{S}^I)} \Big( \tfrac{ A_{i,j}-\frac{\lambda s}{n} }{ \sqrt{\lambda s/n} } \Big)^{E(S^I)_{i,j}} \prod_{(i,j) \in E(\widetilde K^J)} \Big( \tfrac{ A_{i,j}-\frac{\lambda s}{n} }{ \sqrt{\lambda s/n} } \Big)^{E(K^J)_{i,j}}  \Bigg] * \big(\epsilon^\mathtt m (\tfrac{\epsilon^2 \lambda s}{n})^{\ell/2} \big)^{2\iota\aleph - |I| - |J|}\nonumber  \\
    =\ & \mathbb{E}_{\sigma\sim\nu} \Bigg\{  \prod_{(i,j) \in E(G_{\cup} \setminus \widetilde S^I)} \mathbb{E}_{\mathbb P_{\mathsf{id},\sigma}} \Bigg[ \Big( \tfrac{ A_{i,j}-\frac{\lambda s}{n} }{ \sqrt{\lambda s/n} } \Big)^{E(S^I)_{i,j}} \Bigg] \prod_{(i,j) \in E(G_{\cup} \setminus \widetilde K^J)} \mathbb{E}_{\mathbb P_{\mathsf{id},\sigma}} \Bigg[ \Big( \tfrac{ A_{i,j}-\frac{\lambda s}{n} }{ \sqrt{\lambda s/n} } \Big)^{E(K^J)_{i,j}} \Bigg] \nonumber \\ 
    & \prod_{(i,j) \in E(\widetilde S^I \cap \widetilde K^J)} \mathbb{E}_{\mathbb P_{\mathsf{id},\sigma}} \Bigg[ \Big( \tfrac{ A_{i,j}-\frac{\lambda s}{n} }{ \sqrt{\lambda s/n} } \Big)^{E(S^I)_{i,j}+E(K^J)_{i,j}}  \Bigg]  \Bigg\} * \big(\epsilon^\mathtt m (\tfrac{\epsilon^2 \lambda s}{n})^{\ell/2} \big)^{2\iota\aleph - |I| - |J|} \,. \label{eq-Qb-f1-setminus-*-relax-1}
\end{align}
Using Lemmas~\ref{lem-joint-moment-A-B} and \ref{lem-upper-bound-exp} (with $V=\emptyset$), we see that \eqref{eq-Qb-f1-setminus-*-relax-1} is bounded by (below we denote $E(S)_{i,j}=0$ if $(i,j) \not \in E(S)$)
\begin{align*}
    2^{5\tau(G^{I,J}_{\cup}) +12\aleph} \prod_{(i,j) \in E(G^{I,J}_{\cup})} \Big( \tfrac{\sqrt{n}}{\sqrt{\epsilon^2\lambda s}} \Big)^{E(S^I_1)_{i,j}+E(S^J_2)_{i,j}-2} \big(\epsilon^\mathtt m (\tfrac{\epsilon^2 \lambda s}{n})^{\ell/2} \big)^{2\iota\aleph-|I|-|J|} \,. 
\end{align*}
Using the fact that $\tau(G^{I,J}_{\cup}) \leq \tau(G_{\cup})$ and
\begin{align*}
    \sum_{(i,j) \in E(G^{I,J}_{\cup})} (E(S^I)_{i,j}+E(K^J)_{i,j}-2)&= 2(\aleph-1)+\ell(|I|+|J|)-2|E(G^{I,J}_{\cup})| \\
    &\leq 2(\aleph - 1 + \iota\ell\aleph) - 2|E(G_\cup)| - \ell(|I|+|J|) \,,
\end{align*}
we obtain that \eqref{eq-Qb-f1-setminus-*-relax-1} is further bounded by 
\begin{align}
    2^{5\tau(G_{\cup})+10\aleph} n^{\aleph-1+\iota\ell\aleph-|E(G_{\cup})|} / (\epsilon^2\lambda s)^{\aleph-1+\iota\ell\aleph-|E(G_{\cup})|} \,.  \label{eq-Qb-f1-setminus-*-relax-2}
\end{align}
Therefore, applying \eqref{eq-phi-as-linear-combination-of-beta} we have that
\begin{align}
    \Big| \mathbb E_{\Qb}\big[ \phi_S(A)\phi_K(A) \big] \Big| \leq 2^{5\tau(G_{\cup})+10\aleph+2\iota\aleph} n^{\aleph-1+\iota\ell\aleph-|E(G_{\cup})|} / (\epsilon^2\lambda s)^{\aleph-1+\iota\ell\aleph-|E(G_{\cup})|} \,.  \label{eq-before-lem-4.3}
\end{align}

\begin{lemma}{\label{lem-character-Q-H-I}}
    For all $(S,K) \not\in \mathfrak P^*_{\mathbf H, \mathbf I}$ such that $\widetilde S \cup \widetilde K = G_{\cup}$ and $\mathcal L(S) \cup \mathcal L(K) \subset V(S \cap K)$, we have
    \begin{equation}\label{eq-character-Q-H-I}
        2\iota\aleph-\tau(G_{\cup}) \leq \tfrac{2\ell\iota\aleph+2\aleph-|E(G_{\cup})|}{\ell/2} \,.
    \end{equation}
\end{lemma}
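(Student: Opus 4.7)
The plan is to follow the two-case analysis of Lemma~\ref{lem-character-non-principle} essentially line-by-line, with one genuinely new subcase coming from the fact that $\mathbf H$ and $\mathbf I$ may be distinct decorated trees. First I would split based on whether $(S,K) \in \mathfrak R_{\mathbf H}^* \times \mathfrak R_{\mathbf I}^*$. In this ``simple'' case both $S$ and $K$ are simple graphs with $\tau(S)=\tau(K)=\iota\aleph-1$ and $|E(S)|=|E(K)|=\aleph-1+\ell\iota\aleph$, so that writing $H = S \cap K$ the target inequality \eqref{eq-character-Q-H-I} reduces to $\tau(H) \leq \tfrac{|E(H)|}{\ell/2} - 2$. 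I would then define $H_0 = H \cap \mathsf T(S)$ and iteratively $H_i = H_{i-1} \cup (H \cap \mathsf L_i(S))$ for $i=1,\ldots,\iota\aleph$, and invoke the excess-tracking inequality \eqref{eq-lem-3.9-case-I-induction}, which telescopes to $\tau(H) - \tfrac{|E(H)|}{\ell/2} \leq \tau(H_0) - \#\{i : \mathsf L_i(S) \subset H\}$.

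Now I need to show that this last quantity is $\leq -2$, for which there are three subcases. If $\mathsf T(S) \neq H_0$, then since $\mathcal L(\mathsf T(S)) = \mathcal L(S) \subset V(H_0)$ the graph $H_0$ is a proper subforest of $\mathsf T(S)$ containing all its leaves, hence disconnected, giving $\tau(H_0) \leq -2$. If $\mathsf T(S) = H_0$ with at least one $\mathsf L_i(S) \subset H$, the bound becomes $-1 - 1 = -2$. The genuinely new subcase is $\mathsf T(S) = H_0$ and no $\mathsf L_i(S) \subset H$, so that $H = \mathsf T(S)$ is a tree; in Lemma~\ref{lem-character-non-principle} the identity $\mathbf H = \mathbf I$ was used to immediately derive a contradiction, which is not available here. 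However, because $|V(\mathsf T(\mathbf H))| = |V(\mathsf T(\mathbf I))| = \aleph$, any containment $\mathsf T(K) \subset H = \mathsf T(S)$ would force $\mathsf T(K) = \mathsf T(S)$ and hence $S \cap K = H$ would be a tree containing $\mathsf T(S) \cup \mathsf T(K)$, contradicting $(S,K) \notin \mathfrak P^*_{\mathbf H, \mathbf I}$. Therefore $\mathsf T(K) \not\subset H$, and I would instead run the symmetric induction from the $K$-side: take $H_0' = H \cap \mathsf T(K) \subsetneq \mathsf T(K)$, which is a proper subforest of $\mathsf T(K)$ containing $\mathcal L(\mathsf T(K)) = \mathcal L(K) \subset V(H)$, hence disconnected, so $\tau(H_0') \leq -2$; telescoping the analogous excess inequality over the paths of $K$ then yields $\tau(H) - \tfrac{|E(H)|}{\ell/2} \leq \tau(H_0') \leq -2$.

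In the other branch $S \notin \mathfrak R_{\mathbf H}^*$ (or symmetrically $K \notin \mathfrak R_{\mathbf I}^*$), we have $\tau(\widetilde S) \geq \iota\aleph$ and I would recycle the second half of the proof of Lemma~\ref{lem-character-non-principle} essentially verbatim: form $H_{\mathsf T} = H \cup \mathsf T(S)$ (where $H = \widetilde S \cap \widetilde K$ here), use $\mathcal L(\widetilde S) = \mathcal L(S) \subset V(H)$ to get $\tau(H_{\mathsf T}) \geq \tau(H)$ and $|E(H_{\mathsf T})| \leq |E(H)| + \aleph$, then decompose $\widetilde S \setminus H_{\mathsf T}$ into self-avoiding paths of length at most $\ell$ via Lemma~\ref{lem-decomposition-H-plus-path} to obtain $\tau(H) - \tfrac{|E(H)| + \aleph}{\ell} \leq \tau(\widetilde S) - \tfrac{|E(\widetilde S)|}{\ell}$, and combine with the analogous $\widetilde K$-bound. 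The main obstacle is the new subcase identified above: the one-sided $S$-induction saturates at $-1$ precisely when $H = \mathsf T(S)$, and the resolution relies on the rigidity that $\mathsf T(\mathbf H)$ and $\mathsf T(\mathbf I)$ both have exactly $\aleph$ vertices, so that $\mathsf T(K) \subset \mathsf T(S)$ collapses to equality and thus pushes $(S,K)$ into $\mathfrak P^*_{\mathbf H, \mathbf I}$; failing that, the symmetric $K$-side induction closes the gap.
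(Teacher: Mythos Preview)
Your plan is exactly what the paper does: it states that the proof is ``highly similar to the proof of Lemma~\ref{lem-character-non-principle}'' and omits all details, so mirroring that two-case analysis is the intended route.

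One step does not go through as written. In your third subcase ($\mathsf T(S)=H_0$ and no $\mathsf L_i(S)\subset H$) you assert ``so that $H=\mathsf T(S)$ is a tree,'' and then use this to force $\mathsf T(K)=\mathsf T(S)$ from $|V(\mathsf T(\mathbf H))|=|V(\mathsf T(\mathbf I))|=\aleph$. The equality $H=\mathsf T(S)$ is not justified: the hypothesis only forbids \emph{full} paths $\mathsf L_i(S)$ from lying in $H$, so $H$ may still contain partial segments of these paths and strictly contain $\mathsf T(S)$. What the telescoping actually gives you in this subcase is $\tau(H)\le\tau(H_0)=-1$. If $\tau(H)\le -2$ you are already done (the target \eqref{eq-goal-lem-3.9-case-I} follows trivially). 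If $\tau(H)=-1$, then tracing equality in each step $\tau(H_{i+1})-\tau(H_i)=0$ forces $H\cap\mathsf L_i(S)$ to consist only of arms attached at the two endpoints (no floating sub-segments and no stray vertices), so $H$ is connected and hence a tree containing $\mathsf T(S)$. Now either $\mathsf T(K)\not\subset H$, and your $K$-side induction with $\tau(H_0')\le -2$ closes the argument, or $\mathsf T(K)\subset H$, in which case $H$ is a tree containing $\mathsf T(S)\cup\mathsf T(K)$ and $(S,K)\in\mathfrak P^*_{\mathbf H,\mathbf I}$, the desired contradiction.

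Note also that this subcase is not genuinely new: the paper's proof of Lemma~\ref{lem-character-non-principle} handles it through \eqref{eq-lem-3.9-exists-j} and the ``without loss of generality $j=1$'' line, which is precisely the symmetric switch to the other side; the identity $\mathbf H=\mathbf I$ is never used there, only that both trees have $\aleph$ vertices---which remains true here.
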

\begin{proof}
    The proof is highly similar to the proof of Lemma~\ref{lem-character-non-principle}, and we omit further details here for simplicity.
\end{proof}

Using Lemma~\ref{lem-character-Q-H-I}, we have
\begin{align}
    & \sum_{ S,K \not\in \mathfrak P_{\mathbf H,\mathbf I}^* } \Big| \Eb_{\Qb}\big[ \phi_S(A)\phi_K(A) \big] \Big|\leq \sum_{ \substack{ |E(G_{\cup})| \leq 2\ell\iota\aleph+2\aleph \\ G_{\cup} \text{ satisfies } \eqref{eq-character-Q-H-I} } } \sum_{S \cup K = G_{\cup}} \Big| \mathbb E_{\Qb}\big[ \phi_S(A)\phi_K(A) \big] \Big| \nonumber \\
    \leq \ & \sum_{ \substack{ |E(G_{\cup})| \leq 2\ell\iota\aleph+2\aleph \\ G_{\cup} \text{ satisfies } \eqref{eq-character-Q-H-I} } } \big( \tfrac{\epsilon^2 \lambda s}{n} \big)^{ -(\aleph-1+\ell\iota\aleph)+|E(G_{\cup})| } \cdot \operatorname{ENUM}'(G_{\cup}) \,, \label{eq-Qb-var-bad-part-relax-1}
\end{align}
where for a subgraph $G$ (e.g., $G=G_\cup$)
\begin{equation}{\label{eq-def-ENUM'-G}}
\begin{aligned}
    \operatorname{ENUM}'(G) = \#\Big( \cup_{\mathbf H \in \mathcal H} \big\{ & (S,K) \in \mathfrak P_{\mathbf H , \mathbf I} \setminus \mathfrak P^*_{\mathbf H , \mathbf I}: \widetilde S \cup \widetilde K= G, \mathcal L(S) \cup \mathcal L(K) \subset V(S \cap K) \big\} \Big) \,.
\end{aligned}
\end{equation}  
Using Lemma~\ref{lem-enu-decorated-trees-in-given-graph}, we have
\begin{align*}
    \operatorname{ENUM}'(G_{\cup}) &\leq \binom{|E(G_{\cup})|}{\aleph} \cdot \Big( |V(G_{\cup})| \cdot \tau(G_{\cup})! \Big)^{2\iota\aleph} \\
    &\leq (2\ell\iota\aleph)^{2\aleph} \cdot \Big(  \tau(G_{\cup})! \Big)^{2\iota\aleph} \leq (2\ell\aleph)^{4\aleph} \leq n^{o(1)} \,.
\end{align*}
Plugging this estimation into \eqref{eq-Qb-var-bad-part-relax-1}, then we have (below we write $G$ satisfies \eqref{eq-character-Q-H-I} if \eqref{eq-character-Q-H-I} holds with $G_\cup$ replaced by $G$)
\begin{align}\label{eq-bound-final-step-ii}
    \eqref{eq-Qb-var-bad-part-relax-1} &\leq n^{o(1)} \cdot \sum_{ \substack{ |E(G)| \leq 2\ell\iota\aleph+2\aleph, |\mathcal L(G)| \leq 2 \aleph \\ G \text{ satisfying } \eqref{eq-character-Q-H-I} \\ \operatorname{ENUM}'(G) > 0 } } \big( \tfrac{\epsilon^2 \lambda s}{n} \big)^{ -(\aleph-1+\ell\iota\aleph)+|E(G)| } \nonumber  \\
    & = n^{o(1)} \cdot \sum_{ \substack{ |E(\mathbf G)| \leq 2\ell\iota\aleph + 2\aleph, |\mathcal L(\mathbf G)| \leq 2 \aleph \\ \mathbf G \text{ satisfying } \eqref{eq-character-Q-H-I} \\ \operatorname{ENUM}'(\mathbf G) > 0 } } \big( \tfrac{\epsilon^2 \lambda s}{n} \big)^{ -(\aleph-1+\ell\iota\aleph)+|E(\mathbf G)| } \cdot \#\{  G \subset \mathsf K_n: G \cong \mathbf G \} \nonumber \\
    &\leq n^{\aleph - 1 +\ell\iota\aleph + o(1)} \cdot \sum_{ \substack{ 0 \leq x \leq 2\aleph + 2\ell\iota\aleph \\ \ell(2\iota\aleph - y) \leq 2(2\ell\iota\aleph+2\aleph-x) } } (\epsilon^2 \lambda s)^{ -(\aleph-1+\ell\iota\aleph)+x }n^{-y} (\ell\aleph)^{ 2y } \,,
\end{align}
where $G$ is summed over $G\cong \mathbf{G}$, $\mathbf{G}$ is summed over $|E(\mathbf{G})|=x,\tau(\mathbf{G})=y$ and then over $x,y$, and the last inequality holds by Lemma~\ref{lem-enu-union-of-decorated-trees}. Thus, we have
\begin{align}
    \eqref{eq-Qb-var-bad-part-relax-1} &\leq n^{\aleph - 1 +\ell\iota\aleph + o(1)} \sum_{0 \leq x \leq 2\aleph + 2\ell\iota\aleph} (\epsilon^2 \lambda s)^{ -(\aleph-1+\ell\iota\aleph)+x }\big(\tfrac{(\ell\aleph)^3}{n}\big)^{-2\iota\aleph+(2(x-2\aleph)/l)} \nonumber \\
    &\circeq n^{\aleph - 1 +\ell\iota\aleph + o(1)}\sum_{0 \leq x \leq 2\aleph + 2\ell\iota\aleph} \big(\tfrac{(\epsilon^2 \lambda s)^\ell}{n^2}\big)^{-\iota\aleph+x/\ell} \nonumber \\
    &\circeq n^{\aleph - 1 +\ell\iota\aleph - 2\iota\aleph + o(1)}(\epsilon^2 \lambda s )^{\ell\iota\aleph}\, .
\end{align}
Plugging this bound into \eqref{eq-Qb-var-bad-part} yields
\begin{align}\label{eq-Qb-var-bad-bound-final-i}
    \eqref{eq-Qb-var-bad-part} &\leq \sum_{\mathbf H,\mathbf I \in \mathcal H} \frac{ s^{2(\aleph-1)} (\epsilon^2 \lambda s)^{2\ell\iota\aleph} \big(\operatorname{Aut}(\mathsf T(\mathbf H)) \big)^2 }{ n^{2+4\iota\aleph+o(1)} }(\epsilon^2 \lambda s )^{2\ell\iota\aleph}\nonumber \\
    &\leq o(1)\cdot s^{2(\aleph - 1)}\big( \tfrac{(\epsilon^2 \lambda s)^\ell(1-\epsilon^{2\mathtt m})^{1/2}}{n}\big)^{4\iota\aleph}\, ,
\end{align}
giving the desired bound on \eqref{eq-Qb-var-bad-part}. In particular, by combining \eqref{eq-Qb-var-bad-bound-final-i} with Proposition~\ref{prop-first-moment} and Lemma~\ref{lem-enu-decorated-trees}, we obtain 
\begin{align}\label{proof-second-moment-item-1-2}
    \eqref{eq-Qb-var-bad-part} = o(1) \cdot \mathbb E_{\Pb}[f]^2 \,.
\end{align}
Finally, combined with Lemmas~\ref{lem-bound-Qb-var-good-part} and \ref{lem-Qb-var-good-part-2,3}, this completes the proof of Item (1) of Proposition~\ref{prop-second-moment}.

\subsection{Proof of Item (2)}{\label{subsec:bound-var-Pb}}

This subsection is devoted to the proof of Item~(2) of Proposition~\ref{prop-second-moment}. We define $\mathfrak R_{\mathbf H,\mathbf I}= \mathfrak R_{\mathbf H} \times \mathfrak R_{\mathbf H} \times \mathfrak R_\mathbf I \times \mathfrak R_\mathbf I$, and introduce 
\begin{align}
    & \mathfrak R^*_{\mathbf H,\mathbf I} = \big\{ (S_1,S_2;K_1,K_2) \in \mathfrak R_{\mathbf H}^* \times \mathfrak R_{\mathbf H}^* \times \mathfrak R_{\mathbf I}^* \times \mathfrak R_{\mathbf I}^*: V(S_1) \cap V(K_1), V(S_2) \cap V(K_2) = \emptyset \big\} \,. \label{eq-def-R^*-H,I} \\
    & \mathfrak R^{**}_{\mathbf H,\mathbf I} = \big\{ (S_1,S_2;K_1,K_2) \in \mathfrak R_{\mathbf H}^* \times \mathfrak R_{\mathbf H}^* \times \mathfrak R_{\mathbf I}^* \times \mathfrak R_{\mathbf I}^*: \mathsf T(S_1)=\mathsf T(K_1), \mathsf T(S_2)=\mathsf T(K_2) \big\} \,. \label{eq-def-R^**-H,I}
\end{align}
Denote
\begin{align*}
    G_{\cup}=\pi(\widetilde S_1) \cup \pi(\widetilde K_1) \cup \widetilde S_2 \cup \widetilde K_2 \mbox{ and } \mathsf T_{\cup} = \pi(\mathsf T(S_1)) \cup \pi(\mathsf T(K_1)) \cup \mathsf T(S_2) \cup \mathsf T(K_2) \,.
\end{align*}
In addition, let $G_{\geq 2} \subset G_{\cup}$ be the (simple) subgraph whose vertex set $V(G_{\geq 2})$ consists of vertices that appear in at least two of the sets $\{ V(S_1),V(S_2),V(K_1),  V(K_2) \}$, and whose edge set $E(G_{\geq 2})$ is defined analogously. We define $\mathcal A_{\bullet}= \mathcal A_{\bullet}(S_1,S_2;K_1,K_2)$ as follows:
\begin{itemize}
    \item If $(S_1,S_2;K_1,K_2) \in \mathfrak R^*_{\mathbf H,\mathbf I}$, then $\mathcal A_{\bullet}$ is the subset of $\mathfrak S_n$ such that $\mathcal L(G_{\cup}) \subset V(G_{\geq 2})$ and $G_{\geq 2}$ is a union of two disjoint trees and contains $\mathsf T_{\cup}$.
    \item If $(S_1,S_2;K_1,K_2) \in \mathfrak R^{**}_{\mathbf H,\mathbf I}$, then $\mathcal A_{\bullet}$ is the subset of $\mathfrak S_n$ such that $\mathcal L(G_{\cup}) \subset V(G_{\geq 2})$ and $\pi(V(S_1 \cup K_1)) \cap V(S_2 \cup K_2) = \emptyset$.
    \item Otherwise, set $\mathcal A_{\bullet}=\emptyset$.
\end{itemize}
Recalling \eqref{eq-def-f} and noting that symmetry ensures $\mathbb E_{\Pb_{\mathsf{id}}}[f^2]= \mathbb E_{\Pb}[f^2], \mathbb E_{\Pb_{\mathsf{id}}}[f]= \mathbb E_{\Pb}[f]$, we may write the variance as 
\begin{align}
    & \operatorname{Var}_{\Pb}[f] = \sum_{ \mathbf H,\mathbf I \in \mathcal H } \frac{ s^{2(\aleph-1)} (\epsilon^2 \lambda s)^{2\ell\iota\aleph} \operatorname{Aut}(\mathsf T(\mathbf H)) \operatorname{Aut}(\mathsf T(\mathbf I)) }{ n^{2(\aleph+\ell\iota\aleph)} } \ \cdot \nonumber \\
    & \sum_{(S_1,S_2;K_1,K_2) \in \mathfrak R_{\mathbf H,\mathbf I}} \Big( \mathbb E_{\Pb_{\mathsf{id}}}[\phi_{S_1,S_2}\phi_{K_1,K_2}] - \mathbb E_{\Pb_{\mathsf{id}}}[\phi_{S_1,S_2}] \mathbb E_{\Pb_{\mathsf{id}}}[\phi_{K_1,K_2}] \Big) \,.  \label{eq-var-Pb-expand}
\end{align}
Thus, (recalling \eqref{eq-def-mathcal-A-diamond}) we may further decompose \eqref{eq-var-Pb-expand} into four parts, where the first part equals 
\begin{align}
    & \sum_{ \mathbf H,\mathbf I \in \mathcal H } \frac{ s^{2(\aleph-1)} (\epsilon^2 \lambda s)^{2\ell\iota\aleph} \operatorname{Aut}(\mathsf T(\mathbf H)) \operatorname{Aut}(\mathsf T(\mathbf I)) }{ n^{2(\aleph+\ell\iota\aleph)} } \ \cdot \sum_{(S_1,S_2;K_1,K_2) \in \mathfrak R^*_{\mathbf H,\mathbf I}}  \nonumber \\
    & \Big( \mathbb E_{\Pb_{\mathsf{id}}}[\phi_{S_1,S_2}\phi_{K_1,K_2} \mathbf 1_{\mathsf{id}\in\mathcal A_{\bullet}(S_1,S_2;K_1,K_2)}] - \mathbb E_{\Pb_{\mathsf{id}}}[\phi_{S_1,S_2} \mathbf 1_{\mathsf{id}\in\mathcal A(S_1,S_2)}] \mathbb E_{\Pb_{\mathsf{id}}}[\phi_{K_1,K_2} \mathbf 1_{\mathsf{id}\in\mathcal A(K_1,K_2)}] \Big) \,, \label{eq-Pb-var-good-part} 
\end{align}
the second part equals
\begin{align}
    & \sum_{ \mathbf H,\mathbf I \in \mathcal H } \frac{ s^{2(\aleph-1)} (\epsilon^2 \lambda s)^{2\ell\iota\aleph} \operatorname{Aut}(\mathsf T(\mathbf H)) \operatorname{Aut}(\mathsf T(\mathbf I)) }{ n^{2(\aleph+\ell\iota\aleph)} } \ \cdot  \nonumber \\
    & \sum_{(S_1,S_2;K_1,K_2) \in \mathfrak R^{**}_{\mathbf H,\mathbf I}} \Big( \mathbb E_{\Pb_{\mathsf{id}}}[\phi_{S_1,S_2}\phi_{K_1,K_2} \mathbf 1_{\mathsf{id}\in\mathcal A_{\bullet}(S_1,S_2;K_1,K_2)}]  \Big) \,, \label{eq-Pb-var-good-part'} 
\end{align}
the third part equals
\begin{align}
    & \sum_{ \mathbf H,\mathbf I \in \mathcal H } \frac{ s^{2(\aleph-1)} (\epsilon^2 \lambda s)^{2\ell\iota\aleph} \operatorname{Aut}(\mathsf T(\mathbf H)) \operatorname{Aut}(\mathsf T(\mathbf I)) }{ n^{2(\aleph+\ell\iota\aleph)} }  \nonumber \\
    & \sum_{ \substack{ (S_1,S_2;K_1,K_2) \in \mathfrak R_{\mathbf H,\mathbf I} } }  \mathbb E_{\Pb}\big[ \phi_{S_1,S_2}\phi_{K_1,K_2} \mathbf 1_{\pi\not\in\mathcal A_{\bullet}(S_1,S_2;K_1,K_2)} \big]  \,, \label{eq-Pb-var-bad-part-2} 
\end{align}
and the fourth part equals
\begin{align}
    \sum_{ \mathbf H,\mathbf I \in \mathcal H }  \sum_{ \substack{ (S_1,S_2;K_1,K_2) \in \mathfrak R_{\mathbf H,\mathbf I} } } \frac{ s^{2(\aleph-1)} (\epsilon^2 \lambda s)^{2\ell\iota\aleph} \operatorname{Aut}(\mathsf T(\mathbf H)) \operatorname{Aut}(\mathsf T(\mathbf I))h(S_1,S_2;K_1,K_2) }{ n^{2(\aleph+\ell\iota\aleph)} } \,. \label{eq-Pb-var-bad-part-1} 
\end{align}
where for $(S_1,S_2;K_1,K_2) \in \mathfrak R^{**}_{\mathbf H,\mathbf I}$ we let $h(S_1,S_2;K_1,K_2)=-\mathbb E_{\Pb_{\mathsf{id}}}[\phi_{S_1,S_2}] \mathbb E_{\Pb_{\mathsf{id}}}[\phi_{K_1,K_2}]$, and otherwise we let
\begin{align*}
    h(S_1,S_2;K_1,K_2) = \mathbb E_{\Pb_{\mathsf{id}}}[\phi_{S_1,S_2}\mathbf 1_{\mathsf{id}\in\mathcal A(S_1,S_2)}] \mathbb E_{\Pb_{\mathsf{id}}}[\phi_{K_1,K_2}\mathbf 1_{\mathsf{id}\in\mathcal A(K_1,K_2)}]-\mathbb E_{\Pb_{\mathsf{id}}}[\phi_{S_1,S_2}] \mathbb E_{\Pb_{\mathsf{id}}}[\phi_{K_1,K_2}] \,. 
\end{align*}
It suffices to show that \eqref{eq-Pb-var-good-part}, \eqref{eq-Pb-var-good-part'}, \eqref{eq-Pb-var-bad-part-2} and \eqref{eq-Pb-var-bad-part-1} are upper-bounded by $o(1) \cdot \mathbb E_{\Pb}[f]^2$. We first consider \eqref{eq-Pb-var-good-part}. Note that when $(S_1,S_2,K_1,K_2) \in \mathfrak R_{\mathbf H,\mathbf I}^*$ and $\mathsf{id} \in \mathcal A_{\bullet}(S_1,S_2;K_1,K_2)$, we have $G_{\geq 2} \subset \widetilde{S}_1 \cup \widetilde K_1$ (since $V(S_2) \cap V(K_2)=\emptyset$) is the union of two disjoint trees. Thus, from $\mathsf T(S_1),\mathsf T(S_2) \subset \mathsf T_{\cup} \subset G_{\geq 2}$ and $V(\mathsf T(S_1))\cap V(\mathsf T(S_2))=\emptyset$ we have that one of the trees in $G_{\geq 2}$ contains $\mathsf T(S_1)$ and the other contains $\mathsf T(S_2)$. Similarly, we can show that one of the trees in $G_{\geq 2}$ contains $\mathsf T(K_1)$ and the other contains $\mathsf T(K_2)$. 
In conclusion, when $(S_1,S_2,K_1,K_2) \in \mathfrak R_{\mathbf H,\mathbf I}^*$ and $\mathsf{id} \in \mathcal A_{\bullet}(S_1,S_2;K_1,K_2)$, one of the two following conditions must hold: (i) $S_1 \cap S_2, K_1 \cap K_2$ are two disjoint trees containing $\mathsf{T}(S_1) \cup \mathsf T(S_2), \mathsf{T}(K_1) \cup \mathsf T(K_2)$ respectively; (ii) $S_1 \cap K_2, K_1 \cap S_2$ are two disjoint trees containing $\mathsf{T}(S_1) \cup \mathsf T(K_2), \mathsf{T}(K_1) \cup \mathsf T(S_2)$ respectively. Thus, if we denote
\begin{equation}{\label{eq-def-Bumpeq}}
    S \Bumpeq K \mbox{ if and only if } S \cap K \mbox{ is a tree containing } \mathsf{T}(S) \cup \mathsf T(K) \,,
\end{equation}
we have 
\begin{align*}
    \mathbf 1_{\mathsf{id}\in\mathcal A_{\bullet}(S_1,S_2;K_1,K_2)} \leq \mathbf 1_{S_1 \Bumpeq S_2} \mathbf 1_{K_1 \Bumpeq K_2} + \mathbf 1_{S_1 \Bumpeq K_2} \mathbf 1_{S_2 \Bumpeq K_1} \,.
\end{align*}
Thus, we have
\begin{align*}
    & \big|\mathbb E_{\Pb_{\mathsf{id}}}[\phi_{S_1,S_2}\phi_{K_1,K_2} \mathbf 1_{\mathsf{id}\in\mathcal A_{\bullet}(S_1,S_2;K_1,K_2)}]\big| = \big|\mathbb E_{\Pb_{\mathsf{id}}}[\phi_{S_1,S_2}\phi_{K_1,K_2} ]\big|\mathbf 1_{\mathsf{id}\in\mathcal A_{\bullet}(S_1,S_2;K_1,K_2)}\\
    \leq\ & \big|\mathbb E_{\Pb_{\mathsf{id}}}[\phi_{S_1,S_2}\phi_{K_1,K_2}]\big| (\mathbf 1_{S_1 \Bumpeq S_2} \mathbf 1_{K_1 \Bumpeq K_2} + \mathbf 1_{S_1 \Bumpeq K_2} \mathbf 1_{K_1 \Bumpeq S_2}) \\
    =\ & \mathbf 1_{S_1 \Bumpeq S_2} \mathbf 1_{K_1 \Bumpeq K_2} \big|\mathbb E_{\Pb_{\mathsf{id}}}[\phi_{S_1,S_2}] \mathbb E_{\Pb_{\mathsf{id}}}[\phi_{K_1,K_2}]\big| + \mathbf 1_{S_1 \Bumpeq K_2} \mathbf 1_{K_1 \Bumpeq S_2} \big|\mathbb E_{\Pb_{\mathsf{id}}}[\phi_{S_1,K_2}] \mathbb E_{\Pb_{\mathsf{id}}}[\phi_{K_1,S_2}]\big| \,.
\end{align*}
Combined with the fact that
\begin{align*}
    \mathbb E_{\Pb_{\mathsf{id}}}[\phi_{S_1,S_2} \mathbf 1_{\mathsf{id}\in\mathcal A(S_1,S_2)}] = \mathbf 1_{S_1 \Bumpeq S_2} \mathbb E_{\Pb_{\mathsf{id}}}[\phi_{S_1,S_2}] \mbox{ and } \mathbb E_{\Pb_{\mathsf{id}}}[\phi_{K_1,K_2} \mathbf 1_{\mathsf{id}\in\mathcal A(K_1,K_2)}] = \mathbf 1_{K_1 \Bumpeq K_2} \mathbb E_{\Pb_{\mathsf{id}}}[\phi_{K_1,K_2}] \,,
\end{align*}
it yields that (note that $S_1 \Bumpeq K_2, S_2 \Bumpeq K_1$ implies $\mathsf T(\mathbf H) \sim \mathsf T(\mathbf I)$)
\begin{align}
    \eqref{eq-Pb-var-good-part} \leq & \sum_{ \substack{ \mathbf H,\mathbf I \in \mathcal H \\ \mathsf T(\mathbf H) \sim \mathsf T(\mathbf I) } } \frac{ s^{2(\aleph-1)} (\epsilon^2 \lambda s)^{2\ell\iota\aleph}  \operatorname{Aut}(\mathsf T(\mathbf H)) \operatorname{Aut}(\mathsf T(\mathbf I)) }{ n^{2(\aleph+\ell\iota\aleph)} }  \sum_{ \substack{ (S_1,S_2;K_1,K_2) \in \mathfrak R^*_{\mathbf H,\mathbf I} \\ S_1 \Bumpeq K_2, S_2 \Bumpeq K_1 } } \big|\mathbb E_{\Pb_{\mathsf{id}}}[\phi_{S_1,K_2}] \mathbb E_{\Pb_{\mathsf{id}}}[\phi_{K_1,S_2}]\big| \,. \label{eq-Pb-var-good-part-1}
\end{align}
Similarly as in \eqref{eq-expectation-pi-in--A} and \eqref{eq-first-moment-P-id--A}, we get that \eqref{eq-Pb-var-good-part} is bounded by
\begin{align}
    & \sum_{ \substack{ \mathbf H,\mathbf I \in \mathcal H \\ \mathsf T(\mathbf H) \sim \mathsf T(\mathbf I) } } \frac{ s^{4(\aleph-1)} (\epsilon^2 \lambda s)^{4\ell\iota\aleph} (1-\epsilon^{2\mathtt m})^{2\iota\aleph}\operatorname{Aut}(\mathsf T(\mathbf H)) \operatorname{Aut}(\mathsf T(\mathbf I)) }{ n^{2(\aleph+(\ell+1)\iota\aleph)} } \cdot |\mathfrak R_{\mathbf H}^*| |\mathfrak R_{\mathbf I}^*| \ \cdot \nonumber \\
    & \Big(\sum_{ \mathbf x \geq 0 } \frac{ \operatorname{Aut}(\mathsf T(\mathbf H) \oplus (\oplus_{u\in\mathsf{Vert}(\mathsf P(\mathbf H))} \mathcal L_{u})) }{ \operatorname{Aut}(\mathsf T(\mathbf I)) (\epsilon^2 \lambda s(1-\epsilon^{2\mathtt m}))^{\sum{x_u}} } \Big) \Big(\sum_{ \mathbf y \geq 0 } \frac{ \operatorname{Aut}(\mathsf T(\mathbf I) \oplus (\oplus_{v\in\mathsf{Vert}(\mathsf P(\mathbf I))} \mathcal L_{v})) }{ \operatorname{Aut}(\mathsf T(\mathbf H)) (\epsilon^2 \lambda s(1-\epsilon^{2\mathtt m}))^{\sum{y_v}} } \Big) \ \nonumber \\
    \leq\ & s^{4(\aleph-1)} e^{8\iota\aleph} (1-\tfrac{1}{\epsilon^2 \lambda s(1-\epsilon^{2\mathtt m})})^{-4\iota\aleph} \big( \tfrac{(\epsilon^2 \lambda s)^{\ell}}{n} \big)^{4\iota\aleph} \cdot \#\{ \mathbf H,\mathbf I \in \mathcal H: \mathsf T(\mathbf H) \sim \mathsf T(\mathbf I)\} \nonumber \\
    =\ & o(1) \cdot \mathbb E_{\Pb}[f]^2 \,,
    \label{eq-Pb-var-good-part-2}
\end{align}
where in the inequality we used Item~(5) in Lemma~\ref{lem-useful-property-trees-and-sets} and in the equality we used Lemma~\ref{lem-remove-sim-relation}, Lemma~\ref{lem-enu-decorated-trees} and Proposition~\ref{prop-first-moment}. Combining \eqref{eq-Pb-var-good-part-1} and \eqref{eq-Pb-var-good-part-2}, we have
\begin{equation}{\label{eq-Pb-var-good-part-1-relax}}
    \eqref{eq-Pb-var-good-part} = o(1) \cdot \mathbb E_{\Pb}[f]^2 \,.
\end{equation}
Now we consider \eqref{eq-Pb-var-good-part'}. For $(S_1,S_2;K_1,K_2)\in \mathfrak R_{\mathbf H,\mathbf I}^{**}$ and $\pi\in\mathcal A_{\bullet}(S_1,S_2;K_1,K_2)$, we have
\begin{align*}
    \mathbb E_{\Pb_{\pi}}\big[ \phi_{S_1,S_2}\phi_{K_1,K_2} \big] &= \mathbb E_{\Pb_{\pi}}\big[ \phi_{S_1}(A)\phi_{K_1}(A) \phi_{S_2}(B)\phi_{K_2}(B) \big]  \\
    &= \mathbb E_{\Pb_{\pi}}\big[ \phi_{S_1}(A)\phi_{K_1}(A) \big] \mathbb E_{\Pb_{\pi}}\big[ \phi_{S_2}(B)\phi_{K_2}(B) \big] \\
    &= \mathbb E_{\Qb}\big[ \phi_{S_1}(A)\phi_{K_1}(A) \big] \mathbb E_\Qb\big[ \phi_{S_2}(B)\phi_{K_2}(B) \big] \,,
\end{align*}  
where the second equality follows from the fact that $\pi(V(S_1 \cup K_1)) \cap V(S_2 \cup K_2) = \emptyset$ for $(S_1,S_2;K_1,K_2)\in \mathfrak R_{\mathbf H,\mathbf I}^{**}$ and $\pi\in\mathcal A_{\bullet}(S_1,S_2;K_1,K_2)$. Thus, we obtain that \eqref{eq-Pb-var-good-part'} is upper-bounded by
\begin{align}
    &\sum_{ \substack{ (S_1,S_2;K_1,K_2) \in \mathfrak R_{\mathbf H,\mathbf I}^{**} } } \frac{ s^{2(\aleph-1)} (\epsilon^2 \lambda s)^{2\ell\iota\aleph} \operatorname{Aut}(\mathsf T(\mathbf H)) \operatorname{Aut}(\mathsf T(\mathbf I)) }{ n^{2(\aleph+\ell\iota\aleph)} } \mathbb E_{\Qb}\big[ \phi_{S_1}(A)\phi_{K_1}(A) \big] \mathbb E_\Qb\big[ \phi_{S_2}(B)\phi_{K_2}(B) \big]   \nonumber \\
    & \leq \mathbb E_{\Qb}[f^2] = o(1) \cdot \mathbb E_{\Pb}[f]^2 \,, \label{eq-Pb-var-good-part'-relax} 
\end{align}
where the equality follows from Item~(1) of Proposition~\ref{prop-second-moment}.  

Now we turn to \eqref{eq-Pb-var-bad-part-1}. By definition of $h(S_1,S_2;K_1,K_2)$ , given $(S_1 , S_2 ;K_1 , K_2)$ we have $h(S_1,S_2;K_1,K_2)$ is either zero or $-\mathbb E_{\Pb_{\mathsf{id}}}[\phi_{S_1,S_2} ] \mathbb E_{\Pb_{\mathsf{id}}}[\phi_{K_1,K_2}]$. Therefore, it suffices to show
\begin{align}
    & \sum_{ \mathbf H,\mathbf I \in \mathcal H } \frac{ s^{2(\aleph-1)} (\epsilon^2 \lambda s)^{2\ell\iota\aleph} \operatorname{Aut}(\mathsf T(\mathbf H)) \operatorname{Aut}(\mathsf T(\mathbf I)) }{ n^{2(\aleph+\ell\iota\aleph)} } \nonumber \\
    & \sum_{ \substack{ (S_1,S_2;K_1,K_2) \in \mathfrak R_{\mathbf H,\mathbf I} } } \Big( \mathbb E_{\Pb_{\mathsf{id}}}[\phi_{S_1,S_2} ]\mathbb E_{\Pb_{\mathsf{id}}}[\phi_{K_1,K_2}] \Big)_{-} = o(1) \cdot \big( \Eb_\Pb[f] \big)^2\,,\label{eq-Pb-var-part-1-reduction}
\end{align}
where $x_- := -\min(x,0)$ is the negative part of $x$. By Lemma~\ref{lem-est-intersection-general-istree-P-Q}, we have $\mathbb E_{\Pb_{\mathsf{id}}}[\phi_{S_1,S_2}]\geq 0$ for $\mathsf{id}\in\mathcal A(S_1,S_2)$ and similarly for $K_1,K_2$. Thus,  
\begin{equation*}
    \big( \mathbb E_{\Pb_{\mathsf{id}}}[\phi_{S_1,S_2} ]\mathbb E_{\Pb_{\mathsf{id}}}[\phi_{K_1,K_2}]\big)_{-} \leq (\mathbf{1}_{\{\mathsf{id} \not\in \mathcal A(S_1,S_2)\}} + \mathbf{1}_{\{\mathsf{id} \not\in \mathcal A(K_1,K_2)\}}) \cdot \big|\mathbb E_{\Pb_{\mathsf{id}}}[\phi_{S_1,S_2} ]\mathbb E_{\Pb_{\mathsf{id}}}[\phi_{K_1,K_2}] \big|\,,
\end{equation*}
which yields that \eqref{eq-Pb-var-part-1-reduction} is bounded by 2 times
\begin{align}
    & \sum_{ \substack{ \mathbf H,\mathbf I \in \mathcal H \\ (S_1,S_2;K_1,K_2) \in \mathfrak R_{\mathbf H,\mathbf I} \\ \mathsf{id} \not\in \mathcal A(S_1,S_2) } } \tfrac{ s^{2(\aleph-1)} (\epsilon^2 \lambda s)^{2\ell\iota\aleph} \operatorname{Aut}(\mathsf T(\mathbf H)) \operatorname{Aut}(\mathsf T(\mathbf I)) }{ n^{2(\aleph+\ell\iota\aleph)} } \cdot \big|\mathbb E_{\Pb_{\mathsf{id}}}[\phi_{S_1,S_2} ]\mathbb E_{\Pb_{\mathsf{id}}}[\phi_{K_1,K_2}]\big| \nonumber \\
    &\leq \sum_{ \substack{ \mathbf H \in \mathcal H \\ (S_1,S_2) \in \mathfrak R_{\mathbf H}\times \mathfrak R_\mathbf H \\ \mathsf{id}\not\in\mathcal A(S_1,S_2) } } \tfrac{ s^{(\aleph-1)} (\epsilon^2 \lambda s)^{\ell\iota\aleph} \operatorname{Aut}(\mathsf T(\mathbf H)) |\mathbb E_{\Pb_{\mathsf{id}}}[\phi_{S_1,S_2} ] | }{ n^{(\aleph+\ell\iota\aleph)} }  \sum_{ \substack{ \mathbf I \in \mathcal H \\ (K_1,K_2) \in \mathfrak R_{\mathbf I} \times \mathfrak R_{\mathbf I} } } \tfrac{ s^{(\aleph-1)} (\epsilon^2 \lambda s)^{\ell\iota\aleph} \operatorname{Aut}(\mathsf T(\mathbf I)) |\mathbb E_{\Pb_{\mathsf{id}}}[\phi_{K_1,K_2}] | }{ n^{(\aleph+\ell\iota\aleph)} }  \nonumber \\
    &\overset{\eqref{eq-def-abs-E-Pb-f-tilde-A}, \eqref{eq-def-abs-E-Pb-f-A-c}}{=} \Big( \mathbb E_{ \Pb_{\mathsf{id}} }^{|\cdot|} \big[ f_{\mathcal A^c} \big]\Big) \Big( \mathbb E_{ \Pb_{\mathsf{id}} }^{|\cdot|} \big[ f_{\mathcal A^c} \big] + \mathbb E_{ \Pb_{\mathsf{id}} }^{|\cdot|} \big[ f_{\widetilde{\mathcal A}} \big] + \mathbb E_{\Pb_{\mathsf{id}}}[f_{\mathcal A_{\star}}] \Big) \nonumber \\
    &\overset{\eqref{eq-first-moment-semi-bad-part}, \eqref{eq-first-moment-bad-part}}{\leq} \Big( o(1) \cdot \mathbb E_{\Pb_{\mathsf{id}}}[f] \Big) \cdot \Big( (1+o(1)) \cdot \mathbb E_{\Pb_{\mathsf{id}}}[f] \Big) = o(1) \cdot \big( \Eb_\Pb[f] \big)^2 \,, \label{eq-Pb-var-bad-part-1-relax}
\end{align}
where in the first equality we again used $\mathbb E_{\Pb_{\mathsf{id}}}[\phi_{S_1,S_2}]\geq 0$ for $\mathsf{id}\in\mathcal A(S_1,S_2)$ and similarly for $K_1,K_2$. Finally, we consider \eqref{eq-Pb-var-bad-part-2}. For $I,J,I',J' \subset [\iota\aleph]$ we denote $G^{I,J;I',J'}_{\cup}=\widetilde S_1^I \cup \widetilde S_2^J \cup \widetilde K_1^{I'} \cup \widetilde K_2^{J'}$. In particular we denote $G_\cup = G^{[\iota\aleph],[\iota\aleph];[\iota\aleph],[\iota\aleph]}_\cup$. Then we have $|\mathcal L(G^{I,J;I',J'}_{\cup})| \leq 4\aleph$ for any $I,J,I',J' \subset [\iota\aleph]$. We establish a lemma which serves as a general estimate first.
\begin{lemma}\label{lem-Pb-var-bad-part-2-expectation-estimate-general}
    Suppose $S_1 ,S_2 , K_1 , K_2$ are multigraphs and $G_{\cup}=\widetilde S_1 \cup \widetilde S_2 \cup \widetilde K_1 \cup \widetilde K_2$. Then we have
    \begin{align}
        \mathbb E_{\Pb_{\mathsf{id}}}[ \beta_{S_1,S_2} \beta_{K_1,K_2} ] \leq \big( \tfrac{n}{\epsilon^2 \lambda s} \big)^{ \tfrac{\mathtt n}{2}-|E(G_{\cup})| } \,,
    \end{align}
    where $\mathtt n =\sum_{(i,j) \in G_\cup} ( E(S_1)_{i,j} + E(S_2)_{i,j} + E(K_1)_{i,j} + E(K_2)_{i,j} )$.
\end{lemma}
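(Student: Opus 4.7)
\textbf{Proof plan for Lemma~\ref{lem-Pb-var-bad-part-2-expectation-estimate-general}.} The plan is to condition on $\sigma_*\sim\nu_n$, factorize the expectation edge-by-edge using the resulting conditional independence of edge variables, apply Lemma~\ref{lem-joint-moment-A-B} to each edge, and finally take the $\sigma$-expectation. From \eqref{eq-def-f-S}, the product $\phi_{S_1,S_2}\phi_{K_1,K_2}=\phi_{S_1}(A)\phi_{K_1}(A)\phi_{S_2}(B)\phi_{K_2}(B)$ is a product over edges $(i,j)\in E(G_{\cup})=E(\widetilde S_1\cup\widetilde S_2\cup\widetilde K_1\cup\widetilde K_2)$. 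For each such edge set $r_{ij}=E(S_1)_{i,j}+E(K_1)_{i,j}$ and $t_{ij}=E(S_2)_{i,j}+E(K_2)_{i,j}$, so that $r_{ij}+t_{ij}\geq 1$ and $\sum_{(i,j)\in E(G_{\cup})}(r_{ij}+t_{ij})=\mathtt m$.

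Under $\mathbb P_{\mathsf{id},\sigma}$ the edge variables $\{(A_{i,j},B_{i,j})\}_{(i,j)\in\mathsf U_n}$ are mutually independent, so the conditional expectation factorizes:
\begin{align*}
\mathbb E_{\mathbb P_{\mathsf{id}}}\big[\phi_{S_1,S_2}\phi_{K_1,K_2}\big]=\mathbb E_{\sigma\sim\nu}\Bigg[\prod_{(i,j)\in E(G_{\cup})}\mathbb E_{\mathbb P_{\mathsf{id},\sigma}}\Big[\big(\tilde A_{i,j}\big)^{r_{ij}}\big(\tilde B_{i,j}\big)^{t_{ij}}\Big]\Bigg],
\end{align*}
where $\tilde A_{i,j}=(A_{i,j}-\tfrac{\lambda s}{n})/\sqrt{\tfrac{\lambda s}{n}(1-\tfrac{\lambda s}{n})}$ and analogously for $\tilde B_{i,j}$. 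Since $\pi=\mathsf{id}$, Lemma~\ref{lem-joint-moment-A-B} identifies each inner factor with $u_{r_{ij},t_{ij}}+v_{r_{ij},t_{ij}}\sigma_i\sigma_j$, where both coefficients are non-negative and bounded by $(n/(\epsilon^2\lambda s))^{(r_{ij}+t_{ij}-2)/2}$. Using $|\sigma_i\sigma_j|=1$ together with non-negativity yields
\begin{align*}
\Big|\mathbb E_{\sigma\sim\nu}\Big[\prod_{(i,j)\in E(G_{\cup})}\big(u_{r_{ij},t_{ij}}+v_{r_{ij},t_{ij}}\sigma_i\sigma_j\big)\Big]\Big|\leq\prod_{(i,j)\in E(G_{\cup})}\big(u_{r_{ij},t_{ij}}+v_{r_{ij},t_{ij}}\big).
\end{align*}
Invoking the per-edge bound and telescoping $\tfrac{1}{2}\sum_{(i,j)\in E(G_{\cup})}(r_{ij}+t_{ij}-2)=\tfrac{\mathtt m}{2}-|E(G_{\cup})|$ produces the claimed estimate.

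The only delicate point is that Lemma~\ref{lem-joint-moment-A-B} bounds $u_{r,t}$ and $v_{r,t}$ individually, not their sum, so naively the argument above loses a factor of $2^{|E(G_{\cup})|}$. I expect this to be handled by noting that $v_{r,t}$ is typically a factor of $\epsilon$ smaller than the stated bound (the off-diagonal contribution $\epsilon\sigma_i\sigma_j$ in the covariance inherited from the planted community structure is of lower order compared to the dominant moment coming from the marginal edge density), so that $u_{r,t}+v_{r,t}\leq(1+\epsilon)(n/(\epsilon^2\lambda s))^{(r+t-2)/2}$; alternatively the multiplicative factor $2^{|E(G_{\cup})|}$ is harmless since $|E(G_{\cup})|$ is polylogarithmic and can be absorbed wherever this lemma is later invoked. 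Aside from this bookkeeping, the proof is straightforward substitution into Lemma~\ref{lem-joint-moment-A-B} following the conditional factorization above.
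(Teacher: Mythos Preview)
Your basic approach—condition on $\sigma$, factorize over edges by conditional independence, and apply Lemma~\ref{lem-joint-moment-A-B} per edge—is exactly what the paper does. The gap is in your treatment of the combinatorial factor.

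Your fix (b) fails quantitatively. You say $2^{|E(G_{\cup})|}$ is harmless because $|E(G_{\cup})|$ is polylogarithmic, but $|E(G_{\cup})|\le 4(\aleph-1+\ell\iota\aleph)$ with $\ell=\Theta(\log n)$ (recall $(\epsilon^2\lambda s)^{\ell}>n^4$ from \eqref{eq-choice-aleph}), so $|E(G_{\cup})|=\Theta(\iota\aleph\log n)$ and hence $2^{|E(G_{\cup})|}=n^{\Theta(\iota\aleph)}$. Since $\aleph\to\infty$, this is super-polynomial and cannot be absorbed into the $n^{o(1)}$ slack in \eqref{eq-Pb-var-bad-part-2-relax-I}--\eqref{eq-Pb-var-bad-part-2-relax-II} or in Section~\ref{subsec:proof-lem-C.5}. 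Your fix (a) has the same problem: even granting $u_{r,t}+v_{r,t}\le(1+\epsilon)(n/\epsilon^2\lambda s)^{(r+t-2)/2}$, the factor $(1+\epsilon)^{|E(G_{\cup})|}$ is again $n^{\Theta(\iota\aleph)}$.

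The paper's proof is terse but crucially also cites Lemma~\ref{lem-expectation-over-chain}, which you omit. The point is that one should not take absolute values edge-by-edge. Expanding $\prod_{e}(u_e+v_e\sigma_i\sigma_j)$ and averaging over $\sigma\sim\nu$, a monomial $\prod_{e\in F}\sigma_i\sigma_j$ survives only when every vertex has even degree in $F$; the number of such even subgraphs of $G_{\cup}$ is $2^{\tau(G_{\cup})+c}$ with $c\le 4$ the number of components. Equivalently, decompose $G_{\cup}$ into at most $O(\tau(G_{\cup})+|\mathcal L(G_{\cup})|)$ self-avoiding paths via Lemma~\ref{lem-revised-decomposition-H-Subset-S} and apply Lemma~\ref{lem-expectation-over-chain} on each path as in the proof of Lemma~\ref{lem-upper-bound-exp}; each path contributes a factor $2$, not $2^{|E(\text{path})|}$. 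Either route gives an overhead $2^{O(\tau(G_{\cup})+\aleph)}$, which \emph{is} $n^{o(1)}$ in every place the lemma is applied (where $\tau(G_{\cup})$ is controlled by Lemma~\ref{lem-characterize-setminus-R-H-I} or by the $\mathcal B_3$ assumption). That is the missing idea.
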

\begin{proof}
Recall the definition of $\chi$ in \eqref{eq-def-chi}. By Lemmas~\ref{lem-joint-moment-A-B} and \ref{lem-expectation-over-chain} we have (for simplicity of notations we assume $E(S_1)_{i,j}=0$ for $(i,j) \not \in E(S_1)$ and similarly for $S_2,K_1,K_2$)
\begin{align}
    & \mathbb E_{\Pb_{\mathsf{id}}}[ \beta_{S_1,S_2} \beta_{K_1,K_2} ] = \mathbb E_{\Pb_{\mathsf{id}}} \Bigg\{ \prod_{ (i,j) \in E(G_{\cup}) } \Bigg( \frac{ A_{i,j}-\tfrac{\lambda s}{n} }{ \sqrt{\lambda s/n} } \Bigg)^{\chi_{S_1 \cup K_1}(i,j)} \Bigg( \frac{ B_{i,j}-\tfrac{\lambda s}{n} }{ \sqrt{\lambda s/n} } \Bigg)^{\chi_{S_2 \cup K_2}(i,j)} \Bigg\} \notag \\
    \leq\ & \big( \tfrac{n}{\epsilon^2 \lambda s} \big)^{ \frac{1}{2}( \sum_{(i,j) \in E(G_{\cup})} ( E(S_1)_{i,j} + E(S_2)_{i,j} + E(K_1)_{i,j} + E(K_2)_{i,j}-2 ) } = \big( \tfrac{n}{\epsilon^2 \lambda s} \big)^{ \tfrac{\mathtt n}{2}-|E(G_{\cup})| }\,, \nonumber
\end{align}
which yields our desired result.
\end{proof}

By Lemma~\ref{lem-Pb-var-bad-part-2-expectation-estimate-general} and \eqref{eq-phi-as-linear-combination-of-beta}, we have
\begin{align}
    &\mathbb E_{\Pb_{\mathsf{id}}}[ \phi_{S_1,S_2} \phi_{K_1,K_2} ] = \sum_{I,J,I',J' \subset [\iota\aleph] } \mathbb E_{\Pb_{\mathsf{id}}}\Big[ \beta_{S^I_1,S^J_2} \beta_{K^{I'}_1,K^{J'}_2} \Big] \cdot \big(- \epsilon^\mathtt m (\tfrac{\epsilon^2 \lambda s}{n})^{\ell/2} \big)^{4\iota\aleph - |I|-|J|-|I'|-|J'|} \nonumber \\ 
    \leq\ & \sum_{I,J,I',J' \subset [\iota\aleph] }\big( \tfrac{n}{\epsilon^2 \lambda s} \big)^{ 2(\aleph-1)+\tfrac{\ell}{2}(|I|+|J|+|I'|+|J'|)-|E(G^{I,J;I',J'}_{\cup})| }\big( \epsilon^\mathtt m (\tfrac{\epsilon^2 \lambda s}{n})^{\ell/2} \big)^{4\iota\aleph - |I|-|J|-|I'|-|J'|} \nonumber \\
    \leq \ &2^{4\iota\aleph} \big( \tfrac{n}{\epsilon^2 \lambda s} \big)^{ 2(\aleph-1+\ell\iota\aleph)-|E(G_{\cup})| }
    \label{eq-Pb-var-bad-part-2-expectation} \,,
\end{align}
where in the first inequality we use the fact that 
\begin{align*}
    \sum_{(i,j) \in E(G^{I,J;I',J'}_{\cup})} ( E(S_1)_{i,j} + E(S_2)_{i,j} + E(K_1)_{i,j} + E(K_2)_{i,j} ) = 4(\aleph-1)+\ell(|I|+|J|+|I'|+|J'|)
\end{align*}
for $I,J,I',J' \subset [\iota\aleph]$, and in the second inequality we use the fact that
\begin{align*}
    |E(G_\cup)| \leq |E(G^{I,J;I',J'}_\cup)| + \ell (4\iota\aleph - |I| - |J| - |I'| - |J'|)\,.
\end{align*}
Thus, we have
\begin{align}
    \eqref{eq-Pb-var-bad-part-2} &\leq \sum_{ \mathbf H,\mathbf I \in \mathcal H } \sum_{ \substack{ (S_1,S_2;K_1,K_2) \\ \mathsf{id} \not \in \mathcal A_{\bullet}(S_1,S_2;K_1,K_2) } } \frac{ s^{2\aleph} (\epsilon^2 \lambda s)^{2\aleph} \operatorname{Aut}(\mathsf T(\mathbf H)) \operatorname{Aut}(\mathsf T(\mathbf I)) }{ n^{2} } \cdot \Big( \frac{\epsilon^2 \lambda s}{n} \Big)^{|E(G_{\cup})|} \nonumber \\
    &= \sum_{ \mathbf H,\mathbf I \in \mathcal H } \sum_{ \substack{ (S_1,S_2;K_1,K_2) \\ \mathsf{id} \not \in \mathcal A_{\bullet}(S_1,S_2;K_1,K_2) } } \frac{ 1 }{ n^{2-o(1)} } \cdot \Big( \frac{\epsilon^2 \lambda s}{n} \Big)^{|E(G_{\cup})|} \,. \label{eq-Pb-var-bad-part-2-relax-I}
\end{align}
Recall the definition of $G_{\geq 2}$ in the paragraph below \eqref{eq-def-R^**-H,I}. The following claim is crucial for our proof.
\begin{lemma}{\label{lem-characterize-setminus-R-H-I}}
    We have 
    \begin{equation}{\label{eq-characterize-setminus-R-H-I}}
        4\iota\aleph-1-\tau(G_{\cup}) \leq \tfrac{4\ell\iota\aleph+8\aleph - |E(G_{\cup})|}{\ell/2}
    \end{equation}
    for $\mathsf{id}\not\in \mathcal A_{\bullet}(S_1,S_2;K_1,K_2)$ and $\mathcal L(G_{\cup}) \subset V(G_{\geq 2})$.
\end{lemma}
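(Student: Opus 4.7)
The proof plan is to mirror the structure of Lemmas~\ref{lem-character-non-principle} and~\ref{lem-character-Q-H-I}, generalized from a pair of subgraphs to the quadruple $(S_1,S_2;K_1,K_2)$. Setting $\Phi(G):=\tau(G)-\tfrac{|E(G)|}{\ell/2}$, the claim is equivalent to $\Phi(G_\cup)\geq -4\iota\aleph-1-\tfrac{16\aleph}{\ell}$, so the strategy is to lower-bound $\Phi(G_\cup)$ via a sequential growth argument starting from the intersection trees and paying for each additional non-overlapping self-avoiding path. I would first split on the hypothesis $\mathsf{id}\notin\mathcal{A}_\star$ into three branches: (a) $S_1,S_2,K_1,K_2\in\mathfrak R_{\mathbf H}^*\times\mathfrak R_{\mathbf I}^*$ with the tuple in $\mathfrak R^*_{\mathbf H,\mathbf I}$ but $G_{\geq 2}$ is not a disjoint union of two trees containing $\mathsf T_\cup$; (b) the tuple lies in $\mathfrak R^{**}_{\mathbf H,\mathbf I}$ but $V(S_1\cup K_1)\cap V(S_2\cup K_2)\neq\emptyset$; (c) at least one of $\widetilde S_j,\widetilde K_j$ fails to be simple.

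For branches (a) and (b), I would run the same $H_i$-augmentation scheme as in the first case of the proof of Lemma~\ref{lem-character-non-principle}, but applied in parallel to the two intersections $H=\widetilde S_1\cap\widetilde S_2$ and $H'=\widetilde K_1\cap\widetilde K_2$ (and, when branch (b) forces cross-overlap, to $\widetilde S_j\cap\widetilde K_{j'}$ as well). In each instance, initialize with $H_0=H\cap\mathsf T(S_\cdot)$ (a subforest, so $\tau(H_0)\leq -1$ per tree component, with $\tau(H_0)\leq -2$ as soon as $\mathcal L(\mathsf T(S_\cdot))\subset V(H_0)$ is not tight) and successively append $H_{i+1}=H_i\cup(H\cap\mathsf L_{i+1}(S_\cdot))$; the bound $\tau(H_{i+1})-\tau(H_i)\leq 1$ with equality forcing $\mathsf L_{i+1}\subset H$ (hence $|E(H_{i+1})|-|E(H_i)|=\ell$) gives $\tau(H)-\tfrac{|E(H)|}{\ell/2}\leq \tau(H_0)-\#\{i:\mathsf L_i\subset H\}$. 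Adding the analogous inequality for $H'$ and combining via
\begin{align*}
\tau(G_\cup)-\tfrac{|E(G_\cup)|}{\ell/2}=\sum_{j=1}^2\Big(\tau(\widetilde S_j)-\tfrac{|E(\widetilde S_j)|}{\ell/2}\Big)+\sum_{j=1}^2\Big(\tau(\widetilde K_j)-\tfrac{|E(\widetilde K_j)|}{\ell/2}\Big)-\Big(\tau(H)-\tfrac{|E(H)|}{\ell/2}\Big)-\Big(\tau(H')-\tfrac{|E(H')|}{\ell/2}\Big)+(\text{cross terms})
\end{align*}
yields the desired slack, provided the failure of $\mathcal{A}_\star$ forces at least one of the four terms to drop by an extra unit. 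That last implication, analogous to \eqref{eq-lem-3.9-exists-j}--\eqref{eq-lem-3.9-case-I-induction-0}, is exactly where the structural hypothesis $\mathcal L(G_\cup)\subset V(G_{\geq 2})$ enters, since it guarantees that $\mathcal L(\mathsf T(S_j))\subset V(H)$ (and symmetrically) so the starting forest $H_0$ can only acquire extra negative excess.

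For branch (c), I would reuse the second case of the proof of Lemma~\ref{lem-character-non-principle} verbatim for each non-simple $\widetilde S_j$: there $\tau(\widetilde S_j)\geq\iota\aleph$, and Lemma~\ref{lem-decomposition-H-plus-path} decomposes $\widetilde S_j\setminus(H\cup\mathsf T(S_j))$ into self-avoiding paths of length at most $\ell$, giving the same ratio inequality $\Phi(\widetilde S_j)\geq -\tfrac{|E(\widetilde S_j)|+\aleph}{\ell}$. Aggregating over the four subgraphs, plugging in $|E(G_\cup)|\leq 4(\aleph-1+\ell\iota\aleph)$ and the leaf bound $|\mathcal L(G_\cup)|\leq 4\aleph$, and tracking the at-most-one $-1$ forced by $\mathsf{id}\notin\mathcal A_\star$ yields the $-4\iota\aleph-1-\tfrac{16\aleph}{\ell}$ bound.

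The main obstacle I anticipate is the bookkeeping in branch (b): when $\mathsf T(S_1)=\mathsf T(K_1)$ and $\mathsf T(S_2)=\mathsf T(K_2)$ the natural ``diagonal'' intersections $H,H'$ degenerate, and the argument must instead exploit the off-diagonal overlap $V(S_1\cup K_1)\cap V(S_2\cup K_2)$ to produce the missing cycle. I expect one needs a separate sub-case treatment here where the decomposition is applied to $\widetilde S_1\cup\widetilde K_1$ as a single object of excess $2\iota\aleph-2$ plus whatever identification comes from the off-diagonal vertex overlap; this parallels the factor-of-two inflation visible in the statement ($4\iota\aleph-1$ and $8\aleph$ in place of $2\iota\aleph$ and $2\aleph$), and the final $-1$ of slack should come precisely from the forced extra cycle created when this overlap is nonempty.
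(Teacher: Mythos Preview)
Your plan has two genuine gaps. First, the case split (a)/(b)/(c) does not exhaust the tuples with $\mathsf{id}\notin\mathcal A_\star$. By definition $\mathcal A_\star=\emptyset$ whenever $(S_1,S_2;K_1,K_2)\notin\mathfrak R^*_{\mathbf H,\mathbf I}\cup\mathfrak R^{**}_{\mathbf H,\mathbf I}$, and this can happen with all four of $S_1,S_2,K_1,K_2$ in $\mathfrak R^*_{\mathbf H}\times\mathfrak R^*_{\mathbf I}$ (so branch (c) does not fire) but with, say, $V(S_1)\cap V(K_1)\neq\emptyset$ and $\mathsf T(S_1)\neq\mathsf T(K_1)$. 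Second, and more seriously, the displayed inclusion--exclusion identity for $\Phi(G_\cup)$ is not valid. For four graphs one needs all six pairwise intersections plus higher-order terms, not just $H=\widetilde S_1\cap\widetilde S_2$ and $H'=\widetilde K_1\cap\widetilde K_2$; you absorb everything else into ``(cross terms)'' with no mechanism to bound them. Even under the $\mathfrak R^*$ constraint the cross-intersections $\widetilde S_1\cap\widetilde K_2$ and $\widetilde S_2\cap\widetilde K_1$ are exactly what determines whether $G_{\geq 2}$ is a two-tree forest, so running the $H_i$-scheme only on $H,H'$ gives no leverage on the failure of $\mathcal A_\star$ --- and you already note that branch (b) is worse still.

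The paper avoids this bookkeeping entirely by working not with pairwise intersections but with the single auxiliary graph $G_{\mathsf T}:=G_{\geq 2}\cup\mathsf T(S_1)\cup\mathsf T(S_2)\cup\mathsf T(K_1)\cup\mathsf T(K_2)$. Letting $\Gamma$ be the number of the $4\iota\aleph$ paths $\mathsf L_i(\cdot)$ contained in $G_{\mathsf T}$, one gets in one stroke $\tau(G_\cup)-\tau(G_{\mathsf T})\geq 4\iota\aleph-\Gamma$ and $|E(G_{\mathsf T})|\geq\ell\Gamma$, hence $\tau(G_\cup)-\tau(G_{\mathsf T})+|E(G_{\mathsf T})|/\ell\geq 4\iota\aleph$; together with $|E(G_\cup)|\leq 4\ell\iota\aleph+8\aleph-|E(G_{\mathsf T})|$ this already gives \eqref{eq-characterize-setminus-R-H-I} whenever $\tau(G_{\mathsf T})\geq -1$. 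The only nontrivial case is $\tau(G_{\mathsf T})\leq -2$, and there either some component of $G_{\mathsf T}$ sits inside a single $S_j$ or $K_j$ (forcing extra excess), or $G_{\mathsf T}$ is exactly two vertex-disjoint tree components --- and only then does the paper fall back on Lemma~\ref{lem-character-non-principle}, applied separately to $G_S=\widetilde S_1\cup\widetilde S_2$ and $G_K=\widetilde K_1\cup\widetilde K_2$. The global object $G_{\geq 2}$ is the missing idea in your proposal.
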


\begin{remark}{\label{cor-of-lem-4.4}}
    Lemma~\ref{lem-characterize-setminus-R-H-I} remains unchanged if the self-avoiding paths $\{ \mathsf L_j(S_i), \mathsf L_j(K_i): 1 \leq j \leq \iota\aleph, i=1,2 \}$ are replaced by non-backtracking paths. The proof is similar and we omit further details.
\end{remark}

The proof of Lemma~\ref{lem-characterize-setminus-R-H-I} is incorporated in Section~\ref{subsec:proof-lem-4.6}. Based on Lemma~\ref{lem-characterize-setminus-R-H-I}, we now claim our bound on \eqref{eq-Pb-var-bad-part-2-relax-I}. Note that we can write \eqref{eq-Pb-var-bad-part-2-relax-I} as
\begin{align}\label{eq-Pb-var-bad-part-2-relax-I-rewrite}
    & n^{-2+o(1)} \sum_{ \substack{ |E(G)| \leq 4 \ell\iota\aleph, |\mathcal L(G)|\leq 4\aleph \\ G \text{ satisfying } \eqref{eq-characterize-setminus-R-H-I} \\ \operatorname{ENUM}''(G) > 0 } } \big( \tfrac{\epsilon^2 \lambda s}{n} \big)^{|E(G)|} \cdot \operatorname{ENUM}''(G) \,,
\end{align}
where $G$ is summed over $G\subset \mathsf K_n$ with $|E(G)|\leq 4\ell \iota\aleph, |\mathcal{L}(G)|\leq 4\aleph$ and the graphs satisfying inequality \eqref{eq-characterize-setminus-R-H-I} with $G_\cup$ replaced by $G$, and
\begin{align*}
    \operatorname{ENUM}''(G) = \#\big\{ (S_1,S_2;T_1,T_2) : S_1 \cup S_2 \cup T_1 \cup T_2=G, \mathcal L(G) \subset V(G_{\geq 2}(S_1,S_2,T_1,T_2)) \big\}\,.
\end{align*}
By Lemma~\ref{lem-enu-decorated-trees-in-given-graph}, we have
\begin{align*}
    \operatorname{ENUM}''(G) \leq \binom{4\ell\aleph +4\aleph}{\aleph}^4 \Big( (\ell\aleph)^{2}2^{5\tau(G)+5} \Big)^{4\iota\aleph} \,.
\end{align*}
Therefore, we have (below $\mathbf{G}$ is summed over $|E(\mathbf G)|=x,\tau(\mathbf{G})=y$ and then also over $x, y$)
\begin{align}
    \eqref{eq-Pb-var-bad-part-2-relax-I-rewrite} &\leq n^{-2+o(1)} \sum_{ \substack{ |E(G)| \leq 4 \ell\iota\aleph \\ G \text{ satisfying } \eqref{eq-characterize-setminus-R-H-I} \\ \operatorname{ENUM}''(G) > 0 } } \big( \tfrac{\epsilon^2 \lambda s}{n} \big)^{|E(G)|} \binom{4\ell\aleph +4\aleph}{\aleph}^4 \Big( (\ell\aleph)^{2}2^{5\tau(G)+5} \Big)^{4\iota\aleph} \nonumber  \\
    &\leq n^{-2+o(1)} \sum_{ \substack{ |E(\mathbf G)| \leq 4\ell\iota\aleph \\ \mathbf G \text{ satisfying } \eqref{eq-characterize-setminus-R-H-I} \\ \operatorname{ENUM}''(G) > 0 } } \big( \tfrac{\epsilon^2 \lambda s}{n} \big)^{|E(\mathbf G)|}  (\ell\aleph)^{\tau(\mathbf G)} \#\{ G \subset \mathsf K_n :G \cong \mathbf G \} \nonumber  \\
    &\leq n^{-2+o(1)} \sum_{ \substack{ 0 \leq x \leq 4\ell\iota\aleph \\ \ell(4\iota\aleph-1-y)\leq 2(4\ell\iota\aleph + 8\aleph - x) } } \big( \tfrac{\epsilon^2 \lambda s}{n} \big)^{x}  (\ell\aleph)^{y} n^{x-y} (\ell\aleph)^{2y} \nonumber \\
    &\leq n^{-2+o(1)} \sum_{ \substack{ 0 \leq x \leq 4\ell\iota\aleph \\ 4\iota\aleph-1-y\leq 2(4\ell\iota\aleph + 8\aleph - x)/\ell } } (\epsilon^2 \lambda s)^{x}  \big( \tfrac{(\ell\aleph)^3}{n}\big)^{y}  \nonumber \\
    &= n^{-2+o(1)}(\epsilon^2 \lambda s)^{4\ell\iota\aleph} \sum_{ \substack{ 0 \leq x \leq 4\ell\iota\aleph } }  \big( \tfrac{(\ell\aleph)^3}{n}\big)^{4\iota\aleph-1} \big( \tfrac{(\epsilon^2 \lambda s)^{\ell}}{n^2}\big)^{(x - 4\ell\iota\aleph )/\ell} \nonumber \\
    &= \frac{ (\epsilon^2 \lambda s)^{4\ell\iota\aleph} }{n^{2-o(1)}} \cdot n^{-(4\iota\aleph-1)+o(1)} \overset{\text{Proposition~\ref{prop-first-moment}}}{=} o(1) \cdot \mathbb E_{\Pb}[f]^2 \,, \label{eq-Pb-var-bad-part-2-relax-II}
\end{align}
where the second inequality holds by $\binom{4\ell\aleph +4\aleph}{\aleph}^4 (\ell\aleph)^{8\iota\aleph}2^{20\iota\aleph}= n^{o(1)}$ and $2^{20\iota\aleph} \leq \ell\aleph $, and the third inequality holds by Lemmas~\ref{lem-enu-union-of-decorated-trees} and \ref{lem-characterize-setminus-R-H-I}. Combining \eqref{eq-Pb-var-bad-part-2-relax-I} and \eqref{eq-Pb-var-bad-part-2-relax-II}, we see that
\begin{equation}
    \eqref{eq-Pb-var-bad-part-2} = o(1) \cdot \mathbb E_{\Pb}[f]^2 \,.
\end{equation}
Combined with \eqref{eq-Pb-var-bad-part-1-relax}, this completes the proof of Item (2) in Proposition~\ref{prop-second-moment}.

\section{Approximating the statistic}{\label{sec:approx-via-color-coding}}

In this section, we present a polynomial-time algorithm to approximately compute the statistic $f=f(A,B)$ defined in \eqref{eq-def-f}. The key starting point is to replace self-avoiding paths with non-backtracking paths, simplifying the computation while retaining essential structural properties. Our approach further integrates the ideas of color-coding and the matrix power method to achieve this efficiency.
\begin{DEF}\label{def-decorated-trees-NBP}
    For each $\mathbf H \in \mathcal H$, we say a multigraph $S \vdash \mathbf H$ \underline{(note its difference to $S \Vdash \mathbf H$ defined} \underline{in Definition~\ref{def-decorated-trees})}, if (counting edge multiplicities) $S$ can be decomposed into a tree $\mathsf T(S)$ and $\iota\aleph$ \underline{non-backtracking paths} $\mathsf M_1(S),\ldots,\mathsf M_{\iota\aleph}(S)$ such that the following conditions hold:
    \begin{enumerate}
        \item[(1)] $V(\mathsf T (S)) \subset [n]$;
        \item[(2)] $\operatorname{EndP}(\mathsf M_i(S))=\{ u_i,v_i \}$ where $u_i,v_i \in V(\mathsf T(S))$;
        \item[(3)] Denoting $\mathsf P(S)=\{ (u_1,v_1),\ldots,(u_{\iota\aleph},v_{\iota\aleph}) \}$, there exists a graph isomorphism $\varphi: S \to \mathbf H$ such that $\varphi$ maps $\mathsf T(S)$ to $\mathsf T(\mathbf H)$,maps $\mathsf P(S)$ to $\mathsf P(\mathbf H)$ and maps $\mathsf M_k (S)$ to $\mathsf M_k (\mathbf H)$ for $1 \leq k \leq \iota\aleph$.
    \end{enumerate}
    Given $S_1,S_2 \vdash \mathbf H \in \mathcal H$, we similarly write $\phi_{S_1,S_2}(A,B)=\phi_{S_1}(A) \phi_{S_2}(B)$, where for $S \vdash \mathbf H$ (recall \eqref{eq-def-beta-S})
    \begin{align*}
        \phi_S(X) = \beta_{\mathsf T(S)}(X) \cdot  \prod_{1 \leq i \leq \iota\aleph}\Big( \beta_{\mathsf M_i(S)}(X) - \epsilon^{\mathtt m} \big(\tfrac{\epsilon^2 \lambda s}{n}\big)^{\ell/2} \Big)
    \end{align*}
    and 
    \begin{equation*}
        \beta_S(X) = \prod_{ (i,j) \in E(S) } \frac{ X_{i,j}-\tfrac{\lambda s}{n} }{ \sqrt{ \tfrac{\lambda s}{n} (1-\tfrac{\lambda s}{n}) } }  \,.
    \end{equation*}
    Finally, we define
    \begin{equation}{\label{eq-def-tilde-f}}
        \widetilde{f}=\widetilde{f}(A,B)= \sum_{ \mathbf H \in \mathcal H } \frac{ s^{\aleph-1} \operatorname{Aut}(\mathsf{T}(\mathbf H)) (\epsilon^2 \lambda s)^{\ell\iota\aleph} }{ n^{\aleph+ \ell\iota\aleph} } \sum_{S_1,S_2 \vdash \mathbf H} \phi_{S_1,S_2}(A,B) \,.
    \end{equation}
\end{DEF}

\begin{lemma}{\label{lem-replace-by-non-backtracking-path}}
    We have that
    \begin{equation}{\label{eq-diff-tilde-f-minus-f}}
        \frac{ \widetilde{f}(A,B) - f(A,B) }{ \mathbb E_{\Pb}[f(A,B)] } \longrightarrow 0
    \end{equation}
    in probability under both $\Pb$ and $\Qb$.
\end{lemma}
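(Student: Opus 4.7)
The plan is to reduce \eqref{eq-diff-tilde-f-minus-f} to a second-moment estimate: it suffices to show
\[
\mathbb E_{\star}\bigl[(\widetilde f - f)^2\bigr] = o(1) \cdot \mathbb E_{\Pb}[f]^2 \quad \text{for } \star \in \{\Pb, \Qb\},
\]
after which Markov's inequality together with the lower bound on $\mathbb E_{\Pb}[f]$ from Proposition~\ref{prop-first-moment} closes the argument. The starting observation is that every self-avoiding path is non-backtracking, so $\{S: S \Vdash_A \mathbf H\} \subset \{S: S \vdash_A \mathbf H\}$ (and similarly for $B$), hence
\[
\widetilde f - f \;=\; \sum_{\mathbf H \in \mathcal H} \frac{ s^{\aleph-1}\operatorname{Aut}(\mathsf T(\mathbf H))(\epsilon^2\lambda s)^{\ell\iota\aleph} }{ n^{\aleph+\ell\iota\aleph} } \sum_{\substack{S_1,S_2 \vdash \mathbf H \\ (S_1,S_2)\not\in \mathfrak R_{\mathbf H}\times\mathfrak R_{\mathbf H}}} \phi_{S_1,S_2}(A,B),
\]
where each $S_j$ appearing in the inner sum carries at least one non-backtracking path $\mathsf M_i(S_j)$ that is not self-avoiding. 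Such a path visits a vertex at least twice, so its underlying simple graph contains a cycle, contributing at least one extra unit to the excess $\tau(\widetilde{\mathsf M_i(S_j)})$ compared to a self-avoiding path of the same length.

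The proof would then follow by replaying the variance computations of Sections~\ref{subsec:bound-Q-var} and \ref{subsec:bound-var-Pb} with self-avoiding paths replaced by non-backtracking paths. As flagged in Remark~\ref{cor-of-lem-4.4}, the structural inequalities of Lemmas~\ref{lem-character-non-principle}, \ref{lem-character-Q-H-I} and \ref{lem-characterize-setminus-R-H-I} are insensitive to this replacement; combined with the enumeration bounds of Lemmas~\ref{lem-enu-decorated-trees-in-given-graph} and \ref{lem-enu-union-of-decorated-trees}, this reduces the relevant expectations to geometric sums over $\tau(G_\cup)$ and $|E(G_\cup)|$, exactly as in \eqref{eq-Qb-var-bad-bound-final-i} and \eqref{eq-Pb-var-bad-part-2-relax-II}. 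The restriction to configurations in which some $\mathsf M_i$ is non-self-avoiding enforces a strict excess surplus in $G_\cup$ over the baseline allowed by those inequalities, and by \eqref{eq-choice-aleph} each such unit of surplus excess gains a factor of at least $n^{-1+o(1)}$ relative to the already small bounds on $\mathbb E_{\Qb}[f^2]$ and $\operatorname{Var}_{\Pb}[f]$ from Proposition~\ref{prop-second-moment}. This yields the desired $o(1)\cdot \mathbb E_{\Pb}[f]^2$ bound on $\mathbb E_{\star}[(\widetilde f - f)^2]$.

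The main technical obstacle is a delicate enumeration of non-backtracking, non-self-avoiding walks of length $\ell$ with prescribed endpoints in $V(\mathsf T(S)) \cup \mathcal J_A$: a walk revisiting $k$ vertices in $k$ disjoint short loops has underlying graph with excess $k$ but admits roughly $n^{\ell-1-k}$ realizations times a combinatorial cycle-placement factor. One must check that the excess-driven saving $(\epsilon^2\lambda s/n)^{k}$ strictly dominates this cycle-placement factor for every relevant $k$, which is the same excess-vs-enumeration trade-off already exploited in \eqref{eq-Pb-f-setminus-*-relax-5} and \eqref{eq-Pb-var-bad-part-2-relax-II}. Because the decorated-tree geometry restricts where the non-backtracking paths are attached (their interior must lie in $\mathcal J_A$ or $\mathcal J_B$) and because $\aleph = o(\log\log\log n)$ while $(\epsilon^2\lambda s)^\ell > n^4$, this trade-off is comfortably favorable, and the required $n^{-1+o(1)}$ gain per non-self-avoiding path follows. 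Collecting the two second-moment bounds and invoking Chebyshev under $\Pb$ and $\Qb$ separately then yields \eqref{eq-diff-tilde-f-minus-f}.
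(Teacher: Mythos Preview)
Your second-moment plan has a genuine gap in the enumeration step. You correctly note (via Remark~\ref{cor-of-lem-4.4}) that the excess inequalities of Lemmas~\ref{lem-character-non-principle}, \ref{lem-character-Q-H-I} and \ref{lem-characterize-setminus-R-H-I} survive the passage from self-avoiding to non-backtracking paths, but the enumeration input does not: Lemma~\ref{lem-enu-decorated-trees-in-given-graph} counts $\{S:S\Vdash\mathbf H,\ \widetilde S\subset J\}$, whereas you need $\{S:S\vdash\mathbf H,\ \widetilde S\subset J\}$, and the latter can be dramatically larger. A non-backtracking walk can wind many times around a short cycle, so a fixed trace $\widetilde S\subset J$ with a single small tangle (say two triangles sharing a vertex, excess $1$) supports $e^{\Theta(\ell)}=n^{\Theta(1)}$ distinct walks. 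Moreover, such high edge-multiplicities make the \emph{second} moment large, not small: for an edge traversed $k$ times, $\mathbb E\bigl[(A_{ij}-\tfrac{\lambda s}{n})^{2k}\bigr]/(\tfrac{\lambda s}{n})^{k}\asymp(\tfrac{n}{\lambda s})^{k-1}$, so $\mathbb E_\star[(\widetilde f-f)^2]$ picks up polynomial-in-$n$ factors from these configurations that a single unit of surplus excess cannot offset. Thus your claim that ``each unit of surplus excess gains $n^{-1+o(1)}$'' fails precisely here, and no uniform second-moment bound can close the argument. (A minor side point: only the first and last interior vertices of each $\mathsf M_i$ are required to lie in $\mathcal J_A$, not the whole interior, so that constraint gives no help.)

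The paper's proof resolves this by a three-way split of the bad set according to $\tau(G_S)$ and the number of \emph{tangles} (subgraphs with two cycles and diameter $\le\sqrt{\log n}$), where the tangle count is weighted by edge multiplicity. For $\tau(G_S)>20\aleph^2$ (case $\mathcal B_1$) a crude first-moment bound on $\mathbb E|\cdot|$ suffices. For bounded excess but $\ge 40\aleph^3$ tangles (case $\mathcal B_2$) the paper conditions on the high-probability event $\Xi$ that the parent graph $G$ is tangle-free: on $\Xi$, some pivotal edge set $F$ has $G_{ij}=0$ for all $(i,j)\in F$, forcing the high-multiplicity factors to equal $(-\tfrac{\lambda s}{n})^{E(S)_{ij}}$ deterministically and producing a saving of order $n^{-\Theta(\aleph^3)}$. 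This is a genuinely probabilistic argument that cannot be replaced by a moment bound. Only for bounded excess and few tangles (case $\mathcal B_3$) does a second-moment argument work, and even there it requires a new enumeration lemma (Lemma~\ref{lem-enu-decorated-trees-with-few-tangles}) bounding the number of low-tangle non-backtracking walks with given trace by $e^{\sqrt{\log n}(\log\log n)^4\tau(J)}=n^{o(1)}$ in the relevant range. Your proposal collapses all three cases into one and omits both the tangle-free conditioning and the refined walk-count lemma; these are the missing ideas.
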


The proof of Lemma~\ref{lem-replace-by-non-backtracking-path} is provided in Section~\ref{subsec:proof-lem-5.2}. We now show that $\widetilde{f}(A,B)$ can be approximated efficiently. To this end, we proceed as follows: Given a standardized adjacency matrix $M$ of a graph on $[n]$, we generate a random coloring $\mu:[n] \to [\aleph]$ that assigns a color to each vertex of $M$ from the color set $[\aleph]$ independently and uniformly at random. Given any $V \subset [n]$, let $\chi_{\mu}(V)$ be the indicator for the event that $\mu(V)$ is colorful, i.e., $\mu(x)\neq\mu(y)$ for any distinct $x,y \in V$. In particular, if $|V|=\aleph$, then $\chi_{\mu}(V)=1$ with probability
\begin{equation}{\label{eq-def-r}}
    r= \frac{ \aleph! }{ \aleph^{\aleph} } \,.
\end{equation}
For any graph $\mathbf H$ with $\aleph$ vertices, we define
\begin{equation}{\label{eq-def-X-mathbf-H}}
    X_{\mathbf H}(M,\mu) = \sum_{S \vdash \mathbf H} \chi_{\mu}(V(\mathsf T(S))) \prod_{(i,j)\in E(S)} M_{i,j} \,. 
\end{equation}
This construction ensures that $\frac{1}{r}X_{\mathbf H}(M,\mu)$ is an unbiased estimator of 
\begin{align*}
    \sum_{S \vdash \mathbf H} \prod_{(i,j)\in E(S)} M_{i,j} \,.
\end{align*}
For $t \geq 1$, we generate $2t$ random colorings $\{ \mu_i : 1 \leq i \leq t \}$ and $\{ \nu_j:1 \leq j \leq t \}$ that are independent copies of $\mu$. Then, we define (denote $\overline{A}_{i,j}=A_{i,j} - \tfrac{\lambda s}{n}$ and $\overline{B}_{i,j}=B_{i,j} - \tfrac{\lambda s}{n}$ respectively)
\begin{equation}{\label{eq-def-overline-f}}
    \overline{f}(A,B) = \frac{1}{r^2} \sum_{ \mathbf H \in \mathcal H } \frac{ s^{\aleph-1} \operatorname{Aut}(\mathsf{T}(\mathbf H)) (\epsilon^2 \lambda s)^{\ell\iota\aleph} }{ n^{\aleph+ \ell\iota\aleph} } \Big( \frac{1}{t} \sum_{i=1}^{t} X_{\mathbf H}(\overline{A},\mu_i) \Big) \Big( \frac{1}{t} \sum_{j=1}^{t} X_{\mathbf H}(\overline{B},\nu_i) \Big) \,.
\end{equation}
Similarly as \cite[Proposition~3]{MWXY21+}, we can show that when $t=\lceil \frac{1}{r} \rceil$ 
\begin{equation}{\label{eq-diff-overline-f-minus-tilde-f}}
    \frac{ \overline{f}(A,B) - \widetilde f(A,B) }{ \mathbb E_{\Pb}[f(A,B)] } \overset{L^2}{\longrightarrow} 0
\end{equation}
under both $\Pb$ and $\Qb$. Combining \eqref{eq-diff-tilde-f-minus-f} and \eqref{eq-diff-overline-f-minus-tilde-f}, we obtain the following: 
\begin{align*}
    \Qb( \overline{f}(A,B) \ge \tau ) + \Pb( \overline{f}(A,B) \le \tau )= o(1) \,,
\end{align*}
where the threshold $\tau$ is chosen as $\tau = C\mathbb{E}_{\Pb} [f_{\mathcal{T}}(A,B)]$ for any fixed constant $0<C<1$. It remains to show that $\overline{f}(A,B)$ can be computed efficiently, as in the next lemma. 

\begin{lemma}{\label{lem-color-coding}}
     There exists an algorithm with running time $O(n^C)$ (see Algorithm~\ref{alg:dynamic-programming}) to compute $X_{\mathbf H}(M,\mu)$ given any $\mathbf H \in \mathcal H$, a weighted graph $M$ on $[n]$, and a coloring $\mu:[n] \to [\aleph]$.    
\end{lemma}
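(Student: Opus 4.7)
The idea is a two-stage algorithm: first precompute all non-backtracking path weights, then run a color-coding dynamic program on the tree $\mathsf T(\mathbf H)$ with each non-backtracking path collapsed into a single weighted ``virtual edge.''

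\textbf{Stage 1 (path weights).} For each ordered pair $(x,y)\in ([n]\setminus\mathcal J_A)^2$, I would precompute
\begin{align*}
W_\ell(x,y):=\sum_{\substack{\mathsf M:\text{ non-backtracking, length }\ell\\ \operatorname{EndP}(\mathsf M)=\{x,y\}\\ \text{neighbors of }x,y\text{ in }\mathsf M\text{ lie in }\mathcal J_A}}\prod_{(i,j)\in E(\mathsf M)}\frac{M_{i,j}}{\sqrt{(\lambda s/n)(1-\lambda s/n)}}\,,
\end{align*}
by iterating a restricted non-backtracking operator on directed edges. Since $\ell=O(\log n)$ and each iteration costs $n^{O(1)}$ on an $O(n^2)$-dimensional space, Stage~1 runs in $n^{O(1)}$ time.

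\textbf{Stage 2 (tree DP).} Next, root $\mathsf T(\mathbf H)$ at an arbitrary vertex. For each vertex $v$, maintain a DP table $D_v(x,C,z)$ indexed by the proposed image $x\in[n]\setminus\mathcal J_A$ of $v$, the color set $C\subseteq[\aleph]$ used by the embedding of the descendant subtree $\mathsf{Des}_{\mathsf T(\mathbf H)}(v)$, and a slot $z\in[n]\cup\{\varnothing\}$ holding the image of the (at most one) pending paired vertex inside this subtree whose partner lies outside. The value $D_v(x,C,z)$ records, summed over all colorful injective embeddings $\varphi$ consistent with $(x,C,z)$, the product of incident tree-edge factors together with $W_\ell(\varphi(u_k),\varphi(v_k))$ for every pair $(u_k,v_k)\in\mathsf P(\mathbf H)$ already fully contained in the subtree. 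Transitions at $v$ merge the children's tables via subset convolution over color partitions, multiply in incident tree-edge weights, propagate or initialize the pending slot as dictated by $\mathsf P(\mathbf H)$, and cash in $W_\ell(z_1,z_2)$ whenever a pair has $v$ as its LCA (using the pending slots from the two relevant children). Finally, summing $D_{\mathfrak R}(x,[\aleph],\varnothing)$ over $x$ at the root, with a correction by $|\operatorname{Aut}(\mathbf H)|$ if labeled embeddings were counted, yields $X_{\mathbf H}(M,\mu)$.

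\textbf{Main obstacle.} This plan is only viable if the DP state has polynomial size; since each pending pair endpoint image costs a factor $n$, a naive bound allowing $\iota\aleph=\omega(1)$ simultaneous pending endpoints would give a super-polynomial state space $n^{\omega(1)}$. The hard part will be showing that at most one pair of $\mathsf P(\mathbf H)$ can be pending at any vertex $v$. The argument invokes Item~(4) of Theorem~\ref{thm-desired-vertex-sets}: any pending pair $(u_k,v_k)$ at $v$ has its $\mathsf T(\mathbf H)$-path crossing the cut at $v$, so from the path-length bound $2(\log\log(\iota^{-1}))^{10}$ both $u_k$ and $v_k$ lie within distance $2(\log\log(\iota^{-1}))^{10}$ of $v$. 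If two distinct pending pairs existed, their four endpoints would all lie within pairwise distance $4(\log\log(\iota^{-1}))^{10}$, contradicting the $10(\log\log(\iota^{-1}))^{10}$ minimum separation between non-paired vertices of $\mathsf{Vert}(\mathsf P(\mathbf H))$. A parallel argument rules out the coexistence of a pending pair with a distinct pair whose LCA is $v$. Hence each DP state has size $n^{O(1)}\cdot 2^\aleph=n^{O(1)}$ (using $2^\aleph=n^{o(1)}$ from~\eqref{eq-choice-aleph}), Stage~2 runs in $n^{O(1)}$ time, and the total running time is the desired $O(n^C)$.
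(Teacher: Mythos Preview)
Your approach is correct and is a genuinely different route from the paper's.

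\textbf{Comparison.} Both proofs share Stage~1 (precomputing non-backtracking path weights). The difference is how the pairs in $\mathsf P(\mathbf H)$ are handled in the tree dynamic program. The paper (Algorithm~\ref{alg:dynamic-programming}) treats a pair vertex at the current root by brute-force enumerating all embeddings of a local neighborhood $V_{\mathbf o}$ of bounded radius that contains both endpoints of the pair; since $|V_{\mathbf o}|\le e^{4(\log\log(\iota^{-1}))^3}$ (Property~(4) of Lemma~\ref{lem-useful-property-trees-and-sets}), this costs $n^{|V_{\mathbf o}|}$ and yields the final running time $O(n^{4(\log\log(\iota^{-1}))^3+1+o(1)})$, so the exponent $C$ genuinely depends on $\iota$ (hence on $\epsilon$). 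Your approach instead carries a single ``pending image'' slot $z\in[n]\cup\{\varnothing\}$ in the DP state and uses the separation guarantees of Item~(4) of Theorem~\ref{thm-desired-vertex-sets}---pair endpoints at distance $\le 2(\log\log(\iota^{-1}))^{10}$ and distinct non-paired vertices at distance $\ge 10(\log\log(\iota^{-1}))^{10}$---to show that at most one pair can be pending at any DP state (the triangle-inequality argument you give also rules out two pairs sharing an LCA, or one pending pair coexisting with a pair whose LCA is the current vertex). This keeps the state space at $n^{O(1)}\cdot 2^\aleph=n^{O(1)}$ with exponent \emph{independent} of $\iota$, so your algorithm is in fact asymptotically faster than the paper's. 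The paper's approach, on the other hand, is somewhat more modular: it localizes all pair-handling to one explicit enumeration step, avoiding the need to thread a pending slot through every merge and to argue carefully about partial subtrees during child-by-child merging.
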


Using Lemma~\ref{lem-color-coding}, we can calculate $\overline{f}_{A,B}$ using Algorithm~\ref{alg:cal-overline-f} below.

\begin{breakablealgorithm}{\label{alg:cal-overline-f}}
    \caption{Computation of $\overline{f}_{A,B}$}
    \begin{algorithmic}[1]
    \STATE {\bf Input:} Adjacency matrices $A$ and $B$, correlation parameter $s$, divergence parameter $\epsilon$, edge-density parameter $\lambda$.
    \STATE Choose $\aleph,\ell$ and $\iota$ according to \eqref{eq-choice-aleph} and \eqref{eq-choice-iota}. Let $t=\tfrac{\aleph^{\aleph}}{\aleph!}$.
    \STATE Apply the constant-time free tree generation algorithm in \cite{Dinneen15, WROM86} to list all non-isomorphic unlabeled trees with $\aleph$ edges and return $\mathcal T'$ as the resulting set.
    \STATE For each $\mathbf T \in \mathcal T'$, add $\mathbf{T}\in\mathcal{T}$ iff there exists $\mathbf o \in V(\mathbf T)$ such that $(\mathbf T,\mathbf o)$ satisfies Definition~\ref{def-tilde-T-K}.
    \STATE For each $\mathbf T \in \mathcal T$, compute $\operatorname{Aut}(\mathbf H)$ using the algorithm in \cite{CB81}.
    \STATE For each $\mathbf T \in \mathcal T$, enumerate all $V \subset V(\mathbf T)$ and select $\mathcal S(\mathbf T)$ according to Theorem~\ref{thm-desired-vertex-sets}.
    \STATE Construct $\mathcal H=\mathcal H(\mathcal T,\mathcal S(\mathcal T))$ according to Definition~\ref{def-decorated-trees}. 
    \STATE Generate i.i.d.\ random colorings $\{ \mu_i : 1 \leq i \leq t \}$ and $\{ \nu_j:1 \leq j \leq t \}$ that map $[n]$ to $[\aleph]$ uniformly at random. 
    \FOR{each $1 \leq i \leq t$}
    \STATE For each $\mathbf H \in \mathcal H$, compute $X_{\mathbf H}(\overline{A},\mu_i)$ and $X_{\mathbf H}(\overline{B},\nu_i)$ via Algorithm~\ref{alg:dynamic-programming}. 
    \ENDFOR
    \STATE Compute $\overline{f}(A,B)$ according to \eqref{eq-def-overline-f}.
    \STATE {\bf Output:} $\overline{f}(A,B)$.
    \end{algorithmic}
\end{breakablealgorithm}

It remains to prove Lemma~\ref{lem-color-coding}. We now describe how to calculate $X_{\mathbf H}(M,\mu)$ via dynamical programming. For $l\geq 1$, denote by $\operatorname{NB}_l(x,y)$ the collection of non-backtracking paths from $x$ to $y$ with $l$ edges. Denote $\mathcal H'$ as the set of all $\mathbf H' =(\mathbf T', \mathsf P'(\mathbf T'))$ such that there exists $(\mathbf T, \mathsf P(\mathbf T)) \in \mathcal H$ with $\mathbf T' \subset \mathbf T$ and $\mathsf P'(\mathbf T')$ is the restriction of $\mathsf P(\mathbf T)$ in $\mathbf T'$. 
\begin{breakablealgorithm}{\label{alg:dynamic-programming}}
    \caption{Computation of $X_{\mathbf H}(M,\mu)$}
    \begin{algorithmic}[1]
    \STATE \textbf{Input}: A weighted host graph $M$ on $[n]$ with its vertices colored by $\mu$, and an element $\mathbf H\in \mathcal H'$. 
    \STATE For each $x,y \in [n]$ and $1 \leq d \leq \aleph$, use \cite[Section~3.2]{MNS18} to calculate 
    \begin{align*}
        \mathcal L(x,y) = \sum_{ \gamma \in \mathrm{NB}_{\ell}(x,y)  } \Big( \prod_{e \in \gamma }M_e \Big) \,.
    \end{align*}
    \STATE Choose $\mathbf o = \mathfrak R(\mathsf T(\mathbf H))$. If $\mathbf H$ is a single point we simply let 
    \begin{align*}
        Y(x,\mathbf H,\{ c \},\mu) = \mathbf 1_{\{ \mu_x = c \}} \mbox{ for each } x \in [n] \,.
    \end{align*}
    \STATE For every $x \in [n]$, $2 \leq y \leq \aleph$ and every subset $C \subset [\aleph]$ of colors with $|C| = y$, compute $Y(x,\mathbf A_{y} , C,\mu)$ (Recall the definition of $\mathbf A_y$ in \eqref{eq-AN}) recursively by (note that the case that $y=1$ is already calculated)
    \begin{align}
        Y(x,\mathbf A_{y} , C,\mu) = \sum_{z \in [n]}\sum_{c \in C} Y(z, \mathbf A_{y-1} ,C\setminus \{c\} ,\mu)M_{x,z}\,.
    \end{align}
    \IF{$\mathbf o \not\in \mathsf{Vert}(\mathsf P(\mathbf H))$ }
    \STATE There is an edge $\mathbf e =(\mathbf o , \mathbf o')$ in $\mathbf H$. By Theorem~\ref{thm-desired-vertex-sets}, \eqref{def-chained-noise-pair-in-T_iota} and \eqref{eq-def-mathsf-Ch-a,b,*}, for each $(\mathbf v_1 , \mathbf v_2) \in \mathsf{P}(\mathbf H)$ we have $\mathbf o \not\in \mathfrak p_{\mathbf H}(\mathbf v_1 , \mathbf v_2)$, which yields that by removing $\mathbf e$ in $\mathbf H$, $\mathbf H$ can be partitioned into $2$ rooted subgraphs $\{ \mathbf H_1 , \mathbf H_2\}$ such that $\mathfrak R(\mathbf H_1) = \mathbf o$, $\mathfrak R(\mathbf H_2) = \mathbf o'$, and there does not exist $\mathbf u \in V(\mathbf H_1) , \mathbf v\in V(\mathbf H_2)$ such that $(\mathbf u, \mathbf v) \in \mathsf{P}(\mathbf H)$. Moreover, for any $(\mathbf u', \mathbf v') \in \mathsf{P}(\mathbf H_1)$ and $w \in V(\mathfrak p_{\mathbf H_1}(\mathbf u , \mathbf v))$ we have $\mathsf{Branch}_{\mathbf H_1}(w) = 1$ and $\mathsf{Deg}_{\mathbf H_1}(w) \geq 2$, which gives that $w \neq \mathfrak R(\mathbf H_1)$. Therefore, it holds for $\mathbf H_1$ (and similarly for $\mathbf H_2$) that either $\mathfrak R(\mathbf H_1) \in \mathsf P(\mathbf H_1)$, or for every $(\mathbf v_1' , \mathbf v_2') \in \mathsf{P}(\mathbf H_1)$ we have $\mathfrak R(\mathbf H_1) \not\in \mathfrak p_{\mathbf H_1}(\mathbf v_1' , \mathbf v_2')$.
    \STATE For every $x \in [n]$ and every subset $C \subset [\aleph]$ of colors with $|C|=|V(\mathsf T(\mathbf H))|$, compute recursively for each $x \in [n]$ 
    \begin{align}
       Y(x,\mathbf H,C,\mu) = \sum_{ y\in [n] , y \neq x} \sum_{C = C_1 \sqcup C_2}  Y(x,\mathbf H_1,C_1,\mu)Y(y,\mathbf H_2,C_2,\mu) M_{x,y} \,. \label{eq-recursive-Y-case-1}
    \end{align}
    \ELSE
    \STATE There exists $\mathbf{o}'$ such that $(\mathbf o,\mathbf o') \in \mathsf P(\mathbf H)$. In this case, by Theorem~\ref{thm-desired-vertex-sets} we can suppose that $\mathfrak p_{\mathbf H}(\mathbf o,\mathbf o') = (\mathbf w_0 ,\mathbf w_1 , \ldots ,\mathbf w_\mathtt m) \in \mathfrak{Ch}^{(0,0;*)}_\mathtt m(\mathsf{T}(\mathbf H))$ with $\mathbf w_0 = \mathbf o ,\mathbf w_\mathtt m = \mathbf o'$. By removing the edges of $\mathfrak p_{\mathbf H}(\mathbf o, \mathbf o')$ in $\mathbf H$, $\mathbf H$ can be partitioned into $2$ rooted subgraphs $\{ \mathbf H_1 , \mathbf H_2\}$ such that $\mathfrak R(\mathbf H_1) = \mathbf o$, $\mathfrak R(\mathbf H_2) = \mathbf o'$, and similar to the analysis in Step~6 given any $i \in \{1,2\}$ we have either $\mathfrak R(\mathbf H_i) \in \mathsf P(\mathbf H_i)$, or for every $(\mathbf v' , \mathbf v'') \in \mathsf{P}(\mathbf H_i)$ we have $\mathfrak R(\mathbf H_i) \not\in \mathfrak p_{\mathbf H_i}(\mathbf v' , \mathbf v'')$.
    \STATE Denote $\mathbf H_0 = (\mathbf w_1 , \ldots , \mathbf w_{\mathtt m-1})$. For every $x \in [n]$ and every subset $C \subset [\aleph]$ of colors with $|C|=|V(\mathsf T(\mathbf H))|$, compute recursively for each $x \in [n]$ 
    \begin{align}
       Y(x,\mathbf H,C,\mu) = \sum_{ \substack{\{y_0 , \ldots , y_\mathtt m\}\subset [n] \\ y_0 = x , y_\mathtt m = y}} &\sum_{C = \{ c_1 , \ldots , c_{\mathtt m-1} \} \sqcup C_1 \sqcup C_2} \Big( Y(x,\mathbf H_1,C_1,\mu)Y(y,\mathbf H_2,C_2,\mu) \cdot \nonumber \\
       &\mathcal L(x,y) \prod_{i=1}^{\mathtt m} (M_{y_{i-1},y_{i}} (\mathbf 1_{\{ \mu (y_i) = c_i \,, i\neq \mathtt m\}}+\mathbf{1}_{i=\mathtt m})) \Big)\,. \label{eq-recursive-Y-case-2}
    \end{align}
    \ENDIF
    \STATE \textbf{Output}: $\tfrac{1}{\overline{\mathbf{Aut}}(\mathbf H_\mathbf o)}\sum_{x \in[n]} Y(x,\mathbf H_{\mathbf o},C,\mu)$, where $\overline{\mathbf{Aut}}(\mathbf H_\mathbf o)$ denotes the number of bijections $\pi : V(\mathbf H_\mathbf o) \longrightarrow V(\mathbf H_\mathbf o)$ such that $\pi (\mathsf P(\mathbf H_\mathbf o)) = \mathsf P(\mathbf H_\mathbf o)$ and $\pi$ is an automorphism of $\mathsf T(\mathbf H_\mathbf o)$.
    \end{algorithmic}
\end{breakablealgorithm}

\begin{proof}[Proof of Lemma~\ref{lem-color-coding}]
    Calculating $\mathcal L(x,y)$ for all $x,y$ takes time $O(n^2)\cdot O(n^{3+o(1)})=O(n^{5+o(1)})$.
    The cardinality of $\mathcal H'$ is bounded by $10^{\aleph} \cdot \tbinom{\aleph}{2\iota\aleph} = n^{o(1)}$, the total number of all subsets $C \subset [\aleph]$ is bounded by $2^{\aleph}=n^{o(1)}$, and the total number of all $(C_1,C_2)$ is bounded by $2^{\aleph}=n^{o(1)}$. Also, computing $\overline{\mathbf{Aut}}(\mathbf H_\mathbf o)$ for each $\mathbf H_\mathbf o$ takes time $O(\aleph^{2\aleph})$. Thus, according to \eqref{eq-recursive-Y-case-1} and \eqref{eq-recursive-Y-case-2}, the total time complexity of computing $Y(x,\mathbf H,C,\mu)$ for all $x \in [n]$ and all $C \subset [\aleph]$ is bounded by
    \begin{align*}
        O(n^{o(1)}) \cdot O(n^{4\mathtt m}) = O(n^{4\mathtt m+o(1)}) \,.
    \end{align*}
    Thus, the total time complexity of Algorithm~\ref{alg:dynamic-programming} is bounded by
    \begin{align*}
        O(n^{5+o(1)}) + O(n) \cdot O(2^{\aleph}) \cdot O(n^{4\mathtt m+o(1) }) = O\big( n^{4\mathtt m+1+o(1)} \big) \,.
    \end{align*}
    The proof that Algorithm~\ref{alg:dynamic-programming} outputs $X_{\mathbf{H}}(M,\mu)$ is almost identical to the proof of \cite[Lemma~2]{MWXY21+}; the only difference is that we need to invoke the value of $\mathcal L(x,y)$ when we compute $Y(x,\mathbf H_{\mathbf o},C,\mu)$ recursively. However, this can be done efficiently as we have already stored the value of all $\mathcal L(x,y)$ at Step~2, and we omit further details here for simplicity. 
\end{proof}

\appendix

\section{Preliminary results on graphs}{\label{sec:prelim-graphs}}

\begin{lemma}{\label{lem-remove-sim-relation}}
    For any $N \geq \log(\iota^{-1})$ and $\mathbf T \in \cup_{m \leq N} \mathcal R_m$, we have 
    \begin{align*}
        \#\big\{ \mathbf T' \in \cup_{m \leq N} \mathcal R_m: \mathbf T' \sim \mathbf T \big\} \leq e^{ 6N \log\log(\iota^{-1}) /\log(\iota^{-1})} \,.
    \end{align*}
\end{lemma}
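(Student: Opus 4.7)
The plan is to parametrize each $\mathbf{T}'$ in the equivalence class of $\mathbf{T}$ by a common extension. If $\mathbf{T}' \sim \mathbf{T}$, then by definition there exists a tree $\mathbf{R}$ with
\[
    \mathbf{R} \ \cong\ \mathbf{T} \oplus \mathcal{L}_{\mathbf{u}_1}^{(x_1)} \oplus \cdots \oplus \mathcal{L}_{\mathbf{u}_k}^{(x_k)} \ \cong\ \mathbf{T}' \oplus \mathcal{L}_{\mathbf{u}_1'}^{(x_1')} \oplus \cdots \oplus \mathcal{L}_{\mathbf{u}_{k'}'}^{(x_{k'}')},
\]
and the constraints $k, k' \leq N/\log(\iota^{-1})$ together with $x_i, x_j' \leq \log^2(\iota^{-1})$ force $|V(\mathbf{R})| \leq N(1+\log(\iota^{-1}))$. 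Since $\mathbf{T}$ is fixed, the isomorphism class of $\mathbf{T}'$ is determined by (i) the data specifying how $\mathbf{R}$ is obtained from $\mathbf{T}$, and (ii) the data specifying how $\mathbf{T}'$ is obtained from $\mathbf{R}$ by pruning arm-paths. So I would reduce the count to bounding these two pieces of data separately and multiplying.

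For piece (i), a single attachment is a pair $(\mathbf{u}_i, x_i) \in (V(\mathbf{T}) \setminus \mathcal{L}(\mathbf{T})) \times \{0, 1, \ldots, \log^2(\iota^{-1})\}$, so there are at most $N \log^2(\iota^{-1})$ distinct pairs. Treating the $k$ attachments as a multiset of this size bounded by $N/\log(\iota^{-1})$, and applying $\binom{M}{m} \leq (eM/m)^m$, the number of choices is at most
\[
    \binom{N \log^2(\iota^{-1})}{N/\log(\iota^{-1})} \ \leq\ \big(e \log^3(\iota^{-1})\big)^{N/\log(\iota^{-1})} \ =\ \exp\!\Big(\tfrac{N}{\log(\iota^{-1})}\big(3\log\log(\iota^{-1}) + O(1)\big)\Big).
\]
For piece (ii), an arm-path removed from $\mathbf{R}$ is determined by its top vertex (at most $|V(\mathbf{R})| \leq 2N \log(\iota^{-1})$ choices) and its length (at most $\log^2(\iota^{-1})$ choices), and $k' \leq N/\log(\iota^{-1})$ such arm-paths are chosen, giving at most $\big(2e \log^4(\iota^{-1})\big)^{N/\log(\iota^{-1})}$ choices by the same binomial estimate.

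Multiplying and invoking \eqref{eq-choice-iota} to absorb the sub-leading $O(N/\log(\iota^{-1}))$ terms into a factor of $\exp(\log\log(\iota^{-1}) \cdot N/\log(\iota^{-1}))$, the product is at most $\exp\!\big(6N \log\log(\iota^{-1})/\log(\iota^{-1})\big)$ as required. The main obstacle will be keeping the constant in the exponent at $6$ rather than a looser number like $7$ or $8$: a naive multiplication of the two pieces above yields total $(3+4)\log\log + O(1) = 7\log\log + O(1)$ inside the exponent, so I expect the proof to sharpen this by merging the two enumerations, for instance by choosing a single set of at most $k+k' \leq 2N/\log(\iota^{-1})$ attachment sites drawn from a common host of size $\leq 2N \log(\iota^{-1})$ and specifying for each site a type (add/remove) and length; this collapses the two factors of $\binom{\cdot}{\cdot}$ into one and improves the constant to $6$ after applying \eqref{eq-choice-iota} once more.
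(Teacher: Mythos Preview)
Your two-piece strategy is exactly the paper's: build the common extension $\mathbf{R}$ from $\mathbf{T}$, then prune to obtain $\mathbf{T}'$. The difficulty is only in the constant. Your own count gives $3\log\log(\iota^{-1})$ from piece (i) and $4\log\log(\iota^{-1})$ from piece (ii), totalling $7$; the merging you sketch does not rescue this. If both add-sites and remove-sites are drawn from a host of size $\le 2N\log(\iota^{-1})$, then choosing $\le 2N/\log(\iota^{-1})$ of them together with a type and a length gives $\binom{4N\log^3(\iota^{-1})}{2N/\log(\iota^{-1})}\approx e^{8N\log\log(\iota^{-1})/\log(\iota^{-1})}$, which is worse, not better. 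The underlying obstacle is a chicken-and-egg: the remove-sites live in $V(\mathbf{R})$, but $\mathbf{R}$ is only known after the add-sites are fixed, so you cannot merge the two choices into a single draw from one host without first inflating the host for piece (i) as well.

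The paper closes the gap differently: it separates the vertex choices from the length choices and, in the pruning step, bounds the number of candidate arm-paths at each $\mathbf{u}'_j$ by $\deg_{\mathbf{R}}(\mathbf{u}'_j)\le \log^2(\iota^{-1})+1$, citing Item~(1) of Definition~\ref{def-tilde-T-K}. Strictly speaking this degree bound is a property of trees in $\widetilde{\mathcal T}_\aleph$, not of arbitrary $\mathbf{T}\in\mathcal R_m$ as in the lemma statement, so the paper is tacitly using that the lemma is only ever applied to such trees. If you want an argument that works for general $\mathbf{T}$ and still hits the constant $6$, sharpen piece (ii) instead of merging: since $\mathbf{u}'_j\in V(\mathbf{T}')\setminus\mathcal L(\mathbf{T}')$ forces $\deg_{\mathbf{R}}(\mathbf{u}'_j)\ge 3$ (modulo a single root edge-case), each removed arm-path is the \emph{maximal} pendant path from its leaf in $\mathbf{R}$, hence is determined by that leaf alone. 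Because $|\mathcal L(\mathbf{R})|\le |\mathcal L(\mathbf{T})|+k\le N+N/\log(\iota^{-1})$, piece (ii) then costs only $\binom{2N}{N/\log(\iota^{-1})}\le e^{(\log\log(\iota^{-1})+O(1))N/\log(\iota^{-1})}$, and the total drops to $(3+1)\log\log(\iota^{-1})+O(1)$, comfortably inside $6\log\log(\iota^{-1})$.
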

\begin{proof}
    Note that for any fixed $\mathbf T \in \cup_{m \leq N} \mathcal R_N$, in order to have $\mathbf T' \sim \mathbf T$, there must exist 
    \begin{align*}
        & k \leq \log^{-2}(\iota^{-1}) \cdot |V(\mathbf T)|, k'\leq \log^{-2}(\iota^{-1}) \cdot |V(\mathbf T')| \,, \\
        & \{\mathbf u_1,\ldots,\mathbf u_k\} \subset V(\mathbf T), \{\mathbf u_1',\ldots,\mathbf u_{k'}'\}\subset V(\mathbf T') \,, \\
        & 3\leq \mathsf{Deg}_{\mathbf T}(\mathbf u_1), \ldots ,\mathsf{Deg}_{\mathbf T}(\mathbf u_k), \mathsf{Deg}_{\mathbf T'}(\mathbf u'_1), \ldots ,  \mathsf{Deg}_{\mathbf T'}(\mathbf u'_{k'}) \leq 4 \,, \\
        & 0 \leq x_1,\ldots,x_k, x_1',\ldots,x_{k'}' \leq \log^{2}(\iota^{-1})
    \end{align*}
    such that 
    \begin{align*}
        \mathbf T \oplus \mathcal L_{\mathbf u_1}^{(x_1)} \oplus \ldots \oplus \mathcal L_{\mathbf u_k}^{(x_k)} \cong \mathbf T' \oplus \mathcal L_{\mathbf u_1'}^{(x_1')} \oplus \ldots \oplus \mathcal L_{\mathbf u_{k'}'}^{(x_{k'}')} \,.
    \end{align*}
    Observe that the total number of ways to enumerate $\{ \mathbf u_k, \mathbf u'_{k'} \}$ is bounded by
    \begin{align*}
        \binom{N}{N/\log(\iota^{-1})} \binom{N+\log^2(\iota^{-1})N}{N/\log(\iota^{-1})} \leq e^{ 3N\log\log(\iota^{-1})/ \log(\iota^{-1})}\,.
    \end{align*}
    Additionally, the number of possible choices for $\{ x_k,x'_{k'} \}$ is bounded by $\big( \log^2(\iota^{-1}) \big)^{4\iota N}$. Given $\{ \mathbf u_k, \mathbf u'_{k'} \}$ and $\{ x_k,x'_{k'} \}$, the graph $\mathbf T \oplus \mathcal L_{\mathbf u_1}^{(x_1)} \oplus \ldots \oplus \mathcal L_{\mathbf u_k}^{(x_k)}$ is also determined. Since the degree of $\mathbf u_i'$ (for $1 \leq i \leq k'$) in $\mathbf T \oplus \mathcal L_{\mathbf u_1}^{(x_1)} \oplus \ldots \oplus \mathcal L_{\mathbf u_k}^{(x_k)}$ is bounded by $4+1=5$, the enumeration of $\mathcal L_{\mathbf u_1'}^{(x_1')} , \ldots , \mathcal L_{\mathbf u_{k'}'}^{(x_{k'}')}$ is bounded by $5^{3N/\log(\iota^{-1})}$. Combining these bounds, we have
    \begin{align*}
        \#\big\{ \mathbf T' \in \cup_{m \leq N} \mathcal R_m: \mathbf T \sim \mathbf T' \big\} &\leq e^{ 3N\log\log(\iota^{-1})/\log(\iota^{-1})} \cdot \big( \log(\iota^{-1}) \big)^{8\iota N} \cdot 5^{3N/\log(\iota^{-1})} \\
        &\leq e^{ 6N\log\log(\iota^{-1}) /\log(\iota^{-1}) } \,.\qedhere
    \end{align*}
\end{proof}

Recall Definition~\ref{def-tilde-T-K} and Theorem~\ref{thm-desired-vertex-sets}. We now describe some useful properties for our choice of trees and vertex sets. 

\begin{lemma}{\label{lem-useful-property-trees-and-sets}}
    Let $\mathbf H \in \mathcal H$ and let $S \Vdash \mathbf H$. We have the following properties: 
    \begin{enumerate}
        \item[(1)] For all $\{ x_{\mathbf u} \geq 0:\mathbf u \in \mathsf{Vert}(\mathsf P(\mathbf H)) \}$ and all injection $\varphi: \mathsf T(\mathbf H) \to \mathsf T(\mathbf H) \oplus ( \oplus_{\mathbf u \in \mathsf{Vert}(\mathsf P(\mathbf H))} \mathcal L_{\mathbf u}^{(x_{\mathbf u})})$, we have that $\varphi$ maps $\mathfrak R(\mathsf T(\mathbf H))$ to $\mathfrak R(\mathsf T(\mathbf H))$.
        \item[(2)] For all $\{ 0 \leq x_{\mathbf u} \leq \log^2(\iota^{-1}):\mathbf u \in \mathsf{Vert}(\mathsf P(\mathbf H)) \}$, we have (recall $\mathsf{Fix}$ in the notation section)
        \begin{align*}
        \mathsf{Vert}(\mathsf P(\mathbf H)) \subset V(\mathsf T(\mathbf H)_\iota) \subset \mathsf{Fix}(\mathsf T(\mathbf H) \oplus ( \oplus_{\mathbf u \in \mathsf{Vert}(\mathsf P(\mathbf H))} \mathcal L_{\mathbf u}^{(x_{\mathbf u})} )). 
        \end{align*}
        \item[(3)] Recall in Section~\ref{subsec:notations} we denote $\widetilde{S}$ as the simple graph corresponding to $S$. We have $\mathcal L(\widetilde S) = \mathcal L(S) \subset \mathcal L(\mathsf T(S))$.
        \item[(4)] For all $\{ x_{\mathbf u} \geq 0:\mathbf u \in \mathsf{Vert}(\mathsf P(\mathbf H)) \}$, we have
        \begin{align*}
            \frac{ \operatorname{Aut}( \mathsf T(\mathbf H) \oplus ( \oplus_{\mathbf u \in \mathsf{Vert}(\mathsf P(\mathbf H))} \mathcal L_{\mathbf u}^{(x_{\mathbf u})} ) ) }{ \operatorname{Aut}(\mathsf T(\mathbf H)) } \leq e^{4\iota \aleph} \,.
        \end{align*}
        \item[(5)] The number of injections from $\mathsf T(\mathbf H)$ to $\mathsf T(\mathbf H) \oplus (\oplus_{\mathbf u \in \mathsf{Vert}(\mathsf P(\mathbf H))} \mathcal L_{\mathbf u}^{(x_{\mathbf u})})$ equals
        \begin{align*}
            \frac{ \operatorname{Aut}( \mathsf T(\mathbf H) \oplus ( \oplus_{\mathbf u \in \mathsf{Vert}(\mathsf P(\mathbf H))} \mathcal L_{\mathbf u}^{(x_{\mathbf u})} ) ) }{ \operatorname{Aut}(\mathsf T(\mathbf H)) } \,.
        \end{align*}
    \end{enumerate}
\end{lemma}
\begin{proof}
    Denote $\mathbf T=\mathsf T(\mathbf H)$ and $\mathbf T^{\oplus} = \mathbf T \oplus ( \oplus_{\mathbf u \in \mathsf{Vert}(\mathsf P(\mathbf H))} \mathcal L_{\mathbf u}^{(x_{\mathbf u})} )$ for simplicity in the following proof. We first show Item~(1). Since $\mathbf T$ does not have any arm-path with length at least $\log^2(\iota^{-1})$, then $\sigma$ must map $\mathbf T$ to $\mathbf T \oplus ( \oplus_{\mathbf v \in \mathsf P(\mathbf H)} \mathcal L_{\mathbf v}^{ (x_{\mathbf v} \wedge\log^2(\iota^{-1})) } )$. Thus, without loss of generality we may assume that $x_{\mathbf v}\leq \log^2(\iota^{-1})$ for all $\mathbf v$. Recall the notation $\mathbf{T}_1$ and $\mathbf{T}_2$ in Item~(5) of Definition~\ref{def-tilde-T-K}. Note that in this case if $\sigma$ maps $\mathfrak R(\mathbf T)$ to another vertex, then either $\mathbf T_1$ is mapped into a descendant tree of $\mathbf T_2 \oplus (\oplus_{\mathbf v \in \mathsf P (\mathbf H) \cap V(\mathbf T_2)} \mathcal L_{\mathbf v}^{(x_{\mathbf v})})$, or $\mathbf T_2$ is mapped into a descendant tree of $\mathbf T_1 \oplus (\oplus_{\mathbf v \in \mathsf P (\mathbf H) \cap V(\mathbf T_1)} \mathcal L_{\mathbf v}^{(x_{\mathbf v})})$. This implies that $\mathbf T_1$ has $\sim$ relation with a subtree of $\mathbf T_2$ or $\mathbf T_2$ has $\sim$ relation with a subtree of $\mathbf T_1$, contradicting to Item~(5) of Definition~\ref{def-tilde-T-K} and thus leading to Item~(1). 
    
    We now show Item~(2). Note that $\mathsf{Vert}(\mathsf P(\mathbf H)) \subset V( \mathbf T_\iota )$ holds by Item~(1) of Theorem~\ref{thm-desired-vertex-sets}. It remains to prove $V(\mathbf T_\iota) \subset \mathsf{Fix}(\mathbf T^{\oplus})$. Suppose that there exists a vertex $v \in V(\mathbf T_\iota) \setminus \mathsf{Fix}(\mathbf T^{\oplus} )$. Then there is an automorphism $\pi$ of $\mathbf T^{\oplus}$ such that $\pi(v) \neq v$. Now denote by $\mathcal L^*_{v}$ the unique self-avoiding path in $\mathbf T$ connecting $v$ and $\mathfrak R(\mathbf T)$, and similarly define $\mathcal L^*_{\pi (v)}$. Then $\mathcal L^*_{v} , \mathcal L^*_{\pi(v)} \subset \mathbf T$. By Item~(1), $\pi (\mathfrak R (\mathbf T)) = \mathfrak R (\mathbf T)$ and therefore we have $\pi(\mathcal L^*_{v}) = \mathcal L^*_{\pi (v)}$ and $\mathcal L^*_{v} \cap \mathcal L^*_{\pi (v)} \neq \emptyset$. Denote $v^* = \operatorname{arg} \max_{u \in \mathcal L^*_{v} \cap \mathcal L^*_{\pi (v)}} \mathsf{Dep}_{\mathbf T^{\oplus}}(u)$. By $\mathsf{Dep}_{\mathbf T^{\oplus}}(v^*) = \mathsf{Dep}_{\mathbf T^{\oplus}}(\pi(v^*))$ and $\mathsf{Dep}_{\mathbf T^{\oplus}}(v) = \mathsf{Dep}_{\mathbf T^{\oplus}}(\pi(v))$, we have $\pi (v^*) = v^*$ and therefore $v^* \neq v$. Denote by $v^{**}$ the neighbor of $v^*$ in $\mathcal L^*_{v}$ such that $\mathsf{Dep}_{\mathbf T^{\oplus}}(v^{**})>\mathsf{Dep}_{\mathbf T^{\oplus}}(v^*)$. Then $\mathsf{Des}_{\mathbf T^{\oplus}}(v^{**})$ and $\mathsf{Des}_{\mathbf T^{\oplus}}(\pi(v^{**}))$ are two distinct connected components of $(\mathbf T^{\oplus})_{ \setminus \{ v^* \}} = (\mathbf T^{\oplus})_{ \setminus \{ \pi(v^*) \}}$ (recall the definition of $H_{\setminus A}$ for any unlabeled graph $H$ and any vertex set $A \subset V(H)$ in Section~\ref{subsec:notations}, where we have defined vertex induced subgraphs). Therefore $\pi (\mathsf{Des}_{\mathbf T^{\oplus}}(v^{**})) = \mathsf{Des}_{\mathbf T^{\oplus}}(\pi(v^{**}))$, and we have $|V(\mathsf{Des}_{\mathbf T}(\pi(v^{**})))| = |V(\mathsf{Des}_{\mathbf T}(v^{**}))| \geq |V(\mathsf{Des}_{\mathbf T}(v))| \geq \log^2(\iota^{-1})$. Since the same parent $v^*$ is shared by $\mathsf{Des}_{\mathbf T}(v^{**}) \neq \mathsf{Des}_{\mathbf T}(\pi(v^{**}))$, by Item~(4) of Definition~\ref{def-tilde-T-K} we have $\mathsf{Des}_{\mathbf T}(v^{**}) \not \sim \mathsf{Des}_{\mathbf T}(\pi(v^{**}))$. Meanwhile, by $\mathsf{Des}_{\mathbf T^{\oplus}}(v^{**}) \cong \mathsf{Des}_{\mathbf T^{\oplus}}(\pi(v^{**}))$ we have (denote $V_{\mathsf P}(v^{**}) = \mathsf{Vert}(\mathsf P(\mathbf H)) \cap V(\mathsf{Des}_{\mathbf T}(v^{**}))$ and similarly for $V_{\mathsf P}(\pi(v^{**}))$)
    \begin{align*}
        \mathsf{Des}_{\mathbf T}(v^{**}) \oplus ( \oplus_{\mathbf u \in V_{\mathsf P}(v^{**})} \mathcal L_{\mathbf u}^{(x_{\mathbf u})} ) \cong \mathsf{Des}_{\mathbf T}(\pi(v^{**}))  \oplus ( \oplus_{\mathbf v \in V_{\mathsf P}(\pi(v^{**}))} \mathcal L_{\mathbf v}^{(x_{\mathbf v})} ) \,.
    \end{align*}
    Combining this with Item~(2) of Theorem~\ref{thm-desired-vertex-sets}, we have $\mathsf{Des}_{\mathbf T}(v^{**}) \sim \mathsf{Des}_{\mathbf T}(\pi(v^{**}))$, which contradicts to Item~(4) of Definition~\ref{def-tilde-T-K} . This leads to Item~(2).
    
    We now show Item~(3). Note that for all $u \in \mathcal L(\widetilde S)$, since each $\mathsf L_i(S)$ is self-avoiding, we see that all vertices in $V(\mathsf L_i(S)) \setminus \operatorname{EndP}(\mathsf L_i(S))$ have degree at least $2$ in $\widetilde{S}$ and thus $u \not\in V(\mathsf L_i(S)) \setminus \operatorname{EndP}(\mathsf L_i(S))$. Similarly we have $u \not\in V(\mathsf T(S)) \setminus \mathcal L(\mathsf T(S))$. This shows that $\mathcal L(\widetilde{S}) \subset \mathcal L(\mathsf T(S))$. We now show that $\mathcal L(\widetilde S)=\mathcal L(S)$. Clearly we have $\mathcal L(S) \subset \mathcal L(\widetilde S)$. In addition, for all $u \in \mathcal L(\widetilde S) \subset \mathcal L(\mathsf T(S))$, denote $(u,v)$ the edge in $\mathsf T(S)$. For all $1 \leq i \leq \iota\aleph$, we must have $u \not \in V(\mathsf L_i(S)) \setminus \operatorname{EndP}(\mathsf L_i(S))$ since otherwise the degree of $u$ in $\widetilde{S}$ is at least $2$. This shows that $(u,w)$ has multiplicity $1$ in the multigraph $S$ and thus $u \in \mathcal L(S)$. This concludes $\mathcal L(\widetilde S)=\mathcal L(S)$.
 
    For Item~(4), denote $\mathbf T=\mathsf T(\mathbf H)$ and $\mathsf{Vert}(\mathsf P(\mathbf H))=\{ \mathbf w_1, \ldots, \mathbf w_{2\iota\aleph} \}$. If there exists some $x_i>\log^2(\iota^{-1})$, we may assume that $x_i > \log^2(\iota^{-1})$ if and only if $i\leq \Lambda$ for some $\Lambda \geq 1$. By Item~(2) of this lemma, we see that for all $\pi \in \operatorname{Aut}( \mathbf T \oplus (\oplus_{i \leq 2\iota\aleph} \mathcal L_{\mathbf w_i}^{(x_i)} ))$, $\pi$ must map $\{ \mathbf w_1,\ldots,\mathbf w_{\Lambda} \}$ to itself. Thus,
    \begin{align*}
        \operatorname{Aut}( \mathbf T \oplus (\oplus_{i \leq 2\iota\aleph} \mathcal L_{\mathbf w_i}^{(x_i)}) ) = \operatorname{Aut}( \mathbf T \oplus (\oplus_{i>\Lambda} \mathcal L_{\mathbf w_i}^{(x_i)}) ) \,.
    \end{align*}
    Therefore, without loss of generality, we may assume that $x_1,\ldots,x_{2\iota\aleph} \leq \log^2(\iota^{-1})$. Denote $\mathbf T(j) = \mathbf T \oplus (\oplus_{i < j} \mathcal L_{\mathbf w_i}^{(x_i)})$. Then we have 
    \begin{align*}
        \frac{ \operatorname{Aut}( \mathbf T \oplus (\oplus_{i \leq 2\iota\aleph} \mathcal L_{\mathbf w_i}^{(x_i)}) ) }{ \operatorname{Aut}(\mathbf T) } = \prod_{j=1}^{2\iota\aleph} \frac{ \operatorname{Aut}(\mathbf T(j) \oplus \mathcal L_{\mathbf w_j}^{(x_j)}) }{ \operatorname{Aut}(\mathbf T(j)) } \,.
    \end{align*}
    By Item~(2), we have $\mathbf w_j \in \mathsf{Fix}(\mathbf T(j) \oplus \mathcal L_{\mathbf w_j}^{(x_j)} )$. Using the fact that $\mathsf{Deg}_{\mathbf T}(\mathbf w_j) \leq \mathsf{Branch}_{\mathbf T}(\mathbf w_j) + 1 \leq 4$ (recall \eqref{def-chained-noise-pair-in-T_iota} and \eqref{def-chained-signal-pair-in-T_iota}), we see that
    \begin{align*}
        \frac{ \operatorname{Aut}(\mathbf T(j) \oplus \mathcal L_{\mathbf w_j}^{(x_j)}) }{ \operatorname{Aut}(\mathbf T(j)) } \leq 4 \,,
    \end{align*}
    which gives
    \begin{equation*}
        \frac{ \operatorname{Aut}( \mathbf T \oplus (\oplus_{i \leq 2\iota\aleph} \mathcal L_{\mathbf w_i}^{(x_i)}) ) }{ \operatorname{Aut}(\mathbf T) } \leq e^{4\iota\aleph} \,,
    \end{equation*}
    leading to the desired upper bound. 
    
    For Item~(5), denote $\mathbf T=\mathsf T(\mathbf H)$. Note that if there exists $x_{\mathbf u}>\log^2(\iota^{-1})$, since $\mathbf T$ does not have an arm-path with length at least $\log^2(\iota^{-1})$, then the number of injections from $\mathbf T$ to $\mathbf T \oplus (\oplus_{\mathbf v} \mathcal L_{\mathbf v}^{(x_{\mathbf v})})$ equals the number of injections from $\mathbf T$ to $\mathbf T \oplus (\oplus_{\mathbf v \neq \mathbf u} \mathcal L_{\mathbf v}^{(x_{\mathbf v})} )$. Thus, (by applying the preceding observation inductively) we may assume that $x_{\mathbf u} \leq \log^2(\iota^{-1})$ without loss of generality. Note that using Item~(4) in Definition~\ref{def-tilde-T-K}, the injections from $\mathbf T$ to $\mathbf T \oplus (\oplus_{\mathbf u} \mathcal L_{\mathbf u})$ must map $\mathbf T_{\iota}$ to itself. Note that we can write
    \begin{align*}
        \mathbf T = \mathbf T_{\iota} \oplus \big( \oplus_{\mathbf w \in V(\mathbf T_{\iota})} \mathbf T_{\mathbf w} \big) \,,
    \end{align*}
    where $\mathbf T_{\mathbf w}=\mathsf{Des}_{\mathbf T}(\mathbf w)$ are descendant trees rooted at $\mathbf w$ with $|V(\mathbf T_{\mathbf w})| \geq (\log(\iota^{-1}))^2$. Denoting $\mathbf T^{\oplus}_{\mathbf w}=\mathbf T_{\mathbf w}$ if $\mathbf w \not\in \mathsf{Vert}(\mathsf P(\mathbf T))$ and $\mathbf T^{\oplus}_{\mathbf w}=\mathbf T_{\mathbf w} \oplus \mathcal L_{\mathbf w}^{(x_{\mathbf w})}$ if $\mathbf w \in \mathsf{Vert}(\mathsf P(\mathbf T))$, it can be easily checked that the number of injections from $\mathbf T_{\mathbf w}$ to $\mathbf T^{\oplus}_{\mathbf w}$ equals to $\frac{\operatorname{Aut}(\mathbf T^{\oplus}_{\mathbf w})}{\operatorname{Aut}(\mathbf T_{\mathbf w})}$. Thus, the number of injections from $\mathbf T$ to $\mathbf T^{\oplus}$ equals to
    \begin{align*}
        \prod_{\mathbf w \in V(\mathbf T_{\iota})} \frac{\operatorname{Aut}(\mathbf T^{\oplus}_{\mathbf w})}{\operatorname{Aut}(\mathbf T_{\mathbf w})} = \frac{\operatorname{Aut}(\mathbf T^{\oplus})}{\operatorname{Aut}(\mathbf T)} \,,
    \end{align*}
    which completes the proof. 
\end{proof}

\begin{cor}{\label{lem-useful-Aut-bound}}
    For all $\mathbf H \in \mathcal H$ and constant $0 \leq \kappa\leq \tfrac{1}{\epsilon^2 \lambda s}$ (note that $\epsilon^2 \lambda s >1+\Delta$ from \eqref{eq-assumption-s,lambda}) we have
    \begin{align}
        \sum_{x_1,\ldots,x_{2\iota\aleph} \geq 0} \kappa^{x_1+\ldots+x_{2\iota\aleph}} \cdot \frac{ \operatorname{Aut}( \mathsf T(\mathbf H) \oplus ( \oplus_{\mathbf u \in \mathsf{Vert}(\mathsf P(\mathbf H))} \mathcal L_{\mathbf u}^{(x_{\mathbf u})} ) )^2 }{ \operatorname{Aut}(\mathsf T(\mathbf H))^2 } \leq \exp\big( (8+4\Delta)\iota\aleph \big) \,. \label{eq-useful-Aut-bound}
    \end{align}
\end{cor}
\begin{proof}
    Using Item~(4) of Lemma~\ref{lem-useful-property-trees-and-sets}, the left-hand side of \eqref{eq-useful-Aut-bound} is bounded by 
    \begin{align*}
        e^{8\iota\aleph} \sum_{x_1,\ldots,x_{2\iota\aleph} \geq 0} \kappa^{x_1+\ldots+x_{2\iota\aleph}} \overset{\eqref{eq-choice-iota}}{\leq} \exp\big( (8+4\Delta)\iota\aleph \big) \,,
    \end{align*}
    as desired (we use $(1-\Delta)^{-1}<e^{2\Delta}$ for $\Delta<0.1$ in the last inequality).
\end{proof}

The following lemmas provide several properties of the subgraphs of a given graph $S$.
\begin{lemma}[\cite{CDGL24+}, Lemma~A.3]{\label{lem-decomposition-H-Subset-S}}
    For $H \subset S$, we can decompose $E(S)\setminus E(H)$ into $\mathtt m$ cycles ${C}_{\mathtt 1}, \ldots, {C}_{\mathtt m}$ and $\mathtt t$ paths ${P}_{\mathtt 1}, \ldots, {P}_{\mathtt t}$ for some $\mathtt m, \mathtt t\geq 0$ such that the following hold:
    \begin{enumerate}
        \item[(1)] ${C}_{\mathtt 1}, \ldots, {C}_{\mathtt m}$ are vertex-disjoint (i.e., $V(C_{\mathtt i}) \cap V(C_{\mathtt j})= \emptyset$ for all $\mathtt i \neq \mathtt j$) and $V(C_{\mathtt i}) \cap V(H)=\emptyset$ for all $1\leq\mathtt i\leq \mathtt m$.
        \item[(2)] $\operatorname{EndP}({P}_{\mathtt j}) \subset V(H) \cup (\cup_{\mathtt i=1}^{\mathtt m} V(C_{\mathtt i})) \cup (\cup_{\mathtt k=1}^{\mathtt j-1} V(P_{\mathtt k})) \cup \mathcal L(S)$ for $1 \leq \mathtt j \leq \mathtt t$.
        \item[(3)] $\big( V(P_{\mathtt j}) \setminus \operatorname{EndP}(P_{\mathtt j}) \big) \cap \big( V(H) \cup (\cup_{\mathtt i=1}^{\mathtt m} V(C_{\mathtt i})) \cup (\cup_{\mathtt k=1}^{\mathtt j-1} V(P_{\mathtt k}) ) \cup \mathcal L (S) \big) = \emptyset$ for $\mathtt 1 \leq \mathtt j \leq \mathtt t$.
        \item[(4)] $\mathtt t = |\mathcal L(S) \setminus V(H)|+\tau(S)-\tau(H)$.
    \end{enumerate}
\end{lemma}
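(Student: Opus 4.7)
\smallskip

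\noindent\textbf{Proof proposal for Lemma~\ref{lem-decomposition-H-Subset-S}.} The plan is to construct the decomposition greedily in two phases (cycles first, then paths) and then verify the counting identity~(4) via an excess-bookkeeping argument.

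\emph{Phase 1 (cycle extraction).} I would iteratively extract a maximal collection $C_{\mathtt 1}, \ldots, C_{\mathtt m}$ of cycles in $S \setminus E(H)$ such that the $C_{\mathtt i}$ are pairwise vertex-disjoint and each $V(C_{\mathtt i})$ is disjoint from $V(H)$. By maximality, the residual graph $S^\circ := S \setminus \bigl(E(H) \cup \bigcup_{\mathtt i} E(C_{\mathtt i})\bigr)$ has the property that \emph{every} cycle of $S^\circ$ meets the set $U_0 := V(H) \cup \bigcup_{\mathtt i} V(C_{\mathtt i})$. Equivalently, the subgraph of $S^\circ$ induced on $V(S) \setminus U_0$ is a forest. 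This is the key structural fact that will feed into Phase~2.

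\emph{Phase 2 (path extraction).} Given the cycles above, set $U_0' := U_0 \cup \mathcal{L}(S)$. I would iteratively build paths $P_{\mathtt 1}, P_{\mathtt 2}, \ldots$ as follows: at step $\mathtt j$, assuming $U_{\mathtt j - 1}' := U_0' \cup \bigcup_{\mathtt k < \mathtt j} V(P_{\mathtt k})$ has been defined and edges remain uncovered, pick any uncovered edge $(u, v)$ incident to $U_{\mathtt j - 1}'$ if possible (such an edge must exist as long as the residual is nonempty, by connectivity/leaf considerations), and grow a maximal walk away from $U_{\mathtt j - 1}'$ along uncovered edges, stopping the moment we re-enter $U_{\mathtt j - 1}'$. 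The central claim, which I would isolate as a sublemma, is that this walk is automatically self-avoiding and must terminate within $U_{\mathtt j - 1}'$: self-avoidance follows because any revisit would produce a cycle contained in the induced subgraph on $V(S)\setminus U_0$, contradicting the forest property from Phase~1 (paths already extracted can only shrink this residual forest); termination follows because an interior vertex $w \notin U_{\mathtt j - 1}'$ has $\deg_S(w) \geq 2$ (as $w \notin \mathcal{L}(S)$) and no edges of $H$ or cycles incident to it, so after entering along one edge at least one uncovered edge remains to leave along, unless $w$ happens to be in $U_{\mathtt j - 1}'$. This guarantees properties~(2) and (3).

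\emph{Phase 3 (verifying the count).} For property~(4), I would perform excess bookkeeping on the growing subgraph $G_{\mathtt j} := H \cup \bigcup_{\mathtt i} C_{\mathtt i} \cup \bigcup_{\mathtt k \le \mathtt j} P_{\mathtt k}$. Adding a cycle $C_{\mathtt i}$ disjoint from $V(H)$ leaves the excess unchanged (a $k$-cycle adds $k$ vertices and $k$ edges). Adding a path $P_{\mathtt j}$ with $\kappa$ interior vertices and endpoints in $U_{\mathtt j - 1}'$ changes the excess by $+1$ if both endpoints are in $G_{\mathtt j - 1}$, by $0$ if exactly one endpoint is a fresh leaf in $\mathcal{L}(S)\setminus V(G_{\mathtt j - 1})$, and by $-1$ if both endpoints are fresh leaves. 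Denoting the three path-types by $t_0, t_1, t_2$, so $t = t_0 + t_1 + t_2$ and the leaves satisfy $t_1 + 2 t_2 = |\mathcal{L}(S) \setminus V(H)|$ (each leaf outside $V(H)$ has degree $1$ in $S$ and thus appears as precisely one path endpoint), I get $\tau(S) - \tau(H) = t_0 - t_2$, whence $t = (\tau(S) - \tau(H)) + (t_1 + 2 t_2) = \tau(S) - \tau(H) + |\mathcal{L}(S) \setminus V(H)|$, exactly as desired.

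\emph{Main obstacle.} The delicate step is the sublemma in Phase~2: verifying that the greedy walk is self-avoiding and terminates inside $U_{\mathtt j - 1}'$. The forest-property established at the end of Phase~1 must be shown to persist after each path extraction; I would argue this by induction on $\mathtt j$, noting that removing the edges of a self-avoiding path from a forest still leaves a forest (and the induced subgraph on $V(S) \setminus U_{\mathtt j}'$ is a further subforest of the one on $V(S) \setminus U_{\mathtt j - 1}'$). Everything else is routine bookkeeping.
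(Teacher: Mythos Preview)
The paper does not prove this lemma; it is quoted from \cite{CDGL24+} (their Lemma~A.3) without proof, so there is nothing in the present paper to compare your argument against. That said, your two-phase greedy construction is correct and is the standard way to establish this ear-decomposition-type statement: the maximality in Phase~1 forces the induced subgraph of $S$ on $V(S)\setminus U_0$ to be a forest, and this is precisely what makes the greedy walk in Phase~2 self-avoiding and forces it to terminate back in $U'_{\mathtt j-1}$. The existence of an uncovered edge incident to $U'_{\mathtt j-1}$ whenever the residual is nonempty also follows from the same forest property (otherwise every vertex of the residual on $V(S)\setminus U'_{\mathtt j-1}$ would have degree $\ge 2$ there, producing a cycle in the forest). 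Your excess bookkeeping in Phase~3 is clean and correct, implicitly using that $S$ has no isolated vertices outside $V(H)$; this holds in every application of the lemma in the paper.
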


\begin{lemma}[\cite{CDGL24+}, Corollary~A.4]{\label{lem-revised-decomposition-H-Subset-S}}
    For $H \subset S$, we can decompose $E(S)\setminus E(H)$ into $\mathtt m$ cycles ${C}_{\mathtt 1}, \ldots, {C}_{\mathtt m}$ and $\mathtt t$ paths ${P}_{\mathtt 1}, \ldots, {P}_{\mathtt t}$ for some $\mathtt m, \mathtt t\geq 0$ such that the following hold: 
    \begin{enumerate}
        \item[(1)] ${C}_{\mathtt 1}, \ldots, {C}_{\mathtt m}$ are independent cycles in $S$.
        \item[(2)] $V(P_{\mathtt j}) \cap \big( V(H) \cup (\cup_{\mathtt i=1}^{\mathtt m} V(C_{\mathtt i})) \cup (\cup_{\mathtt k \neq \mathtt j} V(P_{\mathtt k}) ) \cup \mathcal L (S) \big) = \operatorname{EndP}(P_{\mathtt j})$ for $1 \leq \mathtt j \leq \mathtt t$.
        \item[(3)] $\mathtt t \leq 5(|\mathcal L (S) \setminus V(H) | + \tau(S)-\tau(H))$.
    \end{enumerate}
\end{lemma}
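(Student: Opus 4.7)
The plan is to refine the decomposition produced by Lemma~\ref{lem-decomposition-H-Subset-S} through a local cutting procedure, and then control the resulting number of paths by a handshake count.

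First I would apply Lemma~\ref{lem-decomposition-H-Subset-S} to obtain cycles $C_1^0,\ldots,C_{\mathtt m_0}^0$ and paths $P_1^0,\ldots,P_{\mathtt t_0}^0$ with $\mathtt t_0 = |\mathcal L(S)\setminus V(H)|+\tau(S)-\tau(H)$. For each cycle $C_i^0$ that is not already independent in $S$, I would delete one edge of $C_i^0$ and reclassify the remainder as an additional path; after this surgery the surviving cycles $C_1,\ldots,C_{\mathtt m}$ are independent in $S$, giving condition~(1) immediately, and the number of paths grows by at most $\mathtt m_0\leq \tau(S)-\tau(H)$. Writing $R$ for the union of the resulting path system, I would then mark
\begin{align*}
    B \;:=\; V(H) \cup \bigcup_{i=1}^{\mathtt m} V(C_i) \cup \mathcal L(S) \cup \{v\in V(R):\deg_R(v)\geq 3\}
\end{align*}
and split $R$ at every vertex of $B$. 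The resulting pieces are self-avoiding paths whose interiors are pairwise disjoint from $B$; maximality of the independent-cycle family rules out branch-free cycle components of $R\setminus B$, so every piece is a genuine path with two endpoints in $B$, yielding condition~(2).

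The main obstacle is verifying condition~(3). Since each edge of $R$ belongs to a unique piece and each piece has two endpoints in $B$, the handshake identity gives $2\mathtt t = \sum_{v\in B\cap V(R)} \deg_R(v)$. I would split this sum according to which of the four sets defining $B$ the vertex $v$ lies in. The branch-vertex contribution is controlled via $\sum_{v}(\deg_R(v)-2)=2\tau(R)\leq 2(\tau(S)-\tau(H))$; the leaf contribution is at most $|\mathcal L(S)\setminus V(H)|$; and the boundary-attachment contributions from $V(H)$ and from $\bigcup_i V(C_i)$ are estimated by charging each attached $R$-edge either to a leaf of $S$ outside $V(H)$ reached at the far end of its piece, or to a unit of excess of $R$ created when the edge returns to the boundary. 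Combined with the $\mathtt m_0$ extra paths from the reclassification step, this should give $\mathtt t \leq 5(|\mathcal L(S)\setminus V(H)|+\tau(S)-\tau(H))$.

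I expect the chief difficulty to lie in the charging argument above: obtaining the specific constant $5$ requires tracking case by case how many $R$-edges can attach to a given vertex of $V(H)$ or of an independent cycle $C_i$ without producing a new independent cycle of $S$ disjoint from $V(H)$ (which would have been absorbed into $\{C_i\}$ at the outset), and simultaneously ensuring that no cascade of reclassifications reintroduces a violation of condition~(1) or~(2).
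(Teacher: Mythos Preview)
The paper does not supply its own proof of this lemma: it is simply quoted from \cite{CDGL24+} (as Corollary~A.4 there), so there is no argument in the present paper to compare your proposal against.

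On the proposal itself: your high-level plan---start from the decomposition of Lemma~\ref{lem-decomposition-H-Subset-S}, downgrade each non-independent cycle to a path, then cut all paths at the branch/boundary set $B$ and count via a handshake identity---is a sound route to this kind of statement. Two remarks. First, the bound $\mathtt m_0\le \tau(S)-\tau(H)$ is not correct in general (take $H$ a single vertex and $S$ the disjoint union of $H$ with a cycle: then $\tau(S)-\tau(H)=0$ but $\mathtt m_0=1$); what you actually need, and what your argument uses, is a bound on the number of \emph{non-independent} cycles among $C_1^0,\ldots,C_{\mathtt m_0}^0$, and that is controlled because each such cycle must carry at least one endpoint of some $P_j^0$, hence there are at most $2\mathtt t_0$ of them. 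Second, you are right that the work lies in the charging step for condition~(3); your sketch of how to attribute boundary-attached $R$-edges either to leaves in $\mathcal L(S)\setminus V(H)$ or to units of excess is the natural mechanism, but as written it does not yet pin down the constant~$5$, and you should expect to spend most of the proof there.
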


\begin{lemma}{\label{lem-decomposition-H-plus-path}}
    Suppose that $H \subset \mathsf K_n$ and $L_1,\ldots,L_k$ are self-avoiding paths with $\operatorname{EndP}(L_i) \subset V(H)$ and 
    $|E(L_i)|\leq M$. Denoting $S=H \cup L_1 \cup \ldots \cup L_k$, we can decompose $E(S)\setminus E(H)$ into $\mathtt t$ self-avoiding paths ${P}_{\mathtt 1}, \ldots, {P}_{\mathtt t}$ for some $\mathtt t\geq 0$ such that the following hold:
    \begin{enumerate}
        \item[(1)] $V(P_\mathtt j )\cap\big( V(H) \cup (\cup_{\mathtt k=1}^{\mathtt j-1} V(P_{\mathtt k})) \cup \mathcal L(S) \big) = \operatorname{EndP}(P_{\mathtt j})$ for $1 \leq \mathtt j \leq \mathtt t$.
        \item[(2)] $|E(P_{\mathtt i})|\leq M$.
        \item[(3)] $\mathtt t =\tau(S)-\tau(H)$.
    \end{enumerate}
\end{lemma}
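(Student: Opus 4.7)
The plan is to construct the decomposition greedily. I would set $S_0 := H$ and, for each $j \geq 1$, extract a self-avoiding sub-path $P_j$ of some $L_i$ whose endpoints lie in $V(S_{j-1})$ and whose interior is disjoint from $V(S_{j-1})$, then update $S_j := S_{j-1} \cup P_j$. The iteration continues until $E(S_{\mathtt t}) = E(S)$.

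To extract $P_j$ when $E(S) \setminus E(S_{j-1}) \neq \emptyset$, I would pick any edge in this difference and locate it on some path $L_i = (w_0, w_1, \ldots, w_m)$, say as $(w_p, w_{p+1})$. Since $w_0, w_m \in V(H) \subset V(S_{j-1})$, the indices $q := \max\{a \leq p : w_a \in V(S_{j-1})\}$ and $r := \min\{a \geq p+1 : w_a \in V(S_{j-1})\}$ are both well-defined, and I would take $P_j := (w_q, w_{q+1}, \ldots, w_r)$. This $P_j$ is self-avoiding (as a sub-path of the self-avoiding $L_i$), has distinct endpoints $w_q \neq w_r$ in $V(S_{j-1})$, and its interior $w_{q+1}, \ldots, w_{r-1}$ lies outside $V(S_{j-1})$ by construction. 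The key observation is that every edge of $P_j$ is absent from $E(S_{j-1})$: any edge of $S_{j-1}$ has both endpoints in $V(S_{j-1})$, which fails for edges of $P_j$ because each such edge has at least one endpoint among the interior vertices. The length bound $|E(P_j)| = r - q \leq m \leq M$ gives Item~(2).

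For Item~(1), I would observe that $\mathcal L(S) \subset V(H)$: an interior vertex of any $L_i$ has degree at least $2$ in $L_i$ (hence in $S$) since $L_i$ is self-avoiding, so it cannot be a leaf of $S$; the endpoints of the $L_i$'s lie in $V(H)$ by hypothesis. Therefore $\mathcal L(S) \cup V(S_{j-1}) = V(S_{j-1})$, and the intersection condition in Item~(1) reduces to the property already established in Step~2. For Item~(3), adjoining $P_j$ introduces exactly $|E(P_j)| - 1$ new vertices (the interior) and $|E(P_j)|$ new edges, so $\tau(S_j) = \tau(S_{j-1}) + 1$; telescoping together with $V(S_{\mathtt t}) = V(S)$ (which holds once $E(S_{\mathtt t}) = E(S)$, since every vertex of $S$ outside $V(H)$ is incident to an edge of some $L_i$) will yield $\mathtt t = \tau(S) - \tau(H)$. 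Termination is immediate because $|E(S) \setminus E(S_j)|$ strictly decreases at each step. The only subtlety worth pausing on will be the verification that $P_j$ is a genuine self-avoiding path rather than a degenerate closed walk; this is forced by the pairwise distinctness of the vertices of $L_i$, which in particular gives $w_q \neq w_r$.
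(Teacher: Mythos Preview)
Your argument is correct and follows essentially the same idea as the paper's proof: both split each $L_i$ into maximal sub-paths at the vertices already covered, with the paper organizing this as an induction on $k$ (process $L_1,\ldots,L_k$ in order, applying the $k=1$ case to $H_i=H\cup L_1\cup\cdots\cup L_i$) while you run a greedy edge-by-edge extraction. One tiny remark: your justification that $E(P_j)\cap E(S_{j-1})=\emptyset$ (``each such edge has at least one endpoint among the interior vertices'') tacitly assumes $r>q+1$; when $r=q+1$ the path $P_j$ is exactly the chosen edge $(w_p,w_{p+1})\in E(S)\setminus E(S_{j-1})$, so the conclusion still holds.
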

\begin{proof}
    We begin with the case $k=1$, i.e., when $S=H\cup L_1$. Let $L_1 = (v_0,\ldots,v_{M_1})$ with $M_1 \leq M$. It is clear that $V(H) \cap V(L_1)$ splits $L_1$ into a collection of new self-avoiding paths that we will denote by $\{ P_\mathtt j :1 \leq \mathtt j \leq \mathtt t\}$. Thus we have $\cup_{1\leq \mathtt j \leq \mathtt t} E(P_\mathtt j) = E(S) \setminus E(H)$, and also Items~(1) and (2) are satisfied. Furthermore, each $\operatorname{EndP}(P_\mathtt j)$ is a subset of $V(H)$ by our construction, which yields $V(S)\setminus V(H) = \sqcup_{1 \leq \mathtt j \leq \mathtt t} \big( V(P_\mathtt j) \setminus \operatorname{EndP}(P_\mathtt j)\big)$ combined with Item~(1). Therefore we have 
    \begin{align*}
        |E(S)| &= |E(H)| + \sum_{\mathtt j = 1}^{\mathtt t} |E(P_{\mathtt j})| \,, \\
        |V(S)| &= |V(H)| + \sum_{\mathtt j =1}^{\mathtt t} | V(P_\mathtt j) \setminus \operatorname{EndP}(P_\mathtt j)| = |V(H)| - \mathtt t + \sum_{\mathtt j =1}^{\mathtt t} |E(P_\mathtt j) | \,,
    \end{align*}
    which yields Item~(3). This completes our proof for $k=1$. 
    
    For the general case, it suffices to denote $H_i = H \cup L_1 \cup \ldots \cup L_i$ (let $H_0=H$ for convenience) and apply the case $k=1$ to $H =H_i$ and $S=H_{i+1}$, and then a simple induction concludes the proof.
\end{proof}

The following lemma provides an upper bound on the number of certain subgraphs contained in a given connected graph $J$.

\begin{lemma}{\label{lem-enu-decorated-trees-in-given-graph}}
    Given $J \subset \mathsf K_n$, we have
    \begin{align}
        \#\big\{ S \subset J: S \Vdash \mathbf H \mbox{ for some } \mathbf H\in \mathcal H \big\} \leq \binom{|E(J)|}{\aleph} \cdot \aleph^{2\iota\aleph} 2^{\iota\aleph(\aleph+5(\tau(J)+1))}  \,. \label{eq-enu-decorated-trees-in-given-graph}
    \end{align}
\end{lemma}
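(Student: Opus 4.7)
The plan is to enumerate each decorated tree $S \subset J$ in three stages by sequentially choosing the underlying tree $\mathsf T(S)$, the pairing $\mathsf P(S)$, and the $\iota\aleph$ self-avoiding attached paths $\mathsf L_1(S), \ldots, \mathsf L_{\iota\aleph}(S)$. Auxiliary constraints in Definition~\ref{def-decorated-trees} (such as $V(\mathsf T(S)) \cap \mathcal J_A = \emptyset$ and the neighbors of endpoints lying in $\mathcal J_A$) can only reduce the count, so I shall ignore them when establishing the upper bound.

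For the first stage, since $\mathsf T(S)$ is a subtree of $J$ on $\aleph$ vertices, it is determined by its $(\aleph-1)$-element edge subset of $E(J)$, yielding at most $\binom{|E(J)|}{\aleph-1} \leq \binom{|E(J)|}{\aleph}$ choices (where the second inequality is just a convenient slack to match the stated bound). For the second stage, given $V(\mathsf T(S))$, the pairing $\mathsf P(S)$ is an ordered list of $2\iota\aleph$ distinct vertices from the $\aleph$-element set $V(\mathsf T(S))$, giving at most $\aleph^{2\iota\aleph}$ choices.

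The crux of the proof lies in the third stage: for each pair $(u_i, v_i) \in \mathsf P(S)$, I shall bound the number of self-avoiding paths $\mathsf L_i$ of length $\ell$ from $u_i$ to $v_i$ in $J$ by $2^{\aleph + 5(\tau(J)+1)}$. The strategy is to specify each $\mathsf L_i$ by two pieces of data: (a) the set $V(\mathsf L_i) \cap (V(\mathsf T(S)) \setminus \{u_i, v_i\})$ of interior visits to the tree's vertex set, which as a subset of an $\aleph$-element set contributes at most $2^{\aleph}$ choices; and (b) the combinatorial ``shape'' of $\mathsf L_i$ in $J$ once this intersection pattern is fixed. For (b), I shall apply Lemma~\ref{lem-revised-decomposition-H-Subset-S} with $H = \mathsf L_i$ and $S = J$: the complement $E(J) \setminus E(\mathsf L_i)$ decomposes into independent cycles together with at most $\mathtt t \leq 5(|\mathcal L(J) \setminus V(\mathsf L_i)| + \tau(J) + 1)$ self-avoiding paths. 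Each such decomposition piece contributes at most a bounded multiplicative factor to the enumeration of shapes of $\mathsf L_i$, yielding the desired $2^{5(\tau(J)+1)}$ bound per intersection pattern.

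The main obstacle will be the precise justification of the per-path count in step (b), since Lemma~\ref{lem-revised-decomposition-H-Subset-S} is an existence statement for a decomposition rather than a direct enumeration. Concretely, one can fix a spanning tree $T_J$ of $J$ and observe that any self-avoiding path in $J$ is determined by its usage pattern on the $\tau(J)+1$ non-tree edges of $E(J) \setminus E(T_J)$ together with its unique traversals within $T_J$ between consecutive non-tree edges; elementary summation of binomial coefficients then yields the $2^{5(\tau(J)+1)}$ estimate, with leaf contributions of $|\mathcal L(J) \setminus V(\mathsf L_i)|$ absorbed into the combinatorial factors via Lemma~\ref{lem-revised-decomposition-H-Subset-S}. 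Multiplying the three stages and taking the product across the $\iota\aleph$ paths gives the claimed bound $\binom{|E(J)|}{\aleph} \cdot \aleph^{2\iota\aleph} \cdot 2^{\iota\aleph(\aleph+5(\tau(J)+1))}$.
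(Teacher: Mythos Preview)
Your three-stage outline and the bounds in the first two stages are correct and match the paper. The gap is entirely in the third stage, where you enumerate the self-avoiding paths $\mathsf L_i$.

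You propose to apply Lemma~\ref{lem-revised-decomposition-H-Subset-S} with $H=\mathsf L_i$ and ambient graph $J$. But $\mathsf L_i$ is precisely the object you are trying to enumerate, so decomposing $E(J)\setminus E(\mathsf L_i)$ is circular: the decomposition (and the value of $\mathtt t$) depends on the answer. Your fallback spanning-tree sketch is also incomplete as stated: a self-avoiding path between fixed endpoints in $J$ is not determined by the \emph{set} of non-tree edges it uses (order and direction of traversal matter, and the tree segments in between are not forced by the set alone), so the claimed $2^{5(\tau(J)+1)}$ per intersection pattern does not follow without further argument. Your part~(a)---recording the vertex set $V(\mathsf L_i)\cap V(\mathsf T(S))$---also does not pin down which \emph{edges} of $\mathsf T(S)$ the path uses, so it does not combine cleanly with part~(b).

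The paper avoids the circularity by applying Lemma~\ref{lem-revised-decomposition-H-Subset-S} once with $H=\mathsf T(S)$, which is already fixed from Stage~1. This decomposes $E(J)\setminus E(\mathsf T(S))$ into independent cycles together with at most $\mathtt t\le 5(\tau(J)+1)$ self-avoiding paths $P_{\mathtt 1},\ldots,P_{\mathtt t}$ whose interior vertices lie in no other piece of the decomposition. Consequently every interior vertex of each $P_{\mathtt j}$ has degree~$2$ in $J$, so any self-avoiding path $\mathsf L_i\subset J$ must traverse each $P_{\mathtt j}$ either entirely or not at all, and cannot enter the independent cycles. Hence $E(\mathsf L_i)=E_0\cup\bigcup_{\mathtt j\in I}E(P_{\mathtt j})$ for some $E_0\subset E(\mathsf T(S))$ and $I\subset[\mathtt t]$, giving at most $2^{|E(\mathsf T(S))|}\cdot 2^{\mathtt t}\le 2^{\aleph+5(\tau(J)+1)}$ choices per path. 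The factor $2^{\aleph}$ thus arises from an \emph{edge} subset of the already-fixed tree, not from a vertex subset as in your part~(a).
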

\begin{proof}
    Note that the number of possible choices of $\mathsf T(S)$ is bounded by $\binom{|E(J)|}{\aleph}$, and given $\mathsf T(S)$ the number of possible choices of $\mathsf P(S)$ is bounded by $\aleph^{2\iota\aleph}$. Given $\mathsf T(S)$ and $\mathsf P(S)$, each $\mathsf L_{i}(S)$ can be determined as follows: using Lemma~\ref{lem-revised-decomposition-H-Subset-S}, $J \setminus \mathsf T(S)$ can be decomposed into self-avoiding paths $P_{\mathtt 1},\ldots,P_{\mathtt t}$ with $\mathtt t \leq 5(\tau(J)+1)$ such that Items (1)--(4) of Lemma~\ref{lem-revised-decomposition-H-Subset-S} hold. Since $\mathsf L_i(S)$ is also a self-avoiding path, there exist $E_0 \subset E(\mathsf T(S))$ and $I \subset [\mathtt t]$ such that
    \begin{align*}
        E(\mathsf L_i(S)) = E_0 \cup \Big( \cup_{\mathtt i \in I} E(P_{\mathtt i}) \Big) \,.
    \end{align*}
    Thus, for each $1 \leq i \leq \iota\aleph$ the number of possible choices of $\mathsf L_i(S)$ is bounded by $2^{\aleph+5(\tau(J)+1)}$, and the desired result follows from the multiplication principle.
\end{proof}

The final lemma provides an upper bound on the number of graphs $\mathbf{G}$ that have a specific value of $\tau(\mathbf{G})$ and are isomorphic to a union of decorated trees.
\begin{lemma}{\label{lem-enu-union-of-decorated-trees}}
Recall Definition~\ref{def-decorated-trees-NBP}. For $r \leq 4$, we have 
\begin{align}\label{eq-lem-union-dt}
    \#\Bigg\{ 
    \begin{split}
        & \mbox{unlabeled graph } \mathbf G: \exists S_i \subset \mathsf K_n , S_i \vdash \mathbf H_i \mbox{ for some } \mathbf H_i \in \mathcal H,  \\
        & 1\leq i \leq r \mbox{ such that } \cup_{1\leq i\leq r} \widetilde{S}_i \cong \mathbf{G} , \tau (\mathbf G ) = j 
    \end{split}
    \Bigg\} \leq n^{o(1)} (\ell\aleph)^{5j}\,.
\end{align}
\end{lemma}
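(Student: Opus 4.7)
The plan is to bound the number of unlabeled graphs $\mathbf{G}$ with $\tau(\mathbf{G})=j$ by a kernel-decomposition argument, leveraging the key structural observation that pendant subgraphs of $\mathbf{G}$ (relative to its $2$-core) have total size at most $r\aleph \leq 4\aleph$, thanks to the bi-anchored nature of the non-backtracking paths in Definition~\ref{def-decorated-trees-NBP}.

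First, I would record basic structural bounds. Each $\widetilde S_i$ is connected with $|V(\widetilde S_i)| \leq \aleph + (\ell-1)\iota\aleph$ and $|E(\widetilde S_i)| \leq \aleph - 1 + \ell\iota\aleph$, so $|V(\mathbf{G})|, |E(\mathbf{G})| \leq 4\ell\aleph$ and $\mathbf{G}$ has at most $r \leq 4$ connected components. Crucially, every non-backtracking path $\mathsf M_k(S_i)$ has both endpoints in $\mathsf T(S_i)$ and thus together with the tree-path joining those endpoints creates a cycle; each interior vertex $w$ of $\mathsf M_k(S_i)$ has degree $2$ in $\widetilde S_i$ with both neighbors on that cycle, forcing $w$ into the $2$-core of $\widetilde S_i$, and likewise the NB-endpoints and the connecting tree-path lie in the $2$-core. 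Hence pendant vertices of $\widetilde S_i$ are confined to $\mathsf T(S_i)$ (outside the tree-paths joining NB-endpoints) and number at most $\aleph$. Since the $2$-core is monotone under edge addition, the pendant vertices of $\mathbf{G}$ inside each $V(\widetilde S_i)$ form a subset of those of $\widetilde S_i$, giving at most $r\aleph \leq 4\aleph$ pendant vertices in $\mathbf{G}$ in total.

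Next, I would perform the kernel reduction. Iteratively removing degree-$1$ vertices from $\mathbf{G}$ preserves the excess of each non-tree component and deletes tree components entirely, producing the $2$-core $\mathbf{G}^{(2)}$ with $\tau(\mathbf{G}^{(2)})=j+c_T$, where $c_T \leq r$ counts tree components of $\mathbf{G}$. Suppressing each degree-$2$ vertex of $\mathbf{G}^{(2)}$ yields a multigraph $K(\mathbf{G})$ of minimum degree $\geq 3$ (with ``cycle-only'' components represented by a self-loop). A standard pigeonhole bound gives $|V(K(\mathbf{G}))|, |E(K(\mathbf{G}))| \leq 3(j+4)$. I then enumerate each $\mathbf{G}$ via the tuple (kernel, kernel-edge subdivision lengths, pendant forest, pendant attachment). (i) The number of unlabeled multigraphs $K(\mathbf{G})$ with minimum degree $\geq 3$ and $O(j)$ edges is $(Cj)^{O(j)} = n^{o(1)}$, since $j=O(\iota\aleph)=o(\log\log\log n)$. (ii) Recovering $\mathbf{G}^{(2)}$ from $K(\mathbf{G})$ amounts to choosing a subdivision length in $[1, 4\ell\aleph]$ for each of the $\leq 3(j+4)$ kernel edges, contributing at most $(4\ell\aleph)^{3(j+4)} \leq n^{o(1)}(\ell\aleph)^{3j}$ for $j\geq 0$. (iii) The pendant forest has $\leq 4\aleph$ vertices, so by Lemma~\ref{lem-num-trees} the number of unlabeled rooted pendant forests is $\leq \alpha^{-4\aleph} \cdot n^{o(1)}=n^{o(1)}$, and attaching them to $V(\mathbf{G}^{(2)})$ has at most $(4\ell\aleph)^{4\aleph}=n^{o(1)}$ configurations (using $\aleph\log\log n = o(\log n)$). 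The product is $n^{o(1)}(\ell\aleph)^{3j}\leq n^{o(1)}(\ell\aleph)^{5j}$ with slack when $j\geq 0$; when $j<0$, $\mathbf{G}$ is a forest of total size $\leq 4\aleph$ and the count is trivially $n^{o(1)}$, so the stated bound (absorbed into $n^{o(1)}$) still holds.

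The main obstacle is the pendant-size control: without the structural observation that NB-path interiors are pinned into the $2$-core, a naive Otter-type enumeration of pendant subtrees would allow sizes $\Theta(\ell\aleph)$ and produce an unacceptable factor $\alpha^{-\Theta(\ell\aleph)}=n^{\Omega(\aleph)}$, which would blow up past $n^{o(1)}$. It is precisely the bi-anchored property of NB paths in Definition~\ref{def-decorated-trees-NBP}, together with the monotonicity of $2$-cores under edge addition used to transfer the observation from each $\widetilde S_i$ to $\mathbf{G}$, that keeps the pendant enumeration subexponential.
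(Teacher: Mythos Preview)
Your approach is correct and genuinely different from the paper's. The paper does not pass through the $2$-core/kernel at all: it takes $\cup_{i}\mathsf T(S_i)$ (at most $4\aleph$ vertices) as the anchor, invokes Lemma~\ref{lem-revised-decomposition-H-Subset-S} to decompose $(\cup_i\widetilde S_i)\setminus(\cup_i\mathsf T(S_i))$ into at most $5(j+4)$ self-avoiding paths of length $\le\ell$, and then enumerates each path by its two endpoints (in $V(\mathbf G)$, of size $O(\ell\aleph)$) and its length. Your route replaces the tree-union anchor by the pendant part and replaces the path-decomposition lemma by the kernel construction; the two are dual in spirit, and your key structural input---that every vertex outside $\cup_i V(\mathsf T(S_i))$ lies in the $2$-core, hence the pendant part has at most $4\aleph$ vertices---is precisely equivalent to the paper's observation that the anchor is small. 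The paper's argument is a bit shorter because it uses the definitional decomposition $S_i=\mathsf T(S_i)\cup(\cup_k\mathsf M_k(S_i))$ directly rather than re-deriving a small skeleton via $2$-core monotonicity; your version has the advantage of being a completely generic graph-theoretic reduction once the pendant bound is in hand.

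Two small imprecisions are worth flagging, though neither is fatal. First, ``each interior vertex $w$ of $\mathsf M_k(S_i)$ has degree $2$ in $\widetilde S_i$'' should read ``degree $\ge 2$'' (the walk may revisit $w$, or $w$ may lie in $\mathsf T(S_i)$); the clean way to force the conclusion is to note that $H:=\bigcup_k\bigl(\widetilde{\mathsf M_k(S_i)}\cup P_{\mathsf T(S_i)}(u_k,v_k)\bigr)$ has minimum degree $\ge 2$ (using $u_k\neq v_k$, non-backtracking, and that the first/last step of each $\mathsf M_k$ leaves the tree), hence $H\subset\text{$2$-core}(\widetilde S_i)$. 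Second, your assertion ``$(Cj)^{O(j)}=n^{o(1)}$ since $j=O(\iota\aleph)$'' is not justified by the hypotheses of the lemma: $\tau(\mathbf G)$ is not a priori $O(\iota\aleph)$ when the non-backtracking paths overlap heavily. This is harmless, however, because $j\le |E(\mathbf G)|\le 4\ell\aleph$ gives $(Cj)^{O(j)}\le (C'\ell\aleph)^{O(j)}$, which is absorbed into the target $(\ell\aleph)^{O(j)}$ at the cost of a larger (but still $O(1)$) constant in the exponent---and the paper's own proof is equally loose about the precise constant $5$.
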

\begin{proof}
    First, the enumeration of isomorphic classes for $\cup_{1\leq i \leq r} \mathsf T(S_i)$ is bounded by $\tbinom{4\aleph}{2}^{4\aleph-4} = n^{o(1)}$. Then, given $\cup_{1\leq i \leq r} \mathsf T(S_i)$, we can decompose $\big(\cup_{1\leq i \leq r} \widetilde{S}_i \big) \setminus \big(\cup_{1 \leq i \leq r}\mathsf T(S_i)\big)$ into $\mathtt t = 5\tau \big(\cup_{1\leq i \leq r} \widetilde{S}_i \big) - 5\tau \big(\cup_{1\leq i \leq r} \mathsf T( S_i) \big) \leq 5(j+4)$ self-avoiding paths (denoted by $\{ P_{\mathtt i}\}_{1 \leq \mathtt i \leq \mathtt t}$) by Lemma~\ref{lem-revised-decomposition-H-Subset-S} such that for $1 \leq \mathtt i \leq \mathtt t$ we have $|E(P_{\mathtt i})| \leq \ell$ and
    \begin{align*}
        V(P_\mathtt i)\cap\Big( V\big(\cup_{1\leq i \leq r} \mathsf T(S_i)\big) \cup \big(\cup_{\mathtt k=1}^{\mathtt i-1} V(P_{\mathtt k})\big) \cup \mathcal L\big(\cup_{1\leq i \leq r} S_i\big) \Big) = \operatorname{EndP}(P_{\mathtt i})\,.
    \end{align*}
    Therefore, we only need to bound the enumeration of $\{ P_{\mathtt i}\}_{1 \leq \mathtt i \leq \mathtt t}$ up to isomorphism. Since it suffices to enumerate the endpoints and length of $P_\mathtt i$ for $\mathtt i = 1,\ldots ,\mathtt t$ in order to enumerate $\{ P_{\mathtt i}\}_{1 \leq \mathtt i \leq \mathtt t}$ up to isomorphism, the enumeration of $\{ P_{\mathtt i}\}_{1 \leq \mathtt i \leq \mathtt t}$ is bounded by $((\aleph +\iota\aleph\ell)^2 \cdot \ell )^{5(j+4)} = n^{o(1)}(\ell\aleph)^{5j}$ up to isomorphism. Combining the enumerations above yields our desired result. 
\end{proof}

\section{Proofs of Theorems~\ref{thm-num-desired-tree} and \ref{thm-desired-vertex-sets}}

In this section, we complete the postponed proofs of Theorems~\ref{thm-num-desired-tree} and \ref{thm-desired-vertex-sets}.

\subsection{Proof of Theorem~\ref{thm-num-desired-tree}}{\label{subsec:proof-tree-enu-thm}}

This subsection is dedicated to the proof of Theorem~\ref{thm-num-desired-tree}. We denote $\mathtt L=\log^2(\iota^{-1})$ in this subsection. In addition, denote $\eta_N$ to be the uniform measure over $\mathcal R_{N}$. Recall that for $N \geq 1$ we denote $\mathcal R_N$ as the set of all unlabeled rooted trees with size $N$. Let $ \mathcal R^{(1,2,4)}_N $ be the collection of trees in $\mathcal R_N$ that satisfy Items~(1), (2) and (4) in Definition~\ref{def-tilde-T-K}. Let $\eta^*_N$ be a uniform measure on the set $\mathcal R_N^{(1,2,4)}$. We first show that it suffices to prove
\begin{align}
    \big| \mathcal R^{(1,2,4)}_{\lfloor (\aleph-1)/2 \rfloor} \big| \geq (\alpha+o(1))^{-\aleph/2} \cdot \exp\big\{ -5 e^{-\frac{1}{10}\log^2(\iota^{-1})} \aleph/2 \big\} \,, \label{eq-thm-2.7-relax} 
\end{align}
and 
\begin{align}
    \Eb_{\mathbf T \sim \eta^*_{\lfloor (\aleph-1)/2 \rfloor}} \Big[ \big| \mathfrak{Ch}^{(0,0;*)}_\mathtt m(\mathbf T) \big| - \mathfrak c^{-9} \sum_{0 \leq a,b \leq 1} \big| \mathfrak{Ch}^{(a,b)}_\mathtt m(\mathbf T) \big| \Big] \geq \frac{(\tfrac{\aleph-1}{2}-\mathtt L^2-\mathtt m-2)_+}{\mathtt L^4 \mathfrak c^{9+\mathtt m}} \,.\label{eq-thm-2.7-key-expectation} 
\end{align}
Indeed, if \eqref{eq-thm-2.7-relax} and \eqref{eq-thm-2.7-key-expectation} hold, we claim first that we can select a subset $\mathcal R^{\text{pre}}_{\lfloor (\aleph-1)/2 \rfloor}$ such that $|\mathcal R^{\text{pre}}_{\lfloor (\aleph-1)/2 \rfloor}| \geq \frac{1}{\log^9 (\iota^{-1})} |\mathcal R^{(1,2,4)}_{\lfloor (\aleph-1)/2 \rfloor}|$ and for all $\mathbf T \in \mathcal R^{\text{pre}}_{\lfloor (\aleph-1)/2 \rfloor}$
\begin{align}\label{eq-1,2,4-partly-satisfy-3,6}
    \tfrac{1}{\mathtt m} \big| E(\mathbf T_\iota) \big| \geq \big| \mathfrak{Ch}^{(0,0;*)}_\mathtt m(\mathbf T) \big| - \mathfrak c^{-9} \sum_{0\leq a,b\leq 1} \big| \mathfrak{Ch}^{(a,b)}_\mathtt m(\mathbf T) \big| \geq \tfrac{10\aleph}{\log^9(\iota^{-1})} \,.
\end{align}
We now prove \eqref{eq-1,2,4-partly-satisfy-3,6}. Denote  
\begin{equation}\label{eq-def-mathbf-F-T}
    \mathbf F(\mathbf T)=\big|\mathfrak{Ch}^{(0,0;*)}_\mathtt m(\mathbf T) \big| - \mathfrak c^{-9} \sum_{0 \leq a,b \leq 1} \big|\mathfrak{Ch}^{(a,b)}_\mathtt m(\mathbf T)\big| \,,
\end{equation}
for any unlabeled rooted tree $\mathbf T$. By \eqref{eq-thm-2.7-key-expectation} we have $ \Eb_{\mathbf{T}\sim\eta^*_{\lfloor (\aleph-1)/2 \rfloor}} \big[\mathbf F(\mathbf T) \big] \geq \tfrac{100\aleph}{\log^{9}(\iota^{-1})}$. Therefore, we have
\begin{align*}
    \tfrac{100\aleph}{\log^9 (\iota^{-1})} \leq \aleph\cdot \eta^*_{\lfloor (\aleph-1)/2 \rfloor} \Big( \mathbf F(\mathbf T) \geq \tfrac{10\aleph}{\log^9(\iota^{-1})} \Big) + \tfrac{10\aleph}{\log^9 (\iota^{-1})} \cdot \Big( 1-\eta^*_{\lfloor (\aleph-1)/2 \rfloor} \Big(\mathbf F(\mathbf T) \geq \tfrac{10\aleph}{\log^9(\iota^{-1})} \Big) \Big)\,,
\end{align*}
which yields 
\begin{align*}
    \eta^*_{\lfloor (\aleph-1)/2 \rfloor} \Big( \mathbf F(\mathbf T) \geq \tfrac{10\aleph}{\log^9(\iota^{-1})} \Big) \geq \tfrac{2}{\log^9 (\iota^{-1})}\,.
\end{align*}
 Next, recalling the definition of $\mathfrak{Ch}_\mathtt m^{(0,0,*)}$ in \eqref{def-chained-noise-pair-in-T_iota}, we immediately have $\tfrac{1}{\mathtt m}|E(\mathbf T_\iota)|\geq |\mathfrak{Ch}_\mathtt m^{(0,0,*)}(\mathbf T)|$, which shows that any unlabeled rooted tree $\mathbf T$ satisfies the first inequality of \eqref{eq-1,2,4-partly-satisfy-3,6}, leading to the existence of $\mathcal R^{\text{pre}}_{\lfloor (\aleph-1)/2 \rfloor}$.

Next, we consider all unlabeled rooted trees $\mathbf T$ whose root $\mathfrak R(\mathbf T)$ has exactly two children subtrees $\mathbf T \hookrightarrow \mathbf T_1,\mathbf T_2$ with $\mathbf T_1 \in \mathcal R^{\text{pre}}_{\lfloor (\aleph-1)/2 \rfloor}$ and $\mathbf T_2 \in \mathcal R^{\text{pre}}_{\lceil (\aleph-1)/2 \rceil}$ (which can be constructed in a similar manner), and denote the set of such trees satisfying Item (5) in Definition~\ref{def-tilde-T-K} as $\widetilde{\mathcal{T}}^{\text{pre}}_\aleph$. Then it follows immediately from \eqref{eq-1,2,4-partly-satisfy-3,6} that $|E(\mathbf T_\iota )| \geq \tfrac{\aleph}{\log^9(\iota^{-1})}$ for $\mathbf T \in \widetilde{\mathcal{T}}^{\text{pre}}_\aleph$, which yields $\mathbf T$ satisfies Item (3) in Definition~\ref{def-tilde-T-K}. Also, for all $\mathbf T \in \widetilde{\mathcal T}_{\aleph}^{\text{pre}}$ we have
\begin{align}
\mathbf F(\mathbf T) &\geq 2\cdot \big( \tfrac{\aleph}{\log^9(\iota^{-1})} - \mathfrak c^{-9}\big)\geq \tfrac{\aleph}{\log^9(\iota^{-1})}\label{eq-item-6-is-satisfied}
\end{align}
by definitions of $\mathfrak{Ch}^*_\mathtt m$ and $\mathfrak{Ch}_\mathtt m$, which yields that $\mathbf T$ satisfies Item (6) in Definition~\ref{def-tilde-T-K}. Using Lemma~\ref{lem-remove-sim-relation} we see that
\begin{align*}
    \big| \widetilde{\mathcal T}_{\aleph}^{\text{pre}} \big| &\geq \big| \mathcal R^{\text{pre}}_{\lfloor (\aleph-1)/2 \rfloor} \big|^2 - 2 \big| \mathcal R^{\text{pre}}_{\lfloor (\aleph-1)/2 \rfloor} \big| \cdot \aleph e^{\frac{6\log\log(\iota^{-1})}{\log(\iota^{-1})}\aleph} \\
    &= (\alpha+o(1))^{-\aleph} \cdot \exp\big\{ -10 e^{-\frac{1}{10}\log^2(\iota^{-1})} \aleph \big\} \,.
\end{align*}
Last, for $1\leq i\leq \aleph$ and $1 \leq j \leq \aleph^2$ we denote
\begin{align}
    \widetilde{\mathcal T}_{\aleph;i,j}^{\text{pre}} = \Big\{ \mathbf T \in  \widetilde{\mathcal T}_{\aleph}^{\text{pre}}: |\mathfrak{Ch}^{(0,0;*)}_\mathtt m(\mathbf T)| = i,  \operatorname{Aut}(\mathbf T) \in [2^{j-1},2^j] \Big\}\,,
\end{align}
then we have $\bigcup_{1\leq i \leq \aleph; 1\leq j \leq \aleph^2} \widetilde{\mathcal T}_{\aleph;i,j}^{\text{pre}} = \widetilde{\mathcal T}_{\aleph}^{\text{pre}}$ by $2^{\aleph^2 -1} \geq \aleph! \geq \operatorname{Aut}(\mathbf T)$ for any tree $\mathbf T$ with $\aleph$ vertices. Therefore, there exists a choice of $(i,j)=(i_*, j_*)$ such that 
\begin{equation}
    \big| \widetilde{\mathcal T}_{\aleph;i_*,j_*}^{\text{pre}} \big| \geq \tfrac{1}{\aleph^3} \big| \widetilde{\mathcal T}_{\aleph}^{\text{pre}} \big| =(\alpha+o(1))^{-\aleph} \cdot \exp\big\{ -10 e^{-\frac{1}{10}\log^2(\iota^{-1})} \aleph \big\} \,.
\end{equation}
By definition of $\widetilde{\mathcal T}_{\aleph;i
_*,j_*}^{\text{pre}}$ and \eqref{eq-item-6-is-satisfied} we have $\widetilde{\mathcal T}_{\aleph} = \widetilde{\mathcal T}_{\aleph;i_*,j_*}^{\text{pre}}$ satisfies Items (1)--(7) of Definition~\ref{def-tilde-T-K} if we select 
\begin{align}{\label{eq-def-mathfrak-C-Au}}
    \mathfrak C_\aleph = i_*\,,\quad\mathfrak{Au}_\aleph = 2^{j_*-1}\,,
\end{align}
in Definition~\ref{def-tilde-T-K}, which implies the conclusion of Theorem~\ref{thm-num-desired-tree}. The remainder of this subsection is devoted to the proof of \eqref{eq-thm-2.7-relax} and \eqref{eq-thm-2.7-key-expectation}.

We prove \eqref{eq-thm-2.7-relax} first. To simplify the proof, we work on a strengthened version of \eqref{eq-thm-2.7-relax} instead of \eqref{eq-thm-2.7-relax} itself. Define a strengthened version of Item (1) of Definition~\ref{def-tilde-T-K} as follows:
\begin{itemize}
    \item[(1')] For every $u \in V(\mathbf T)$, $\mathsf{Branch}_{\mathbf T}(u) \leq \log^2(\iota^{-1})$.
\end{itemize}
We refer to this condition by ``Item (1') of Definition~\ref{def-tilde-T-K}" for convenience. Also, let $\mathcal R_N^{(1',2,4)}$ be the collection of trees in $\mathcal R_N$ satisfying Items (1'), (2), (4) of Definition~\ref{def-tilde-T-K}. Thus to show \eqref{eq-thm-2.7-relax}, it suffices to prove
\begin{equation}{\label{eq-thm-2.7-relax-strengthened}}
    \big| \mathcal R^{(1',2,4)}_{\lfloor (\aleph-1)/2 \rfloor} \big| \geq (\alpha+o(1))^{-\aleph/2} \cdot \exp\big\{ -5 e^{-\frac{1}{10}\mathtt L} \aleph/2 \big\} \,.
\end{equation} 
Next we prove \eqref{eq-thm-2.7-relax-strengthened} by the following several lemmas.
\begin{lemma}{\label{lem-enu-trees-condition-1,2}}
    Denoting $\mathcal R^{(1',2)}_N$ the set of $\mathbf T \subset \mathcal R_N$ satisfying Items (1') and (2) in Definition~\ref{def-tilde-T-K}, we have
    \begin{align*}
        \big| \mathcal R^{(1',2)}_N \big| \geq \big| \mathcal R_N \big| \cdot \exp\big( -e^{-\frac{1}{8}\mathtt L} N \big) \,.
    \end{align*}
\end{lemma}
\begin{proof}
    Denote $\mathcal R^{(1')}_N$ as the set of $ \mathbf T \subset \mathcal R_N $ such that each vertex in $V(\mathbf T)$ has at most $\mathtt L$ children (i.e, satisfying Item~(1') in Definition~\ref{def-tilde-T-K}). It has been shown in \cite[Theorem~1]{GS94} that
    \begin{align*}
        \big| \mathcal R^{(1')}_N \big| \geq \big| \mathcal R_N \big| \cdot \exp\big( - e^{ -\frac{1}{4}\mathtt L } N \big) \,.
    \end{align*}
    Thus, it suffices to show that
    \begin{align}
         \big| \mathcal R^{(1',2)}_N \big| \geq \big| \mathcal R^{(1')}_N \big| \cdot \exp\big( - e^{ -\frac{1}{5}\mathtt L } N \big) \,. \label{eq-compare-num-trees-with-condition-2-and-1,2}
    \end{align}
    Denote $\Gamma_N = \big| \mathcal R^{(1')}_N \big|$ and $\widetilde{\Gamma}_N = \big| \mathcal R^{(1',2)}_N \big|$. Recall $\Gamma_N=|\mathcal{R}_N|$. It was known in \cite[Page~585]{Otter48} that
    \begin{align}\label{eq-induction-formula-Gamma_N}
        \Gamma_{N+1} = \sum_{ \substack{ \mu_1 + \ldots + \mu_N \leq \mathtt L \\ \mu_1+2\mu_2+\ldots+N\mu_N=N } } \prod_{i=1}^{N} \binom{ \Gamma_i+\mu_i-1 }{ \mu_i } \,,
    \end{align}
    where for each $i$, $\mu_i$ counts the number of children trees with $i$ vertices and the combinatorial number counts the ways to choose $\mu_i$ trees with $i$ vertices (with repetition). In addition, by definition of $\mathcal R^{(1')}_N$ and $\mathcal R^{(1',2)}_N$ it can be checked that $\widetilde{\Gamma}_k=\Gamma_k$ for $k \leq \mathtt L$. We next derive an induction formula for $\widetilde{\Gamma}_{N+1}$, assuming that 
    \begin{align}
        \widetilde{\Gamma}_k \geq \Gamma_k \cdot \exp\big( - e^{ -\frac{1}{5} \mathtt L } k \big) \mbox{ for } k \leq N \,. \label{eq-induction-hypothesis-trees}
    \end{align}
    Clearly, suppose that $\mathbf T \in \mathcal{R}^{(1',2)}_{N+1}$ has children trees $\mathbf T_1,\ldots,\mathbf T_m \in \mathcal{R}^{(1',2)}_{N}$ such that $\#\{j:|V(\mathbf T_j)|=i\} = \mu_i$, we see that $\mathbf T \in \mathcal R_{N+1}^{(1',2)}$ implies that each $\mathbf T_i$ is not an arm-path with length $\mathtt L-1$. Thus, we see that
    \begin{align*}
        \widetilde{\Gamma}_{N+1} &= \sum_{ \substack{ \mu_1+\ldots+\mu_N \leq \mathtt L \\ \mu_1+2\mu_2+\ldots+N\mu_N=N } } \Bigg( \binom{ \widetilde \Gamma_{\mathtt L-1}+\mu_{\mathtt L-1}-2 }{ \mu_{\mathtt L-1} } \cdot \prod_{i \neq \mathtt L-1} \binom{ \widetilde{\Gamma}_i+\mu_i-1 }{ \mu_i } \Bigg) \\
        &= \sum_{ \substack{ \mu_1+\ldots+\mu_N \leq \mathtt L \\ \mu_1+2\mu_2+\ldots+N\mu_N=N } } \Big( 1-\frac{\mu_{\mathtt L-1}}{\widetilde{\Gamma}_{\mathtt L-1}+\mu_{\mathtt L-1}-1} \Big) \cdot \prod_{i=1}^{N} \binom{ \widetilde{\Gamma}_i+\mu_i-1 }{ \mu_i } \\
        &\geq \Big( 1-\mathtt L e^{-\mathtt L} \Big) \cdot \sum_{ \substack{ \mu_1+\ldots+\mu_N \leq \mathtt L \\ \mu_1+2\mu_2+\ldots+N\mu_N=N } }  \prod_{i=1}^{N} \binom{ \widetilde{\Gamma}_i+\mu_i-1 }{ \mu_i } \,,
    \end{align*}
    where the inequality follows from $\mu_{\mathtt L-1} \leq \mathtt L$ (since the degrees are bounded by $\mathtt L$) and \eqref{eq-choice-iota} (recalling \eqref{eq-choice-N-2}, this implies that $\mathtt L \geq M$ and thus $\widetilde \Gamma_{\mathtt L-1}=\Gamma_{\mathtt L-1} \geq \alpha^{-\mathtt L}/100 \mathtt L^2$ from \eqref{eq-choice-N-2}). Recalling \eqref{eq-induction-hypothesis-trees}, we have
    \begin{align*}
        \binom{ \widetilde{\Gamma}_k+\mu_k-1 }{ \mu_k } \geq \binom{ \Gamma_k+\mu_k-1 }{ \mu_k } \cdot \exp\big( - e^{ -\frac{1}{5} \mathtt L } k\mu_k \big) \,.
    \end{align*}
    Thus, we have
    \begin{align*}
        \widetilde{\Gamma}_{N+1} &\geq \Big( 1-\mathtt L e^{-\mathtt L} \Big) \cdot \sum_{ \substack{ \mu_1+\ldots+\mu_N \leq \mathtt L \\ \mu_1+2\mu_2+\ldots+N\mu_N=N } } \prod_{i=1}^{N} \exp\big( - e^{ -\frac{1}{5} \mathtt L } i\mu_i \big) \binom{ \Gamma_i+\mu_i-1 }{ \mu_i } \\
        &= \Big( 1-\mathtt L e^{-\mathtt L} \Big) \cdot \exp\big( -N e^{ -\frac{1}{5} \mathtt L } \big) \cdot \sum_{ \substack{ \mu_1+\ldots+\mu_N \leq \mathtt L \\ \mu_1+2\mu_2+\ldots+N\mu_N=N } } \prod_{i=1}^{N} \binom{ \Gamma_i+\mu_i-1 }{ \mu_i } \,.
    \end{align*}
    Combined with \eqref{eq-induction-formula-Gamma_N}, this completes the induction proof for \eqref{eq-induction-hypothesis-trees} and thus leads to \eqref{eq-compare-num-trees-with-condition-2-and-1,2}.
\end{proof}


\begin{lemma}{\label{lem-eum-trees-condition-1,2,4}}
    Denote $\mathcal R^{(1',2,4)}_N$ the set of $\mathbf T \subset \mathcal R_N$ satisfying Items~(1'), (2), (4) in Definition~\ref{def-tilde-T-K}. Then 
    \begin{align*}
        \big| \mathcal R^{(1',2,4)}_N \big| \geq \big| \mathcal R^{(1',2)}_N \big| \cdot \exp\big( -e^{-\frac{1}{10}\mathtt L} N \big) \,.
    \end{align*}
\end{lemma}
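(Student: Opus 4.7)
Following the template of Lemma~\ref{lem-enu-trees-condition-1,2}, I would set $\mathtt L := \log^2(\iota^{-1})$ and prove the lemma by induction on $N$. Let $\widehat{\Gamma}_N = |\mathcal R^{(1,2,4)}_N|$ and $\widetilde{\Gamma}_N = |\mathcal R^{(1,2)}_N|$. For $N \leq \mathtt L$ the claim is trivial: any proper descendant tree has size $< \mathtt L$, so Item~(4) imposes no constraint and $\widehat{\Gamma}_N = \widetilde{\Gamma}_N$. For the inductive step, assume $\widehat{\Gamma}_k \geq \widetilde{\Gamma}_k \cdot \exp\bigl(-e^{-\mathtt L/10}\, k\bigr)$ for all $k \leq N$. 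Decomposing any tree in $\mathcal R^{(1,2,4)}_{N+1}$ at its root according to the multiplicity vector $(\mu_i)$ of children sizes yields, in analogy with~\eqref{eq-induction-formula-Gamma_N},
\[
\widehat{\Gamma}_{N+1} \;=\; \sum_{\substack{\mu_1 + \ldots + \mu_N \leq \mathtt L \\ \mu_1 + 2\mu_2 + \ldots + N\mu_N = N}} G(\mu_1, \ldots, \mu_N),
\]
where $G(\mu_i)$ counts multisets of children drawn from $\bigcup_i \mathcal R^{(1,2,4)}_i$ with prescribed multiplicities such that no root-generated arm-path has length $\geq \mathtt L$ (handled exactly as in Lemma~\ref{lem-enu-trees-condition-1,2}) and such that no two distinct isomorphism classes among children of size $\geq \mathtt L$ are $\sim$-related. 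Thus the new task is to lower bound $G(\mu_i)$ relative to $\prod_i \binom{\widehat{\Gamma}_i + \mu_i - 1}{\mu_i}$.

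To control the loss from Item~(4), I would apply a union bound over the at most $\mathtt L^2$ ordered pairs $(j, j')$ of child slots with sizes $i_j, i_{j'} \geq \mathtt L$. Fixing $\mathbf T_j$ first and invoking Lemma~\ref{lem-remove-sim-relation} bounds the number of admissible $\mathbf T_{j'} \sim \mathbf T_j$ by $\exp\bigl( 6 \max(i_j, i_{j'}) \log\log(\iota^{-1})/\log(\iota^{-1}) \bigr)$; by symmetry the roles can be reversed. Consequently, the ratio of bad ordered tuples to all ordered tuples for this pair is at most $\exp\bigl( 6 \max(i_j,i_{j'}) \log\log(\iota^{-1})/\log(\iota^{-1}) \bigr) / \max(\widehat{\Gamma}_{i_j}, \widehat{\Gamma}_{i_{j'}})$. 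The inductive hypothesis, combined with Lemma~\ref{lem-enu-trees-condition-1,2} and~\eqref{eq-choice-N-2}, gives $\widehat{\Gamma}_i \geq \alpha^{-i} / \mathrm{poly}(i)$ for $i \geq \mathtt L$, so the per-pair ratio is bounded by $\mathrm{poly}(\mathtt L) \cdot \exp\bigl\{ \max(i_j, i_{j'})\bigl(\log\alpha + 6 \log\log(\iota^{-1})/\log(\iota^{-1}) \bigr) \bigr\}$. Since $\log\log(\iota^{-1})/\log(\iota^{-1}) = o(1)$ by~\eqref{eq-choice-iota} and $\log\alpha < 0$, the exponent is at most $-\tfrac{1}{2}|\log\alpha| \cdot \max(i_j, i_{j'}) \leq -\tfrac{1}{2}|\log\alpha|\, \mathtt L$, and summing over $\leq \mathtt L^2$ pairs gives a total bad fraction of at most $\mathtt L^2 \exp\bigl(-\tfrac{1}{2}|\log\alpha|\, \mathtt L\bigr) \leq e^{-\mathtt L/4}$ for $\iota$ sufficiently small.

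Finally, combining these estimates with the inductive comparison $\binom{\widehat{\Gamma}_i + \mu_i - 1}{\mu_i} \geq \binom{\widetilde{\Gamma}_i + \mu_i - 1}{\mu_i} \cdot (\widehat{\Gamma}_i / \widetilde{\Gamma}_i)^{\mu_i} \geq \binom{\widetilde{\Gamma}_i + \mu_i - 1}{\mu_i} \cdot \exp\bigl(-e^{-\mathtt L/10}\, i\, \mu_i\bigr)$ and using $\sum_i i \mu_i = N$, I would obtain
\[
\widehat{\Gamma}_{N+1} \;\geq\; \bigl(1 - e^{-\mathtt L/4}\bigr) \cdot e^{-e^{-\mathtt L/10}\, N} \cdot \widetilde{\Gamma}_{N+1} \;\geq\; e^{-e^{-\mathtt L/10}(N+1)} \cdot \widetilde{\Gamma}_{N+1},
\]
where the last step uses the Taylor estimate $1 - e^{-\mathtt L/4} \geq e^{-e^{-\mathtt L/10}}$, valid for $\mathtt L$ large since $e^{-\mathtt L/4} \leq e^{-\mathtt L/10} - \tfrac{1}{2}e^{-\mathtt L/5}$. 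The main technical obstacle is precisely the bad-fraction estimate: it requires the enumeration factor from Lemma~\ref{lem-remove-sim-relation} to be comfortably dominated by the $\alpha^{-i}$ growth of $\widehat{\Gamma}_i$, which is a genuine use of the fact that $\alpha \approx 0.338$ is well below $1$ together with the choice~\eqref{eq-choice-iota} making the correction $\log\log(\iota^{-1})/\log(\iota^{-1})$ negligibly small.
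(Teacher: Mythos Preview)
Your approach is essentially the same as the paper's: both argue by induction on $N$, decompose a tree in $\mathcal R^{(1,2,4)}_{N+1}$ at its root, use the inductive hypothesis on the children, and control the loss from Item~(4) at the root by a union bound over pairs of large children combined with Lemma~\ref{lem-remove-sim-relation}. The paper phrases the induction in terms of the ratio of uniform measures $\mu^*_N/\mu_N$ and stratifies by the ordered list of child sizes $(a,b;\gamma_1,\ldots,\gamma_{a+b})$, whereas you work directly with counts and the multiplicity vector $(\mu_i)$; these are equivalent bookkeepings.

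One point you gloss over that the paper handles explicitly: your bad-fraction estimate is computed for \emph{ordered} tuples of large children, but the quantity $\prod_{i\geq \mathtt L}\binom{\widehat\Gamma_i+\mu_i-1}{\mu_i}$ you compare against counts \emph{multisets}. Since the $\not\sim$ constraint forces all large children to be distinct, the number of good multisets equals (good ordered tuples)$/\varkappa$ with $\varkappa=\prod_{i\geq\mathtt L}\mu_i!$, while the multiset count $\prod_{i\geq\mathtt L}\binom{\widehat\Gamma_i+\mu_i-1}{\mu_i}$ exceeds $\prod_{i\geq\mathtt L}\widehat\Gamma_i^{\mu_i}/\mu_i!$ by a factor $\prod_{i\geq\mathtt L}\prod_{k=0}^{\mu_i-1}(1+k/\widehat\Gamma_i)$. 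This extra factor is $\leq (1+\mathtt L/\widehat\Gamma_{\mathtt L})^{\mathtt L}\leq 1+O(\mathtt L^2\alpha^{\mathtt L})$, which is absorbed into your $(1-e^{-\mathtt L/4})$ slack, but it should be mentioned. (Also, your ``$\mathrm{poly}(\mathtt L)$'' in the per-pair bound should really be $\mathrm{poly}(i^*)$ with $i^*=\max(i_j,i_{j'})$; this is harmless since the exponential $e^{-c\,i^*}$ still dominates.)
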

\begin{proof}
    Recall that we define $\eta_N$ to be the uniform measure on $\mathcal R^{(1',2)}_N$. And we further denote $\eta_N'$ as the uniform measure on $\mathcal R^{(1',2,4)}_N$. Note that the lemma is equivalent to 
    \begin{align}
        \eta_N'(\mathbf T)/\eta_N(\mathbf T) \leq \exp\big( e^{-\frac{1}{10} \mathtt L} N \big) \mbox{ for all } \mathbf T \in \mathcal R_N^{(1',2,4)} \,. \label{eq-lemma-B.3-equivalent-form}
    \end{align}
    We prove \eqref{eq-lemma-B.3-equivalent-form} by induction. Clearly \eqref{eq-lemma-B.3-equivalent-form} holds for $N \leq \mathtt L$. Now suppose that \eqref{eq-lemma-B.3-equivalent-form} holds for all $m \leq N$. For each 
    \begin{align*}
        a+b \leq \mathtt L \,, \gamma_1+\ldots+ \gamma_{a+b}=N \,, \gamma_i \leq \mathtt L \mbox{ if and only if } i \leq a \,,
    \end{align*}
    denote $\mathcal W_{a,b;\gamma_1,\ldots,\gamma_{a+b}}$ the set of $\mathbf T \in \mathcal R_{N+1}^{(1',2)}$ such that $\mathbf T$ has $ a+b $ children trees $\mathbf T_1,\ldots,\mathbf T_{a+b}$ with $|V(\mathbf T_i)|=\gamma_i$ for each $1 \leq i \leq a+b$; we will ignore the orders among those $\mathbf T_i$'s with the same cardinality. In addition, denote $\widetilde{\mathcal W}_{\gamma_1,\ldots,\gamma_a} = \mathcal W_{a,0;\gamma_1,\ldots,\gamma_a}$. Denote $\eta_{a,b;\gamma_1,\ldots, \gamma_{a+b}}$ the uniform measure on $\mathcal W_{a,b;\gamma_1,\ldots,\gamma_{a+b}}$ and $\eta'_{a,b;\gamma_1,\ldots, \gamma_{a+b}}$ the uniform measure on $\mathcal R_{N+1}^{(1',2,4)} \cap \mathcal W_{a,b;\gamma_1,\ldots,\gamma_{a+b}}$; denote $\widetilde{\eta}_{\gamma_1,\ldots,\gamma_a} = \eta_{a,0;\gamma_1,\ldots,\gamma_a}$. For any fixed $\{ \gamma_1,\ldots,\gamma_{a+b} \}$, define 
    \begin{align*}
        \varkappa = \varkappa( \gamma_1,\ldots,\gamma_{a+b} ) = \prod_{ m>\log^2(\iota^{-1}) } \varkappa_m ! \,, \mbox{ where } \varkappa_m =\#\{ a+1 \leq i \leq a+b: \gamma_i=m \} \,.
    \end{align*}
    Denote $\mathbf T_1 \oplus \ldots \oplus \mathbf T_a$ as the unlabeled rooted tree with children trees given by $\mathbf T_1,\ldots,\mathbf T_a$. From the definition of $\eta'$ and $\mathcal{R}_{N+1}^{(1',2,4)}$, we then have (for two measures $\mu$ and $\nu$, we define $\mu\otimes\nu$ to be their product measure)
    \begin{align}
        & \eta'_{a,b;\gamma_1,\ldots,\gamma_{a+b}}(\mathbf T) \nonumber \\ 
        =\ & \eta_{ a,b;\gamma_1,\ldots,\gamma_{a+b} } \big( \mathbf T \mid \mathbf T_{a+i} \in \mathcal R^{(1',2,4)}_{\gamma_{a+i}} \mbox{ for } 1 \leq i \leq b, \mathbf T_{a+i} \not\sim \mathbf T_{a+j}\mbox{ for all } 1\leq i\neq j\leq b \big) \nonumber \\
        =\ & \varkappa \cdot \widetilde{\eta}_{\gamma_1,\ldots,\gamma_a} \otimes \eta'_{\gamma_{a+1}} \otimes \ldots \otimes \eta'_{\gamma_{a+b}}( \mathbf T_1 \oplus \ldots \oplus \mathbf T_a ; \mathbf T_{a+1}, \ldots, \mathbf T_{a+b} \mid \mathcal G ) \,,\label{eq-mu-star-decomposition-calculation}
    \end{align}
    where 
    \begin{align*}
        \mathcal G=\big\{ \mathbf T_{a+i} \not\sim \mathbf T_{a+j}\text{ for all } 1\leq i\neq j\leq b \big\} \,.
    \end{align*}
    (Note that $\mathbf T_{a+i} \not\sim \mathbf T_{a+j}$ implies $\mathbf T_{a+i} \neq \mathbf T_{a+j}$, and the term $\varkappa$ comes from the enumeration that we can permute $\mathbf T_{a+i}$'s with the same cardinality.)
    Note that we have 
    \begin{align}
        \big| \mathcal W_{a,b;\gamma_1,\ldots,\gamma_{a+b}} \big| &= \big| \widetilde{\mathcal W}_{\gamma_1,\ldots,\gamma_a} \big| \cdot \prod_{m>\mathtt L} \tbinom{ |\mathcal R^{(1',2)}_m| + \varkappa_m-1 }{ \varkappa_m } \\
        &\leq \prod_{ m>\mathtt L} \frac{1}{\varkappa_m!} \cdot \big| \widetilde{\mathcal W}_{\gamma_1,\ldots,\gamma_a} \big| \prod_{a+1 \leq i \leq a+b} \Big( \big| \mathcal R^{(1',2)}_{\gamma_i} \big| + b \Big) \,. \label{eq-card-mathcal-W-upper-bound}
    \end{align}
    By Lemma~\ref{lem-remove-sim-relation} and a simple union bound, we have
    \begin{align*}
        & \widetilde{\eta}_{\gamma_1,\ldots,\gamma_a} \otimes \eta'_{\gamma_{a+1}} \otimes \ldots \otimes \eta'_{\gamma_{a+b}}( \exists i\neq j\,,\mathbf T_{a+i} \sim \mathbf T_{a+j} ) \\
        \leq\ & \sum_{1 \leq i<j \leq b} \eta'_{\gamma_{a+i}} \otimes \eta'_{\gamma_{a+j}}( \mathbf T_{a+i} \sim \mathbf T_{a+j} ) \\
        \leq\ & \sum_{1 \leq i<j \leq b} \exp\big( e^{-\tfrac{1}{10}\mathtt L} (\gamma_{a+i}+\gamma_{a+j}) \big) \cdot \eta_{\gamma_{a+i}} \otimes \eta_{\gamma_{a+j}}( \mathbf T_{a+i} \sim \mathbf T_{a+j} ) \\
        \leq\ & \sum_{1 \leq i<j \leq b} \exp\big( e^{-\tfrac{1}{10}\mathtt L} (\gamma_{a+i}+\gamma_{a+j}) \big) \cdot \alpha^{\gamma_{a+j}}\gamma^{1.5}_{a+j} \#\{ \mathbf T \in \mathcal R_{\gamma_{a+j}} :\mathbf T \sim \mathbf T_{a+i}\} \\
        \leq\ & b^2 \cdot \exp\big( (2e^{-\tfrac{1}{10}\mathtt L} +\tfrac{7\log \log (\iota^{-1})}{\log (\iota^{-1})} +\log \alpha )\cdot \mathtt L \big) \leq \exp\big( -\tfrac{1}{8}\mathtt L \big) \,,
    \end{align*}
    where the second inequality follows from the induction hypothesis of \eqref{eq-lemma-B.3-equivalent-form}, the third inequality follows by \eqref{eq-choice-N-2}, the fourth inequality follows from Lemma~\ref{lem-remove-sim-relation} and $\gamma_{a+i} \geq \mathtt L$ and the last inequality is from $b \leq \mathtt L$. Thus, we have 
    \begin{align*}
        & \widetilde{\eta}_{\gamma_1,\ldots,\gamma_a} \otimes \eta'_{\gamma_{a+1}} \otimes \ldots \otimes \eta'_{\gamma_{a+b}}( \mathbf T_1 \oplus \ldots \oplus \mathbf T_a ; \mathbf T_{a+1}, \ldots, \mathbf T_{a+b} \mid \mathcal G ) \\
        \leq\ & \Big( 1-e^{-\tfrac{1}{8}\mathtt L} \Big)^{-1} \cdot \widetilde{\eta}_{\gamma_1,\ldots,\gamma_a} \otimes \eta'_{\gamma_{a+1}} \otimes \ldots \otimes \eta'_{\gamma_{a+b}}( \mathbf T_1 \oplus \ldots \oplus \mathbf T_a ; \mathbf T_{a+1}, \ldots, \mathbf T_{a+b} ) \\
        \leq\ & \Big( 1-e^{-\tfrac{1}{8}\mathtt L} \Big)^{-1} \cdot \exp\big( e^{-\tfrac{1}{10}\mathtt L} (\gamma_{a+1}+\ldots+\gamma_{a+b}) \big) \cdot \\
        & \widetilde{\eta}_{\gamma_1,\ldots,\gamma_a} \otimes \eta_{\gamma_{a+1}} \otimes \ldots \otimes \eta_{\gamma_{a+b}}( \mathbf T_1 \oplus \ldots \oplus \mathbf T_a ; \mathbf T_{a+1}, \ldots, \mathbf T_{a+b} ) \,,
    \end{align*}
    where the second inequality follows from the induction hypothesis of \eqref{eq-lemma-B.3-equivalent-form}. Plugging this estimation into \eqref{eq-mu-star-decomposition-calculation}, we have that for any $\mathbf T \in \mathcal R^{(1',2,4)}_{N+1} \cap \mathcal W_{a,b;\gamma_1,\ldots,\gamma_{a+b}}$, 
    \begin{align}
        \frac{ \eta'_{a,b;\gamma_1,\ldots,\gamma_{a+b}}(\mathbf T) }{ \eta_{a,b;\gamma_1,\ldots,\gamma_{a+b}}(\mathbf T) } \leq\ & \big( 1-e^{-\tfrac{1}{8}\mathtt L} \big)^{-1} \cdot \exp\big( e^{-\tfrac{1}{10}\mathtt L} (\gamma_{a+1}+\ldots+\gamma_{a+b}) \big) \cdot \nonumber \\
        & \frac{ \varkappa\cdot \widetilde{\eta}_{\gamma_1,\ldots,\gamma_a} \otimes \eta_{\gamma_{a+1}} \otimes \ldots \otimes \eta_{\gamma_{a+b}}( \mathbf T_1 \oplus \ldots \oplus \mathbf T_a ; \mathbf T_{a+1}, \ldots, \mathbf T_{a+b} ) }{ \eta_{a,b;\gamma_1,\ldots,\gamma_{a+b}}(\mathbf T) } \nonumber \\
        \leq\ & \exp\big( e^{-\tfrac{1}{10}\mathtt L} (\gamma_{a+1}+\ldots+\gamma_{a+b}) + e^{-\tfrac{1}{8}\mathtt L} + e^{-\tfrac{1}{2}\mathtt L} \big) \nonumber \\
        \leq\ & \exp\big( e^{-\tfrac{1}{10}\mathtt L} (N+1) \big)  \,, \label{eq-inequality-to-be-sums-up}
    \end{align}
    where in the second inequality we used the fact that
    \begin{align}
        & \frac{ \widetilde{\eta}_{\gamma_1,\ldots,\gamma_a} \otimes \eta_{\gamma_{a+1}} \otimes \ldots \otimes \eta_{\gamma_{a+b}}( \mathbf T_1 \oplus \ldots \oplus \mathbf T_a ; \mathbf T_{a+1}, \ldots, \mathbf T_{a+b} ) }{ \eta_{a,b;\gamma_1,\ldots,\gamma_{a+b}}(\mathbf T) } \nonumber \\
        =\ & \frac{ |\mathcal W_{a,b;\gamma_1,\ldots,\gamma_{a+b}}| }{ |\widetilde{\mathcal W}_{\gamma_1,\ldots,\gamma_a}| |\mathcal R^{(1',2)}_{\gamma_{a+1}}| \ldots |\mathcal R^{(1',2)}_{\gamma_{a+b}}| } \nonumber \\
        =\ & \frac{ |\mathcal W_{a,b;\gamma_1,\ldots,\gamma_{a+b}}| }{ |\widetilde{\mathcal W}_{\gamma_1,\ldots,\gamma_a}| (|\mathcal R^{(1',2)}_{\gamma_{a+1}}|+b)\cdot \ldots \cdot (|\mathcal R^{(1',2)}_{\gamma_{a+b}}|+b) }\cdot\frac{(|\mathcal R^{(1',2)}_{\gamma_{a+1}}|+b)\cdot \ldots \cdot (|\mathcal R^{(1',2)}_{\gamma_{a+b}}|+b)}{|\mathcal R^{(1',2)}_{\gamma_{a+1}}| \ldots |\mathcal R^{(1',2)}_{\gamma_{a+b}}|} \nonumber \\
        \leq \ & \frac{ |\mathcal W_{a,b;\gamma_1,\ldots,\gamma_{a+b}}| }{ |\widetilde{\mathcal W}_{\gamma_1,\ldots,\gamma_a}| (|\mathcal R^{(1',2)}_{\gamma_{a+1}}|+b) \ldots (|\mathcal R^{(1',2)}_{\gamma_{a+b}}|+b) } \cdot \Big( \tfrac{|\mathcal R^{(1',2)}_{\mathtt L}|+\mathtt L}{|\mathcal R^{(1',2)}_{\mathtt L}|}\Big)^{\mathtt L} \nonumber \\
        \overset{\eqref{eq-card-mathcal-W-upper-bound}}{\leq}& \varkappa^{-1} \cdot \Big( \tfrac{|\mathcal R^{(1',2)}_{\mathtt L}|+\mathtt L}{|\mathcal R^{(1',2)}_{\mathtt L}|}\Big)^{\mathtt L}
        \overset{\text{Lemma~\ref{lem-enu-trees-condition-1,2}}}{\leq}  \varkappa^{-1} \cdot \exp (e^{-\tfrac{1}{2} \mathtt L})\,. \nonumber
    \end{align}
    where in the first inequality we used $\frac{|\mathcal R^{(1',2)}_{\gamma_{a+j}}|+b}{|\mathcal R^{(1',2)}_{\gamma_{a+j}}|}\leq \frac{|\mathcal{R}_\mathtt L^{(1',2)}|+b}{|\mathcal{R}_\mathtt L^{(1',2)}|} \leq \frac{|\mathcal{R}_\mathtt L^{(1',2)}|+\mathtt L}{|\mathcal{R}_\mathtt L^{(1',2)}|}$ for each $j\leq b$ and $b\leq\mathtt L$. Summing \eqref{eq-inequality-to-be-sums-up} over $\{ (a,b); \gamma_1,\ldots,\gamma_{a+b} \}$ yields \eqref{eq-lemma-B.3-equivalent-form}. 
\end{proof}

Then the proof of \eqref{eq-thm-2.7-relax} follows by combining Lemmas~\ref{lem-enu-trees-condition-1,2} and \ref{lem-eum-trees-condition-1,2,4}. Next, we deal with \eqref{eq-thm-2.7-key-expectation} using the methods of mapping and induction on layers. We first define some frequently used notations. Recall that for $N \geq 1$ we defined
$
\eta^{*}_N = \operatorname{Uniform}(\mathcal R_{N}^{(1,2,4)})
$
to be a uniform measure. Define
\begin{align}\label{possible-states-for-1,2,4-trees}
    \mathcal D_N = \Big\{ &\big(\{\nu_i\}_{i \geq 1}, \{\vartheta_i\}_{i \geq 1}\big) : \nu_i,\vartheta_i \in \mathbb N, \nu_i \geq \vartheta_i \text{ for }i \geq 1\,; \nonumber \\
    &\sum_{i\geq 1} i\nu_i = N - 1, \sum_{i \geq 1} (\nu_i - \vartheta_i ) \leq \mathtt L \Big\}\,.
\end{align}
Once $\big(\{\nu_i\}_{i \geq 1}, \{\vartheta_i\}_{i \geq 1}\big) \in \mathcal D_N$, we also denote $\big((\nu_1 , \ldots , \nu_N), (\vartheta_1 , \ldots , \vartheta_N)\big) \in \mathcal D_N$ since $\nu_i = 0$ must hold for $i > N$. For $\varrho = \big((\nu_1 , \ldots , \nu_N), (\vartheta_1 , \ldots , \vartheta_N)\big) \in \mathcal D_N$, we define (recall \eqref{def-sub-on-root-size-i} and \eqref{def-armsub-on-root-size-i})
\begin{align}\label{eq-def-type-varrho}
    \mathsf{Type}_\varrho = \Big\{ \mathbf T \in \mathcal R^{(1,2,4)}_N: |\mathsf{Subs}_i(\mathbf T)| = \nu_i \,, |\mathsf{ASubs}_i(\mathbf T)| = \vartheta_i\mbox{ for } 1 \leq i \leq N \Big\} 
\end{align} 
to further decompose $\mathcal R_N^{(1,2,4)}$, and we define
\begin{align}\label{def-Enum-p-etastar-rho}
    \mathsf{Enum}(\varrho) = |\mathsf{Type}_{\varrho}| \,, \quad p(\varrho)=\frac{|\mathsf{Type}_\varrho|}{|\mathcal R^{(1,2,4)}_N|} \,,\quad \eta^*_{N,\varrho} = \operatorname{Uniform}(\mathsf{Type}_\varrho)\,.
\end{align}
Given $\varrho =\big((\nu_1,\ldots,\nu_N), (\vartheta_1,\ldots,\vartheta_N) \big) \in \mathcal D_N$ and $\mathbf T \in \mathsf{Type}_\varrho$, we denote $\varrho(\mathbf T) = \varrho$, and we randomly label the trees in $\mathsf{Subs}(\mathbf T)$ by $\mathbf T_{i,1},\ldots,\mathbf T_{i,\nu_i}$ for $1 \leq i \leq N$ so that $(\mathbf T_{i,\vartheta_i+1}, \ldots, \mathbf T_{i,\nu_i})$ is a uniform random ordering of $\mathsf{Subs}_i(\mathbf T) \setminus \mathsf{ASubs}_i(\mathbf T)$ independently for $1 \leq i \leq N$. Also, given $\varrho \in \mathcal D_N$ define $\eta^*_{N,\varrho;i}$ to be the distribution of $\mathbf T_{i,\vartheta_i+1}$ given $\mathbf T \sim \eta^*_{N,\varrho}\,.$ To treat arm-trees separately from other unlabeled rooted trees, for $3\leq i<N$ and $\varrho \in \mathcal D_N$ we define (recall the definition of $\mathbf{A}_i$ in \eqref{eq-AN})
\begin{align}\label{eq-def-eta-starstar-i}
    \eta^{**}_N(\cdot) = \eta^*_N(\cdot\mid\cdot \neq \mathbf A_N) \,,\quad \eta^{**}_{N,\varrho;i}(\cdot) = \eta^*_{N,\varrho;i}(\cdot\mid\cdot \neq \mathbf A_i) \,.
\end{align}
Also, for $\mathbf T \in \mathsf{Type}_\varrho$ where $\varrho = \big( (\vartheta_1, \ldots , \vartheta_N)\,, (\nu_1 , \ldots ,\nu_N )\big) \in \mathcal D_N$, define $\vartheta(\mathbf T) = \vartheta (\varrho) = (\vartheta_1, \ldots , \vartheta_N)$, and for $\vec{\vartheta} = (\vartheta_1, \ldots , \vartheta_N)$ we define $\Sigma^*(\vec{\vartheta}) = \sum_{i \geq 1}i\vartheta_i$.

Then, we formulate a generating function which is useful when dealing with the arm-paths of the root. Defining
\begin{equation}\label{eq-def-mathsf-p-j}
    \prod_{i \geq 1} (1 -x^i)^{-1}=\sum_{j \geq 0} \mathsf a_j x^j\,,
\end{equation}
then we can see that for any $1 \leq i \leq N$ 
\begin{equation}\label{eq-comb-meaning-mathsf-p-i}
    \# \Big\{ \vec{x} =(x_1 , \ldots , x_N) : x_j \geq 0 \mbox{ for all } 1\leq j \leq N , \Sigma^*(\vec{x}) =i \Big\} = \mathsf a_i\,.
\end{equation}
Also, by \cite[Equation~(1.41)]{HR17}, there is a universal constant $\mathfrak c'$ such that
\begin{equation}\label{eq-def-mathfrak-c-prime}
    \mathsf a_i < \mathfrak c' \cdot 1.01^i\,.
\end{equation}
By definition of $\Sigma^*$, we have
\begin{equation}\label{eq-sigma-star-N-1}
    \#\Big\{ \mathbf T \in \mathcal R^{(1,2,4)}_N : \Sigma^*(\vartheta(\mathbf T)) =N-1 \Big\} = \mathsf a_{N-1}\,.
\end{equation}
Next, we formulate some general mappings that help in enumerating specific types of trees. For $\varrho =\big((\nu_1 , \ldots , \nu_N), (\vartheta_1 , \ldots , \vartheta_N)\big) \in \mathcal D_N$, define $\bar{\varrho} = \big((\nu_1 - \vartheta_1 , \ldots , \nu_N - \vartheta_N), (0,\ldots,0)\big) \in \mathcal D_N$ and we construct the following mapping:
\begin{align}\label{def-HCut}
    \mathsf{HCut}:&\mathbf T\in\mathsf{Type}_\varrho \longrightarrow \mbox{the unique unlabeled rooted tree } \mathbf T' \in \mathsf{Type}_{\bar{\varrho}}\mbox{ such that}\nonumber\\
    &\mathfrak R(\mathbf T') = \mathfrak R(\mathbf T)\,,\mathsf{Subs}(\mathbf T') = \{ \mathbf T_{i,j}:1 \leq i \leq N , \vartheta_i < j \leq \nu_i \}\,,
\end{align}
which ``cuts down" the arm trees among the subtrees of the root (see Figure~\ref{fig:HCut} for an illustration). 
\begin{figure}[ht]
    \centering
    \vspace{0cm}
    \includegraphics[height=6cm,width=12cm]{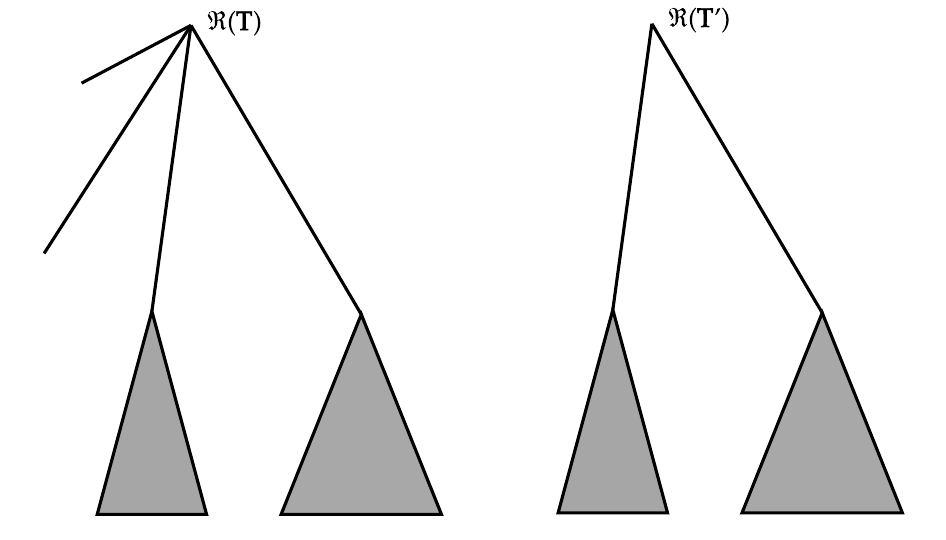}
    \caption{Example of $\mathbf T'=\mathsf{HCut}(\mathbf T)$}
    \label{fig:HCut}
\end{figure}
Also, for $A>0$ we define 
\begin{align}
    &\Sigma_{\leq A,\varrho}=1+\sum_{i \leq A} i\nu_i\,, \quad \Sigma_{>A,\varrho} =1+\sum_{i>A} i\nu_i \,, \\ \label{eq-def-Sigmas}
    & \varrho_{\leq A} = \big((\nu_i)_{\{ i \leq A\}}, (\vartheta_i)_{\{ i \leq A\}}\big)\,,\\
    & \varrho_{>A} = \big( (\nu_i)_{\{ i>A \}}, (\vartheta_i)_{\{ i>A \}}\big)\,,
\end{align}
and for $\mathbf T \in \mathcal R^{(1,2,4)}_N$ and $A>0$ we define the splitting mapping by
\begin{align}
    &\mathsf{Split}_A(\mathbf T)= (\mathbf T_{\leq A},\mathbf T_{>A}) \mbox{ where } \mathbf T_{\leq A} \in \mathcal R^{(1,2,4)}_{\Sigma_{\leq A,\varrho}}\,, \mathbf T_{>A} \in \mathcal R^{(1,2,4)}_{\Sigma_{>A,\varrho}} \mbox{ with } \nonumber \\
    &\mathsf{Subs}(\mathbf T_{\leq A})= \mathsf{Subs}(\mathbf T) \cap \big( \cup_{i \leq A} \mathcal R^{(1,2,4)}_{i}\big)\,, \mathsf{Subs}(\mathbf T_{>A}) = \mathsf{Subs}(\mathbf T) \cap \big( \cup_{i>A} \mathcal R^{(1,2,4)}_{i} \big)\,.\label{eq-def-split-A-T}
\end{align}
Also, for $a,b > 0$ we define the concatenation mapping from two unlabeled rooted trees $\mathbf T_1 \in \mathcal R^{(1,2,4)}_a \,, \mathbf T_2 \in \mathcal R^{(1,2,4)}_b$ to one unlabeled rooted tree by
\begin{align}\label{eq-def-concat}
    \mathsf{Concat}(\mathbf T_1,\mathbf T_2) = \mathbf T \in \mathcal R^{(1,2,4)}_{a+b-1} \mbox{ with }\mathsf{Subs}(\mathbf T) = \mathsf{Subs}(\mathbf T_1) \cup \mathsf{Subs}(\mathbf T_2)\,.
\end{align}
(see Figure~\ref{fig:Concat} for an illustration.)
\begin{figure}[ht]
    \centering
    \vspace{0cm}
    \includegraphics[height=4.72cm,width=13.6cm]{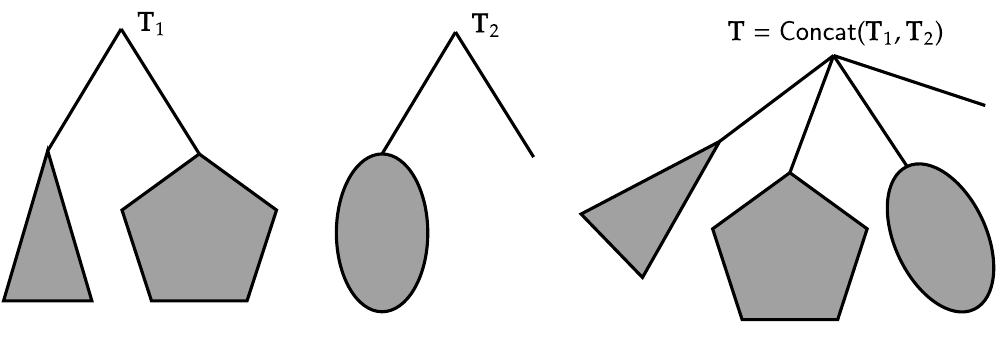}
    \caption{Example of $\mathbf T=\mathsf{Concat}(\mathbf T_1,\mathbf T_2)$}
    \label{fig:Concat}
\end{figure}

Also, for $i \geq 1$ and any unlabeled rooted tree $\mathbf T \in \mathcal R_N$ satisfying $\mathfrak m(\mathbf T) \geq i$ (recall \eqref{eq-def-mathfrak-m-T}) we construct a mapping
\begin{align}\label{def-VCut}
    \mathsf{VCut}_i:& \mathbf T \longrightarrow \mbox{the unique unlabeled rooted tree } \mathbf T' \in \mathcal R_{N-i}\mbox{ such that}\nonumber\\
    &\mathbf T' = \mathsf{Des}_{\mathbf T} (u)\mbox{ for the unique }u\in V(\mathbf T) \mbox{ satisfying }\mathsf{Dep}_{\mathbf T}(u) = i\,,
\end{align}
which ``cuts down'' the path of length $i$ starting from $\mathfrak R(\mathbf T)$ in $\mathbf T$ (see Figure~\ref{fig:VCut} for an illustration). 
\begin{figure}[ht]
    \centering
    \vspace{0cm}
    \includegraphics[height=6.7cm,width=14.9cm]{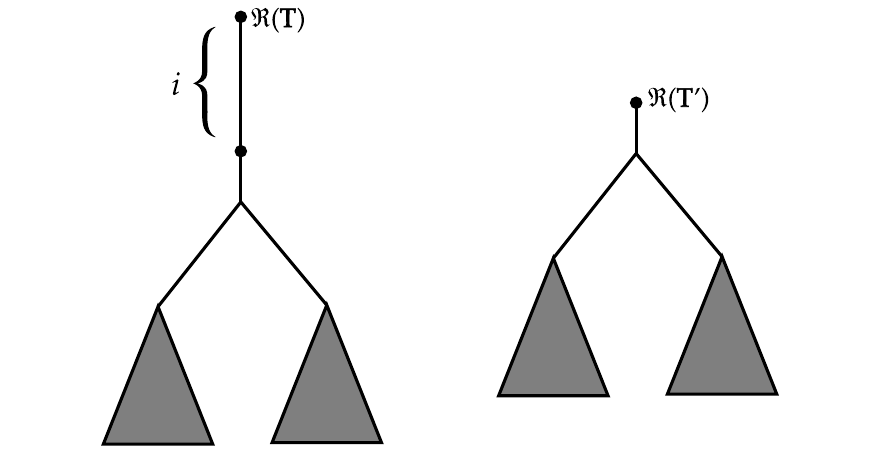}
    \caption{Example of $\mathbf T'=\mathsf{VCut}_i(\mathbf T)$}
    \label{fig:VCut}
\end{figure}

We then prove several lemmas to further understand the ``refined" unlabeled rooted tree set $\mathcal R^{(1,2,4)}_N$ and the (conditional) uniform distributions on $\mathcal R^{(1,2,4)}_N$.
\begin{lemma}\label{lem-prelim-tree-ratio}
Define
\begin{equation}\label{eq-def-mathcal-R-1,2}
    \mathcal R^{(1,2)}_{N} = \Big\{ \mathbf T \in \mathcal R_N:\mathbf T \mbox{ satisfies Items (1), (2) in Definition }\ref{def-tilde-T-K} \Big\}\,,
\end{equation}
and
\begin{equation}\label{eq-def-mathcal-R-1,2,4}
    \mathcal R^{(1,2,4)}_{N} = \Big\{ \mathbf T \in \mathcal R_N:\mathbf T \mbox{ satisfies Items (1),(2),(4) in Definition }\ref{def-tilde-T-K} \Big\}\,,
\end{equation}
Then we have $|\mathcal R^{(1,2)}_{N+1}| \geq 2 |\mathcal R^{(1,2)}_{N}| + \mathbf{1}_{\{N \geq 8\}}$ and $|\mathcal R^{(1,2,4)}_{N+1}| \geq 2 |\mathcal R^{(1,2,4)}_{N}| + \mathbf{1}_{\{N \geq 8\}}$.
\end{lemma}
\begin{proof}
    We only prove the first inequality (i.e. $|\mathcal R^{(1,2)}_{N+1}| \geq 2 |\mathcal R^{(1,2)}_{N}|+\mathbf{1}_{\{N\geq 8\}}$) since the second inequality (i.e. $|\mathcal R^{(1,2,4)}_{N+1}| \geq 2 |\mathcal R^{(1,2,4)}_{N}|+\mathbf{1}_{\{N\geq 8\}}$) follows similarly to the proof of the first inequality. Given a tree $\mathbf T \in \mathcal R^{(1,2)}_N$, we map $\mathbf T$ to two trees in $\mathcal R_{N+1}$, namely
    \begin{align}\label{eq-def-T-+}
        \mathbf T_{+} = \mbox{ the unique unlabeled rooted tree in }\mathcal R^{(1,2)}_{N+1} \mbox{ such that } \mathsf{VCut}_1(\mathbf T_+) = \mathbf T\,,
    \end{align}
    and
    \begin{align}\label{eq-def-T-*}
        \mathbf T_{*} = \mbox{ the unique unlabeled rooted tree in }\mathcal R^{(1,2)}_{N+1}\mbox{ such that }\mathsf{HCut}(\mathbf T_*) =\mathbf T\,.
    \end{align}
    By definition of $\mathbf T_+$ and $\mathbf T_*$, $\mathbf T_+$ is obtained from $\mathbf T$ by adding a parent on $\mathfrak R(\mathbf T)$ and $\mathbf T_*$ is obtained from $\mathbf T$ by adding a child on $\mathfrak R(T)$, leading to the fact that we always have $d_{\mathbf T_+}(\mathfrak R(\mathbf T_+)) = 1$ and $d_{\mathbf T_*}(\mathfrak R(\mathbf T_*)) \geq 2$. Therefore, $\{ \mathbf T_+:\mathbf T \in \mathcal R^{(1,2)}_N\}$ and $\{ \mathbf T_*:\mathbf T \in \mathcal R^{(1,2)}_N\}$ are disjoint. Also, for distinct $\mathbf T, \widetilde{\mathbf T} \in R^{(1,2)}_N$ we have $\mathbf T_+ \neq \widetilde{\mathbf T}_+$ and $\mathbf T_* \neq \widetilde{\mathbf T}_*$ by definition of $\mathbf T_+$ and $\mathbf T_*$, which yields $\#\{ \mathbf T_+:\mathbf T \in \mathcal R^{(1,2)}_N\} = \#\{ \mathbf T_* :\mathbf T \in \mathcal R^{(1,2)}_N\} = |\mathcal R_{N+1}^{(1,2)}|$. Therefore we have
    \begin{align*}
        |\mathcal R_{N+1}^{(1,2)}| \geq \#\{ \mathbf T_+ :\mathbf T \in \mathcal R^{(1,2)}_N\} + \#\{ \mathbf T_*:\mathbf T \in \mathcal R^{(1,2)}_N\} = 2|\mathcal R_N^{(1,2)}|\,.
    \end{align*}
    Now, if we can construct an unlabeled rooted tree $\mathbf T_{\text{special}}$ in $\mathcal R^{(1,2)}_{N+1}$ that cannot be obtained by either \eqref{eq-def-T-+} or \eqref{eq-def-T-*}, then the first inequality become strict when $N \geq 8$, which yields our desired result. 
    \begin{figure}[ht]
        \centering
        \vspace{0cm}
        \includegraphics[height=5.28cm,width=7.84cm]{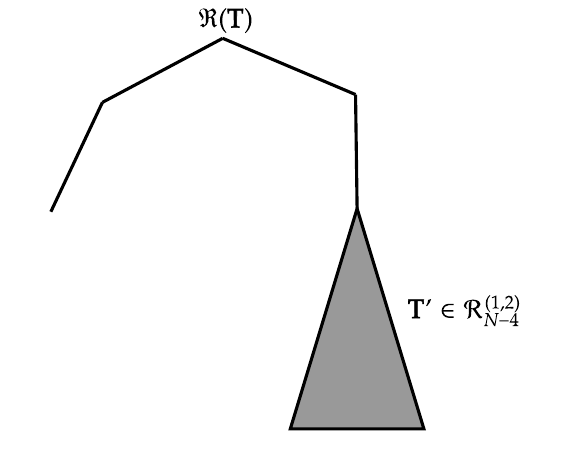}
        \caption{Illustration of $\mathbf T=\mathbf T_{\text{special}}$}
        \label{fig:Special-T}
    \end{figure}
    In order for the construction, we set $\mathbf T_{\text{special}}=\mathbf T$ with $\mathsf{Subs}(\mathbf T) = \{ \mathbf A_2 , \mathbf T'_{+} \}$ for an arbitrary $\mathbf T' \in \mathcal R^{(1,2)}_{N-4}$ where $\mathbf T'_{+}$ is defined in \eqref{eq-def-T-+} (see Figure~\ref{fig:Special-T} for an illustration). Note that since the degree of $\mathfrak R(\mathbf T)$ is 2, $\mathbf T$ cannot be obtained by \eqref{eq-def-T-+}. In addition, since $\mathbf T$ does not have any length-1 arm path at the root, $\mathbf T$ cannot be obtained by \eqref{eq-def-T-*}.  
\end{proof}

\begin{lemma}\label{lem-tree-simple-horizontal-split-enum}
For all $N \geq 1$, define (recall \eqref{eq-def-frakAc})
    \begin{equation}\label{eq-def-mathcal-N-1,2,4}
       \mathcal N^{(1,2,4)}_{N} = \Big\{ \mathbf T \in \mathcal R_N:\mathbf T \mbox{ satisfies Items }(1),(2),(4)\mbox{ in Definition }\ref{def-tilde-T-K}\,,\mathfrak k(\mathfrak{R}(\mathbf T)) = 0 \Big\}\,.
    \end{equation}
    Then for $0 \leq i \leq N-4$ we have
    \begin{align*}
         \#\Big\{ \mathbf T \in \mathcal R^{(1,2,4)}_N: \Sigma^*(\vartheta(\mathbf T)) =i \Big\}= \mathsf a_i \cdot \big| \mathcal N^{(1,2,4) }_{N-i} \big|\,.
    \end{align*}
\end{lemma}
\begin{proof}
For $0 \leq i \leq N-4$ we have
    \begin{align}
        & \#\Big\{ \mathbf T \in \mathcal R^{(1,2,4)}_N:\Sigma^*(\vartheta(\mathbf T)) =i \Big\} \nonumber \\
        =\ & \# \{ \mathbf T \in \mathcal R^{(1,2,4)}_N : \mathsf{HCut}(\mathbf T) \in \mathcal N^{(1,2,4) }_{N-i}\,, \mathbf T{\setminus \mathsf{HCut}(\mathbf T)} \in \mathcal R^{(1,2,4)}_{i+1}\,, \Sigma^*(\vartheta(\mathbf T{\setminus \mathsf{HCut}(\mathbf T)})) = i  \}\nonumber \\
        =\ & \# \{ (\mathbf T^{(1)},\mathbf T^{(2)}):\mathbf T^{(1)} \in \mathcal N^{(1,2,4) }_{N-i}\,, \mathbf T^{(2)} \in \mathcal R^{(1,2,4)}_{i+1}\,, \Sigma^*(\vartheta(\mathbf T^{(2)})) = i  \}=\mathsf a_i \cdot \big| \mathcal N^{(1,2,4) }_{N-i} \big| \,,\label{eq-sigma-star-N-2}
    \end{align}
    where the last equality holds by applying \eqref{eq-sigma-star-N-1}, completing the proof.
\end{proof}

\begin{lemma}\label{lem-tree-N-N+1-ratio-lowers}
    Recall \eqref{eq-def-eta-starstar-i}. For all $\varrho = \big( (\nu_1 , \ldots, \nu_N), (\vartheta_1 , \ldots , \vartheta_N)\big) \in \mathcal D_N$ such that $\vartheta_i < \nu_i$ we have 
    \begin{align*}
        \mathsf{TV}(\eta^{**}_{N,\varrho;i} , \eta^{**}_i ) \leq \mathtt L \cdot 0.7^i \cdot \mathbf{1}_{\{i \geq \mathtt L\}} \,,
    \end{align*}
    and the same bound also holds for $\mathsf{TV}(\eta^{*}_{N,\varrho;i} , \eta^{*}_i )$.
\end{lemma}
\begin{proof}
    The proofs of the bounds for $\mathsf{TV}(\eta^{**}_{N,\varrho;i} , \eta^{**}_i )$ and $\mathsf{TV}(\eta^{*}_{N,\varrho;i} , \eta^{*}_i )$ are similar. For simplicity, we only present the proof of the upper bound for $\mathsf{TV}(\eta^{**}_{N,\varrho;i} , \eta^{**}_i )$. When $i<\mathtt L$, the distribution of $\mathbf T_{i,\vartheta_i+1}$ is uniform in $\mathcal R^{(1,2,4)}_{i} \setminus \{ \mathbf A_i\}$ given $\{ \mathbf T_{i',\ldots}\}_{i' \neq i}$. which yields that in this case $\eta^{**}_{N,\varrho}(i)=\eta^{**}_i$. Suppose that $i \geq \mathtt L$, we prove by considering the conditional distribution of $\mathbf T_{i,j}(\vartheta_i + 2 \leq j \leq \nu_i)$ and $\mathbf T_{i,j}(i' > i, \vartheta_{i'} < j \leq \nu_{i'})$ given the unlabeled rooted trees $\mathbf T_{i',j}(i'<i, \vartheta_{i'}<j \leq\nu_{i'} )$. For any $\mathtt i>0$, suppose we wish to choose an unlabeled rooted tree in $\mathcal{R}_i$ such that it is similar to $\mathtt L$ given unlabeled rooted trees with at most $i$ vertices. Then, by Lemma~\ref{lem-remove-sim-relation} the number of such choices is at most
    \begin{align}\label{eq-def-H-iota-mathtt-i}
        \mathtt L\exp\big(\tfrac{6\log\log(\iota^{-1})\mathtt i}{\log(\iota^{-1})} \big) \mathbf{1}_{\{\mathtt i\geq \mathtt L \}} := H(\iota,\mathtt i)\,.
    \end{align}
    Therefore, for any $\mathsf{tree}_{i',j} \in \mathcal R^{(1,2,4)}_{i'} \setminus \{ \mathbf A_{i'}\}(i' \leq i, \vartheta_{i'} < j \leq \nu_{i'})$ , $\mathsf{tree}^* \in \mathcal R^{(1,2,4)}_{i} \setminus \{ \mathbf A_{i}\}$ and $1 \leq a \leq \nu_i - \vartheta_i$, we have 
    \begin{align}
        & \eta^{*}_{N,\varrho}\Big( \mathbf T_{i,\vartheta_i+a} = \mathsf{tree}^* \mid \mathbf T_{i',j} = \mathsf{tree}_{i',j} \mbox{ for all } i'<i, \vartheta_{i'} < j \leq \nu_{i'}; \mathbf T_{i,j} = \mathsf{tree}_{i,j} \mbox{ for all }\vartheta_i < j < \vartheta_i + a \Big) \nonumber \\
        =\ & \tfrac{1}{\nu_i-\vartheta_i - a + 1} \cdot \eta^{*}_{N,\varrho } \Big( \mathsf{tree}^* \in \{ \mathbf T_{i,j} : \vartheta_i + a \leq j \leq \nu_i \} \mid \mathbf T_{i',j} = \mathsf{tree}_{i',j} \mbox{ for all } i' < i, \vartheta_{i'} < j \leq \nu_{i'}; \nonumber \\
        \ & \mathbf T_{i,j} = \mathsf{tree}_{i,j} \mbox{ for all }\vartheta_i < j < \vartheta_i + a  \Big) \nonumber \\
        \geq\ & \tfrac{1}{\nu_i - \vartheta_i - a + 1} \cdot \frac{\tbinom{|\mathcal R^{(1,2,4)}_{i}| + \nu_{i} - \vartheta_i - 3 - H(\iota , i)}{\nu_{i} - \vartheta_i - a} \cdot \prod_{i' > i} \tbinom{|\mathcal R^{(1,2,4)}_{i'}| + \nu_{i'} - \vartheta_{i'} - 2 - H(\iota , i')}{\nu_{i'} - \vartheta_{i'}}}{ \tbinom{|\mathcal R^{(1,2,4)}_{i}| + \nu_{i} - \vartheta_{i} - a }{\nu_{i} - \vartheta_{i} - a + 1} \cdot \prod_{i' > i} \tbinom{|\mathcal R^{(1,2,4)}_{i'}| + \nu_{i'} - \vartheta_{i'} - 2 }{\nu_{i'} - \vartheta_{i'}}} \nonumber \\
        \geq\ & \frac{1}{|\mathcal R^{(1,2,4)}_{i}|} \cdot \prod_{i' \geq i} \Big( 1 - \tfrac{H(\iota,i')}{|\mathcal R^{(1,2,4)}_{i'}|} \Big)^{\nu_{i'} - \vartheta_{i'}}\,,\label{eq-temp-in-proving-TV-dist-low}
    \end{align}
    where the first inequality holds since $\mathbf T \in \mathcal R^{(1,2,4)}_N$ and $i \geq \mathtt L$ imply that $\mathbf T_{i,\vartheta_i + a}, \ldots, \mathbf T_{i,\nu_i}$ are not similar to each other and therefore distinct, and the second inequality comes from applying \eqref{eq-def-H-iota-mathtt-i}.
    To further estimate \eqref{eq-temp-in-proving-TV-dist-low}, observe that for $i \geq \mathtt L$ we have
    \begin{equation}\label{eq-lower-bound-mathcal-R-1,2,4}
        |\mathcal R^{(1,2,4)}_i| \geq 2^{i-2}
    \end{equation}
    by Lemma~\ref{lem-prelim-tree-ratio}, which shows that for any $i_0 \geq \mathtt L$, and nonnegative integers $k_{i_0} , k_{i_0 + 1}, \ldots$ such that $\sum_{j \geq i_0} k_i \leq \mathtt L$ we have 
    \begin{align}
        \prod_{j \geq i_0} \Big(1-\tfrac{H(\iota, j)}{|\mathcal R^{(1,2,4)}_{j}|}\Big)^{k_j} \geq \exp\big(-2\sum_{j \geq i_0}(0.65^j \cdot k_j)\big) \geq \exp (-0.68^{i_0})\,.\label{eq-estimate-loss-of-condition-4-general}
    \end{align}
    This further implies
    \begin{equation}
        \eqref{eq-temp-in-proving-TV-dist-low} \geq \frac{1}{|\mathcal R^{(1,2,4)}_{i}|} \cdot (1 - 0.68^{i}) \overset{\eqref{eq-lower-bound-mathcal-R-1,2,4}}{\geq} \frac{1}{|\mathcal R^{(1,2,4)}_{i}|-1} \cdot (1 - 0.7^{i}) \,.
    \end{equation}
    Therefore, we have
    \begin{align*} 
        \mathsf{TV}(\eta^{**}_{N,\varrho;i} \,, \eta^{**}_i ) \leq \sum_{\mathbf T \in \mathcal R^{(1,2,4)}_{i}} \frac{1}{| \mathcal R^{(1,2,4)}_{i}|} (1 - (1 - 0.7^i )) \cdot \mathbf{1}_{\{i \geq \mathtt L\}} \leq 0.7^i \cdot \mathbf{1}_{\{i \geq \mathtt L\}}\,,
    \end{align*}
    which yields our desired result.
\end{proof}

We next investigate the upper bound of $\frac{|\mathcal R^{(1,2,4)}_{N+1}|}{|\mathcal R^{(1,2,4)}_{N}|}$ using a more complicated mapping argument. To this end, we begin with a preliminary lemma controlling the number of leaf isomorphism classes in a random unlabeled tree.
\begin{lemma}\label{lem-high-probability-NL-large}
    For any $\mathbf T \in \mathcal R_N$, denote
    \begin{align*}
        \mathsf{N}(\mathbf T) = \Big\{ u \in V(\mathbf T) : \mathcal L'(\mathbf T)\cap \mathsf{Nei}_{\mathbf T}(u) \neq \emptyset \Big\}\,,
    \end{align*}
    (Recall the definition of $\mathcal L'(\mathbf T) := \mathcal L(\mathbf T) \setminus \{ \mathfrak R(\mathbf T) \}$ from Section~\ref{subsec:notations}) and for all $u \in \mathsf N(T)$ we arbitrary choose $\mathsf{leaf}_u \in \mathcal L'(\mathbf T) \cap \mathsf{Nei}_{\mathbf T}(u)$.
    In addition, denote $\mathsf{NL}(\mathbf T)$ as the isomorphism class of $\mathsf N(\mathbf T)$ under $\operatorname{Aut}(\mathbf T)$, i.e., $\mathsf{NL}(\mathbf T)$ is the largest subset of $\mathsf{N}(\mathbf T)$ such that $\{ \mathbf T|_{ V(\mathbf T) \setminus \{ \mathsf{leaf}_u \} }: u \in \mathsf{NL}(\mathbf T) \}$ is a collection of trees which are pairwisely non-isomorphic. Then 
    \begin{equation}\label{eq-high-probability-NL-large}
        \eta_{N+1}^*\big( |\mathsf{NL}(T)| \leq 0.0001(N+1) \big) \leq \exp(-0.1N) \,.
    \end{equation}
\end{lemma}
\begin{proof}
    Note that by Lemma~\ref{lem-eum-trees-condition-1,2,4}, it suffices to show that
    \begin{align}
        \#\big\{ \mathbf T \in \mathcal R_N: |\mathsf{NL}(\mathbf T)| \leq 0.0001(N+1) \big\} \leq e^{-0.11N} \cdot |\mathcal R_N| \,. \label{eq-goal-lem-B.5-relax}
    \end{align}
    Consider the following two-step procedure (denoted by $\mathtt{Compress}$) on a tree $\mathbf T_{\text{initial}} \in \mathcal R^{(1,2,4)}_{N+1}$.

    \textbf{Step 1: Shrinking long paths and labeling edges.} We search an arbitrary vertex set $\{ \mathbf u,\mathbf v \} \subset V(\mathbf T) $ (if exists) with maximum $\mathsf{Dist}_{\mathbf T}(\mathbf u,\mathbf v)$ such that $\mathfrak R(\mathbf T) \not\in V(\mathfrak p_{\mathbf T}(\mathbf u,\mathbf v))\setminus \{\mathbf u,\mathbf v\}$, $\mathsf{Deg}_{\mathbf T}(\mathbf w) = 2$ for all $\mathbf w \in V(\mathfrak p_{\mathbf T}(\mathbf u,\mathbf v)) \cup \{\mathbf u,\mathbf v\}$ and all edges in $\mathfrak p_{\mathbf T}(\mathbf u,\mathbf v)$ are not labeled. Then we renew $\mathbf T$ by replacing $\mathfrak p_{\mathbf T}(\mathbf u,\mathbf v)$ with a single edge $(\mathbf u,\mathbf v)$ with label $|E(\mathfrak p_{\mathbf T}(\mathbf u,\mathbf v))|$. We finish Step 1 if such a vertex set $\{\mathbf u,\mathbf v\}$ cannot be found. Denote by $\mathbf T=\mathbf T_{\text{step1}}$ the tree we obtain after Step 1.

    \textbf{Step 2: Deleting repeated subtrees and labeling vertices.} Suppose $\mathsf{Depth}(\mathbf T) = \mathsf D$, and we list $V(\mathbf T)$ as $\mathtt{List} = \{ \mathbf v_1 , \mathbf v_2 , \ldots , \mathbf v_{\mathtt x}\}$ such that $\mathsf{Dep}_{\mathbf T}(\mathbf v_i)$ is increasing in $i$. Next, we label $\mathfrak R(\mathbf T)$ by 1 and repeat the following process until all vertices in $\mathtt{List}$ are labeled: 
    \begin{itemize}
        \item[(2-1)] Select the first item in $\mathtt{List}$ without a label, denoted by $\mathbf u$;
        \item[(2-2)] Find all $\mathbf w \in \mathtt{List}$ which shares the same parent as $\mathbf{u}$ satisfying $\mathsf{Des}_{\mathbf T}(\mathbf w) = \mathsf{Des}_{\mathbf T}(\mathbf u)$, and denote them by $\mathbf w_1, \ldots,\mathbf w_\mathtt y$ ($\mathtt y=0$ is possible); 
        \item[(2-3)] Label $\mathbf u$ by $\mathtt y + 1$;
        \item[(2-4)] Update $\mathtt{List}$ by setting it to be $\mathtt{List} \setminus \cup_{1 \leq i \leq \mathtt y} V(\mathsf{Des}_{\mathbf T}(\mathbf w_i))$;
        \item[(2-5)] Update $\mathbf T$ by setting it to be $\mathbf T|_{\mathtt{List}}$.
    \end{itemize}

    Denote by $\mathbf T_{\text{final}} = \mathtt{Compress}(\mathbf T_{\text{initial}})$ the tree we obtain after Step 2. Then $\mathbf T_{\text{final}}$ is a labeled tree, where a label $\mathtt l(\mathbf e)$ on an edge $\mathbf e$ represents a self-avoiding path with $\mathtt l(\mathbf e)$ edges in $\mathbf T_{\text{initial}}$, and a label $\mathtt l(\mathbf v)$ on a vertex $\mathbf v$ represents $\mathtt l(\mathbf v)$ isomorphic subtrees sharing the parent of $\mathbf v$ in $\mathbf T_{\text{initial}}$. Therefore, our procedure $\mathtt{Compress}$ is an injection from $\mathcal R^{(1,2,4)}_N$ to the set of unlabeled rooted trees with positive integer labels on edges and vertices with no more than $N+1$ vertices. Also, by definition of our procedure we have $V(\mathbf T_{\text{final}}) \subset V(\mathbf T_{\text{initial}})$; by definition of our Step 2, $\{\mathbf T|_{V(\mathbf T)\setminus \{\mathbf v\}}:\mathbf v\in\mathcal L'(\mathbf T_{\text{final}})\}$ are distinct; by definition of our Step 1, for all $\mathbf v \in V(\mathbf T_{\text{final}}) \setminus \{ \mathfrak R(\mathbf T_{\text{initial}} )\}$ we have $\mathsf{Deg}_{\mathbf T_{\text{initial}}}(\mathbf v) \neq 2$. 

    Next, denote $\mathcal R^{\text{small}}_{N+1} = \{ \mathbf T \in \mathcal R^{(1,2,4)}_{N+1}: |\mathsf{NL}(\mathbf T)| \leq 0.0001(N+1)\}$ and 
    \begin{align}
        \widetilde{\mathcal R^{\text{small}}_{N+1}} = \Big\{ \widetilde{\mathbf T} = (\mathbf T; \{ \mathtt l(\mathbf e) \}_{\mathbf e \in E(\mathbf T)}, \{ \mathtt l(\mathbf v) \}_{\mathbf v \in V(\mathbf T)} ) : \widetilde{\mathbf T} = \mathtt{Compress}(\mathbf T')\mbox{ for some } \mathbf T'\in \mathcal R^{\text{small}}_{N+1} \Big\} \,.  \label{eq-def-tilde-R-small}
    \end{align}
    Suppose $\widetilde{\mathbf T} = (\mathbf T; \{ \mathtt l(\mathbf e) \}_{\mathbf e \in E(\mathbf T)}, \{ \mathtt l(\mathbf v) \}_{\mathbf v \in V(\mathbf T)} ) = \mathtt{Compress}(\mathbf T') \in \widetilde{\mathcal R^{\text{small}}_{N+1}}$. By definition of our procedure we have $\mathcal L'(\mathbf T) \subset \mathcal L'(\mathbf T')$ and 
    \begin{align*}
        \Big\{ \mathbf T'|_{V(\mathbf T') \setminus \{ \mathbf v \}} :\mathbf v \in \mathcal L'(\mathbf T)) \Big\} \mbox{ are distinct} \,,
    \end{align*}
    which yields that $|\mathcal L'(\mathbf T)| \leq |\mathsf{NL}(\mathbf T')|$. Thus, $\mathcal L'(\mathbf T)\leq 0.0001N$ if we assume $|\mathsf{NL}(\mathbf T')|\leq 0.0001N$. Next, by definition of $\{ \mathtt l(\mathbf e) \}_{\mathbf e \in E(\mathbf T)}$ we have 
    \begin{equation}\label{eq-sum-of-label-compressed-edges}
        \sum_{\mathbf e \in E(\mathbf T)}  \mathtt l(\mathbf e) \leq |E(\mathbf T')| = N \,.
    \end{equation}
    Now we control $|V(\mathbf T)|$. For all $\mathbf v \in V(\mathbf T)$ define
    \begin{align}\label{eq-def-DistL}
        \mathsf{DistL}_{\mathbf T}(\mathbf v) = \mathsf{Depth}( \mathsf{Des}_{\mathbf T}(\mathbf v) ) \,.
    \end{align}
    Since $\mathbf T$ is a tree, we have 
    \begin{align}
        &\#\big\{ \mathbf v \in V(\mathbf T): \mathsf{DistL}_{\mathbf T}(\mathbf v) = i \big\} \leq |\mathcal L'(\mathbf T)| \leq 0.0001N \,; \label{eq-control-near-leaf-vert} \\
        &\#\big\{ \mathbf v \in V(\mathbf T): \mathsf{Deg}_{\mathbf T}(\mathbf v) \geq 3 \big\} \leq |\mathcal L'(\mathbf T)| \leq 0.0001N \,.  \label{eq-control-deg>2-vert}
    \end{align}
    In addition, note that if $\mathsf{Deg}_{\mathbf T}(\mathbf v)=2$, then according to Step~1 we should have $\mathsf{Deg}_{\mathbf T'}(\mathbf v)\geq 3$, and thus $\mathtt l(\mathbf v) \geq 2$. Thus, since at least
    \begin{align*}
        \sum_{\mathbf v \in V(\mathbf T)} \mathsf{DistL}_{\mathbf T}(\mathbf v)(\mathtt l(\mathbf v)-1)
    \end{align*}
    vertices are deleted in Step~2 of our procedure $\mathtt{Compress}$, we have
    \begin{align*}
        N \geq \sum_{\mathbf v \in V(\mathbf T)} \mathsf{DistL}_{\mathbf T}(\mathbf v) (\mathtt l(\mathbf v)-1) \geq 100 \#\big\{ \mathbf v: \mathsf{Deg}_{\mathbf T}(\mathbf v)=2, \mathsf{DistL}_{\mathbf T}(\mathbf v) \geq 101 \big\} \,,  
    \end{align*}
    which leads to 
    \begin{align}
        \#\big\{ \mathbf v: \mathsf{Deg}_{\mathbf T}(\mathbf v)=2, (\mathsf{DistL}_{\mathbf T}(\mathbf v)) \geq 101 \big\} \leq 0.01N \,. \label{eq-control-far-leaf-vert}
    \end{align}
    Thus, we have
    \begin{align}
        |V(\mathbf T)| \leq\ & 1+|\mathcal L'(\mathbf T)|+ \#\big\{ \mathbf v \in V(\mathbf T): \mathsf{Deg}_{\mathbf T}(\mathbf v) \geq 3 \big\} + \#\big\{ \mathbf v \in V(\mathbf T): \mathsf{DistL}_{\mathbf T}(\mathbf v) \leq 100 \big\} \nonumber \\
        &+ \#\big\{ \mathbf v: \mathsf{Deg}_{\mathbf T}(\mathbf v)=2, (\mathsf{DistL}_{\mathbf T}(\mathbf v)) \geq 101 \big\} \nonumber \\
        \overset{ \eqref{eq-control-deg>2-vert}, \eqref{eq-control-near-leaf-vert}, \eqref{eq-control-far-leaf-vert} }{\leq} \ & 1+0.0001N + 0.0001N + 100 \cdot 0.0001N + 0.01N \leq 0.03N \,.  \label{eq-compressed-num-of-vert-upper-bound}
    \end{align}
    Finally, since at least $\sum_{\mathbf v \in V(\mathbf T)} \mathsf{DistL}_{\mathbf T}(\mathbf v)(\mathtt l(\mathbf v)-1)$ vertices is deleted in Step 2 of our procedure $\mathtt{Compress}$ we have 
    \begin{align}
        \sum_{\mathbf v \in V(\mathbf T)} \mathtt l(\mathbf v) &\leq |V(\mathbf T)| + |\mathcal L'(\mathbf T)| + \sum_{\mathbf v\in V(\mathbf T) , \mathsf{DistL}_{\mathbf T}(\mathbf v)>0}(\mathtt l(\mathbf v)-1)\nonumber \\
        &\leq 0.03N + 0.0001N + \sum_{\mathbf v\in V(\mathbf T) , \mathsf{DistL}_{\mathbf T}(\mathbf v)>0} \mathsf{DistL}_{\mathbf T}(\mathbf v)(\mathtt l(\mathbf v)-1) \nonumber \\
        &\leq 0.04N + |V(\mathbf T')|+1 \leq 1.05N\,.\label{eq-sum-of-label-compressed-vertices}
    \end{align}
    Combining \eqref{eq-compressed-num-of-vert-upper-bound}, \eqref{eq-sum-of-label-compressed-edges} and \eqref{eq-sum-of-label-compressed-vertices} we have
    \begin{align*}
        |\mathcal R^{\text{small}}_{N+1}| &= |\widetilde {\mathcal R^{\text{small}}_{N+1}}| \leq \sum_{1 \leq i \leq 0.03N} |\mathcal R_i| \cdot \#\{ (x_1 , \ldots ,x_k):x_i \in \mathbb Z_+,k\leq 0.0001N, \sum_{1\leq i \leq k}x_i \leq N\}  \\
        &\cdot \#\{ (x_1 , \ldots ,x_k):x_i \in \mathbb Z_+ ,k\leq 0.03N, \sum_{1\leq i \leq k}x_i \leq 1.05N\}  \\
        &\leq 2|\mathcal R_{\lfloor 0.03N\rfloor}| \cdot 100\tbinom{N}{0.0001N} \tbinom{1.05N}{0.03N} \leq \exp(-0.1N)|\mathcal R_{N+1}|  \leq e^{-0.11N} |\mathcal R_{N}| \,, 
    \end{align*}
    which yields \eqref{eq-goal-lem-B.5-relax}.
\end{proof}

Now we can prove the following lemma.

\begin{lemma}\label{lem-tree-ratio-1,2,4-upper}
    Recall \eqref{eq-choice-mathfrak-c} and \eqref{eq-def-mathcal-N-1,2,4}. We have $|\mathcal R^{(1,2,4)}_{N+1}| \leq \tfrac{\sqrt{\mathfrak c}}{2}|\mathcal R^{(1,2,4)}_{N}| \leq \mathfrak c |\mathcal N^{(1,2,4)}_N|$ for $N \geq 4$.
\end{lemma}
\begin{proof}
    Note that if the left-hand side inequality holds, as $\mathbf T\longrightarrow \mathbf T_+$ (recall \eqref{eq-def-T-+}) is an injection from $\mathcal R^{(1,2,4)}_{N-1} \setminus \{ \mathbf A_{N-1}\}$ to $\mathcal N^{(1,2,4)}_{N}$, which yields that 
    \begin{align*}
        |\mathcal N_{N}^{(1,2,4)}| \geq |\mathcal R_{N-1}^{(1,2,4)}|-1 \geq \tfrac{1}{2\sqrt{\mathfrak c}} |\mathcal R_N^{(1,2,4)}| \,.
    \end{align*}
    Thus, it suffices to show the left-hand side inequality $|\mathcal R^{(1,2,4)}_{N+1}| \leq \tfrac{\sqrt{\mathfrak c}}{2}|\mathcal R^{(1,2,4)}_{N}|$.
    Intuitively, if we remove a uniform random leaf of a tree in $\mathcal R^{(1,2,4)}_{N+1}$ we will typically obtain a tree in $\mathcal R^{(1,2,4)}_{N}$. To make this heuristic rigorous, for $\mathbf T\in \mathcal R^{(1,2,4)}_{N+1}$ and $\mathbf v \in \mathsf N(\mathbf T)$, define $\mathsf{Cut}(\mathbf v,\mathbf T)$ the tree in $\mathcal R_N$ obtained by deleting a leaf in $\mathcal L'(\mathbf T) \cap \mathsf{Nei}_{\mathbf T}(\mathbf v)$. For $N \geq 2$ and $\varrho \in \mathcal D_N$, define
    \begin{equation}\label{eq-def-mathfrak-E-N,varrho}
        \mathfrak{E}_{N,\varrho}= \Big\{ (\mathbf v,\mathbf T):\mathbf T\in \mathcal R^{(1,2,4)}_N \cap \mathsf{Type}_\varrho \,, \mathbf v\in \mathsf{N}(\mathbf T)\,, \mathsf{Cut}(\mathbf v,\mathbf T) \not\in \mathcal R^{(1,2,4)}_{N-1} \Big\} \,,
    \end{equation}
    and
    \begin{equation}\label{eq-def-kappa-N,varrho}
        \mathtt{p}_{N,\varrho} = \frac{|\mathfrak{E}_{N,\varrho} |}{(N-1)|\mathcal R^{(1,2,4)}_N \cap \mathsf{Type}_\varrho|} \,.
    \end{equation}
    We claim that
    \begin{equation}\label{eq-uniformly-cut-leaf-typically-remain-1,2,4}
        \mathtt{p}_N := \max_{\varrho \in \mathcal D_N}\mathtt{p}_{N,\varrho} \leq \log_{\frac{4}{3}}\big( N \wedge \mathtt L 3^{\mathtt L} \big) \cdot \exp(-0.1\mathtt L) + \sum_{j \geq 1, (4/3)^{j+1} 3^{\mathtt L}\leq  N} \exp(-0.1 (\mathtt L \vee (4/3)^j))\,.
    \end{equation}
    To show \eqref{eq-uniformly-cut-leaf-typically-remain-1,2,4}, we first note that \eqref{eq-uniformly-cut-leaf-typically-remain-1,2,4} holds for $N \leq \mathtt L$. Next, given any $N \geq \mathtt L + 1$ and $\varrho =\big((\nu_1 , \ldots , \nu_N), (\vartheta_1, \ldots , \vartheta_N)\big) \in \mathcal D_N$ such that \eqref{eq-uniformly-cut-leaf-typically-remain-1,2,4} holds for any integer less than $N$, we divide our analysis into the following three cases.

    \textbf{Case 1: $\sum_{i\geq 1}(\nu_{i} - \vartheta_i ) \leq 1$.} If $\sum_{i\geq 1}(\nu_{i} - \vartheta_i ) = 0$, then removing any leaf in $\mathcal L'(\mathbf T)$ for any $\mathbf T \in \mathcal R^{(1,2,4)}_N \cap \mathsf{Type}_\varrho$ yields a tree in $\mathcal R^{(1,2,4)}_{N-1}$, which gives $\mathtt p_{N,\varrho} = 0$. Else, we have $\sum_{i\geq 1}(\nu_{i} - \vartheta_i ) = 1$, and therefore there is a unique $j_0$ such that $\nu_{j_0} - \vartheta_{j_0} = 1$. For a tree $\mathbf T$ in this subcase, all but one tree in $\mathsf{Subs}(\mathbf T)$ are arm-trees. Therefore we have
    \begin{align}\label{eq-varkappa-bound-prelim-case}
        \mathtt p_{N,\varrho} \leq \frac{|\mathfrak{E}_{N,\varrho}|}{(N-1)|\mathcal R_{j_0}^{(1,2,4)}|} = \frac{\sum_{\varrho' \in \mathcal D_{j_0}} |\mathfrak{E}_{j_0,\varrho'}|}{(N-1) \sum_{\varrho' \in \mathcal D_{j_0}} |\mathcal R_{j_0}^{(1,2,4)} \cap\mathsf{Type}_{\varrho'}|} \leq \max_{\varrho' \in \mathcal D_{j_0}} \mathtt p_{j_0,\varrho'} = \mathtt p_{j_0}\,.
    \end{align}
    Therefore, in Case 1 we always have
    \begin{align}\label{eq-varkappa-bound-simplest}
    \mathtt p_{N,\varrho}  \leq \max_{j_0 < N} \mathtt p_{j_0}\,.
    \end{align}

    \textbf{Case 2: $\sum_{i\geq 1}(\nu_{i} - \vartheta_i ) \geq 2$, and there exists $A \geq 1$ such that $\sum_{A+1\leq i \leq 3A} (\nu_i - \vartheta_i ) = 0$, $\sum_{i > 3A} (\nu_i - \vartheta_i ) > 0$ and $\sum_{i \leq A} (\nu_i - \vartheta_i ) > 0$.} Denote $N_1 =\Sigma_{\varrho , \leq A}$ and $N_2 = \Sigma_{\varrho , >A}$ for simplicity. Note that in this case it must hold that $\nu_i=\vartheta_i$ for all $A+1 \leq i \leq 3A$. Recall \eqref{eq-def-Sigmas} -- \eqref{eq-def-split-A-T} and recall \eqref{eq-def-mathfrak-E-N,varrho}. Define
    \begin{align}
        \mathfrak E^{\leq A}_{N, \varrho} = \Big\{ (\mathbf v,\mathsf{Concat}(\mathbf T_1,\mathbf T_2)): (\mathbf v,\mathbf T_1) \in \mathfrak E_{N_1,\varrho_{\leq A}} \,, \mathbf T_2 \in R^{(1,2,4)}_{N_2} \cap \mathsf{Type}_{\varrho_{>A}} \Big\}\,,\label{eq-def-mathfrak-E-leq-A}\\
        \mathfrak E^{>A}_{N,\varrho} = \Big\{ (\mathbf v,\mathsf{Concat}(\mathbf T_1,\mathbf T_2)):(\mathbf v,\mathbf T_2) \in \mathfrak E_{N_2,\varrho_{>A}}\,, \mathbf T_1 \in R^{(1,2,4)}_{N_1} \cap \mathsf{Type}_{\varrho_{\leq A}}  \Big\}\,. \label{eq-def-mathfrak-E->A}
    \end{align}
    We shall show
    \begin{align}
        \mathfrak E_{N, \varrho} \subset \mathfrak E^{\leq A}_{N, \varrho} \cup \mathfrak E^{>A}_{N, \varrho}\,.\label{eq-split-cut-equals-cut-split}
    \end{align}
    To show \eqref{eq-split-cut-equals-cut-split}, note that two similar trees with $A_1,A_2$ vertices always satisfy $\tfrac{1}{2} \leq\tfrac{A_1}{A_2} \leq 2$ by Definition~\ref{def-similar-relation}. Therefore, suppose $(\mathbf v,\mathbf T) \in\mathfrak E_{N, \varrho}$, if $\mathbf v \in\mathbf T_{\leq A}$ and $\mathsf{Cut}(\mathbf v,\mathbf T_{\leq A}) \in \mathcal R^{(1,2,4)}_{N_1-1,\,\varrho_{\leq A}}$, then by assumption $(\mathbf v,\mathbf T) \in\mathfrak E_{N, \varrho}$ we know that $\mathsf{Concat}(\mathsf{Cut}(\mathbf v,\mathbf T_{\leq A}),\mathbf T_{>A}) = \mathsf{Cut}(\mathbf v,\mathbf T) \not\in \mathcal R^{(1,2,4)}_{N-1}$. This implies that there exist $\mathbf T_1 \in \mathsf{Subs}(\mathsf{Cut}(\mathbf v,\mathbf T_{\leq A}))$ and $\mathbf T_2 \in \mathsf{Subs}(\mathbf T_{>A})$ such that $\mathbf v \in V(\mathbf T_1)$ and $\mathbf T_1 \sim \mathbf T_2$. Since cutting the unique leaf of an arm tree always generates an arm tree, which cannot be similar to any unlabeled rooted tree, we know that $\mathbf{T}_1,\mathbf{T}_2$ are not arm trees. Thus, we get
    \begin{align*}
        \nu_{|V(\mathbf T_1)|+1}-\vartheta_{|V(\mathbf T_1)|+1} &> 0\,, \nu_{|V(\mathbf T_2)|}-\vartheta_{|V(\mathbf T_2)|} >0\,; \\
        |V(\mathbf T_1)|+1 &\leq A\,, |V(\mathbf T_2)| > 3A\,.
    \end{align*}
    This immediately leads to the contradiction $\tfrac{1}{2} \leq\tfrac{A}{3A} \leq 2$. Therefore, if $v \in T_{\leq A}$ we always have $\mathsf{Cut}(v,T_{\leq A}) \not\in \mathcal R^{(1,2,4)}_{N_1-1,\,\varrho_{\leq A}}$, which yields $(v,T) \in \mathfrak E^{\leq A}_{N, \varrho}$. In a similar manner, we can show that if $v \in T_{> A}$ then we always have $(v,T) \in \mathfrak E^{> A}_{N, \varrho}$. Therefore, \eqref{eq-split-cut-equals-cut-split} holds. By induction hypothesis this implies
    \begin{align}
        |\mathfrak E_{N, \varrho}| &\leq |\mathfrak E^{\leq A}_{N, \varrho}| + |\mathfrak E^{> A}_{N, \varrho}| = |\mathfrak E_{N_1,\, \varrho_{\leq A}}|\cdot |\mathcal R^{(1,2,4)}_{N_2} \cap \mathsf{Type}_{\varrho_{>A}}| + |\mathcal R^{(1,2,4)}_{N_1} \cap \mathsf{Type}_{\varrho_{\leq A}}| \cdot |\mathfrak E_{N_2,\, \varrho_{> A}}|\nonumber \\
        &\leq \big((N_1 - 1)\mathtt{p}_{N_1} + (N_2 - 1)\mathtt{p}_{N_2}\big)\cdot|\mathcal R^{(1,2,4)}_{N_1} \cap \mathsf{Type}_{\varrho_{\leq A}}|\cdot|\mathcal R^{(1,2,4)}_{N_2} \cap \mathsf{Type}_{\varrho_{>A}}|\nonumber\\
        &\leq (N-1)|\mathcal R^{(1,2,4)}_{N} \cap \mathsf{Type}_{\varrho}|(\mathtt{p}_{N_1} \vee \mathtt{p}_{N_2})\,. \label{eq-varkappa-bound-case-1}
    \end{align}
    In order to see the last inequality, we see that no two trees $\mathbf T_1,\mathbf T_2$ with $(\mathbf T_1,\mathbf T_2) \in (\mathcal R^{(1,2,4)}_{N_1} \cap \mathsf{Type}_{\varrho_{\leq A}}) \times (\mathcal R^{(1,2,4)}_{N_2} \cap \mathsf{Type}_{\varrho_{>A}})$ can be similar, and thus
    \begin{align*}
        \mathcal R^{(1,2,4)}_{N} \cap \mathsf{Type}_{\varrho}=\Big\{ \mathsf{Concat}(\mathbf T_1,\mathbf T_2):(\mathbf T_1, \mathbf T_2) \in\mathcal R^{(1,2,4)}_{N_1} \cap \mathsf{Type}_{\varrho_{\leq A}}\times\mathcal R^{(1,2,4)}_{N_2} \cap \mathsf{Type}_{\varrho_{>A}} \Big\}\,,
    \end{align*}
    which implies that
    \begin{equation*}
        |\mathcal R^{(1,2,4)}_{N} \cap \mathsf{Type}_{\varrho}|=|\mathcal R^{(1,2,4)}_{N_1} \cap \mathsf{Type}_{\varrho_{\leq A}}| \cdot|\mathcal R^{(1,2,4)}_{N_2} \cap \mathsf{Type}_{\varrho_{>A}}| \,.
    \end{equation*}

    \textbf{Case 3: $\sum_{i\geq 1}(\nu_{i} - \vartheta_i ) \geq 2$, but there does not exist $A \geq 1$ such that $\sum_{A+1\leq i \leq 3A} (\nu_i - \vartheta_i ) = 0$, $\sum_{i > 3A} (\nu_i - \vartheta_i ) > 0$ and $\sum_{i \leq A} (\nu_i - \vartheta_i ) > 0$.} Recall the definition of $\mathsf{Subs}^*(\mathbf T)$ in \eqref{def-sub-except-arm-on-rooted-trees}. Note that in this case we always have $|\mathsf{Subs}^*(\mathbf T)| \geq 2$ for any $\mathbf T \in \mathcal R^{(1,2,4)}_N \cap \mathsf{Type}_\varrho$. Therefore, we have 
    \begin{align*}
        (N-1) \geq \sum_{\mathbf T_s \in \mathsf{Subs}^*(\mathbf T)}|V(\mathbf T_s)| \geq (1+\tfrac{1}{3})\max\big\{ |V(\mathbf T_s)|:\mathbf T_s \in \mathsf{Subs}^*(\mathbf T) \big\}\,,
    \end{align*}
    and therefore $\max\{|V(\mathbf T_s)|:\mathbf T_s \in \mathsf{Subs}^*(\mathbf T) \} \leq \tfrac{3}{4}(N-1)$. Also, for any $\mathbf T \in \mathcal R^{(1,2,4)}_N \cap \mathsf{Type}_\varrho$ we have $\tfrac{\max\{|V(\mathbf T_s)|:\mathbf T_s \in \mathsf{Subs}^*(\mathbf T)\}}{\min\{|V(\mathbf T_s)|: \mathbf T_s \in \mathsf{Subs}^*(\mathbf T)\}} \leq 3^{\mathtt L}$ by the fact that $|\mathsf{Subs}^*(\mathbf T)| \leq \mathtt L$ from Item (1) of Definition~\ref{def-tilde-T-K}. Denote $\mathtt{Max}_{\varrho} = \max\{|V(\mathbf T_s)|: \mathbf T_s \in \mathsf{Subs}^*(\mathbf T)\}$ (which is determined only by $\varrho$). Next, note that cutting a leaf on an arm subtree yields another arm subtree with shorter arm length, and any arm tree is not similar to any unlabeled rooted tree. Therefore, for $\mathbf T\in \mathcal R^{(1,2,4)}_N$, $\mathbf T' \in \mathsf{ASubs}(\mathbf T)$ and $v \in \mathsf{N}(\mathbf T')$ we always have $\mathsf{Cut}(\mathbf v,\mathbf T) \in \mathcal R^{(1,2,4)}_{N-1}$. Thus,
    \begin{align}
        & |\mathfrak{E}_{N,\varrho}| = \sum_{\substack{ \mathbf T\in \mathcal R^{(1,2,4)}_N \cap \mathsf{Type}_\varrho\\ \mathbf T' \in \mathsf{Subs}^*(\mathbf T)}} \#\Big\{ \mathbf v:\mathbf v\in \mathsf{N}(\mathbf T')\,, \mathsf{Cut}(\mathbf v,\mathbf T) \not\in \mathcal R^{(1,2,4)}_{N-1} \Big\} \nonumber\\
        \leq\ & \#\Bigg( \bigcup_{\substack{ \mathbf T\in \mathcal R^{(1,2,4)}_N \cap \mathsf{Type}_\varrho\\ \mathbf T' \in \mathsf{Subs}^*(\mathbf T)}} \Big\{ (\mathbf v,\mathbf T):\mathbf v\in \mathsf{N}(\mathbf T')\,,\mathsf{Cut}(\mathbf v,\mathbf T') \in \mathcal R^{(1,2,4)}_{|V(\mathbf T')|-1}\,, \mathsf{Cut}(\mathbf v,\mathbf T) \not\in \mathcal R^{(1,2,4)}_{N-1} \Big\} \Bigg) \label{eq-cut-decomposition-part-1} \\
        &+ \sum_{\substack{ \mathbf T\in \mathcal R^{(1,2,4)}_N \cap \mathsf{Type}_\varrho\\ \mathbf T' \in \mathsf{Subs}^*(\mathbf T)}} \#\Big\{ \mathbf v:\mathbf v\in \mathsf{N}(\mathbf T')\,,\mathsf{Cut}(\mathbf v,\mathbf T') \not\in \mathcal R^{(1,2,4)}_{|V(\mathbf T')|-1} \Big\}\label{eq-cut-decomposition-part-2}\,,
    \end{align}
    We first estimate \eqref{eq-cut-decomposition-part-2}, which is bounded by (recall \eqref{eq-def-eta-starstar-i})
    \begin{align}
        & |\mathcal R^{(1,2,4)}_N \cap \mathsf{Type}_\varrho|\cdot \sum_{\substack{1\leq i \leq N}} (\nu_i - \vartheta_i) \E_{\mathbf T'\sim\eta^{**}_{N,\varrho;i}}\Big[ \#\big\{ \mathbf v:\mathbf v\in \mathsf{N}(\mathbf T')\,,\mathsf{Cut}(\mathbf v,\mathbf T') \not\in \mathcal R^{(1,2,4)}_{|V(\mathbf T')|-1} \big\} \Big]   \nonumber \\
        \leq\ & |\mathcal R^{(1,2,4)}_N \cap \mathsf{Type}_\varrho|\cdot \sum_{\substack{[3^{-\mathtt L}\mathtt{Max}_{\varrho}]\vee \mathtt L\leq i \leq \mathtt{Max}_{\varrho}}} (\nu_i - \vartheta_i) \E_{\mathbf T'\sim\eta^{*}_{i}}\Big[ \#\big\{ \mathbf v:\mathbf v\in \mathsf{N}(\mathbf T')\,,\mathsf{Cut}(\mathbf v,\mathbf T') \not\in \mathcal R^{(1,2,4)}_{|V(\mathbf T')|-1} \big\} \Big] \nonumber \\
        &+ |\mathcal R^{(1,2,4)}_N \cap \mathsf{Type}_\varrho|\cdot \sum_{\substack{[3^{-\mathtt L }\mathtt{Max}_{\varrho}]\vee \mathtt L\leq i \leq \mathtt{Max}_{\varrho}}} (\nu_i - \vartheta_i)(i-1)\mathsf{TV}(\eta^*_i , \eta^{**}_{N,\varrho;i}) \nonumber \\  
        \leq\ & |\mathcal R^{(1,2,4)}_N \cap \mathsf{Type}_\varrho|\cdot \sum_{\substack{[3^{-\mathtt L }\mathtt{Max}_{\varrho}]\vee \mathtt L\leq i \leq \mathtt{Max}_{\varrho}}} (\nu_i - \vartheta_i)(i-1) (\mathsf{TV}(\eta^*_i , \eta^{**}_{N,\varrho;i}) + \mathtt{p}_i) \nonumber \\ 
        \leq\ & (N-1)|\mathcal R^{(1,2,4)}_N \cap \mathsf{Type}_\varrho|\cdot \Big(0.7^{[3^{-\mathtt L }\mathtt{Max}_{\varrho}]\vee \mathtt L} + \max_{[3^{-\mathtt L }\mathtt{Max}_{\varrho}]\vee \mathtt L\leq i \leq \mathtt{Max}_{\varrho}} \mathtt{p}_i \Big)\,,\label{eq-cut-part-2-analysis}
    \end{align}
    where the last inequality holds by Lemma~\ref{lem-tree-N-N+1-ratio-lowers} and the fact that $\eta^*_i =\eta^{**}_i $ for $i \geq \mathtt L$. We then estimate \eqref{eq-cut-decomposition-part-1}, and for convenience we denote \eqref{eq-cut-decomposition-part-1} as $\# \mathfrak{E}^{\square}_{N,\varrho}$. For $i\geq \mathtt L$ such that $\nu_i-\vartheta_i> 0$ we denote
    \begin{align*}
        \varrho_i' &= \big((\nu_j - \mathbf{1}_{\{j = i\}}+\mathbf{1}_{\{j = i-1\}})_{1\leq j \leq N-1}\,, (\vartheta_j)_{1\leq j \leq N-1}\big):=\big((\nu'_{i;j})_{1\leq j \leq N-1}\,, (\vartheta_{j})_{1\leq j \leq N-1}\big) \,; \\
        \mathfrak{T}^{\square}_{N,i} &=\{ \mathbf T \in \mathsf{Type}_{\varrho'_i} \setminus \mathcal R^{(1,2,4)}_{N-1}:\mathbf T' \in \mathcal R^{(1,2,4)}_{|V(\mathbf T')|} \mbox{ for all } \mathbf T' \in \mathsf{Subs}(\mathbf T)\} \,.
    \end{align*}
    Then each tree in $\mathfrak{E}^{\square}_{N,\varrho}$ can be obtained by adding a leaf to a subtree with $i-1$ vertices of a tree in $\mathfrak{T}^{\square}_{N,i}$ for some $i \geq \mathtt L$ with $\nu_i - \vartheta_i> 0$. By the fact that an arm path cannot be similar to any tree, \eqref{eq-cut-decomposition-part-1} can be bounded by
    \begin{align}
        \sum_{\substack{1\leq i \leq N}}&\Big((\nu_{i-1} - \vartheta_{i-1} + 1)(i-1)\#\mathfrak{T}^{\square}_{N,i}\Big)\,. \label{eq-cut-part-1-equivalent}
    \end{align}
    Consider any $i \geq \mathtt L$ such that $\nu_i-\vartheta_i > 0$. Note that there are at least (recall the definition of $H(\iota,j)$ \eqref{eq-def-H-iota-mathtt-i})
    \begin{equation*}
        \prod_{1 \leq j < \mathtt L} \binom{|\mathcal R^{(1,2,4)}_j| + \nu_{i;j}' - \vartheta_j - 2 }{\nu_{i;j}' - \vartheta_j} \cdot \prod_{j \geq\mathtt L} \binom{|\mathcal R^{(1,2,4)}_j| + \nu_{i;j}' - \vartheta_j - 1 - H(\iota,j)}{\nu_{i;j}' - \vartheta_j}
    \end{equation*}
    ways of choosing $T \in \mathsf{Type}_{\varrho'_i} \cap \mathcal R^{(1,2,4)}_{N-1}$. Moreover, there are at most 
    \begin{equation}
        \Big( \prod_{1 \leq j < \mathtt L} \tbinom{|\mathcal R^{(1,2,4)}_j| + \nu_{i;j}' - \vartheta_j - 2}{\nu_{i;j}' - \vartheta_j}   \Big)
        \Big( \prod_{j \geq \mathtt L} \tbinom{|\mathcal R^{(1,2,4)}_j| + \nu_{i;j}' - \vartheta_j - 1}{\nu_{i;j}' - \vartheta_j} \Big)
    \end{equation}
    to enumerate $\mathsf{Type}_{\varrho_i'}$. Therefore, we have
    \begin{align}
        \#\mathfrak{T}^{\square}_{N,i} &\leq \Big( \prod_{1 \leq j < \mathtt L} \tbinom{|\mathcal R^{(1,2,4)}_j| + \nu_{i;j}' - \vartheta_j - 2}{\nu_{i;j}' - \vartheta_j} \Big)\Big( \prod_{j \geq \mathtt L} \tbinom{|\mathcal R^{(1,2,4)}_j| + \nu_{i;j}' - \vartheta_j - 1}{\nu_{i;j}' - \vartheta_j} \Big) - |\mathsf{Type}_{\varrho'_i} \cap \mathcal R^{(1,2,4)}_{N-1}|\nonumber \\
        &\leq |\mathsf{Type}_{\varrho'_i} \cap \mathcal R^{(1,2,4)}_{N-1}|\cdot\Big( \prod_{j \geq \mathtt L}(1-\tfrac{H(\iota ,j)}{|\mathcal R^{(1,2,4)}_j|})^{-\nu_{i;j}' + \vartheta_j} -1\Big)\nonumber \\
        &\leq |\mathsf{Type}_{\varrho'_i} \cap \mathcal R^{(1,2,4)}_{N-1}| \cdot (\exp(0.68^{[3^{-\mathtt L }\mathtt{Max}_{\varrho}]\vee \mathtt L}) - 1)\,,\label{eq-cut-part-1-analysis-1}
    \end{align}
    where the third inequality holds by \eqref{eq-estimate-loss-of-condition-4-general}. Also, note that
    \begin{align}
        \frac{|\mathsf{Type}_{\varrho'_i} \cap \mathcal R^{(1,2,4)}_{N-1}|}{|\mathsf{Type}_{\varrho} \cap \mathcal R^{(1,2,4)}_{N}|} &\leq \frac{\prod_{1 \leq j < \mathtt L} \tbinom{|\mathcal R^{(1,2,4)}_j| + \nu_{i;j}' - \vartheta_j - 2}{\nu_{i;j}' - \vartheta_j} \cdot \prod_{j \geq \mathtt L} \tbinom{|\mathcal R^{(1,2,4)}_j| + \nu_{i;j}' - \vartheta_j - 1}{\nu_{i;j}' - \vartheta_j} }{\prod_{1 \leq j < \mathtt L} \tbinom{|\mathcal R^{(1,2,4)}_j| + \nu_{j} - \vartheta_j- 2}{\nu_{j} - \vartheta_j} \cdot \prod_{j \geq \mathtt L} \tbinom{|\mathcal R^{(1,2,4)}_j| + \nu_{j} - \vartheta_j - H(\iota,j)- 1}{\nu_{j} - \vartheta_j} }\nonumber \\
        &\leq \tfrac{(\nu_i - \vartheta_i )}{(\nu_{i-1} - \vartheta_{i-1}+1)} \cdot \tfrac{|\mathcal R^{(1,2,4)}_{i-1}| + \nu_{i-1} - \vartheta_{i-1} }{|\mathcal R^{(1,2,4)}_i| + \nu_{i} - \vartheta_i - 1} \cdot \prod_{j \geq \mathtt L} (1-\tfrac{H(\iota , j)}{|\mathcal R^{(1,2,4)}_j| })^{\vartheta_j - \nu_j}\nonumber \\
        &\leq \tfrac{(\nu_i - \vartheta_i )}{(\nu_{i-1} - \vartheta_{i-1}+1)} \cdot\tfrac{|\mathcal R^{(1,2,4)}_{i-1}| + \mathtt L}{|\mathcal R^{(1,2,4)}_i| - 1} \cdot \exp(0.68^{[3^{-\mathtt L }\mathtt{Max}_{\varrho}]\vee \mathtt L})\leq \tfrac{(\nu_i - \vartheta_i )}{(\nu_{i-1} - \vartheta_{i-1}+1)} \,,\label{eq-cut-part-1-analysis-2}
    \end{align}
    where in the second to last inequality we used $|\mathcal R^{(1,2,4)}_{i-1}|\leq |\mathcal R^{(1,2,4)}_{i}|$ which follows from Lemma~\ref{lem-prelim-tree-ratio}. Combining \eqref{eq-cut-part-1-analysis-1} and \eqref{eq-cut-part-1-analysis-2} we obtain that
    \begin{align}
        \eqref{eq-cut-part-1-equivalent} &\leq (\exp(0.68^{[3^{-\mathtt L }\mathtt{Max}_{\varrho}]\vee \mathtt L}) - 1)\cdot\sum_{i\geq \mathtt L , \nu_i - \vartheta_i > 0}\big(|\mathsf{Type}_{\varrho} \cap \mathcal R^{(1,2,4)}_{N}| \cdot (\nu_i - \vartheta_i)(i-1)\big) \nonumber \\
        &\leq 0.7^{[3^{-\mathtt L}\mathtt{Max}_{\varrho}]\vee\mathtt L}\cdot (N-1) |\mathsf{Type}_{\varrho} \cap \mathcal R^{(1,2,4)}_{N}|\,.\label{eq-cut-part-1-final}
    \end{align}
    Therefore, combining \eqref{eq-cut-part-1-final} and \eqref{eq-cut-part-2-analysis}, in Case 3 we have
    \begin{align}
        \mathtt{p}_{N, \varrho } &\leq 0.72^{[3^{-\mathtt L} \mathtt{Max}_{\varrho}]\vee \mathtt L} + \max_{ i \leq \mathtt{Max}_{\varrho}} \mathtt{p}_i \nonumber  \\
        &\leq \exp(-0.2(\mathtt L \vee \tfrac{\mathtt{Max}_{\varrho}}{ 3^{\mathtt L}})) + \log_{\frac{4}{3}}\big( \mathtt{Max}_{\varrho} \wedge \mathtt L 3^{\mathtt L} \big) e^{-0.1\mathtt L} + \sum_{j \geq 1, (4/3)^{j+1} 3^{\mathtt L}\leq  \mathtt{Max}_{\varrho}} \exp(-0.1 (\mathtt L \vee (4/3)^j))\nonumber\\
        &\leq  \log_{\frac{4}{3}}\big( N \wedge \mathtt L 3^{\mathtt L} \big) e^{-0.1\mathtt L} + \sum_{j \geq 1, (4/3)^{j+1} 3^{\mathtt L}\leq  N} \exp(-0.1 (\mathtt L \vee (4/3)^j)) \,, \label{eq-varkappa-bound-case-2}
    \end{align}
    where in the last inequality we use the fact that
    \begin{align*}
        \varphi(t) + e^{ -0.2(\mathtt L \vee \tfrac{t}{ 3^{\mathtt L}}) } \leq \varphi( \tfrac{4t}{3} ) \mbox{ where } \varphi(t) = \log_{\frac{4}{3}}\big( t \wedge \mathtt L 3^{\mathtt L} \big) e^{-0.1\mathtt L} + \sum_{ \substack{ j \geq 1, (4/3)^{j+1} 3^{\mathtt L}\leq t } } e^{-0.1 (\mathtt L \vee (4/3)^j)} \,.
    \end{align*}
    In conclusion, \eqref{eq-uniformly-cut-leaf-typically-remain-1,2,4} always holds combining \eqref{eq-varkappa-bound-simplest}, \eqref{eq-varkappa-bound-case-1} and \eqref{eq-varkappa-bound-case-2}. Thus, combining \eqref{eq-high-probability-NL-large} and \eqref{eq-uniformly-cut-leaf-typically-remain-1,2,4} we have
    \begin{align*}
        N|\mathcal R^{(1,2,4)}_N| &\geq \#\Big\{(\mathbf v,\mathbf T):\mathbf T \in \mathcal R^{(1,2,4)}_N \,, \mathbf v\in V(\mathbf T)\,, \mathbf T\oplus \mathcal L^{(1)}_{\mathbf v} \in \mathcal R^{(1,2,4)}_{N+1} \Big\}\\
        &= \#\Big\{(\mathbf v,\mathbf T):\mathbf T \in \mathcal R^{(1,2,4)}_{N+1} \,, \mathbf v\in \mathsf{N}(\mathbf T)\,, \mathsf{Cut}(\mathbf v,\mathbf T)\in \mathcal R^{(1,2,4)}_{N} \Big\}\\
        &\geq \sum_{\mathbf T \in \mathcal R^{(1,2,4)}_{N+1}: |\mathsf{N}(\mathbf T)| \geq 0.0001N} \Big( |\mathsf{N}(T)| - \#\{ \mathbf v\in \mathsf{N}(\mathbf T):\, \mathsf{Cut}(\mathbf v,\mathbf T)\not\in \mathcal R^{(1,2,4)}_{N}\} \Big) \\
        &\geq \sum_{\mathbf T \in \mathcal R^{(1,2,4)}_{N+1}:|\mathsf{N}(\mathbf T)| \geq 0.0001N} \Big( 0.0001N - \#\big\{\mathbf v\in \mathsf{N}(\mathbf T):\, \mathsf{Cut}(\mathbf v,\mathbf T)\not\in \mathcal R^{(1,2,4)}_{N} \big\} \Big) \\
        &\geq 0.0001N \cdot \eta^*_{N+1}\big[ |\mathsf{N}(\mathbf T)|\geq 0.0001N \big] \cdot |\mathcal R^{(1,2,4)}_{N+1}| - \mathtt{p}_N \cdot N | \mathcal R^{(1,2,4)}_{N+1}|\\
        &\overset{\eqref{eq-high-probability-NL-large},\eqref{eq-uniformly-cut-leaf-typically-remain-1,2,4}}{\geq} 0.0001N\cdot(1-\exp(-0.05N) - 10\mathtt L^2 0.8^{\mathtt L})|\mathcal R^{(1,2,4)}_{N+1}| \geq 2 \cdot 10^{-5}N|\mathcal R^{(1,2,4)}_{N+1}|\,,
    \end{align*}
    which yields our desired result that $|\mathcal R^{(1,2,4)}_{N+1}| \leq \tfrac{\sqrt{\mathfrak c}}{2}|\mathcal R^{(1,2,4)}_N|$ according to \eqref{eq-choice-mathfrak-c} and therefore completes the proof.
\end{proof}

Recall \eqref{eq-def-mathbf-F-T}. Before finally proving \eqref{eq-thm-2.7-key-expectation}, we introduce a lemma that analyzes the increment of $\mathbf F(\mathbf T)$ when we perform induction argument on layers of $\mathbf{T}$ (where we always denote $\varrho(\mathbf T)$ by $( (\vartheta_1,\ldots,\vartheta_N), (\nu_1,\ldots,\nu_N))$). Recall \eqref{eq-choice-mathtt-m}, \eqref{eq-def-mathfrak-m-T}, \eqref{eq-def-mathfrak-v-T}, \eqref{def-chained-noise-pair-in-T_iota} and \eqref{def-chained-signal-pair-in-T_iota}.
\begin{lemma}\label{lem-induction-estimates-on-rooted-trees-function}
    For $N\geq\mathtt L^2+\mathtt m+3$, define (the $\{\vartheta_x\},\{\nu_x\}$ below depend on $\mathbf{T}$)
    \begin{align}
        & \mathbf F_{1;i,j}(\mathbf T) = \mathbf{1}_{\{\mathfrak m(\mathbf T_{i,j})=\mathtt m-1,\mathsf{Branch}_{\mathbf T_{i,j}}(\mathfrak v(\mathbf T_{i,j}))=2,\mathfrak k(\mathfrak v(\mathbf T_{i,j})) = 0, \sum \nu_x = 2 , \sum \vartheta_x = 0, i \geq \mathtt L^2+\mathtt m-1 \}} \,,  \label{eq-def-mathbf-F-1} \\
        &\mathbf F_{2;i,j}(\mathbf T) = \sum_{0 \leq a,b \leq 1} \mathbf{1}_{\{\mathfrak m(\mathbf T_{i,j}) = \mathtt m-1, \mathfrak k(\mathfrak v(\mathbf T_{i,j})) = a, \sum \vartheta_x = b, \sum_x \nu_x \leq 3, i \geq \mathtt L^2+\mathtt m-1 \}} \,. \label{eq-def-mathbf-F-2}
    \end{align}
    Then we have the following estimates:
    \begin{itemize}
        \item[(1)] $\Eb_{\eta^*_N}\big[ \sum_{1 \leq i \leq N, 1 \leq j \leq \nu_i } \mathbf F_{1;i,j}(\mathbf T) \big] \geq \mathfrak c^{-8-\mathtt m}$;
        \item[(2)] $\Eb_{\eta^*_N}\big[ \sum_{1 \leq i \leq N, 1 \leq j \leq \nu_i } \mathbf F_{2;i,j}(\mathbf T) \big] \leq \mathfrak c^{8} \cdot \Eb_{\eta^*_N}\big[ \sum_{1 \leq i \leq N, 1 \leq j \leq \nu_i } \mathbf F_{1;i,j}(\mathbf T) \big]$.
    \end{itemize}
\end{lemma}

\begin{proof}
As for Item~(1), it suffices to prove (recall \eqref{def-Enum-p-etastar-rho})
\begin{equation}\label{eq-lem-induction-estimates-(1)-expansion}
    \sum_{\substack{\varrho \in \mathcal D_N \\ \sum \nu_x = 2 \\ \sum \vartheta_x = 0}} p(\varrho) \sum_{\substack{\mathtt L^2 + \mathtt m - 1 \leq i \leq N\\1 \leq j \leq \nu_i}} \eta^*_{N,\varrho,i} \Big(\mathfrak m(\mathbf T_{i,j}) = \mathtt m -1 , \mathsf{Branch}_{\mathbf T_{i,j}}(\mathfrak v(\mathbf T_{i,j})) = 2, \mathfrak k(\mathfrak v(\mathbf T_{i,j})) = 0 \Big) \geq \mathfrak c^{-8-\mathtt m} \,.
\end{equation}
Define $\varrho^\diamond = \big( (0,\ldots,0), (\nu^\diamond_1 , \ldots , \nu^\diamond_N) \big)$ where $\nu^\diamond_i = \mathbf{1}_{\{ i=3\}} + \mathbf{1}_{\{ i=N-4\}}$. Noting that $\mathcal N^{(1,2,4)}_3$ is non-empty and using Lemma~\ref{lem-tree-ratio-1,2,4-upper}, we have
\begin{equation}\label{eq-lem-induction-estimates-(1)-p-diamond-varrho}
    p(\varrho^\diamond) \geq \frac{|\mathcal N^{(1,2,4)}_{N-4}|}{|\mathcal R^{(1,2,4)}_N|} \geq \mathfrak c^{-4}\,,
\end{equation}
Note that under $\rho^\diamond$, $\eta^*_{N,\varrho^\diamond;N-4}$ and $\eta^{**}_{N-4}$ are the same measure. Therefore, we have
\begin{align}\label{eq-lem-induction-estimates-(1)-mu-N-diamond-varrho}
    & \eta^*_{N,\varrho^\diamond;N-4} \Big(\mathfrak m(\mathbf T_{N-4,1}) = \mathtt m -1 , \mathsf{Branch}_{\mathbf T_{N-4,1}}(\mathfrak v(\mathbf T_{N-4,1})) = 2, \mathfrak k(\mathfrak v(\mathbf T_{N-4,1})) = 0 \Big) \nonumber \\
    =\ & \eta^{**}_{N-4} \Big(\mathfrak m(\mathbf T) = \mathtt m -1 , \mathsf{Branch}_{\mathbf T}(\mathfrak v(\mathbf T)) = 2, \mathfrak k(\mathfrak v(\mathbf{T}))=0 \Big) \, \nonumber \\
    =\ & \frac{\#\{\mathbf T \in \mathcal R^{(1,2,4)}_{N-4}: \mathfrak m(\mathbf T)=\mathtt m-1 , |\mathsf{Subs}(\mathsf{VCut}_{\mathtt m -1}(\mathbf T)) | = 2,\mathfrak k(\mathfrak v(\mathbf{T}))=0 \}}{|\mathcal R^{(1,2,4)}_{N-4}| - 1} \nonumber \\
    =\ & \frac{\#\{\mathbf T \in \mathcal R^{(1,2,4)}_{N-3-\mathtt m} : |\mathsf{Subs}(\mathbf T) | = 2 ,\mathfrak k(\mathfrak R(\mathbf{T}))=0\}}{|\mathcal R^{(1,2,4)}_{N-4}| - 1}
    \geq \mathfrak{c}^{-4} \cdot \frac{|\mathcal R^{(1,2,4)}_{N-3-\mathtt m}|}{|\mathcal R^{(1,2,4)}_{N-4}| - 1} \overset{\text{Lemma~\ref{lem-tree-ratio-1,2,4-upper}}}{\geq} \mathfrak c^{-4-\mathtt m}\,,
\end{align}
where the second to last inequality follows similarly from \eqref{eq-lem-induction-estimates-(1)-p-diamond-varrho}.
Combining \eqref{eq-lem-induction-estimates-(1)-p-diamond-varrho} and \eqref{eq-lem-induction-estimates-(1)-mu-N-diamond-varrho} yields \eqref{eq-lem-induction-estimates-(1)-expansion}. The proof of Item~(1) is complete.

As for Item~(2), it suffices to prove that for all $0 \leq a,b \leq 1$ we have
\begin{align}
    & \sum_{\substack{\varrho \in \mathcal D_N \\ \sum \vartheta_x = b \\ \sum \nu_x \leq 3}}  p(\varrho) \cdot \sum_{\substack{ \mathtt L^2+\mathtt m-1 \leq i \leq N\\ 1 \leq j \leq \nu_i }} \eta^*_{N,\varrho;i} \Big(\mathfrak m(\mathbf T_{i,j}) = \mathtt m -1 , \mathfrak k(\mathfrak v(\mathbf T_{i,j})) = a \Big) \label{eq-lem-induction-estimates-(2)-expansion} \\ 
    \leq\ & \mathfrak c^4\sum_{\substack{\varrho \in \mathcal D_N\\ \sum \nu_x = 2 \\ \sum \vartheta_x = 0}} p(\varrho) \sum_{\substack{\mathtt L^2 + \mathtt m - 1 \leq i \leq N\\1 \leq j \leq \nu_i}} \eta^*_{N,\varrho;i} \Big(\mathfrak m(\mathbf T_{i,j}) = \mathtt m -1 , \mathsf{Branch}_{\mathbf T_{i,j}}(\mathfrak v(\mathbf T_{i,j})) = 2, \mathfrak k(\mathfrak v(\mathbf T_{i,j})) = 0 \Big) \,. \label{eq-lem-induction-estimates-(2)-expansion-goal}
\end{align}
From Lemma~\ref{lem-tree-N-N+1-ratio-lowers}, we see that 
\begin{align*}
    \eqref{eq-lem-induction-estimates-(2)-expansion} &\leq \sum_{\substack{\varrho \in \mathcal D_N \\ \sum \vartheta_x = b \\ \sum \nu_x \leq 3}}  p(\varrho) \cdot \sum_{\substack{ \mathtt L^2+\mathtt m-1 \leq i \leq N\\ 1 \leq j \leq \nu_i }} \Big( \eta^*_{i} \big(\mathfrak m(\mathbf T_{i,j}) = \mathtt m -1 , \mathfrak k(\mathfrak v(\mathbf T_{i,j})) = a \big) + \operatorname{TV}(\eta^*_{i},\eta^*_{N,\varrho;i}) \Big) \\
    &\overset{\text{Lemma}~\ref{lem-tree-N-N+1-ratio-lowers}}{\leq} \sum_{\substack{\varrho \in \mathcal D_N \\ \sum \vartheta_x = b \\ \sum \nu_x \leq 3}}  p(\varrho) \cdot \sum_{\substack{ \mathtt L^2+\mathtt m-1 \leq i \leq N\\ 1 \leq j \leq \nu_i }} \Big( \eta^*_{i} \big(\mathfrak m(\mathbf T_{i,j}) = \mathtt m -1 , \mathfrak k(\mathfrak v(\mathbf T_{i,j})) = a \big) + 0.7^{i} \Big) \\
    &\leq 3\cdot 0.7^{\mathtt L+\mathtt m-1} + \sum_{\substack{\varrho \in \mathcal D_N \\ \sum \vartheta_x = b \\ \sum \nu_x \leq 3}}  p(\varrho) \cdot \sum_{\substack{ \mathtt L+\mathtt m-1 \leq i \leq N\\ 1 \leq j \leq \nu_i }} \eta^*_{i} \big(\mathfrak m(\mathbf T') = \mathtt m -1 , \mathfrak k(\mathfrak v(\mathbf T')) = a \big) \,.
\end{align*}
Similarly, we have
\begin{align}
    & \tfrac{1}{2}\eqref{eq-lem-induction-estimates-(2)-expansion-goal} \nonumber\\
    \geq & \tfrac{1}{2}\sum_{\substack{\varrho \in \mathcal D_N\\ \sum \nu_x = 2 \\ \sum \vartheta_x = 0}} p(\varrho) \sum_{\substack{\mathtt L^2 + \mathtt m - 1 \leq i \leq N\\1 \leq j \leq \nu_i}} \eta^*_{i} \Big(\mathfrak m(\mathbf T') = \mathtt m -1 , \mathsf{Branch}_{\mathbf T'}(\mathfrak v(\mathbf T')) = 2, \mathfrak k(\mathfrak v(\mathbf T')) = 0 \Big)-0.7^{\mathtt L+\mathtt m-1}\,.\label{eq-lem-induction-estimates-(2)-intermediate-(1)}
\end{align}
By Item~(1), we have that the first term in \eqref{eq-lem-induction-estimates-(2)-intermediate-(1)} is lower-bounded by $\frac{1}{2}\mathfrak c^{-8-\mathtt m}$. This implies
\begin{align}
    &\eqref{eq-lem-induction-estimates-(2)-expansion-goal} \geq \frac{1}{2}\eqref{eq-lem-induction-estimates-(2)-expansion-goal}+\frac{1}{2}\mathfrak c^{-8-\mathtt m} - 0.7^{\mathtt L+\mathtt m-1}\,.
\end{align}
Applying the lower bound in \eqref{eq-lem-induction-estimates-(2)-intermediate-(1)} again, we obtain
\begin{equation}
\begin{split}
    &\eqref{eq-lem-induction-estimates-(2)-expansion-goal}\geq  \text{ first term in }\eqref{eq-lem-induction-estimates-(2)-intermediate-(1)}+  \frac{1}{2}\mathfrak c^{-8-\mathtt m} -2\cdot 0.7^{\mathtt L+\mathtt m-1}\\
   & \geq \text{ first term in }\eqref{eq-lem-induction-estimates-(2)-intermediate-(1)} + 0.7^{\mathtt L+\mathtt m-1}\,.
\end{split}\notag
\end{equation}
Thus, it suffices to prove that given any $\mathtt L^2 + \mathtt m - 1 \leq i \leq N$ we have
\begin{align*}
    & \sum_{\substack{\varrho \in \mathcal D_N \\ \sum \vartheta_x = b \\ \sum \nu_x \leq 3,\nu_i>0}} p(\varrho) \nu_i\eta^*_{i} \big( \mathfrak m(\mathbf T') = \mathtt m -1 , \mathfrak k(\mathfrak v(\mathbf T')) = a \big) \\
    \leq\ & \mathfrak c^{7} \cdot \sum_{\substack{\varrho \in \mathcal D_N\\\sum \vartheta_x = 0\\  \sum \nu_x = 2 ,\nu_i > 0}} p(\varrho) \nu_i\eta^*_{i} \Big(\mathfrak m(\mathbf T') = \mathtt m -1 , \mathsf{Branch}_{\mathbf T'}(\mathfrak v(\mathbf T')) = 2, \mathfrak k(\mathfrak v(\mathbf T')) = 0 \Big) 
\end{align*}
We will prove this inequality by showing that for each $i$ with $\mathtt L^2+\mathtt m-1 \leq i \leq N$, we have
\begin{align}{\label{eq-lem-induction-estimates-(2)-expansion-part-I}}
    \sum_{\substack{\varrho \in \mathcal D_N:\\ \sum \vartheta_x = b, \\\sum \nu_x \leq 3,\nu_i>0}} p(\varrho) \leq 3\mathfrak c^2 \cdot \sum_{\substack{\varrho \in \mathcal D_N: \\\sum \vartheta_x = 0,\\\sum \nu_x = 2,\nu_i>0}} p(\varrho)
\end{align}
and 
\begin{align}{\label{eq-lem-induction-estimates-(2)-expansion-part-II}}
    & \eta^*_{i} \big( \mathfrak m(\mathbf T') = \mathtt m -1 , \mathfrak k(\mathfrak v(\mathbf T')) = a \big) \nonumber \\
    \leq\ & 4\mathfrak c^4 \eta^*_{i} \big(\mathfrak m(\mathbf T') = \mathtt m -1 , \mathsf{Branch}_{\mathbf T'}(\mathfrak v(\mathbf T')) = 2, \mathfrak k(\mathfrak v(\mathbf T')) = 0 \big)\,.
\end{align}
 We first show \eqref{eq-lem-induction-estimates-(2)-expansion-part-I}. Fix $\mathtt L^2+\mathtt m-1 \leq i \leq N$. Given any integer sequence $(\vartheta^*_1 , \ldots , \vartheta^*_N)$, by the definition of $\mathsf{HCut}$ in \eqref{def-HCut}, we have that $\mathsf{HCut}$ is a bijection from $\{\mathbf T \in \mathcal R^{(1,2,4)}_N : \vartheta_j = \vartheta^*_j \mbox{ for all }1 \leq j \leq N\}$ to $\mathcal N^{(1,2,4)}_{N - \sum x\vartheta^*_x}$. Also, for any tree $\mathbf T' \in \mathcal N^{(1,2,4)}_{N - \sum x\vartheta^*_x}$ such that $1 \leq |\mathsf{Subs}_i(\mathbf T')| \leq 3$, we can map $\mathbf T'$ to a tree pair $(\mathbf T'_* , \mathbf T'_{**}) \in \mathcal R_i^{(1,2,4)} \times \mathcal N^{(1,2,4)}_{N-i-\sum x\vartheta^*_x}$ in at most 3 ways by
\begin{align*}
\mathbf T'_* \in \mathsf{Subs}_i(\mathbf T')\,, \quad \mathbf T'_{**} = \mathbf T'|_{\setminus \mathbf T'_{*}} \,,
\end{align*}
where $\mathbf T'_{**}$ represents the tree obtained by deleting $\mathbf{T}_*'$ in $\mathbf{T}'$. Therefore, we have (recall the definition of $\mathsf a_j$ in \eqref{eq-def-mathsf-p-j})
\begin{align*}
    \sum_{\substack{\varrho \in \mathcal D_N: \sum \vartheta_x = b \\ \sum \nu_x \leq 3,\nu_i>0}} p(\varrho) &\leq 3 \sum_{\{\vartheta^*_x\} : \sum \vartheta^*_x = b} \frac{ |\mathcal R^{(1,2,4)}_i||\mathcal N^{(1,2,4)}_{N-i-\sum x\vartheta^*_x}|}{|\mathcal R^{(1,2,4)}_N|} \leq \sum_{\{\vartheta^*_x \}: \sum \vartheta^*_x = b} \frac{ 3|\mathcal R^{(1,2,4)}_i| |\mathcal R^{(1,2,4)}_{N-i-\sum x\vartheta^*_x}|}{|\mathcal R^{(1,2,4)}_N|} \nonumber \\
    &\leq \frac{ 3|\mathcal R^{(1,2,4)}_i| |\mathcal R^{(1,2,4)}_{N-i}|}{|\mathcal R^{(1,2,4)}_N|} \sum_{\{\vartheta^*_x\} : \sum \vartheta^*_x = b} 2^{-\sum x\vartheta^*_x} \\
    &\leq \frac{ 3|\mathcal R^{(1,2,4)}_i| |\mathcal R^{(1,2,4)}_{N-i}|}{|\mathcal R^{(1,2,4)}_N|} \sum_{\{\vartheta^*_x\} : \sum x\vartheta^*_x \geq b} 2^{-\sum x\vartheta^*_x} \leq \frac{ 3\mathfrak{c}|\mathcal R^{(1,2,4)}_i| |\mathcal N^{(1,2,4)}_{N-i}|}{|\mathcal R^{(1,2,4)}_N|} \sum_{i \geq b} \mathsf a_i 2^{-i} \\
    & \leq 3\mathfrak{c}^2 \frac{ |\mathcal R^{(1,2,4)}_i| |\mathcal N^{(1,2,4)}_{N-i}|}{|\mathcal R^{(1,2,4)}_N|} = 3\mathfrak{c}^2\sum_{\substack{\varrho \in \mathcal D_N: \sum \vartheta_x = 0, \sum \nu_x = 2, \nu_i>0}} p(\varrho) \,,
\end{align*}
where the third inequality holds by Lemma~\ref{lem-prelim-tree-ratio}. We have shown \eqref{eq-lem-induction-estimates-(2)-expansion-part-I}. Next we show \eqref{eq-lem-induction-estimates-(2)-expansion-part-II}. Since $i \geq \mathtt L ^2 + \mathtt m - 1$, we see that $\eta^*_{i}$ and $\eta^{**}_i$ are the same measures. Therefore, we have
\begin{align}
    & \eta^*_{i} \big(\mathfrak m(\mathbf T') = \mathtt m -1 , \mathfrak k(\mathfrak v(\mathbf T')) = a \big)= \eta^{**}_i\big(\mathfrak m(\mathbf T) = \mathtt m -1 , \mathfrak k(\mathfrak v(\mathbf T)) = a \big)\nonumber \\
    \leq\ & \frac{\#\{\mathbf T' \in \mathcal R^{(1,2,4)}_i:\mathfrak m(\mathbf T') = \mathtt m -1 , \mathfrak k(\mathsf{VCut_{\mathtt m-1}}(\mathbf T')) = a\} }{\big(|\mathcal R^{(1,2,4)}_i|-1\big)}\nonumber \\
    \leq\ & \frac{2\#\{\mathbf T' \in \mathcal R^{(1,2,4)}_{i - \mathtt m + 1}:\mathfrak k(\mathbf T') = a \}}{|\mathcal R^{(1,2,4)}_i|} \leq \frac{4|\mathcal R^{(1,2,4)}_{i-\mathtt m+1}| }{|\mathcal R^{(1,2,4)}_i|} \leq \frac{4\mathfrak c^4 |\mathcal N^{(1,2,4)}_{i-\mathtt m-3}| }{|\mathcal R^{(1,2,4)}_i|} \nonumber \\
    \leq\ & 4\mathfrak c^4 \eta^*_{i-\mathtt m + 1} \big(\mathsf{Branch}_{\mathbf T'}(\mathfrak R(\mathbf T')) = 2, \mathfrak k(\mathfrak R(\mathbf T')) = 0 \big) \nonumber\\
    =\ & 4\mathfrak c^4 \eta^*_{i} \big(\mathfrak m(\mathbf T') = \mathtt m -1 , \mathsf{Branch}_{\mathbf T'}(\mathfrak v(\mathbf T')) = 2, \mathfrak k(\mathfrak v(\mathbf T')) = 0 \big) \,, \label{eq-lem-induction-estimates-(2)-max-mu}
\end{align}
where the last inequality follows the same lines as the derivation of the first inequality of \eqref{eq-lem-induction-estimates-(1)-p-diamond-varrho}. The proof of \eqref{eq-lem-induction-estimates-(2)-expansion-part-II} is complete.
\end{proof}

We now turn to the proof of \eqref{eq-thm-2.7-key-expectation}. First, recall \eqref{def-chained-noise-pair-in-T_iota} and \eqref{def-chained-signal-pair-in-T_iota}. By definition of $\mathbf F$ in \eqref{eq-def-mathbf-F-T}, it requires at least $\mathtt L^2 + \mathtt m + 3$ vertices for a tree $\mathbf T$ to make $\mathbf F(\mathbf T) \neq 0$, so we will prove \eqref{eq-thm-2.7-key-expectation} by induction on $N$. Suppose $N \geq \mathtt L^2+\mathtt m+3$ and \eqref{eq-thm-2.7-key-expectation} holds for any positive integer smaller than $N$. We have (we always denote $\varrho(\mathbf T) = \big( (\vartheta_1 , \ldots , \vartheta_N); (\nu_1 , \ldots , \nu_N) \big)$ below, and note that $\vartheta_i = 0$ for $i \geq \mathtt L$)
\begin{align}
    \Eb_{\eta^*_N}[\mathbf F(\mathbf T)] \geq\ & \Eb_{\eta^*_N}\Bigg[\sum_{\substack{\mathtt L^2+\mathtt m-1 \leq i \leq N \\ 1 \leq j \leq \nu_i} }\Big( \mathbf F(\mathbf T_{i,j}) + \mathbf{1}_{\{ \mathfrak m(\mathbf T_{i,j}) = \mathtt m -1 , \mathsf{Branch}_{\mathbf T_{i,j}}(\mathfrak v(\mathbf T_{i,j})) = 2 , \sum \nu_x = 2 , \sum \vartheta_x = 0\} } \nonumber \\
    & - \mathfrak c^{-9} \sum_{0 \leq a,b \leq 1} \mathbf{1}_{\{ \mathfrak m(\mathbf T_{i,j}) = \mathtt m -1 , \mathfrak k( \mathfrak v (\mathbf T_{i,j})) = a , \sum \vartheta_x = b\} } \Big) \Bigg] \nonumber \\ 
    =\ & \Eb_{\eta^*_N}\Big[\sum_{\substack{1 \leq i \leq N \\ 1 \leq j \leq \nu_i} }\big(\mathbf F(\mathbf T_{i,j}) + \mathbf F_{1;i,j}(\mathbf T) - \mathfrak c^{-9}\mathbf F_{2;i,j} (\mathbf T)\big) \Big] \nonumber \\
    \geq\ & \mathfrak c^{-9-\mathtt m} + \sum_{\varrho \in \mathcal D_N} p(\varrho)\cdot \big(\sum_{\substack{\mathtt L \leq i \leq N \\ 1 \leq j \leq \nu_i}} \E_{\eta^*_{N,\varrho;i}}[\mathbf F(\mathbf T)] \big)\,,\label{eq-proof-of-expectation-mathbf-F-1}
\end{align}
where the last inequality holds by Items (1) and (2) of Lemma~\ref{lem-induction-estimates-on-rooted-trees-function}. For any $\mathtt L \leq i \leq N$ and $\varrho \in \mathcal D_N$, if $i \leq \mathtt L^2 + \mathtt m - 2$ we have $\E_{\eta^*_{N,\varrho;i}}[\mathbf F(\mathbf T)] = 0$; else by Lemma~\ref{lem-tree-N-N+1-ratio-lowers}, the induction hypothesis and the fact that $|\mathbf F(\mathbf T)| \leq |V(\mathbf T)|$ we have 
\begin{align*}
    \E_{\eta^*_{N,\varrho;i}}[\mathbf F(\mathbf T)] &\geq \E_{\eta^*_{i}}[\mathbf F(\mathbf T)] -i \cdot 0.7^{i} \geq \tfrac{1}{\mathtt L^4\mathfrak c^{9+\mathtt m}}(i-\mathtt L^2 - \mathtt m -2)_+ - i \cdot 0.7^i \\
    &\geq \tfrac{1}{\mathtt L^4\mathfrak c^{9+\mathtt m}}(i-\mathtt L^2 - \mathtt m -2.5)_+ \,.
\end{align*}
Therefore, we always have $\E_{\eta^*_{N,\varrho;i}}[\mathbf F(\mathbf T)] \geq \tfrac{1}{\mathtt L^4\mathfrak c^{9+\mathtt m}} (i-\mathtt L^2 - \mathtt m - 2.5)_+$. Plugging this into \eqref{eq-proof-of-expectation-mathbf-F-1} yields
\begin{align}
    \eqref{eq-proof-of-expectation-mathbf-F-1} &\geq \mathfrak c^{-9-\mathtt m} + \sum_{\varrho \in \mathcal D_N} p(\varrho)\cdot \big(\sum_{\substack{\mathtt L \leq i \leq N \\ 1 \leq j \leq \nu_i}} \tfrac{1}{\mathtt L^4\mathfrak c^{9+\mathtt m}}(i-\mathtt L^2 - \mathtt m - 2.5)_+ \big) \nonumber \\
    &= \mathfrak c^{-9-\mathtt m} + \sum_{\varrho \in \mathcal D_N} p(\varrho)\cdot \big(\sum_{\substack{1\leq i \leq N \\ 1 \leq j \leq \nu_i}} \tfrac{1}{\mathtt L^4\mathfrak c^{9+\mathtt m}}(i-\mathtt L^2 - \mathtt m - 2.5)_+ \big) \nonumber \\
    &\geq \Big(\mathfrak c^{-9-\mathtt m} + \sum_{\varrho \in \mathcal D_N} p(\varrho)\cdot \big(\sum_{\substack{1\leq i \leq N \\ 1 \leq j \leq \nu_i}} \tfrac{1}{\mathtt L^4\mathfrak c^{9+\mathtt m}}(i-\mathtt L^2 - \mathtt m - 2.5) \big) \Big)_+ \nonumber \\
    &= \Big(\mathfrak c^{-9-\mathtt m} + \frac{N-1}{\mathtt L^4\mathfrak c^{9+\mathtt m}} - \frac{\Eb_{\eta^*_N}[|\mathsf{Subs}(\mathbf T)|](\mathtt L^2 + \mathtt m + 2.5)}{\mathtt L^4 \mathfrak c^{9+\mathtt m}} \Big)_+\,.\label{eq-proof-of-expectation-mathbf-F-2}
\end{align}
Next, we bound \eqref{eq-proof-of-expectation-mathbf-F-2} by showing $\Eb_{\eta^*_N}[|\mathsf{Subs}(\mathbf T)|] \leq 2\mathtt L$ for $N \geq \mathtt L^2 + \mathtt m + 3$. By Item (1) of Definition~\ref{def-tilde-T-K}, it suffices to show $\Eb_{\eta^*_N}[\mathfrak k(\mathbf T)] \leq \mathtt L$. By \eqref{eq-sigma-star-N-1} and \eqref{eq-sigma-star-N-2} we have
\begin{align}
    \Eb_{\eta^*_N}[\mathfrak k(\mathbf T)] &\leq \Eb_{\eta^*_N}[\Sigma^*(\vartheta(\mathbf T))] =\sum_{i \geq 1}i\eta^*_N (\Sigma^*(\vartheta(\mathbf T)) = i) \nonumber \\
    &= \frac{(N-1)\mathsf a_{N-1}}{|\mathcal R^{(1,2,4)}_N|} + \sum_{1 \leq i \leq N-4}\frac{i\mathsf a_i |\mathcal N^{(1,2,4)}_{N-i}|}{|\mathcal R^{(1,2,4)}_N|} \nonumber \\
    &\leq 2\mathfrak c'\big((N-1)(\tfrac{1.01}{2})^N + (1-\tfrac{1.01}{2})^{-2} \big) \leq \mathtt L\,, \label{eq-bounded-expectation-of-mathfrak-k-T-1,2,4}
\end{align}
where the second equality holds by Lemmas~\ref{lem-prelim-tree-ratio} and \ref{lem-tree-simple-horizontal-split-enum}. Plugging \eqref{eq-bounded-expectation-of-mathfrak-k-T-1,2,4} into \eqref{eq-proof-of-expectation-mathbf-F-2} yields
\begin{align}
    \eqref{eq-proof-of-expectation-mathbf-F-2} &\geq \Big(\mathfrak c^{-9-\mathtt m} + \frac{N-1}{\mathtt L^4 \mathfrak c^{9+\mathtt m}} - \frac{\Eb_{\eta^*_N}[\mathfrak k(\mathbf T) + \mathtt L](\mathtt L^2 + \mathtt m + 2.5)}{\mathtt L^4 \mathfrak c^{9+\mathtt m}} \Big)_+ \nonumber \\
    &\geq \Big(\mathfrak c^{-9-\mathtt m} + \frac{N-1}{\mathtt L^4 \mathfrak c^{9+\mathtt m}} - \frac{2\mathtt L(\mathtt L^2 + \mathtt m + 2.5)}{\mathtt L^4\mathfrak c^{9+\mathtt m}}\Big)_+ \nonumber \\
    &\geq \frac{1}{\mathtt L^4 \mathfrak c^{9+\mathtt m}} (N - \mathtt L^2 -\mathtt m - 2)_+\,,
\end{align}
which yields \eqref{eq-thm-2.7-key-expectation} as desired.

\subsection{Proof of Theorem~\ref{thm-desired-vertex-sets}}{\label{subsec:proof-set-enu-thm}}

This subsection is dedicated to the proof of Theorem~\ref{thm-desired-vertex-sets}. Denote $\mathfrak C'_\aleph = \tfrac{\mathfrak C_\aleph}{4}$ for simplicity of notation. To begin our proof, fix $\mathbf T \in \widetilde{\mathcal T}_{\aleph}$. Note that by definition of $\mathfrak{Ch}^*_\mathtt m(\mathbf T)$, for each $v \in V(\mathbf T_\iota)$ there exist at most 3 pairings in $\mathfrak{Ch}^*_\mathtt m(\mathbf T)$ containing $v$ as one of the endpoints. As a result, there exists a subset $\mathfrak{Ch}'_\mathtt m(\mathbf T)$ of $\mathfrak{Ch}^*_\mathtt m(\mathbf T)$, such that $|\mathfrak{Ch}'_\mathtt m(\mathbf T)| = \tfrac{1}{4}|\mathfrak{Ch}^*_\mathtt m(\mathbf T)|=\mathfrak C'_\aleph$ and no two pairings in $\mathfrak{Ch}'_\mathtt m(\mathbf T)$ share a common vertex. Therefore, it suffices to show that with probability at least $\exp\big( -10e^{-\log^{2}(\iota^{-1})} \aleph \big)$, a uniform random subset of $\mathfrak{Ch}'_\mathtt m(\mathbf T)$ with $\iota\aleph$ elements will satisfy Item (3) of Theorem~\ref{thm-desired-vertex-sets}, which will be done in the following lemmas. (We use the probability argument to show the existence of such pairings; in our algorithm we just use brutal-force search which takes at most $O(2^{\aleph^2}) = n^{o(1)}$ time.)

We first record a lemma on binomial distribution that will help us choose the appropriate subset of $\mathfrak{Ch}^*_\mathtt m (\mathbf T)$.

\begin{lemma}{\label{lem-binomial-distribution-fundamental-estimate}}
    Denote $X(\kappa) \sim \mathrm{Binom}(\mathfrak C'_\aleph, \tfrac{(1+\kappa)\iota \aleph}{\mathfrak C'_\aleph})$. There exists $\kappa \in (0,1)$ such that 
    \begin{equation*}
        \Pb (X(\kappa)=\iota\aleph)=\exp\big( -4e^{-\log^{2}(\iota^{-1})} \aleph \big) \mbox{ and } \Pb( X(\kappa) < \iota\aleph) \leq \exp\big( -3e^{-\log^{2}(\iota^{-1})} \aleph \big) \,.
    \end{equation*}
\end{lemma}
\begin{proof}
    By definition of binomial distribution, for $X (\kappa) \sim \mathrm{Binom}(\mathfrak C'_\aleph, \tfrac{(1+\kappa)\iota \aleph}{\mathfrak C'_\aleph})$ and $0 \leq i \leq \mathfrak C'_\aleph-1$ it can be shown that
    \begin{align*}
        \frac{\Pb(X(\kappa)=i+1)}{\Pb(X(\kappa)=i)} = \frac{\mathfrak C'_\aleph-i}{i+1}\cdot \frac{\tfrac{(1+\kappa)\iota \aleph}{\mathfrak C'_\aleph}}{1-\tfrac{(1+\kappa)\iota \aleph}{\mathfrak C'_\aleph}}\,.
    \end{align*}
    Thus, $\frac{\Pb(X(\kappa)=i+1)}{\Pb(X(\kappa)=i)} \geq 1$ is equivalent to
    \begin{align}\label{eq-consequent-binom-values}
        (\mathfrak C'_\aleph-i)\cdot \tfrac{(1+\kappa)\iota \aleph}{\mathfrak C'_\aleph} \geq (i+1) \cdot (1-\tfrac{(1+\kappa)\iota \aleph}{\mathfrak C'_\aleph}) \Leftrightarrow i \leq (1+\kappa)(1+\tfrac{1}{\mathfrak C'_\aleph})\iota\aleph - 1 \,.
    \end{align}
    Therefore, if we let $\kappa$ equals 
    \begin{align}\label{eq-def-kappa-1-binom}
        \kappa_1 := \frac{1+\tfrac{1}{\iota\aleph}}{1+\tfrac{1}{\mathfrak C'_\aleph}} - 1 \in (0,1) \mbox{ where we have }(1+\kappa_1)(1+\tfrac{1}{\mathfrak C'_\aleph})\iota\aleph - 1 = \iota\aleph\,,
    \end{align}
    then we have 
    \begin{equation}
        \Pb(|X(\kappa_1)| = \iota\aleph) = \max_{0 \leq i \leq \mathfrak C'_\aleph} \Pb(|X(\kappa_1)| = i) \geq \tfrac{1}{\mathfrak C'_\aleph + 1} > \tfrac{1}{\aleph}\,,\label{eq-X-kappa1}
    \end{equation}
    where the first inequality follows from the fact that $\sum_{0 \leq i \leq \mathfrak C'_\aleph} \P(|X(\kappa_1)| = i) = 1$. Moreover, if we select $\kappa = 1$ (note that $\mathbb E(X(1))=2\iota \aleph$)
    \begin{align}\label{eq-X-1}
        \P(X(1)=\iota\aleph) \leq \P(X(1)\leq (1-\tfrac{1}{2})\cdot2\iota\aleph) \leq \exp(-\tfrac{1}{2}\iota\aleph \cdot (\tfrac{1}{2})^2) = \exp(-\tfrac{1}{8}e^{-\log(\iota^{-1})}\aleph)\,
    \end{align}
    by Chernoff's Inequality. Therefore, combining \eqref{eq-X-kappa1} and \eqref{eq-X-1}, we know that there must exist $\kappa_0 \in (\kappa_1 , 1)$ such that
    \begin{align}\label{eq-selection-kappa-binom}
        \P(X(\kappa_0)=\iota\aleph) = \exp(-4e^{-\log^2(\iota^{-1})}\aleph)\,.
    \end{align}
    We conclude the proof by showing that $\kappa_0$ satisfies the second condition of the lemma. By \eqref{eq-def-kappa-1-binom} and \eqref{eq-consequent-binom-values} we have
    \begin{align*}
        \P(X(\kappa_0) \leq \iota\aleph) &= \sum_{0 \leq i \leq \iota\aleph}\P(X(\kappa_0)=i) \leq (\iota\aleph+1)\max_{0 \leq i \leq \iota\aleph} \P(X(\kappa_0)=i) \\
        &= (\iota\aleph+1)\P(X(\kappa_0)=\iota\aleph) \overset{\eqref{eq-selection-kappa-binom}}{\leq} \exp\big( -3e^{-\log^{2}(\iota^{-1})} \aleph\big)\,,
    \end{align*}
    which yields that $\kappa = \kappa_0$ satisfies our claim.
\end{proof}

Then we turn to the following lemma which will finish our proof.

\begin{lemma}{\label{lem-random-choice-W-suffice}}
    For each $\mathbf T \in \widetilde{\mathcal T}_{\aleph}$, we sample a random subset $W \subset \mathfrak{Ch}'_\mathtt m(\mathbf T)$ by independently including each pair $(u,v) \in \mathfrak{Ch}'_\mathtt m(\mathbf T)$ with probability $\frac{(1+\kappa)\iota \aleph}{\mathfrak C'_\aleph}$, where $\kappa \in (0,1)$ is chosen as the $\kappa$ in Lemma~\ref{lem-binomial-distribution-fundamental-estimate}. Then 
    \begin{align*}
        & \mathbb P\Big( |W|=\iota\aleph; |W \cap V(\mathbf T')| \leq \big[ \log^{-2}(\iota^{-1}) \cdot |V(\mathbf T')| \big] \mbox{ for all } |V(\mathbf T')| \geq \log^{4}(\iota^{-1}) \mbox{ with } \mathbf T \hookrightarrow \mathbf T' \Big) \\
        \geq\ & \exp\big( -10e^{-\log^{2}(\iota^{-1})} \aleph \big) \,. 
    \end{align*}
\end{lemma}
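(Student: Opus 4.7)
The plan is to decompose $\Pb(A \cap B) \geq \Pb(A) - \Pb(B^c)$, where $A := \{|W| = \iota\aleph\}$ and $B$ denotes the conjunction of the constraints
\[
    |\mathsf{Vert}(W) \cap V(\mathbf T')| \leq \big[ \log^{-1}(\iota^{-1}) \cdot |V(\mathbf T')| \big]
\]
over all descendant subtrees $\mathbf T' \hookleftarrow \mathbf T$ with $|V(\mathbf T')| \geq \log^{2}(\iota^{-1})$. The lower bound $\Pb(A) \geq \exp(-4e^{-\log^{2}(\iota^{-1})} \aleph)$ is already provided by \eqref{eq-choice-kappa}, so the substantive work is to control $\Pb(B^c)$ by a Chernoff-and-union-bound argument.

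For a fixed descendant subtree $\mathbf T'$ with $|V(\mathbf T')| \geq \log^{2}(\iota^{-1})$, the key step is to write
\[
    |\mathsf{Vert}(W) \cap V(\mathbf T')| = \sum_{(u,v) \in W_0} \mathbf{1}_{(u,v) \in W} \cdot c_{(u,v)}, \qquad c_{(u,v)} := |\{u,v\} \cap V(\mathbf T')| \in \{0,1,2\},
\]
where the indicators $\mathbf{1}_{(u,v) \in W}$ are i.i.d.\ Bernoulli with success probability $p = (1+\kappa)\iota\exp(25(\log\log(\iota^{-1}))^{3})$. Property~(3) of Lemma~\ref{lem-large-vertex-set} guarantees that distinct pairs in $W_0$ have disjoint vertex sets within $V(\mathbf T_\iota)$, so $\sum_{(u,v)} c_{(u,v)} = |\mathsf{Vert}(W_0) \cap V(\mathbf T')| \leq |V(\mathbf T')|$. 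Consequently, the mean $\E[|\mathsf{Vert}(W) \cap V(\mathbf T')|] \leq p|V(\mathbf T')|$ sits far below the threshold $\log^{-1}(\iota^{-1})|V(\mathbf T')|$, since $p\log(\iota^{-1})$ is superexponentially small in $\log\log(\iota^{-1})$ by the choice of $\iota$ in \eqref{eq-choice-iota}.

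The next step is to apply Chernoff's inequality with optimal tilt $\lambda \sim \tfrac{1}{2}\log(\iota^{-1})$. Using the elementary bound $\sum_i(e^{\lambda c_i}-1) \leq \tfrac{1}{2}(e^{2\lambda}-1)\sum_i c_i$ for $c_i \in \{0,1,2\}$ and $\lambda \geq 0$, this yields
\[
    \Pb\big(|\mathsf{Vert}(W) \cap V(\mathbf T')| > \log^{-1}(\iota^{-1}) |V(\mathbf T')|\big) \leq \exp\big(-c_0 |V(\mathbf T')|\big)
\]
for some absolute constant $c_0 > 0$. A union bound over the at most $\aleph$ descendant subtrees satisfying $|V(\mathbf T')| \geq \log^{2}(\iota^{-1})$ then gives $\Pb(B^c) \leq \aleph\exp(-c_0 \log^{2}(\iota^{-1}))$. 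Because \eqref{eq-choice-iota} forces $\log(\iota^{-1})$ to be incomparably larger than $\log\aleph$, this quantity is much smaller than the gap $\exp(-4e^{-\log^{2}(\iota^{-1})}\aleph) - \exp(-10e^{-\log^{2}(\iota^{-1})}\aleph)$, and combining with \eqref{eq-choice-kappa} yields the claimed bound.

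The main obstacle is calibrating the Chernoff estimate: the coefficients $c_{(u,v)} \in \{0,1,2\}$ (rather than $\{0,1\}$) cost a factor of $\tfrac{1}{2}$ in the naive tilt optimization, so care is needed to ensure that $c_0 \log^{2}(\iota^{-1})$ genuinely dominates $\log^{2}(\iota^{-1})$ after the union bound. This is precisely where the extreme smallness of $\iota$ imposed by \eqref{eq-choice-iota} (ensuring $p\log(\iota^{-1}) \ll 1/\log(\iota^{-1})$ by a wide margin) is decisive.
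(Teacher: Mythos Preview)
Your decomposition $\Pb(A\cap B)\ge \Pb(A)-\Pb(B^{c})$ breaks down at the final comparison step. You assert that \eqref{eq-choice-iota} makes $\log(\iota^{-1})$ ``incomparably larger than $\log\aleph$'', but the parameter hierarchy in \S\ref{subsec:choice-parameters} says the opposite: $\iota$ is a \emph{fixed} constant (depending only on $M,\epsilon$), while $\aleph=\aleph_n\to\infty$ as $n\to\infty$ by \eqref{eq-choice-aleph}. Hence your union bound $\Pb(B^{c})\le \aleph\,e^{-c_{0}\log^{2}(\iota^{-1})}$ is linear in $\aleph$ with a coefficient that is merely a constant in $n$, whereas $\Pb(A)\ge \exp(-4e^{-\log^{2}(\iota^{-1})}\aleph)$ decays exponentially in $\aleph$. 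For all sufficiently large $\aleph$ the difference $\Pb(A)-\Pb(B^{c})$ is negative and the argument yields nothing. This is not a matter of sharpening the Chernoff constant $c_0$: there are order $\aleph/\log^{2}(\iota^{-1})$ subtrees near the minimum admissible size, and their individual violation probabilities are constants (in $n$), so $\Pb(B^{c})$ genuinely fails to vanish with $n$---indeed $\Pb(B)$ itself is exponentially small in $\aleph$.

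The paper's proof avoids any additive splitting. It first shows, via a coupling over uniform $k$-subsets, that for any \emph{decreasing} event $\mathcal A$ one has $\Pb(|W|=\iota\aleph,\mathcal A)\ge e^{-4e^{-\log^{2}(\iota^{-1})}\aleph}\,\Pb(|W|\ge\iota\aleph,\mathcal A)$, thereby converting the exact-cardinality constraint into a multiplicative loss matching the target rate. It then lower-bounds $\Pb(\cap_{u}E_{u})$ \emph{multiplicatively} by the FKG inequality, obtaining $\prod_{u}\Pb(E_{u})\ge(1-e^{-\log^{2}(\iota^{-1})})^{\aleph}\ge\exp(-1.5\,e^{-\log^{2}(\iota^{-1})}\aleph)$. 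Both factors decay at rate $e^{-O(e^{-\log^{2}(\iota^{-1})}\aleph)}$, which is exactly what the lemma demands. The FKG step is the crucial ingredient your approach lacks: it replaces the lossy union bound by a product that tracks the correct exponential scale in $\aleph$.
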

\begin{proof}
    We first show that for each event $\mathcal A$ measurable and decreasing in $W$, 
    \begin{align*}
        \exp\big( 4e^{-\log^{2}(\iota^{-1})} \aleph \big) \cdot \mathbb P(|W|=\iota\aleph, \mathcal A) \geq \mathbb P(|W|\geq \iota\aleph,\mathcal A) \,.
    \end{align*}
    In fact, we have 
    \begin{align}
         \mathbb P(|W|\geq \iota\aleph,\mathcal A) &= \sum_{k \geq \iota\aleph} \mathbb P(|W|=k,\mathcal A) = \sum_{k \geq \iota\aleph} \mathbb P(|W|=k) \mathbb P(\mathcal A \mid |W|=k) \nonumber\\
         &\leq \sum_{k \geq \iota\aleph} \mathbb P(|W|=k) \mathbb P(\mathcal A \mid |W|=\iota\aleph) = \mathbb P(|W|\geq \iota\aleph) \mathbb P(\mathcal A \mid |W|=\iota\aleph) \nonumber\\
         &\overset{Lemma~\ref{lem-binomial-distribution-fundamental-estimate}}{\leq} \exp\big( 4e^{-\log^{2}(\iota^{-1})} \aleph \big) \mathbb P(|W|=\iota\aleph) \mathbb P(\mathcal A \mid |W|=\iota\aleph) \nonumber\\
         &= \exp\big( 4e^{-\log^{2}(\iota^{-1})} \aleph \big) \mathbb P(|W|=\iota\aleph,\mathcal A) \,.\label{eq-choose-subset-by-FKG}
    \end{align}
    Here the first inequality follows from the fact that conditioned on $|W|=k$, $W$ is uniformly distributed over subsets of $W_0$ with cardinality $k$ and there is a natural coupling over the uniform $k$-subset of $W_0$ (denoted as $W_k$) and the uniform $(k+1)$-subset of $W_0$ (denoted as $W_{k+1}$) such that $W_k \subset W_{k+1}$. For each $u \in V(\mathbf T_{\iota})$, recall that we use $\mathsf{Des}_{\mathbf T_{\iota}}(u)$ to denote the descendant tree of $\mathbf T_{\iota}$ rooted at $u$. Thus, recalling Definition~\ref{def-major-subtree}, by \eqref{eq-choose-subset-by-FKG} it suffices to show that 
    \begin{align}
        & \mathbb P\Big( |W|\geq\iota\aleph ; |W \cap V(\mathbf T')| \leq \big[ \log^{-2}(\iota^{-1}) \cdot |V(\mathbf T')| \big] \vee \log^2(\iota^{-1}) \,, \nonumber\\ &\qquad\qquad\qquad \mbox{for all } \mathbf{T}'\hookleftarrow\mathbf{T} \mbox{ with } |V(\mathbf T')| \geq \log^2(\iota^{-1})\Big) \nonumber \\
        \geq\ & \exp\big( -6e^{-\log^{2}(\iota^{-1})} \aleph \big) \,.  \label{eq-goal-lem-B.5}
    \end{align}
    Since the left-hand side of \eqref{eq-goal-lem-B.5} is lower-bounded by
    \begin{align}
        & \mathbb P\Big( |W \cap V(\mathbf T')| \leq \big[ \log^{-2}(\iota^{-1}) \cdot |V(\mathbf T')| \big] \vee \log^2(\iota^{-1}) \mbox{ for all }\mathbf{T}'\hookleftarrow\mathbf{T}\mbox{ with } |V(\mathbf T')| \geq \log^2(\iota^{-1}) \Big) \label{eq-lem-B.5-final-goal} \\
        &-\P\Big( |W|<\iota\aleph \Big)\,, \label{eq-lem-B.5-ancillary}
    \end{align}
    and $\eqref{eq-lem-B.5-ancillary} \leq \exp\big( -3e^{-\log^{2}(\iota^{-1})} \aleph \big)$ by Lemma~\ref{lem-binomial-distribution-fundamental-estimate}, it suffices to show that $\eqref{eq-lem-B.5-final-goal} \geq \exp\big( -2e^{-\log^{2}(\iota^{-1})} \aleph \big)$. For each $u \in V(\mathbf T_{\iota})$ such that $|V( \mathsf{Des}_{\mathbf T_{\iota}}(u) )| \geq \log^2(\iota^{-1})$, define $E_u$ to be the event 
    \begin{align*}
        \Big\{ |W \cap V(\mathsf{Des}_{\mathbf T_{\iota}}(u) )| \leq \big[ \log^{-2}(\iota^{-1}) \cdot |V(\mathsf{Des}_{\mathbf T_{\iota}}(u) )| \big] \vee \log^2(\iota^{-1}) \Big\} \,.
    \end{align*}
   Since the events $\{ E_u : u \in V(\mathbf T_{\iota}) \}$ are decreasing, by the FKG-inequality we have that \eqref{eq-lem-B.5-final-goal} is bounded by 
    \begin{align*}
        & \mathbb P\Big( \cap_{u \in V(\mathbf T_{\iota})} E_u \Big) \\
        \geq\ & \prod_{ \substack{u \in V(\mathbf T_{\iota}) \\ |V( \mathsf{Des}_{\mathbf T_{\iota}}(u) )| \geq \log^2(\iota^{-1})} } \mathbb P\Big( |W \cap V(\mathsf{Des}_{\mathbf{T}_\iota}(u))| \leq \tfrac{|V(\mathsf{Des}_{\mathbf{T}_\iota}(u))|}{\log^2(\iota^{-1})} \vee \log^2(\iota^{-1}) \Big) \\
        \geq\ & \prod_{ \substack{u \in V(\mathbf T_{\iota}) \\ |V( \mathsf{Des}_{\mathbf T_{\iota}}(u) )| \geq \log(\iota^{-1})} } \mathbb P\Big( \mathrm{Binom}\big( |V(\mathsf{Des}_{\mathbf T_{\iota}}(u))|, \tfrac{(1+\kappa)\iota \aleph}{\mathfrak C'_\aleph} \big) \leq \tfrac{|V(\mathsf{Des}_{\mathbf{T}_\iota}(u))|}{\log^2(\iota^{-1})} \vee \log^2(\iota^{-1}) \Big) \\
        \geq\ & \Big( 1-\exp\big( -\log^2(\iota^{-1}) \big) \Big)^{\aleph} \geq \exp \big( -1.5e^{-\log^2(\iota^{-1})} \aleph \big) \,.  \qedhere
    \end{align*}
\end{proof}

\section{Supplementary Proofs}{\label{sec:supp-proofs}}

\subsection{Proof of Lemma~\ref{lem-joint-moment-A-B}}{\label{subsec:proof-lem-3.3}}
The case that $r+t=1$ and $r+t=2$ can be verified via direct calculations, as 
\begin{align*}
    & \mathbb E_{\Pb_{\sigma,\pi}}\Bigg[ \frac{ A_{i,j} - \tfrac{\lambda s}{n} }{ \big( \tfrac{\lambda s}{n} (1-\tfrac{\lambda s}{n}) \big)^{1/2} } \Bigg] = \mathbb E_{\Pb_{\sigma,\pi}}\Bigg[ \frac{ B_{i,j} - \tfrac{\lambda s}{n} }{ \big( \tfrac{\lambda s}{n} (1-\tfrac{\lambda s}{n}) \big)^{1/2} } \Bigg] \circeq \frac{ \sigma_i \sigma_j \sqrt{\epsilon^2 \lambda s} }{ \sqrt{n} } \,, \\
    & \mathbb E_{\Pb_{\sigma,\pi}} \Bigg[ \Bigg( \frac{ A_{i,j} - \tfrac{\lambda s}{n} }{ \big( \tfrac{\lambda s}{n} (1-\tfrac{\lambda s}{n}) \big)^{1/2} } \Bigg) \Bigg( \frac{ B_{\pi(i),\pi(j)} - \tfrac{\lambda s}{n} }{ \big( \tfrac{\lambda s}{n} (1-\tfrac{\lambda s}{n}) \big)^{1/2} } \Bigg) \Bigg] \circeq s(1+\epsilon\sigma_i\sigma_j) \,, \\
    & \mathbb E_{\Pb_{\sigma,\pi}} \Bigg[ \Bigg( \frac{ A_{i,j} - \tfrac{\lambda s}{n} }{ \big( \tfrac{\lambda s}{n} (1-\tfrac{\lambda s}{n}) \big)^{1/2} } \Bigg)^2 \Bigg] = \mathbb E_{\Pb_{\sigma,\pi}} \Bigg[ \Bigg( \frac{ B_{i,j} - \tfrac{\lambda s}{n} }{ \big( \tfrac{\lambda s}{n} (1-\tfrac{\lambda s}{n}) \big)^{1/2} } \Bigg)^2 \Bigg] \circeq (1+\epsilon\sigma_i\sigma_j) \,.
\end{align*}
For the case where $r \geq 3$ or $r+t \geq 3$, it suffices to note that in this case 
\begin{align}
    & \mathbb E_{\Pb_{\sigma,\pi}}\Bigg[ \Bigg( \frac{ A_{i,j} - \tfrac{\lambda s}{n} }{ \big( \tfrac{\lambda s}{n} (1-\tfrac{\lambda s}{n}) \big)^{1/2} } \Bigg)^r \ \Bigg] = \mathbb E_{\Pb_{\sigma,\pi}}\Bigg[ \Bigg( \frac{ B_{i,j} - \tfrac{\lambda s}{n} }{ \big( \tfrac{\lambda s}{n} (1-\tfrac{\lambda s}{n}) \big)^{1/2} } \Bigg)^r \ \Bigg] \nonumber \\
    =\ & (1+O(\tfrac{1}{n}))\big( \tfrac{n}{\lambda s} \big)^{\frac{r}{2}} \cdot \mathbb P_{\sigma,\pi}\Big( A_{i,j}=1 \Big) \leq (1+\epsilon\sigma_i \sigma_j) \cdot (n/\epsilon^2\lambda s)^{(r-2)/2} \,, \label{eq-same-order-u,v-r,t-1}  \\
    & \mathbb E_{\Pb_{\sigma,\pi}}\Bigg[ \Bigg( \frac{ A_{i,j} - \tfrac{\lambda s}{n} }{ \big( \tfrac{\lambda s}{n} (1-\tfrac{\lambda s}{n}) \big)^{1/2} } \Bigg)^r \Bigg( \frac{ B_{\pi(i),\pi(j)} - \tfrac{\lambda s}{n} }{ \big( \tfrac{\lambda s}{n} (1-\tfrac{\lambda s}{n}) \big)^{1/2} } \Bigg)^t \ \Bigg] \nonumber \\
    =\ & (1+O(\tfrac{1}{n}))\big( \tfrac{n}{\lambda s} \big)^{\frac{r+t}{2}} \mathbb P_{\sigma,\pi}\Big( A_{i,j}=B_{\pi(i),\pi(j)}=1 \Big) \leq (1+\epsilon\sigma_i \sigma_j) \cdot (n/\epsilon^2\lambda s)^{(r+t-2)/2} \,, \label{eq-same-order-u,v-r,t-2} 
\end{align}
thereby completing the proof.

\subsection{Proof of Lemma~\ref{lem-expectation-over-chain}}{\label{subsec:proof-lem-3.4}}

Recall that $\nu$ is the uniform distribution on $\{-1,+1\}^n$. By independence, we see that 
\begin{align*}
    \mathbb E_{\sigma \sim \nu} \Big[ \prod_{i \in I} (\sigma_{i-1} \sigma_i) \mid \sigma_0, \sigma_l \Big] = 0 \mbox{ if } I \subsetneq [l] \,.
\end{align*}
Thus,
\begin{align*}
    \mathbb{E}_{\sigma \sim \nu} \Big[ \prod_{i=1}^{l} \big( a_i + b_i \sigma_{i-1} \sigma_i \big) \mid \sigma_0, \sigma_l \Big] &= \prod_{i=1}^{l} a_i + \mathbb{E}_{\sigma \sim \nu} \Big[ \prod_{i=1}^{l} (\sigma_{i-1}\sigma_i) \mid \sigma_0, \sigma_l \Big] \cdot \prod_{i=1}^{l} b_i \\
    &= \prod_{i=1}^{l} a_i + \sigma_0 \sigma_l \cdot \prod_{i=1}^{l} b_i \,, 
\end{align*}    
completing the proof of Lemma~\ref{lem-expectation-over-chain}.

\subsection{Proof of Lemma~\ref{lem-upper-bound-exp}}{\label{subsec:proof-lem-3.6}}

The first result \eqref{eq-upper-bound-exp-case-1} is obvious. For \eqref{eq-upper-bound-exp-case-2}, we may assume that $\pi=\mathsf{id}$ without loss of generality. Denote $G=\widetilde S_1 \cup \widetilde S_2$. Then  
\begin{align*}
    \mathcal L(G) \subset \mathbb L:= \mathcal L(\widetilde{S}_1) \cup \mathcal L(\widetilde{S}_2) \,.
\end{align*}
Denote $G_0$ as the graph with $V(G_0)=V \cup \mathbb L$ and $E(G_0)=\emptyset$. We then have $\mathcal L(G) \subset V(G_0)$. Applying Lemma~\ref{lem-revised-decomposition-H-Subset-S} with $G_0 \subset G$, we see that $E(G)$ can be written as $E(G) = P_{\mathtt 1} \cup \ldots \cup P_{\mathtt t} \cup C_{\mathtt 1} \cup \ldots \cup C_{\mathtt m}$ satisfying Items~(1)--(4) in Lemma~\ref{lem-revised-decomposition-H-Subset-S}. In particular, we have 
\begin{align*}
    \mathtt t \leq 5\Big( \tau(G)-\tau(G_0) \Big) \leq 5 \tau(G) + 5|V| + 5|\mathbb L| \leq 5\tau(G)+5|V|+10\aleph \,.
\end{align*}
In addition, since $S_1,S_2$ is connected, we see that $\mathtt m\leq 2$ (since $G$ has at most 2 connected components and each $C_{\mathtt i}$ is a connected component of $G$). Also, conditioned on $\{\sigma_u : u\in V(G_0) \}$, we have that 
\begin{align*}
    \Bigg\{ & \prod_{(i,j) \in P_{\mathtt k}} \big( u_{\chi_1(i,j),\chi_2(i,j)} + v_{\chi_1(i,j),\chi_2(i,j)} \sigma_i \sigma_j \big) : \mathtt 1 \leq \mathtt k \leq \mathtt t, \\
    & \prod_{(i,j) \in C_{\mathtt k}} \big( u_{\chi_1(i,j),\chi_2(i,j)} + v_{\chi_1(i,j),\chi_2(i,j)} \sigma_i \sigma_j \big) : \mathtt 1 \leq \mathtt k \leq \mathtt m \Bigg\}
\end{align*}
is a collection of conditionally independent variables. In addition, from Lemma~\ref{lem-expectation-over-chain} we have 
\begin{align*}
    & \mathbb E_{ \sigma\sim\nu } \Big[ \prod_{(i,j) \in {P}_{\mathtt j}} \big( u_{\chi_1(i,j),\chi_2(i,j)} + v_{\chi_1(i,j),\chi_2(i,j)} \sigma_i \sigma_j \big) \mid \{\sigma_u : u\in U\} \Big] \\
    \leq\ & \prod_{(i,j) \in {P}_{\mathtt j}} u_{\chi_1(i,j),\chi_2(i,j)} + \prod_{w \in \operatorname{EndP}({P}_{\mathtt j})} \sigma_w \cdot \prod_{(i,j) \in {P}_{\mathtt j}} v_{\chi_1(i,j),\chi_2(i,j)} \\
    \leq\ & 2 \prod_{(i,j) \in {P}_{\mathtt j}} \max\big\{ u_{\chi_1(i,j),\chi_2(i,j)}, v_{\chi_1(i,j),\chi_2(i,j)} \big\} \,.
\end{align*}
Similarly, we have
\begin{align*}
    & \mathbb E_{ \sigma\sim\nu } \Big[ \prod_{(i,j) \in C_{\mathtt j}} \big( u_{\chi_1(i,j),\chi_2(i,j)} + v_{\chi_1(i,j),\chi_2(i,j)} \sigma_i \sigma_j \big) \mid \{\sigma_u : u\in U\} \Big] \\
    \leq\ & 2 \prod_{(i,j) \in C_{\mathtt j}} \max\big\{ u_{\chi_1(i,j),\chi_2(i,j)}, v_{\chi_1(i,j),\chi_2(i,j)} \big\} \,.
\end{align*}
Combining the above analysis with Lemma~\ref{lem-joint-moment-A-B} implies that 
\begin{align*}
    \mathbb E_{\Pb_{\mathsf{id}}}[ \phi_{S_1,S_2} \mid \{\sigma_u : u\in U\} ] \leq 2^{\mathtt t+\mathtt m} \prod_{(i,j) \in G} \max\big\{ u_{\chi_1(i,j),\chi_2(i,j)}, v_{\chi_1(i,j),\chi_2(i,j)} \big\} \,,
\end{align*}
which yields the desired result as $\mathtt t \leq 5\tau(G)+5|V|+10\aleph$ and $\mathtt m \leq 2$.

\subsection{Proof of Lemma~\ref{lem-delicate-exp-upper-bound-on-trees}}{\label{subsec:proof-lem-3.7}}

It suffices to deal with \eqref{eq-upper-bound-exp-on-trees} and Equation \eqref{eq-upper-bound-exp-on-trees-2} can be derived in a similar manner. Note that given $\mathtt U \subset V(T)$, we can write $T$ as 
\begin{align*}
    T = \mathbb{T}_{\mathtt U} \oplus \big( \oplus_{\mathtt u \in \mathtt U} T_{\mathtt u} \big) \,,
\end{align*}
where $\mathbb{T}_{\mathtt U} \subset T$ is the minimum sub-tree of $T$ containing $\mathtt U$, and $T_{\mathtt u}$ is the descendant tree of $T$ rooted at $\mathtt u$. Thus, we have
\begin{align}
    \mathbb E\Big[ \beta_T(A)^2 \cdot \prod_{\mathtt u \in \mathtt U} \sigma_{\mathtt u} \Big] &= \mathbb E\Big[ \beta_{\mathbb{T}_{\mathtt U}}(A)^2 \cdot \prod_{\mathtt u \in \mathtt U} \beta_{T_{\mathtt u}}(A)^2 \cdot \prod_{\mathtt u \in \mathtt U} \sigma_{\mathtt u} \Big] \nonumber \\
    &= \mathbb E\Bigg\{ \prod_{\mathtt u \in \mathtt U} \sigma_{\mathtt u} \cdot \mathbb E\Big[ \beta_{\mathbb{T}_{\mathtt U}}(A)^2 \cdot \prod_{\mathtt u \in \mathtt U} \beta_{T_{\mathtt u}}(A)^2 \mid \{ \sigma_{\mathtt u}: \mathtt u \in \mathtt U \} \Big] \Bigg\} \,. \label{eq-exp-on-tree-relax-1}
\end{align}
Note that conditioned on $\{ \sigma_{\mathtt u}: \mathtt u \in \mathtt U \}$, we have that $\{ \beta_{\mathbb{T}_{\mathtt U}}(A), \beta_{T_{\mathtt u}}(A): \mathtt u \in \mathtt U \}$ is a collection of conditionally independent variables, with 
\begin{align*}
    \mathbb E\Big[ \beta_{T_{\mathtt u}}(A)^2 \mid \{ \sigma_{\mathtt u}: \mathtt u \in \mathtt U \} \Big]=1 \,.
\end{align*}
Thus, we have
\begin{align}
    \eqref{eq-exp-on-tree-relax-1} = \mathbb E\Big[ \beta_{\mathbb{T}_{\mathtt U}}(A)^2 \cdot \prod_{\mathtt u \in \mathtt U} \sigma_{\mathtt u} \Big] \,. \label{eq-exp-on-tree-relax-2}
\end{align}
Observe that $\mathcal L(\mathbb{T}_{\mathtt U}) \subset \mathtt U \subset V(\mathbb{T}_{\mathtt U})$. By applying Lemma~\ref{lem-revised-decomposition-H-Subset-S} to $\mathbb{T}_{\mathtt U}$ and $\mathtt U$ we can decompose $\mathbb{T}_{\mathtt U}$ into $\mathtt t$ self-avoiding paths $P_{\mathtt 1},\ldots,P_{\mathtt t}$ satisfying Items (1)--(4) in Lemma~\ref{lem-revised-decomposition-H-Subset-S}. In particular, we have $\mathtt t \leq 5|\mathcal L(\mathbb{T}_{\mathtt U})|+|\mathtt U| \leq 6|\mathtt U|$. Also, we have
\begin{enumerate}
    \item[(I)] $V(P_{\mathtt i}) \cap (\cup_{\mathtt j \neq \mathtt i} V(P_{\mathtt j})) = \operatorname{EndP}(V(P_{\mathtt i}))$. 
    \item[(II)] $\mathtt U \subset \mathtt W:= \cup_{\mathtt i \leq \mathtt t} \operatorname{EndP}(V(P_{\mathtt i}))$.
\end{enumerate}
By conditioning on $\sigma_{\mathtt W}=\{ \sigma_{\mathtt u}: \mathtt u \in \mathtt W \}$, we obtain
\begin{align}
    \eqref{eq-exp-on-tree-relax-2} &= \mathbb E_{\sigma_{\mathtt W} \sim \nu_{\mathtt W}} \Bigg\{ \prod_{\mathtt u \in \mathtt U} \sigma_{\mathtt u} \mathbb E\Big[ \prod_{\mathtt i \leq \mathtt t} \beta_{P_{\mathtt i}}(A)^2 \mid \sigma_{\mathtt W} \Big] \Bigg\} \nonumber \\ 
    &\circeq \mathbb E_{\sigma_{\mathtt W}\sim\nu_{\mathtt W}} \Bigg\{ \prod_{\mathtt u \in \mathtt U} \sigma_{\mathtt u} \prod_{\mathtt i \leq \mathtt t} \Big( 1+\epsilon^{|E(P_{\mathtt i})|} \prod_{u,v=\operatorname{EndP}(P_{\mathtt i})} \sigma_u \sigma_v \Big) \Bigg\} \nonumber \\
    &= \sum_{ \Lambda \subset [\mathtt t] } \prod_{\mathtt i \in \Lambda} \epsilon^{ |E(P_{\mathtt i})| } \mathbb E_{\sigma_{\mathtt W} \sim \nu_{\mathtt W}} \Bigg\{ \prod_{\mathtt u \in \mathtt U \cup \cup_{\mathtt i \in \Lambda} \operatorname{EndP}(P_{\mathtt i}) } \sigma_u \Bigg\} = \sum_{\Lambda \in \mathcal M} \prod_{\mathtt i \in \Lambda} \epsilon^{ |E(P_{\mathtt i})| } \,, \label{eq-exp-on-tree-relax-3}
\end{align}
where $\mathcal M$ is the collection of $\Lambda \subset [\mathtt t]$ such that each vertex appears an even number of times in $\mathtt U \cup \cup_{\mathtt i \in \Lambda} \operatorname{EndP}(P_{\mathtt i})$. Next, recall the definition of $\mathcal X(\mathtt U)$ from \eqref{def-mathcal-X-of-sets}. We claim the following two facts: 
\begin{enumerate}
    \item[(i)] Suppose $( \mathtt x_1 , \mathtt y_1 , \ldots , \mathtt x_{|\mathtt U|/2} , \mathtt y_{|\mathtt U|/2} ) \in \mathcal X(\mathtt U)$ that minimizes $\sum_{i=1}^{|\mathtt U|/2} \mathsf{Dist}_T(\mathtt x_i , \mathtt y_i)$. Then for all $1\leq i \leq |\mathtt U|/2$ we have $\mathfrak p_T(\mathtt x_i , \mathtt y_i) = \cup_{j \in \Lambda_i} P_j$ for some pairwise disjoint subsets $\Lambda_i \subset [\mathtt t']$;
    \item[(ii)] $\Lambda^* = \cup_{1 \leq i \leq |\mathtt U|/2} \Lambda_i$ is the unique element in $\mathcal M$.
\end{enumerate}
Once (i) and (ii) are shown, we have
\begin{align}\label{eq-exp-on-tree-final-reduction}
    \eqref{eq-exp-on-tree-relax-3} = \prod_{i \in \Lambda^*} \epsilon^{|E(P_i)|} = \epsilon^{\mathtt d}\,,
\end{align}
which yields \eqref{eq-upper-bound-exp-on-trees}. To show (i), note that given two distinct vertices $\mathtt x,\mathtt y \in \mathtt U$, both $\mathfrak p_{\mathbb{T}_{\mathtt U}}(\mathtt x,\mathtt y)$ and $\mathfrak p_T(\mathtt x,\mathtt y) \subset \mathbb{T}_{\mathtt U}$ are self-avoiding paths included in $T$ with endpoints $\mathtt x , \mathtt y$. Since such a path is unique, we get $\mathfrak p_T(\mathtt x,\mathtt y) = \mathfrak p_{\mathbb{T}_{\mathtt U}}(\mathtt x,\mathtt y) \subset \mathbb{T}_{\mathtt U}$. Therefore, given any two distinct vertices $\mathtt x,\mathtt y \in \mathtt U$, we have $\mathfrak p_T(\mathtt x, \mathtt y) = \cup_{j \in \Lambda_{\mathtt x , \mathtt y}} P_j$ for some $\Lambda_{\mathtt x , \mathtt y} \subset [\mathtt t']$. It remains to prove $E\big( \mathfrak p_T(\mathtt x_i,\mathtt y_i)\cap \mathfrak p_T(\mathtt x_j,\mathtt y_j)\big) = \emptyset$ for $1 \leq i < j \leq |\mathtt U|/2$. Otherwise, there exists $1 \leq i < j \leq |\mathtt U|/2$ such that $E\big( \mathfrak p_T(\mathtt x_i,\mathtt y_i)\cap \mathfrak p_T(\mathtt x_j,\mathtt y_j)\big) \neq \emptyset$, and in this case we have
\begin{align*}
    \mathsf{Dist}_T(\mathtt x_i , \mathtt y_i) &+ \mathsf{Dist}_T(\mathtt x_j , \mathtt y_j) = |E\big(\mathfrak p_T(\mathtt x_i,\mathtt y_i)| + |E\big(\mathfrak p_T(\mathtt x_j,\mathtt y_j)|\\
    &> \big|E\big(\mathfrak p_T(\mathtt x_i,\mathtt y_i)\triangle \mathfrak p_T(\mathtt x_j,\mathtt y_j)\big)\big| = \sum_{z\in \Lambda_i \triangle \Lambda_j}|E(P_z)| \\
    &= \big(\mathsf{Dist}_T(\mathtt x_i , \mathtt y_j) + \mathsf{Dist}_T(\mathtt x_j , \mathtt y_i)\big) \wedge \big(\mathsf{Dist}_T(\mathtt x_i , \mathtt x_j) + \mathsf{Dist}_T(\mathtt y_i , \mathtt y_j)\big) \,,
\end{align*}
which contradicts the fact that $\sum_{i=1}^{|\mathtt U|/2} \mathsf{Dist}_T(\mathtt x_i , \mathtt y_i)$ is minimized among $( \mathtt x_1 , \mathtt y_1 , \ldots , \mathtt x_{|\mathtt U|/2} , \mathtt y_{|\mathtt U|/2} ) \in \mathcal X(\mathtt U)$.

To show (ii), it suffices to show $|\mathcal M| \leq 1$ since we have already shown $\Lambda^* \in \mathcal M$ by (i). Otherwise, there exists $\Lambda' \neq \Lambda^*$ such that $\Lambda' \in \mathcal M$. Then the degree of each vertex is even in $\cup_{j \in \Lambda^* \triangle \Lambda'} P_j \neq \emptyset$, which yields that $\cup_{j\in \Lambda^* \triangle \Lambda'} P_j$ is a union of cycles. Since a tree contains no cycle, this is a contradiction. Therefore, (i) and (ii) hold, leading to our desired result \eqref{eq-exp-on-tree-final-reduction}.

\subsection{Proof of Lemma~\ref{lem-est-intersection-istree-P-Q}}{\label{subsec:proof-lem-est-intersection-istree-P-Q}}

In this section, we prove Lemma~\ref{lem-est-intersection-istree-P-Q}. We will only prove \eqref{eq-phi-P-intersec-istree} and the proof of \eqref{eq-phi-Q-intersec-istree} follows similarly. Denote $\Lambda=\{ 1\leq p \leq m: \mathcal L_p \subset S \}$. In addition, define 
\begin{equation}
    \mathtt U = \cup_{1\leq p \leq m} \operatorname{EndP}(\mathcal L_p) \,.
\end{equation}
Conditioned on $\{ \sigma_v:v\in \mathtt U\}$, we have that
\begin{align*}
    \Bigg\{ & \prod_{ (i,j) \in E(T) } \tfrac{ A_{i,j}-\frac{\lambda s}{n} }{ \sqrt{ \frac{\lambda s}{n} (1-\frac{\lambda s}{n}) } } \prod_{ (i,j) \in E(T) } \tfrac{ B_{i,j}-\frac{\lambda s}{n} }{ \sqrt{ \frac{\lambda s}{n} (1-\frac{\lambda s}{n}) } } \,, \\
    & \prod_{(i,j) \in E(\mathcal L_p)} \tfrac{ A_{i,j}-\frac{\lambda s}{n} }{ \sqrt{ \frac{\lambda s}{n} (1-\frac{\lambda s}{n}) } } : p \in \Lambda \,, \quad  
    \prod_{(i,j) \in E(\mathcal L_q)} \tfrac{ B_{i,j}-\frac{\lambda s}{n} }{ \sqrt{ \frac{\lambda s}{n} (1-\frac{\lambda s}{n}) } } : q \in [m] \setminus \Lambda  \Bigg\}
\end{align*}
is a collection of conditionally independent variables. In addition, using Lemma~\ref{lem-expectation-over-chain} we have for each $p \in \Lambda$ 
\begin{align*}
    &\mathbb E \Bigg[ \prod_{(i,j) \in E(\mathcal L_p)} \tfrac{ A_{i,j}-\frac{\lambda s}{n} }{ \sqrt{ \frac{\lambda s}{n} (1-\frac{\lambda s}{n}) } } \mid \big\{ \sigma_u : u \in \mathtt U \big\} \Bigg] \\
    =\ &\mathbb E \Bigg[ \prod_{(i,j) \in E(\mathcal L_p)} \tfrac{ A_{i,j}-\frac{\lambda s}{n} }{ \sqrt{ \frac{\lambda s}{n} (1-\frac{\lambda s}{n}) } } \mid \big\{ \sigma_u : u \in \operatorname{EndP}(\mathcal{L}_p) \big\} \Bigg] \circeq \big( \tfrac{\epsilon^2 \lambda s}{n} \big)^{|E(\mathcal L_p)|/2} \prod_{u \in \operatorname{EndP}(\mathcal{L}_p)} \sigma_u \,,
\end{align*}
where the last asymptotic equality is from the fact that $(1-\tfrac{\lambda s}{n})^{O(\log n)}=1+o(1)$. The similar result holds for $\mathcal L_q$ for $q \in [m] \setminus \Lambda$ with $A$ replaced by $B$. Thus, we have (for a vertex subset $\mathtt V$, we define $\nu_{\mathtt V}$ to be the uniform distribution of the labelings on $\mathtt V$) 
\begin{align}
    & \mathbb E_{\Pb_{\mathsf{id}}}\Big[ \beta_{S}(A) \beta_K(B) \Big] = \mathbb E_{ \sigma_{\mathtt U} \sim \nu_{\mathtt U} } \mathbb E_{\Pb_{\mathsf{id}}}\Big[ \beta_{S}(A) \beta_K(B) \mid \{ \sigma_u:u \in \mathtt U \} \Big] \nonumber \\
    =\ & \mathbb E_{ \sigma_{\mathtt U} \sim \nu_{\mathtt U} } \Bigg\{ \mathbb E_{\Pb_{\mathsf{id}}}\Big[ \prod_{(i,j) \in E(T)} \Big( \tfrac{ A_{i,j}-\frac{\lambda s}{n} }{ \sqrt{ \frac{\lambda s}{n} (1-\frac{\lambda s}{n}) } } \tfrac{ B_{i,j}-\frac{\lambda s}{n} }{ \sqrt{ \frac{\lambda s}{n} (1-\frac{\lambda s}{n}) } } \Big) \nonumber \\
    & \cdot \prod_{p \in \Lambda} \prod_{ (i,j) \in E(\mathcal L_p) } \tfrac{ A_{i,j}-\frac{\lambda s}{n} }{ \sqrt{ \frac{\lambda s}{n} (1-\frac{\lambda s}{n}) } } \prod_{q \in [m] \setminus \Lambda} \prod_{ (i,j) \in E(\mathcal L_q) } \tfrac{ B_{i,j}-\frac{\lambda s}{n} }{ \sqrt{ \frac{\lambda s}{n} (1-\frac{\lambda s}{n}) } } \mid \sigma_{\mathtt U} \Big] \Bigg\} \nonumber \\
    \circeq\ & \big( \tfrac{\epsilon^2 \lambda s}{n} \big)^{\sum_{p \leq m} |E(\mathcal L_p)|/2} \cdot \mathbb E_{\Pb_{\mathsf{id}}} \Big[ \beta_T(A) \beta_T(B) \prod_{1 \leq j \leq m} \prod_{ u \in \operatorname{EndP}(\mathcal L_j) } \sigma_u \Big] \,, \nonumber
\end{align}
completing the proof.

\subsection{Proof of Lemma~\ref{lem-est-intersection-general-istree-P-Q}}{\label{subsec:proof-lem-3.5}}

We first prove that $\mathbb E_{\Pb_{\pi}}[\beta_{S_1}(A)\beta_{S_2}(B)] \geq 0$ for all $\pi \in \mathfrak S_n$. Without loss of generality, assume $\pi = \mathsf{id}$. Note that 
\begin{align*}
    \mathbb E_{\Pb_{\mathsf{id}}}[\beta_{S_1,S_2}] = \mathbb E_{\Pb_{\mathsf{id}}}\Bigg[ \prod_{ (i,j) \in E(S_1) } \Big( \tfrac{ A_{i,j} - \frac{\lambda s}{n} }{ \sqrt{\frac{\lambda s}{n}(1-\frac{\lambda s}{n})} } \Big)^{ E( S_1)_{i,j} } \prod_{ (i,j) \in E(S_2) } \Big( \tfrac{ B_{i,j} - \frac{\lambda s}{n} }{ \sqrt{\frac{\lambda s}{n}(1-\frac{\lambda s}{n})} } \Big)^{ E(S_2)_{i,j} } \Bigg] \,.
\end{align*}
For all $e=(i,j) \in E(S_1) \cup E(S_2)$, recall that $\chi(e)=(\chi_1(e),\chi_2(e))=( E(S_1)_{i,j}, E(S_2)_{i,j} )$. We have that $\mathbb E_{\Pb_{\mathsf{id}}}[\beta_{S_1,S_2}]$ equals to
\begin{align}\label{eq-proof-lem-positive-expansion-formula-1}
    & \mathbb E_{\sigma\sim\nu} \Bigg[ \prod_{(i,j)\in E(S_1) \cup E(S_2)} \big( u_{\chi_1(i,j),\chi_2(i,j)} + v_{\chi_1(i,j),\chi_2(i,j)} \sigma_i \sigma_j \big) \Bigg] \nonumber \\
    =\ & \sum_{ E \subset E(S_1) \cup E(S_2) } \mathbb E_{\sigma\sim\nu}\Bigg[  \prod_{(i,j) \in E} \big( v_{\chi_1(i,j),\chi_2(i,j)} \sigma_i \sigma_j \big) \prod_{(i,j) \in E(S_1) \cup E(S_2) \setminus E} \big( u_{\chi_1(i,j),\chi_2(i,j)} \big) \Bigg] \,,
\end{align}
which is positive by Lemma~\ref{lem-joint-moment-A-B} and the fact that $\mathbb E\big[ \prod_{(i,j)\in E} \sigma_i \sigma_j \big] \geq 0$. Similarly, we can show that $\mathbb E_{\Pb}[ \beta_{\pi(S_1)}(A)\beta_{S_2}(A) ] \geq 0$ for all $\pi\in\mathfrak S_n$.

Next we prove \eqref{eq-S,K-intersection-general-A,A} and \eqref{eq-S,K-intersection-general-A,B} (which immediately yields that $\mathbb E_{\P_\pi}\big[ \phi_{S_1} (A)\phi_{S_2} (B)\big] \geq 0$ for all $\pi \in \mathcal A(S_1,S_2)$). We only deal with \eqref{eq-S,K-intersection-general-A,B} and \eqref{eq-S,K-intersection-general-A,A} can be derived in the same manner. Without loss of generality assume $\mathsf{id} \in \mathcal A(S_1,S_2)$ and $\pi = \mathsf{id}$. Define $T=S_1 \cap S_2$. Next, for all $I,J \subset [\iota\aleph]$, define
\begin{align*}
    S_1^I = \mathsf{T}(S_1) \cup \big( \cup_{i \in I} \mathsf L_i(S_1) \big), \quad S_2^J = \mathsf{T}(S_2) \cup \big( \cup_{j \in J} \mathsf L_j(S_2) \big)
\end{align*}
and (recall \eqref{eq-def-beta-S} and \eqref{eq-def-psi-S})
\begin{align}
    \phi_{S_1,S_2;I,J}(A,B) =\ & \beta_{\mathsf T(S_1)}(A) \cdot \beta_{\mathsf T(S_2)}(B) \cdot \Big( \prod_{i \in I} \psi_{\mathsf{L}_i(S_1)}(A)\Big) \cdot \Big( \prod_{j \in J} \psi_{\mathsf{L}_j(S_2)}(B)\Big) \nonumber\\
    &\cdot \Big( \prod_{i \not\in I} \beta_{\mathsf L_i(S_1)}(A)\Big) \cdot \Big( \prod_{j \not\in J} \beta_{\mathsf L_j(S_2)}(B)\Big)\,, \label{eq-def-phi-S1I-S2J}
\end{align}
Also, for all $I,J \subset [\iota\aleph]$ let $e=(i,j)\in E(S_1) \cup E(S_2)$ and define 
\begin{align*}
    & \chi^{I,J}(e)= \big( \chi^I_1(e),\chi^J_2(e)\big ) = \big( E(S^I_1)_{i,j}, E(S^J_2)_{i,j} \big) \,, \\
    & T^{I,J} = S_1^I \cap S_2^J \mbox{ and } S_1^I\cup S_2^J=T^{I,J} \sqcup \big( \sqcup_{i=1}^{|I|+|J|} \mathcal L^{I,J}_i \big) \,,
\end{align*}
where $\{ \mathcal{L}^{I,J}_i:1\leq i \leq |I|+|J|\}$ are disjoint self-avoiding paths with 
\begin{align*}
    V(\mathcal L^{I,J}_i) \cap V(T^{I,J})=\operatorname{EndP}(\mathcal L^{I,J}_i) \,.
\end{align*}
Next, we divide the self-avoiding paths in $\{ \mathsf L_i(S_1)\}_{i \in [\iota\aleph]}$ and $\{ \mathsf L_j(S_2)\}_{j \in [\iota\aleph]}$ into several types and deal with them separately. Define
\begin{align}
    & I_0 = \big\{ i \in [\iota\aleph] : \mathsf L_i(S_1)\cap S_2 = \operatorname{EndP}(\mathsf L_i(S_1)) \big\} \,, I_1 = [\iota\aleph] \setminus I_0\,; \label{eq-def-I0-J0-1} \\
    & J_0 = \big\{ j \in [\iota\aleph] : \mathsf L_j(S_2)\cap S_1 = \operatorname{EndP}(\mathsf L_j(S_2)) \big\} \,, J_1 = [\iota\aleph] \setminus J_0\,. \label{eq-def-I0-J0-2}
\end{align}
Also define
\begin{align}
    & I'_0 = \big\{ i \in [\iota\aleph] : \mathsf L_i(S_1)\cap S_2 = \operatorname{EndP}(\mathsf L_i(S_1)) \not\in \mathsf P(S_2) \big\} \,; \label{eq-def-I'0-J'0-1} \\
    & J'_0 = \big\{ j \in [\iota\aleph] : \mathsf L_j(S_2)\cap S_1 = \operatorname{EndP}(\mathsf L_j(S_2)) \not\in \mathsf P(S_1) \big\} \,. \label{eq-def-I'0-J'0-2}
\end{align}
\begin{figure}[ht]
    \centering
    \vspace{0cm}
    \includegraphics[height=3.39cm,width=13.76cm]{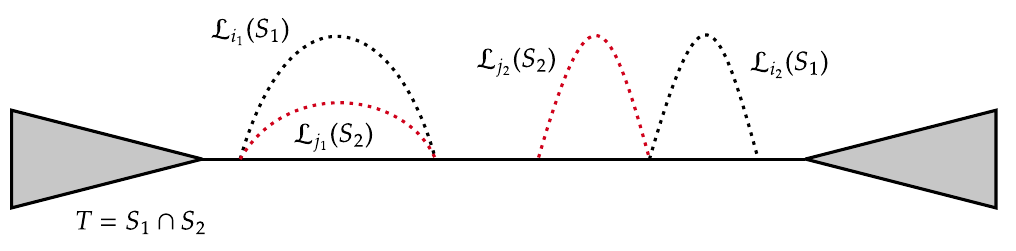}
    \caption{Illustration of $I_0,J_0,I_0',J_0'$: above $i_1 \in I_0,j_1 \in J_0,i_2\in I_0',j_2 \in J_0'$}
    \label{fig:Type_I_0}
\end{figure}
(see Figure~\ref{fig:Type_I_0} for an illustration.) Then by \eqref{eq-def-K1-s1-S2-pi}, \eqref{eq-def-K2-s1-S2-pi} and \eqref{eq-def-Kc-s1-S2-pi} we have $|I_0'|=\mathtt K_1(S_1,S_2;\pi)$, $|J_0'|= \mathtt K_2(S_1,S_2;\pi)$ and $|I_0 \setminus I'_0| = \mathtt K_c(S_1,S_2;\pi)$. We first deal with the paths in $I_0$ and $J_0$ via the following lemma:

\begin{lemma}{\label{lem-phi-S1-S2-reduction}}
    There exists $h_n = O(\tfrac{1}{\sqrt{n}})$ such that 
    \begin{align}
        \mathbb E_{\Pb_{\mathsf{id}}}\big[\phi_{S_1,S_2}\big] \circeq h_n^{|I_0'|+|J_0'|} (1-\epsilon^{2\mathtt m})^{|I_0\setminus I_0'|} \cdot \mathbb E_{\Pb_{\mathsf{id}}}\Big[ \phi_{S_1,S_2;I_1,J_1}(A,B) \Big] \,. \label{eq-phi-S1-S2-reduction-expression}
    \end{align}
\end{lemma}
\begin{proof}
    We will prove \eqref{eq-phi-S1-S2-reduction-expression} by showing that each path in $I_0' \cup J_0'$ contributes a factor of $h_n$ and each path in $(I_0 \setminus I_0') \cup (J_0 \setminus J_0')$ contributes a factor of $1-\epsilon^{\mathtt m}$ via a case-by-case analysis. Without loss of generality, assume $1 \in I_0$. Denote $\operatorname{EndP}(\mathsf L_1(S_1))=\{u,v\}$ and $\mathfrak p$ is the shortest path connecting $u$ and $v$ in $T$. Then by Theorem~\ref{thm-desired-vertex-sets}, there exist no vertex $w \in V(\mathfrak p) \setminus \{ u,v \}$ such that $w \in \mathsf{Vert}(\mathsf P(S_1)) \cup \mathsf{Vert}(\mathsf P(S_2))$, leading to the fact that both $u$ and $v$ are cut points of $S_1 \cup S_2$ (meaning that both $u$ and $v$ divide $S_1\cup S_2$ into two disjoint components). Therefore, we can divide $1\in I_0$ by two cases:

    \noindent{\bf Case 1:} $1 \in I_0'$. Then we have (denote $\mathtt G = (S_1 \cup S_2) \setminus (\mathfrak p \cup \mathsf L_1(S_1))$)
    \begin{align}
        & \mathbb E_{\Pb_{\mathsf{id}}}[\phi_{S_1,S_2}(A,B)]\nonumber\\
        =\ & \mathbb E_{\Pb_{\mathsf{id}}}\Bigg[ \mathbb E_{\Pb_{\mathsf{id}}} \Big[\beta_{\mathsf T(S_1)\setminus \mathfrak p}(A)\beta_{\mathsf T(S_2)\setminus \mathfrak p}(B)\cdot \Big(\prod_{2 \leq i \leq \iota\aleph} \psi_{\mathsf{L}_i(S_1)}(A)\Big)\cdot \Big(\prod_{1 \leq j \leq \iota\aleph} \psi_{\mathsf{L}_j(S_2)}(B)\Big) \mid \sigma_u ,\sigma_v \Big]\nonumber \\
        & \cdot\mathbb E_{\Pb_{\mathsf{id}}} \Big[\psi_{S_1,1}(A)\beta_\mathfrak p(A)\beta_\mathfrak p(B) \mid A_e , B_e, \sigma_w: e \in E(\mathtt G), w \in V(\mathtt G)\Big] \Bigg] \nonumber \\
        =\ & \mathbb E_{\Pb_{\mathsf{id}}}\Bigg[\mathbb E_{\Pb_{\mathsf{id}}}\Big[\beta_{\mathsf T(S_1)\setminus \mathfrak p}(A)\beta_{\mathsf T(S_2)\setminus \mathfrak p}(B)\cdot \Big(\prod_{2 \leq i \leq \iota\aleph} \psi_{S_1,i}(A)\Big)\cdot \Big(\prod_{1 \leq j \leq \iota\aleph} \psi_{S_2,j}(B)\Big) \mid \sigma_u ,\sigma_v \Big]\label{eq-u-v-cut-graph-1} \\
        & \cdot \mathbb E_{\Pb_{\mathsf{id}}}\Big[\psi_{\mathsf{L}_1(S_1)}(A)\beta_\mathfrak p(A)\beta_\mathfrak p(B) \mid \sigma_u , \sigma_v \Big] \Bigg] \,,\label{eq-u-v-cut-graph-2}
    \end{align}
    Since $u,v$ are cut points of $S_1$ and $S_2$, $\eqref{eq-u-v-cut-graph-1}$ does not depend on $\sigma_u$ and $\sigma_v$. Also, \eqref{eq-strong-second-bound-u-v-r,t} yields that when $1 \in I_0'$ we have
    \begin{align*}
        \mathbb E_{\Pb_{\mathsf{id}}} \Big[\psi_{\mathsf{L}_1(S_1)}(A) \beta_\mathfrak p(A)\beta_\mathfrak p(B) \mid \sigma_u , \sigma_v \Big] &= \mathbb E_{\Pb_{\mathsf{id}}}\Big[ (\tfrac{\epsilon^2 \lambda s}{n})^{\ell/2}\big((\sigma_u \sigma_v-\epsilon^\mathtt m )(1 + \epsilon^\mathtt m \sigma_u\sigma_v )+h_n\big) \mid \sigma_u , \sigma_v \Big] \nonumber \\
        &= \big(\tfrac{\epsilon^2 \lambda s}{n} \big)^{\ell/2}\big((1-\epsilon^{2\mathtt m})\sigma_u \sigma_v+h_n\big)  \,,
    \end{align*}
    where $h_n=O(\tfrac{1}{\sqrt{n}})$. Plugging this into \eqref{eq-u-v-cut-graph-2} yields
    \begin{align}
        \mathbb E_{\Pb_{\mathsf{id}}}[\phi_{S_1,S_2}(A,B)] &= \mathbb E_{\Pb_{\mathsf{id}}} \Big[ \eqref{eq-u-v-cut-graph-1} \cdot \big( \tfrac{\epsilon^2 \lambda s}{n} \big)^{\ell/2} \big((1-\epsilon^{2\mathtt m}) \sigma_u \sigma_v+h_n\big) \Big] \nonumber \\
        &= \mathbb E_{\Pb_{\mathsf{id}}} \Big[ \eqref{eq-u-v-cut-graph-1} \Big] \cdot \mathbb E_{\Pb_{\mathsf{id}}} \Big[ \big( \tfrac{\epsilon^2 \lambda s}{n} \big)^{\ell/2} \big((1-\epsilon^{2\mathtt m}) \sigma_u \sigma_v+h_n\big) \Big] \nonumber \\
        &= h_n\mathbb E_{\Pb_{\mathsf{id}}} \Big[ \phi_{
        S_1,S_2;[\iota\aleph]\setminus \{ 1\},[\iota\aleph]}(A,B) \Big]\,. \label{eq-phi-S1-S2-reduction-arguments}
    \end{align}
    \textbf{Case 2: $1 \in  I_0 \setminus I_0'$}. Without loss of generality, we may assume that $\mathsf P_1(S_1) = \mathsf P_{1}(S_2)$. Then similarly to Case 1 we have
    \begin{align}
        &\mathbb E_{\Pb_{\mathsf{id}}}\big[ \phi_{S_1,S_2}(A,B) \big] \nonumber \\
        =\ & \mathbb E_{\Pb_{\mathsf{id}}}\Bigg[ \mathbb E_{\Pb_{\mathsf{id}}} \Big[ \beta_{\mathsf T(S_1)\setminus \mathfrak p}(A)\beta_{\mathsf T(S_2)\setminus \mathfrak p}(B) \prod_{2 \leq i \leq \iota\aleph} \psi_{\mathsf{L}_i(S_1)}(A) \prod_{2 \leq j \leq \iota\aleph} \psi_{\mathsf{L}_j(S_2)}(B) \mid \sigma_u ,\sigma_v \Big] \\
        &\cdot \mathbb E_{\Pb_{\mathsf{id}}}\Big[ \psi_{\mathsf{L}_1(S_1)}(A) \psi_{\mathsf{L}_1(S_2)}(B) \beta_\mathfrak p(A)\beta_\mathfrak p(B) \mid \sigma_u,\sigma_v \Big] \Bigg]
    \end{align}
    and
    \begin{align*}
        &\mathbb E_{\Pb_{\mathsf{id}}}\Big[\psi_{\mathsf{L}_1(S_1)}(A) \psi_{\mathsf{L}_1(S_2)}(B) \beta_\mathfrak p(A)\beta_\mathfrak p(B) \mid \sigma_u,\sigma_v \Big]  \\
        =\ & \big( \tfrac{\epsilon^2 \lambda s}{n} \big)^{\ell/2}\big((\sigma_u \sigma_v-\epsilon^\mathtt m )^2(1 + \epsilon^\mathtt m \sigma_u\sigma_v )+O(n^{-1/2})\big)   \\
        =\ & \big( \tfrac{\epsilon^2 \lambda s}{n} \big)^{\ell/2} \big((1-\epsilon^{2\mathtt m}+O(n^{-1/2}))-(\epsilon^\mathtt m (1-\epsilon^{2\mathtt m})+O(n^{-1/2}))\sigma_u \sigma_v\big) \,.
    \end{align*}
    Similarly as in \eqref{eq-phi-S1-S2-reduction-arguments}, we get that
    \begin{align}
        \mathbb E_{\Pb_{\mathsf{id}}}\big[ \phi_{S_1,S_2}(A,B) \big] =(1-\epsilon^{2\mathtt m}+O(n^{-1/2})) \cdot \mathbb E_{\Pb_{\mathsf{id}}}\big[ \phi_{[\iota\aleph]\setminus \{ 1\},[\iota\aleph] \setminus \{ j_0\}}(A,B) \big] \,. \label{eq-phi-S1-S2-reduction-arguments-2}
    \end{align}
    Repeating our arguments in \eqref{eq-phi-S1-S2-reduction-arguments} yields \eqref{eq-phi-S1-S2-reduction-expression}.
\end{proof}

Next we deal with the right-hand side of \eqref{eq-phi-S1-S2-reduction-expression}. By \eqref{eq-def-psi-S} and \eqref{eq-def-phi-S1I-S2J} we have 
\begin{align}
    \mathbb E_{\Pb_{\mathsf{id}}}[\phi_{S_1,S_2;I_1,J_1}] =\ & \sum_{I_0 \subset I\subset [\iota\aleph] , J_0 \subset J\subset [\iota\aleph]} \E_{\Pb_{\mathsf{id}}} [\beta_{S_1^I , S_2^J}] \big(- \epsilon^\mathtt m (\tfrac{\epsilon^2\lambda s}{n})^{\ell/2}\big)^{|I_1 \setminus I| + |J_1 \setminus J|}\nonumber \\
    =\ & \E_{\Pb_{\mathsf{id}}} [\beta_{S_1 , S_2}] + \sum_{\substack{I_0 \subset I\subset [\iota\aleph] , J_0 \subset J\subset [\iota\aleph] \\(I,J)\neq([\iota\aleph] , [\iota\aleph])}} \big( -\epsilon^\mathtt m (\tfrac{\epsilon^2\lambda s}{n})^{\ell/2} \big)^{|I_1 \setminus I| + |J_1 \setminus J|} \E_{\Pb_{\mathsf{id}}}\big[ \beta_{S_1^I,S_2^J} \big]  \,. \label{eq-expansion-phi-S1-I1-S2-J1-1}
\end{align}
Thus, it suffices to show that
\begin{align}
    \sum_{\substack{I_0 \subset I\subset [\iota\aleph] , J_0 \subset J\subset [\iota\aleph] \\(I,J)\neq([\iota\aleph] , [\iota\aleph])}} \Big( \epsilon^\mathtt m (\tfrac{\epsilon^2\lambda s}{n})^{\ell/2} \Big)^{|I_1 \setminus I|+|J_1 \setminus J|} \cdot \E_{\Pb_{\mathsf{id}}} \big[ \beta_{S_1^I,S_2^J} \big] = o(1) \cdot \E_{\Pb_{\mathsf{id}}} [\beta_{S_1,S_2}] \,.  \label{eq-goal-main-term-phi-S-I,J}
\end{align}
To this end, first observe that 
\begin{align*}
    \mbox{each vertex in } V(T^{I,J}) \mbox{ occurs in at most two elements of } \{ \operatorname{EndP}(\mathcal L^{I,J}_i): 1 \leq i \leq |I|+|J| \} \,.
\end{align*}
Otherwise, suppose $w \in V(T^{I,J})$ occurs in at least three of $\{\operatorname{EndP}(\mathcal L^{I,J}_i): 1 \leq i \leq |I|+|J| \}$. In this case, since each $\mathcal L^{I,J}_i$ belongs to different paths in $\{ \mathsf L_1 (S_1) , \ldots , \mathsf L_{\iota\aleph}(S_1) ; \mathsf L_1 (S_2) , \ldots , \mathsf L_{\iota\aleph}(S_2) \}$, there exist $p, p' \in [\iota\aleph],q \in \{1,2\}$ such that $w \in V(\mathsf L_p(S_q) )\cap V(\mathsf L_{p'}(S_q) )$, which contradicts to Theorem~\ref{thm-desired-vertex-sets}. Next, for any $I,J \subset [\iota\aleph]$ we denote 
\begin{align*}
    \mathtt V^{I,J}  = \triangle_{1\leq i \leq |I|+|J|} \Big\{ \operatorname{EndP}(\mathcal L^{I,J}_i) \Big\} \,,
\end{align*}
as the vertices $w \in V(T^{I,J})$ that occur in exactly one of $\{\operatorname{EndP}(\mathcal L^{I,J}_i)\}_{1 \leq i \leq |I|+|J|}$. We see that $\#\mathtt V^{ I,J}$ is even. Denote (recall the definition of $\mathcal X(\cdot)$ from \eqref{def-mathcal-X-of-sets})
\begin{align}
    \mathtt d^{I,J} = \min \Big\{ \sum_{1 \leq i \leq |V^{I,J}|/2} \mathsf{Dist}_{T^{I,J}}(x_i , y_i) : ( x_1 , \ldots ,x_{|\mathtt V^{I,J}|/2}, y_1 , \ldots ,y_{|\mathtt V^{I,J}|/2} )\in \mathcal X( V^{I,J}) \Big\}\,,\label{eq-def-mathtt-d-I,J}
\end{align}
and define
\begin{align}
    &I_2 = I_0 \cup \{ i \in I_1 : E(\mathsf L_i(S_1)\cap \mathsf T( S_2 ))\neq \emptyset\}\,, I_3 = I_1 \setminus I_2\,;\label{eq-def-I1-J1-1}\\
    &J_2 = J_0 \cup \{ j \in J_1 : E(\mathsf L_j(S_2)\cap \mathsf T( S_1 ))\neq \emptyset\}\,, J_3 = J_1 \setminus J_2\,.\label{eq-def-I1-J1-2}
\end{align}
\begin{figure}[ht]
    \centering
    \vspace{0cm}
    \includegraphics[height=3.84cm,width=13.87cm]{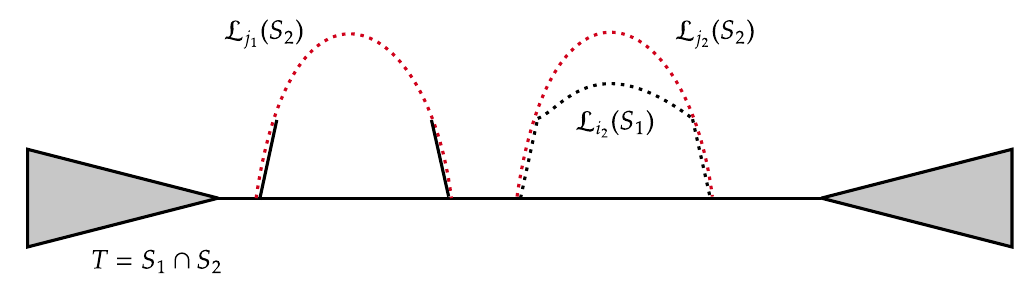}
    \caption{Illustration of $I_2,J_2,I_3,J_3$: above $j_1 \in J_2,i_2\in I_3,j_2 \in J_3$}
    \label{fig:Type_I_2}
\end{figure}
(See Figure~\ref{fig:Type_I_2} for an illustration.)
For $I\subset [\iota\aleph],J \subset [\iota\aleph]$ such that $I_2 \not\subset I$, there exist $i_2 \in I_2 \setminus I$ and $e_2 \in E(\mathsf L_{i_2}(S_1)\cap \mathsf T( S_2 ))$ such that $e_2$ contains an endpoint $v_2 \in \mathcal L(\mathsf T(S_2))$. Then by the definition of $\mathfrak R^*_\mathbf H$ and $\mathcal{A}(S_1,S_2)$, we have
\begin{align*}
    e_2 \not\in E(\cup_{i \in [\iota\aleph]} \mathsf L_{i}(S_2)) \mbox{ and } e_2 \not\in E(\cup_{i \in [\iota\aleph], i \neq i_2} \mathsf L_{i}(S_1)) \,,
\end{align*}
which yields $v_2 \in\mathcal L(S_1^I \cup S_2^J) \neq \emptyset$ and therefore $\E_{\P_{\mathsf{id}}} [ \beta_{S_1^I , S_2^J}(A,B)] = 0$. Therefore, by Lemmas~\ref{lem-est-intersection-istree-P-Q} and \ref{lem-delicate-exp-upper-bound-on-trees} we have
\begin{align}
    \E_{\Pb_{\mathsf{id}}} [\beta_{S_1,S_2}] \circeq s^{|E(T^{ [\iota\aleph], [\iota\aleph]})|}(\tfrac{\epsilon^2\lambda s}{n})^{\tfrac{1}{2}\sum_{1\leq i \leq 2\iota\aleph}|E(\mathcal L^{ [\iota\aleph], [\iota\aleph]}_i)|}\epsilon^{\mathtt d^{ [\iota\aleph], [\iota\aleph]}} \label{eq-expansion-beta-S1-S2-main-term}
\end{align}
and 
\begin{align}
    &\sum_{\substack{I_0 \subset I\subset [\iota\aleph] , J_0 \subset J\subset [\iota\aleph] \\(I,J)\neq([\iota\aleph] , [\iota\aleph])}} \Big( \epsilon^\mathtt m (\tfrac{\epsilon^2\lambda s}{n})^{\ell/2} \Big)^{|I_1 \setminus I|+|J_1 \setminus J|} \cdot \E_{\Pb_{\mathsf{id}}} \big[ \beta_{S_1^I,S_2^J} \big] \nonumber \\
    \circeq\ & \sum_{\substack{I\subset [\iota\aleph] , J\subset [\iota\aleph]  \\(I,J)\neq([\iota\aleph] ,[\iota\aleph] )\\I_2 \subset I , J_2 \subset J}}s^{|E(T^{I,J})|}(\tfrac{\epsilon^2\lambda s}{n})^{\tfrac{1}{2}\sum_{1\leq i \leq |I|+|J|}|E(\mathcal L^{I,J}_i)|}\epsilon^{\mathtt d^{I,J}} \big( \epsilon^\mathtt m (\tfrac{\epsilon^2\lambda s}{n})^{\ell/2}\big)^{| [\iota\aleph] \setminus I| + | [\iota\aleph] \setminus J|}\,. \label{eq-expansion-phi-S1-I1-S2-J1-2}
\end{align}
Next we prove that for any $I\subset  [\iota\aleph] , J\subset  [\iota\aleph]$ such that $(I,J)\neq( [\iota\aleph],  [\iota\aleph])$, $I_2 \subset I$ and $J_2 \subset J$, we have $ \mathtt d^{ [\iota\aleph], [\iota\aleph]} - \mathtt d^{I,J} =\mathtt m (| [\iota\aleph] \setminus I| + | [\iota\aleph] \setminus J|)$. This will be a direct consequence of the following lemma.

\begin{lemma}\label{lem-label-matching-exact}
For any $I,J \subset [\iota\aleph]$ denote 
\begin{align*}
    \mathtt f^{I,J} = \big|E\big((\cup_{i \in I}\mathsf L_i(S_1) )\cap \mathsf T(S_2)\big)\big| + \big|E\big((\cup_{j \in J}\mathsf L_j(S_2) )\cap \mathsf T(S_1)\big)\big| \,.
\end{align*}
Then we have 
\begin{equation}
    \mathtt d^{I,J} = \mathtt f^{I,J} + \mathtt m(|I|+|J|)\,.\label{eq-label-matching-exact}
\end{equation}
\end{lemma}

The proof of this lemma is incorporated in Subsection~\ref{subsec:proof-lem-C-exact}. By \eqref{eq-def-I1-J1-1}, \eqref{eq-def-I1-J1-2} and the definition of $\mathtt f^{I,J}$, we have $\mathtt f^{I,J} = \mathtt f^{I_2,J_2}$ for any $I\subset [\iota\aleph] , J\subset [\iota\aleph]$ such that $(I,J)\neq([\iota\aleph],[\iota\aleph])$, $I_2 \subset I$ and $J_2 \subset J$. In addition,  By Lemma~\ref{lem-label-matching-exact} we have
\begin{align}
    \eqref{eq-expansion-phi-S1-I1-S2-J1-2} =\ & \epsilon^{\mathtt f^{I_2,J_2} + 2\mathtt m \iota\aleph} \sum_{\substack{I\subset [\iota\aleph] , J\subset [\iota\aleph]  \\(I,J)\neq([\iota\aleph] ,[\iota\aleph] )\\I_2 \subset I , J_2 \subset J}}s^{|E(T^{I,J})|}(\tfrac{\epsilon^2\lambda s}{n})^{\tfrac{1}{2}\sum_{1\leq i \leq |I|+|J|}|E(\mathcal L^{I,J}_i)|}\big(  (\tfrac{\epsilon^2\lambda s}{n})^{\ell/2}\big)^{|[\iota\aleph] \setminus I| + |[\iota\aleph] \setminus J|}\Big) \nonumber \\
    =\ & \epsilon^{\mathtt f^{I_2,J_2} + 2\mathtt m \iota\aleph} \sum_{\substack{I\subset [\iota\aleph] , J\subset [\iota\aleph]  \\(I,J)\neq([\iota\aleph] ,[\iota\aleph] )\\I_2 \subset I , J_2 \subset J}} s^{\aleph-1+\ell\iota\aleph} (\tfrac{\epsilon^2\lambda }{n})^{\tfrac{1}{2}\sum_{1\leq i \leq |I|+|J|}|E(\mathcal L^{I,J}_i)|}\big(  (\tfrac{\epsilon^2\lambda}{n})^{\ell/2}\big)^{|[\iota\aleph] \setminus I| + |[\iota\aleph] \setminus J|}\Big) \nonumber \\
    \leq\ & \epsilon^{\mathtt f^{I_2,J_2} + 2\mathtt m \iota\aleph} s^{\aleph-1+\ell\iota\aleph}(\tfrac{\epsilon^2\lambda }{n})^{\tfrac{1}{2}\sum_{1\leq i \leq 2\iota\aleph}|E(\mathcal L^{[\iota\aleph],[\iota\aleph]}_i)|} \cdot 2^{\iota\aleph} (\tfrac{\epsilon^2\lambda }{n})^{1/2} \nonumber \\
    =\ & o(1)\cdot \mathbb E_{\Pb_{\mathsf{id}}}[\beta_{S_1,S_2}(A,B)] \,, \label{eq-expansion-phi-S1-I1-S2-J1-3}
\end{align}
where the second equality follows from
\begin{align*}
    2 |E(T^{I,J})| + \sum_{1 \leq i \leq |I|+|J|} |E(\mathcal L_i^{I,J})| &= |E(S_1^{I,J})| + |E(S_2^{I,J})| \\
    &= 2(\aleph-1+\ell\iota\aleph)- \ell\cdot |[\iota\aleph] \setminus I| - \ell\cdot |[\iota\aleph] \setminus J| 
\end{align*}
and the inequality follows from (note from \eqref{eq-def-I1-J1-1} that $|E(\mathcal L^{[\iota\aleph],[\iota\aleph]}_i)|<\ell$ for all $i \not\in I_2 \cup J_2$)
\begin{align*}
    \sum_{1\leq i \leq 2\iota\aleph} |E(\mathcal L^{[\iota\aleph],[\iota\aleph]}_i)| - \sum_{1\leq i \leq |I|+|J|}|E(\mathcal L^{I,J}_i)| = \sum_{ i \not \in I \cup J } |E(\mathcal L^{[\iota\aleph],[\iota\aleph]}_i)| < \ell( |[\iota\aleph] \setminus I| + |[\iota\aleph] \setminus J| )
\end{align*}
for $(I,J) \neq ([\iota\aleph],[\iota\aleph])$ and $I_2 \subset I, J_2 \subset J$.

\subsection{Proof of Lemma~\ref{lem-character-non-principle}}{\label{subsec:proof-lem-3.10}}

We first deal with the case that $S_1,S_2 \in \mathfrak R_{\mathbf H}^*$ and $\mathsf{id} \not\in\mathcal A(S_1,S_2)$. Recall in this case $S_1,S_2$ are just simple graphs. Denote $H=S_1 \cap S_2$. Note that $\tau(S_1)=\tau(S_2)=\iota\aleph-1$ and $|E(S_1)|=|E(S_2)|=\aleph-1+\ell\iota\aleph$. Then we have 
\begin{align*}
    \tau(G_{\cup}) = 2\iota\aleph-2-\tau(H) \mbox{ and } |E(G_{\cup})| = 2(\aleph-1)+ 2\ell\iota\aleph - |E(H)| \,,
\end{align*}
and thus it suffices to show that in this case we have 
\begin{equation}{\label{eq-goal-lem-3.9-case-I}}
    \tau(H) \leq \tfrac{|E(H)|}{\ell/2}-2 <  \tfrac{|E(H)|+2}{\ell/2}-2 \,.
\end{equation}
To this end, denote $H_0 = H \cap \mathsf T(S_1)$ and $H_i = H_{i-1} \cup (H \cap \mathsf L_i(S_1))$ for $i=1,\dots,\iota \aleph$ (thus $H_{\iota\aleph}=H$). Since $S_1 \in \mathfrak R_{\mathbf H}^*$ implies that $V(\mathsf L_{i+1}(S_1)) \cap V(H_{i}) = \operatorname{EndP}(\mathsf L_{i+1}(S_1))$, we have $\tau(H_{i+1}) \leq \tau(H_i)+1$, with equality holding if and only if $\mathsf L_{i+1}(S_1) \subset H$. In addition, note that
\begin{align*}
    \mathsf L_{i+1}(S_1) \subset H \Longrightarrow |E(H_{i+1})| = |E(H_i)|+\ell \,.
\end{align*}
Thus, we have 
\begin{equation}{\label{eq-lem-3.9-case-I-induction}}
    \tau(H_{i+1})-\tau(H_i) \leq \tfrac{|E(H_{i+1})|-|E(H_i)|}{\ell/2} - \mathbf 1_{ \{ \mathsf L_{i+1}(S_1) \subset H \} } \,.
\end{equation}
Rearranging the above inequality and summing over $i$ yields
\begin{align}\label{eq-tauH-minus-EH-bound}
    \tau(H) - \tfrac{|E(H)|}{\ell/2} = \tau(H_{\iota\aleph}) - \tfrac{|E(H_{\iota\aleph})|}{\ell/2} &\leq \tau(H_{0}) - \tfrac{|E(H_{0})|}{\ell/2} - \#\{ i: \mathsf L_{i}(S_1) \subset H \}\notag \\
    &\leq \tau(H_{0}) - \#\{ i: \mathsf L_{i}(S_1) \subset H \} \,.
\end{align}
By Item (3) of Lemma~\ref{lem-useful-property-trees-and-sets} and the assumption that $\mathcal L(S_1) \cup \mathcal L(S_2) \subset V(S_1 \cap S_2)$, we have that $\mathcal L(\mathsf T(S_1)) = \mathcal L(S_1) \subset V(H_0)$. Also, $\mathsf{id} \not\in\mathcal A(S_1,S_2)$ implies that 
\begin{align}{\label{eq-lem-3.9-exists-j}}
    \exists j\in\{ 1,2\},\mbox{ either } \mathsf T(S_j) \neq H_0 \mbox{ or } \{ i: \mathsf L_{i}(S_j) \subset H \} \neq \emptyset \,; 
\end{align}
this is because otherwise $H = H_{\iota \aleph}$ will be a tree containing $\mathsf T(S_1)$ and $\mathsf T(S_2)$, which contradicts to $\mathsf{id} \not\in\mathcal A(S_1,S_2)$. Without loss of generality, assume $j=1$ in \eqref{eq-lem-3.9-exists-j}. When $\mathsf T(S_1) \neq H_0$, by $H_0 \subset \mathsf T(S_1)$ and $\mathcal L(\mathsf \mathsf T(S_1)) \subset V(H_0)$, all connected components of $H_0$ must be trees and $H_0$ cannot be connected, which yields $\tau(H_0) \leq -2$. When $\mathsf T(S_1)=H_0$, by \eqref{eq-lem-3.9-exists-j}, we have that $\tau(H_0)=-1$ and $\{ i: \mathsf L_{i}(S_1) \subset H \}\neq \emptyset$. Therefore, when $\mathsf{id}\notin \mathcal{A}(S_1,S_2)$ either $\tau(H_0) \leq -2$ or $\{\tau(H_0)=-1\}\cap\{ \{ i: \mathsf L_{i}(S_1) \subset H \} \neq \emptyset \}$ holds, leading to
\begin{equation}{\label{eq-lem-3.9-case-I-induction-0}}
    \tau(H_{0}) - \#\{ i: \mathsf L_{i}(S_1) \subset H \} \leq -2 \,.
\end{equation}
Then \eqref{eq-goal-lem-3.9-case-I} follows by plugging the above estimate into \eqref{eq-tauH-minus-EH-bound}. Now we deal with the case that $S_1 \not\in \mathfrak R_{\mathbf H}^*$ (the case that $S_2 \not\in \mathfrak R_{\mathbf H}^*$ can be treated similarly). In this case we have $\tau(\widetilde S_1) \geq \iota\aleph$. Denoting $H = \widetilde S_1 \cap \widetilde S_2$, we claim the following
\begin{equation}{\label{eq-goal-lem-3.9-case-II}}
    \tau(H) - \tfrac{|E(H)|+\aleph}{\ell} \leq \min\Big\{ \tau(\widetilde S_1) - \tfrac{|E(\widetilde S_1)|}{\ell}, \tau(\widetilde S_2) - \tfrac{|E(\widetilde S_2)|}{\ell} \Big\}  \,.
\end{equation}
To prove \eqref{eq-goal-lem-3.9-case-II}, denote $H_{\mathsf T} = H \cup \mathsf{T}(S_1)$. Since $\mathcal L(\widetilde S_1) \subset V(H)$, we then have 
\begin{equation}{\label{eq-comparison-H_1-H}}
    \tau(H_{\mathsf T}) \geq \tau(H) \mbox{ and } |E(H_{\mathsf T})| \leq |E(H)|+\aleph \,.
\end{equation}
In addition, since $\mathcal L(\widetilde S_1) \subset V(H) \subset V(H_{\mathsf T})$, by using Lemma~\ref{lem-decomposition-H-plus-path}, we see that $\widetilde S_1 \setminus H_{\mathsf T}$ can be decomposed into $\mathtt t= \tau(\widetilde S_1)-\tau(H_{\mathsf T})$ self-avoiding paths $P_{\mathtt 1},\ldots,P_{\mathtt t}$ satisfying Items (1)--(4) of Lemma~\ref{lem-decomposition-H-plus-path}. Clearly we have $|E(P_{\mathtt i})| \leq \ell$, and thus 
\begin{equation}{\label{eq-compaison-H_1-S_1}}
    |E(\widetilde S_1)|-|E(H_{\mathsf T})| \leq \ell \mathtt t \leq \ell(\tau(\widetilde S_1)-\tau(H_{\mathsf T})) \,.
\end{equation}
In conclusion, we have shown that 
\begin{align*}
    \tau(H)- \tfrac{|E(H)|+\aleph}{\ell} \overset{\eqref{eq-comparison-H_1-H}}{\leq} \tau(H_{\mathsf T})- \tfrac{|E(H_{\mathsf T})|}{\ell} \overset{\eqref{eq-compaison-H_1-S_1}}{\leq} \tau(\widetilde S_1)- \tfrac{|E(\widetilde S_1 )|}{\ell} \,.
\end{align*}
Similar results also hold for $\widetilde S_2$, leading to \eqref{eq-goal-lem-3.9-case-II}. Thus,
\begin{align*}
    \tau(G_{\cup}) - \tfrac{|E(G_{\cup})|-\aleph}{\ell} &= \big( \tau(\widetilde S_1) - \tfrac{|E(\widetilde S_1)|}{\ell} \big) + \big( \tau(\widetilde S_2) - \tfrac{|E(\widetilde S_2)|}{\ell} \big) - \big( \tau(H) - \tfrac{|E(H)|+\aleph}{\ell} \big)  \\
    & \overset{\eqref{eq-goal-lem-3.9-case-II}}{\geq} \max\Big\{ \tau(\widetilde S_1) - \tfrac{|E(\widetilde S_1)|}{\ell}, \tau(\widetilde S_2) - \tfrac{|E(\widetilde S_2)|}{\ell} \Big\} \geq \tau(\widetilde S_1) - \tfrac{|E(\widetilde S_1)|}{\ell} \geq - \tfrac{\aleph}{\ell} \,,
\end{align*}
where the second inequality follows from $\tau(\widetilde S_1) \geq \iota\aleph$ and $|E(\widetilde S_1)| \leq \ell\iota\aleph+\aleph$. Using the fact that $|E(G_{\cup})| \leq |E(\widetilde S_1)| + |E(\widetilde S_2)| \leq 2\ell\iota\aleph+2\aleph$, we obtain
\begin{align*}
    \tau(G_{\cup}) - \tfrac{|E(G_{\cup})|}{\ell/2} &= \Big( \tau(G_{\cup}) - \tfrac{|E(G_{\cup})|-\aleph}{\ell} \Big) - \tfrac{|E(G_{\cup})|+\aleph}{\ell} \\
    &\geq -\tfrac{2\aleph}{\ell} - \tfrac{2\ell\iota\aleph+2\aleph}{\ell} = -2\iota\aleph - \tfrac{4\aleph}{\ell} \,,
\end{align*}
as desired.

\subsection{Proof of Lemma~\ref{lem-enu-mathfrak-Q-mathfrak-U}}{\label{subsec:proof-lem-4.3}}

To show \eqref{eq-enum-frakchoose-pairings-Q-H,I}, we shall enumerate $(\mathbf I , K)$ satisfying $K \in \mathfrak Q_{\mathbf H,\mathbf I}(\{ x_u \},y) \cup  \mathfrak U_{\mathbf H,\mathbf I}(\{ x_u \},y)$ given any $\mathbf H \in \mathcal H$ and $S \in \mathfrak R^*_{\mathbf H}$. First, there are at most $\tbinom{\iota\aleph}{y} \leq 2^{\iota\aleph}$ ways to choose 
\begin{align}\label{eq-def-Ind-S,K}
\mathsf{Ind}_{S,K} :=\{ i \in [\iota\aleph] : E(\mathsf L_i (K)\cap \mathsf T(S)) \neq \emptyset\}\,,
\end{align}
and there are at most $\aleph^{2\iota\aleph}$ ways to determine $\{\mathsf P_i(K) : i \in [\iota\aleph] \}$. Then, to choose $\mathsf P_i(K) = (u_i , v_i)$ given $i\in \mathsf{Ind}_{S,K}$, we first prove that $(u_i , v_i) \in \mathfrak{Ch}^{(a,b)}_\mathtt m (\mathsf T(S))$ hold for some $0 \leq a,b \leq 1$. By Theorem~\ref{thm-desired-vertex-sets}, there exists $w_1 , \ldots , w_{\mathtt m-1} \in V(\mathsf T(K))$ such that $\mathfrak p_{\mathsf T(K)}(u_i,v_i) = (u_i , w_1 , \ldots , w_{\mathtt m -1 } , v_i)$, $\mathsf{Branch}_{\mathsf T(K)}(u_i) = \mathsf{Branch}_{\mathsf T(K)}(v_i) = 2$ and $\mathsf{Deg}_{\mathsf T(K)}(w_j) = 2$ for $1 \leq j \leq \mathtt m-1$. Also, since the shortest path connecting two vertices on a tree is unique we have $\mathfrak p_{\mathsf T(S)}(u_i,v_i) = \mathfrak p_{\mathsf T(K)}(u_i,v_i)$. By the fact that $\mathsf T(S) \subset S \cap K$ and $S \cap K$ is a tree, all arm-paths of $\mathsf T(S)$ from $u_i$ or $v_i$ are subgraphs of $\mathsf L_j(K)$ for some $j \in [\iota\aleph]$, and all edges of $\mathsf T(S)$ from $u_i$ or $v_i$ that do not belong to any arm-path of $\mathsf T(S)$ belong to $E(\mathsf T(K))$. Therefore we have 
\begin{align*}
    \mathfrak{k}_{\mathsf T(S)}(u_i)\vee\mathfrak{k}_{\mathsf T(S)}(v_i) &\leq \#\{j : u_i \in \operatorname{EndP}(\mathsf L_j(K)) \} \vee \#\{j : v_i \in \operatorname{EndP}(\mathsf L_j(K)) \} \leq 1
\end{align*}
by the fact that $\operatorname{EndP}(\mathsf L_j(K))$ are pairwise disjoint for $j \in [\iota\aleph]$.
Next, note that
\begin{align*}
    \mathsf{Deg}^{\text{ess}}_{\mathsf T(S)}(u_i)\vee\mathsf{Deg}^{\text{ess}}_{\mathsf T(S)}(v_i) = \mathsf{Deg}^{\text{ess}}_{\mathsf T(K)}(u_i) \vee \mathsf{Deg}^{\text{ess}}_{\mathsf T(K)}(v_i) = 3\,.
\end{align*}
Also, for any $1 \leq j \leq \mathtt m-1$ we have $\mathsf{Deg}_{\mathsf T(S)}(w_j)\leq \mathsf{Deg}_{K}(w_j)=2$ by $\mathsf{T}(S) \subset S\cap K$, and $\mathsf{Deg}_{\mathsf T(S)}(w_j) \geq \mathsf{Deg}_{\mathfrak p_{\mathsf T(S)}(u_i , v_i)}(w_j) = 2$. Therefore we have $\mathsf{Deg}_{\mathsf T(S)}(w_j)=2$. Therefore, by definition of $\mathfrak{Ch}_\mathtt m^{(a,b)}$ in \eqref{def-chained-signal-pair-in-T_iota} we have
\begin{align}\label{eq-pairing-must-match-chain-candidate-with-args-leq3}
    \mathsf P_i(K) \in \cup_{0 \leq a,b \leq 1} \mathfrak{Ch}^{(a,b)}_\mathtt m(\mathsf T(S)) \,.
\end{align}
Therefore, given any $\mathsf P_i(K) = (u_i , v_i)$ where $i \in \mathsf{Ind}_{S,K}$ there is at most $1$ arm-path starting from $u_i$ or $v_i$ in $\mathsf T(S)$ by \eqref{eq-pairing-must-match-chain-candidate-with-args-leq3}. In addition, the union of (at most $2$) arm-paths starting from $u_i$ or $v_i$ in $\mathsf T(S)$ is equal to $\mathsf L_i(K) \cap \mathsf T(S)$; this is because there is no arm-path starting from $u_i$ or $v_i$ in $\mathsf T(K)$. As a result, given $i \in \mathsf{Ind}_{S,K}$, $\mathsf L_i(K) \cap \mathsf T(S)$ is determined by $\mathsf P_i(K) = (u_i , v_i)$.

Therefore, it remains to determine $\mathsf L_i(K) \setminus S$ in order to determine $\mathsf L_i(K)$, which yields that there are at most $n^{\ell - 1 - |E(\mathcal L_{u_i})| - |E(\mathcal L_{v_i})|}$ ways to choose $\mathsf L_i(K)$ given any $\mathsf P_i(K)$ for $i \in \mathsf{Ind}_{S,K}$. Also, there are at most $n^{\ell - 1}$ ways to choose $\mathsf L_i(K)$ given any $\mathsf P_i(K)$ where $i \in [\iota\aleph]\setminus \mathsf{Ind}_{S,K}$. Combining the above bounds, there are at most
\begin{align*}
\aleph^{2\iota\aleph}n^{(\ell-1)\iota\aleph- \sum x_u} =n^{(\ell-1)\iota\aleph- \sum x_u+o(1)}
\end{align*}
ways to determine $\mathsf P(K)$ and $\mathsf L_i(K)$. Note that $\mathsf T(K) \setminus \mathsf T(S) = \bigcup_{v \in \mathtt V}\mathcal L_v$ is determined by $S$ and $\{ x_v : v\in \mathtt V\}$, and $\mathsf T(S)\setminus \mathsf T(K) = \mathsf T(S)\cap \big(\bigcup_{i \in \mathsf{Ind}_{S,K}} \mathsf L_i(K)\big)$ is determined by $S$ and $\mathsf P(K)$ (recall that we have shown given $i \in \mathsf{Ind}_{S,K}$, $\mathsf L_i(K) \cap \mathsf T(S)$ is determined by $\mathsf P_i(K) = (u_i , v_i)$), we can successfully determine $\mathsf T(K)$ given $\mathsf P(K)$, $\mathsf L_i(K)$ and $\{ x_v : v\in \mathtt V\}$, which yields \eqref{eq-enum-frakchoose-pairings-Q-H,I}.

We then show \eqref{eq-enum-frakchoose-pairings-U-valid-I-enum}. Given an arbitrary $S \in \mathfrak R^*_{\mathbf H}$, $y \geq 0$ and $\{ x_u : u \in \mathtt V\,, x_u \geq 0\}$, we perform the following steps to determine $K \in \mathfrak \cup_{\mathbf I \in\mathcal H} \mathfrak U_{\mathbf H,\mathbf I}(S,\{ x_u \},y)$ up to isomorphism: First, applying the arguments in the previous paragraph, given $S$ it suffices to determine $\mathsf P(K)$ in order to determine $K \in \mathfrak \cup_{\mathbf I \in\mathcal H} \mathfrak U_{\mathbf H,\mathbf I}(S,\{ x_u \},y)$ up to isomorphism. Then, there are at most $\tbinom{\iota\aleph}{y} \leq 2^{\iota\aleph}$ ways to determine the indices $i \in [\iota\aleph]$ such that $\mathsf P_i(K) \not\in \mathsf P(S)$, and at most $\tbinom{\iota\aleph}{\iota\aleph-y} \leq 2^{\iota\aleph}$ ways to determine the pairing set $\{ \mathsf P_i(K): \mathsf P_i(K)\in \mathsf P(S) \}$. Recall the definition of $\mathsf{Ind}_{S,K}$ in \eqref{eq-def-Ind-S,K}. By definition of $\mathfrak U_{\mathbf H,\mathbf I}$ in \eqref{eq-def-V-H-I}, we have $\mathsf{Ind}_{S,K} = \{i : \mathsf P_i(K) \in \mathsf P(S)\}$. Note that \eqref{eq-pairing-must-match-chain-candidate-with-args-leq3} holds for $i \in \mathsf{Ind}_{S,K}$, the number of ways to choose $\{ \mathsf P_i(K): \mathsf P_i(K)\not\in \mathsf P(S) \}$ is therefore bounded by $\tbinom{100\mathfrak c^{9}\mathfrak C_\aleph}{y}$, where we used
\begin{align*}
    \#\Big( \cup_{0 \leq a,b \leq 1} \mathfrak{Ch}^{(a,b)}_\mathtt m(\mathsf T(S)) \Big) \leq 100\mathfrak c^{9}\mathfrak C_\aleph
\end{align*}
by Item~(6) in Definition~\ref{def-tilde-T-K}. In conclusion, there are at most $2^{2\iota\aleph} \tbinom{100\mathfrak c^{9}\mathfrak C_\aleph}{y}$ ways to determine $\mathsf P(K)$, which yields \eqref{eq-enum-frakchoose-pairings-U-valid-I-enum}.

Finally we show \eqref{eq-enum-frakchoose-pairings-U-valid-I-enum-+}. Given $\mathbf H , \mathbf I \in \mathcal H$, an arbitrary $S \in \mathfrak R^*_{\mathbf H}$, $y \geq 0$ and $\{ x_u : u \in \mathtt V\,, x_u \geq 0\}$, we assume $\mathfrak U_{\mathbf H,\mathbf I}(S,\{ x_u \},y) \neq \emptyset$ without loss of generality and perform the following steps to determine $K \in \mathfrak U_{\mathbf H,\mathbf I}(S,\{ x_u \},y)$: First, we determine $\big( \mathsf T(K), \mathsf P(K)\big)$, which can be done by determining $\mathsf T(K)$ alone (note that for all $K \Vdash \mathbf I$ and all the isomorphism $\varsigma:\mathsf T(K) \to \mathsf T(\mathbf I)$, since $\mathsf P(\mathbf I) \subset \mathsf{Fix}(\mathsf T(\mathbf I))$ we see that $\varsigma^{-1}( \mathsf{P}(\mathbf I) )$ does not depend on the choice of $\varsigma$, so the choice of $\mathsf P(K)$ is unique given $\mathsf T(K)$). Since for all $K \in\mathfrak U_{\mathbf H,\mathbf I}(S,\{ x_u \},y)$ the tree
\begin{align*}
    T_{\mathbf H , \mathbf I; \{ x_u \} , y}:=S \cap K = \mathsf T(S) \cup \big( \cup_{u \in \mathsf{Vert} (\mathsf P(S))} \mathcal L^{(x_u)}_u\big) \mbox{ where }\cup_{u \in \mathsf{Vert} (\mathsf P(S))} \mathcal L^{(x_u)}_u \subset \cup_{i \in [\iota\aleph]}\mathsf L_i(S)
\end{align*}
is fixed, we can determine $\mathsf T(K)$ by choosing a subtree of $T_{\mathbf H , \mathbf I; \{ x_u \} , y}$ isomorphic to $\mathsf T(\mathbf I)$. Also, since $\mathfrak U_{\mathbf H,\mathbf I}(\{ x_u \},y) \neq 0$, there exists $K_* \in\mathfrak U_{\mathbf H,\mathbf I}(\{ x_u \},y)$ and $\{ z_v \geq 0 : v \in \mathsf{Vert} (\mathsf P(K_*))\}$ such that 
\begin{align*}
    T_{\mathbf H , \mathbf I; \{ x_u \} , y} = \mathsf T(K_*) \cup \big(\cup_{v \in \mathsf{Vert} (\mathsf P (K_*))} \mathcal L^{(z_v)}_v\big) \,.
\end{align*}
Therefore, by Item (4) and Item (5) of Lemma~\ref{lem-useful-property-trees-and-sets} the number of choices of $\mathsf{T}(K)$ (and thus that of $(\mathsf{T}(K),\mathsf P(K))$) is bounded by
\begin{align*}
    \frac{ \operatorname{Aut}( \mathsf T(K_*) \cup (\cup_{v \in \mathsf{Vert} (\mathsf P (K_*))} \mathcal L^{(z_v)}_v) ) }{ \mathsf T(K_*) } \leq e^{4\iota\aleph} \,.
\end{align*}
Then, by determining $(\mathsf{T}(K),\mathsf P(K))$, we can determine $\cup_{1 \leq i \leq \iota\aleph}(S \cap \mathsf L_i(K)) = \mathsf T(S) \setminus \mathsf T(K)$. Last, we determine $\cup_{1 \leq i \leq \iota\aleph}(\mathsf L_i(K) \setminus S)$ where we have at most $n^{(\ell - 1)\iota\aleph - \sum x_u}$ choices. Combining the above bounds yields \eqref{eq-enum-frakchoose-pairings-U-valid-I-enum-+}.

\subsection{Proof of Lemma~\ref{lem-characterize-setminus-R-H-I}}{\label{subsec:proof-lem-4.6}}

Define  
\begin{align*}
    G_{\mathsf T}=G_{\geq 2} \cup \mathsf T(S_1) \cup \mathsf T(S_2) \cup \mathsf T(K_1) \cup \mathsf T(K_2) \,.
\end{align*}
Note that we have assumed $\mathcal L(G_{\cup}) \subset V(G_{\geq 2})$. Define $\Gamma$ to be the number of elements in
\begin{align*}
    \{ \mathsf L_i(S_1), \mathsf L_i(S_2), \mathsf L_i(K_1), \mathsf L_i(K_2): 1 \leq i \leq \iota\aleph \}
\end{align*}
that are included in $G_{\mathsf T}$. It is straightforward to check that $\tau(G_{\cup})-\tau(G_{\mathsf T}) \geq 4\iota\aleph-\Gamma$ and $|E(G_{\mathsf T})| \geq \ell\Gamma$. Thus, 
\begin{align*}
    \tau(G_{\cup}) - \tau(G_{\mathsf T}) + \tfrac{|E(G_{\mathsf T})|}{\ell} \geq 4\iota\aleph \,.
\end{align*}
In addition, we have 
\begin{align*}
    |E(G_{\cup})| \leq 4\ell\iota\aleph+4\aleph -|E(G_{\geq 2})| \leq 4\ell\iota\aleph+4\aleph - (|E(G_{\mathsf T})|-4\aleph) = 4\ell\iota\aleph+8\aleph -|E(G_{\mathsf T})| \,.
\end{align*}
Thus, if $\tau(G_{\mathsf T}) \geq -1$ we immediately get that \eqref{eq-characterize-setminus-R-H-I} holds.

Now we assume that $\tau(G_{\mathsf T}) \leq -2$. Note that in this case $G_{\mathsf T}$ has at least two connected components. If one of the connected components of $G_{\mathsf T}$ contains only vertices in $S_1$ (or $S_2$/$K_1$/$K_2$), we see that $S_1 \in \mathfrak R_{\mathbf H} \setminus \mathfrak R_{\mathbf H}^*$ (or similarly for $S_2/K_1/K_2$). Denote $1\leq\kappa\leq 4$ the number of such components, and in this case it is straightforward to verify that 
\begin{align*}
    \tau(G_{\cup}) - \tau(G_{\mathsf T}) + \tfrac{|E(G_{\mathsf T})|}{\ell} \geq 4\iota\aleph+2\kappa \,,
\end{align*}
which combined with the fact that $\tau(G_\mathsf T) \geq -1-\kappa$ yields \eqref{eq-characterize-setminus-R-H-I}.

The only remaining case is that $\tau(G_{\mathsf T})=-2$ and $G_{\mathsf T}$ has exactly two connected tree components and each intersects with at least two of $S_1,S_2,K_1,K_2$. Suppose that one of the components intersects with $S_1,S_2$ and the other component intersects with $K_1,K_2$. Also, denote 
\begin{equation*}
G_S = \widetilde S_1 \cup \widetilde S_2 \,,\quad G_K = \widetilde K_1 \cup \widetilde K_2\,.
\end{equation*}
If $(S_1,S_2) \in \mathfrak R_{\mathbf H}^*$ and $S_1 \cap S_2$ is a tree containing $\mathsf T(S_1) \cup \mathsf T(S_2)$, it is straightforward to verify that 
\begin{align*}
    2\iota\aleph-1-\tau(G_S) \leq \tfrac{2\ell\iota\aleph+2\aleph-|E(G_S)|}{\ell}  \,;
\end{align*}
if the aforementioned condition does not hold, then recall \eqref{eq-def-mathcal-A-diamond} we have $\mathsf{id}\not\in \mathcal A(S_1,S_2)$. In addition, from $\mathcal L(G_{\cup}) \subset V(G_{\geq 2})$ we have $\mathcal L(S_1),\mathcal L(S_2) \subset V(S_1) \cap V(S_2)$. Thus, using Lemma~\ref{lem-character-non-principle} we see that  
\begin{align*}
    2\iota\aleph-\tau(G_S) \leq \tfrac{2\ell\iota\aleph+2\aleph-|E(G_S)|}{\ell}\,.
\end{align*}
The above inequality also holds by replacing $S$ with $K$. Thus, since $\mathsf{id}\not\in\mathcal A_{\star}$ we see that 
\begin{align*}
    \mbox{either } \Bigg\{
    \begin{split}
        & 2\iota\aleph-1-\tau(G_S) \leq \tfrac{2\ell\iota\aleph+2\aleph-|E(G_S)|}{\ell} \\
        & 2\iota\aleph-\tau(G_K) \leq \tfrac{2\ell\iota\aleph+2\aleph-|E(G_K)|}{\ell}
    \end{split}
    \mbox{ or }
    \Bigg\{ 
    \begin{split}
        & 2\iota\aleph-\tau(G_S) \leq \tfrac{2\ell\iota\aleph+2\aleph-|E(G_S)|}{\ell} \\
        & 2\iota\aleph-1-\tau(G_K) \leq \tfrac{2\ell\iota\aleph+2\aleph-|E(G_K)|}{\ell}
    \end{split}
    \,.
\end{align*}
This leads to \eqref{eq-characterize-setminus-R-H-I}, since $\tau(G_\cup)=\tau(G_S)+\tau(G_K)$ and $|E(G_{\cup})|=|E(G_S)|+|E(G_K)|$.

\subsection{Proof of Lemma~\ref{lem-replace-by-non-backtracking-path}}{\label{subsec:proof-lem-5.2}}

Define $\mathcal B(\mathbf H)$ to be the set of $(S_1,S_2)$ such that $S_1,S_2 \vdash \mathbf H$ but $S_1,S_2 \not\Vdash \mathbf H$. Also define
\begin{equation}{\label{eq-def-f-bad}}
    f_{\operatorname{bad}} = \sum_{ \mathbf H \in \mathcal H } \frac{ \operatorname{Aut}(\mathsf T(\mathbf H)) s^{\aleph-1} (\epsilon^2\lambda s)^{\ell\iota\aleph} }{ n^{\aleph+\ell\iota\aleph} } \sum_{S_1,S_2 \in \mathcal B(\mathbf H)} \phi_{S_1,S_2} \,.
\end{equation}
By Proposition~\ref{prop-first-moment} and Lemma~\ref{lem-enu-decorated-trees}, it suffices to show
\begin{equation}{\label{eq-goal-lem-5.2}}
    \frac{f_{\operatorname{bad}}}{\mathbb E_{\Pb}[f]} = \sum_{ \mathbf H \in \mathcal H } \frac{ \operatorname{Aut}(\mathsf T(\mathbf H)) s^{\aleph-1} }{ n^{\aleph+(\ell-2)\iota\aleph+o(1)}(\epsilon^2\lambda s)^{\ell\iota\aleph} }\sum_{S_1,S_2 \in \mathcal B(\mathbf H)} \phi_{S_1,S_2} \to 0
\end{equation}
in probability under both $\Pb$ and $\Qb$. We will only prove \eqref{eq-goal-lem-5.2} under $\Pb$, and the argument for $\Qb$ is similar. Since $f_{\operatorname{bad}}$ is symmetric under $\pi_*$, it suffices to show \eqref{eq-goal-lem-5.2} under $\Pb_{\mathsf{id}}$. To this end, we require additional definitions from \cite{MNS18}.

\begin{DEF}{\label{def-tangle}}
    We say a connected graph $H$ is a tangle, if $H$ contains at least two cycles and the diameter of $H$ is at most $\sqrt{\log n}$. Given a multigraph $S \vdash \mathbf H \in \mathcal H$, we say that $S$ contains $t$ tangles if we need to delete at least $t$ edges (counting multiplicity) so that the remaining subgraph of $S$ contains no tangle.
\end{DEF}

\begin{DEF}{\label{def-new-old-returning}}
    For a path $\gamma=(\gamma_1,\ldots,\gamma_l)$ and a graph $G$, we say an edge $(\gamma_i,\gamma_{i+1})$ is 
    \begin{itemize}
        \item \emph{new} with respect to $G$, if $\gamma_{i+1} \not \in V(G) \cup \{ \gamma_1,\ldots,\gamma_{i} \}$.
        \item \emph{old} with respect to $G$, if $(\gamma_i,\gamma_{i+1}) \in E(G) \cup \{ (\gamma_1,\gamma_2), \ldots, (\gamma_{i-1},\gamma_i) \}$.
        \item \emph{returning} with respect to $G$, if $(\gamma_i,\gamma_{i+1})$ is neither new nor old with respect to $G$.
    \end{itemize}
\end{DEF}

We now divide $f_{\operatorname{bad}}$ into the following cases. For $S_1,S_2 \vdash \mathbf H$ and $S_1,S_2 \in \mathcal B(\mathbf H)$, denote $G_{S}=\widetilde S_1\cup \widetilde S_2$. Define
\begin{align}
    &\mathcal B_1(\mathbf H) = \Big\{ (S_1,S_2) \in \mathcal B(\mathbf H): \tau(G_S) > 20\aleph^2 \Big\} \,; \label{eq-def-mathcal-B-1} \\
    &\mathcal B_2(\mathbf H) = \Big\{ (S_1,S_2) \in \mathcal B(\mathbf H): \tau(G_S) \leq 20\aleph^2, G_S \mbox{ contains at least } 40\aleph^3 \mbox{ tangles} \Big\} \,; \label{eq-def-mathcal-B-2} \\
    &\mathcal B_3(\mathbf H) = \Big\{ (S_1,S_2) \in \mathcal B(\mathbf H): \tau(G_S) \leq 20\aleph^2, G_S \mbox{ contains at most } 40\aleph^3 \mbox{ tangles} \Big\} \,. \label{eq-def-mathcal-B-3}
\end{align}
Then define
\begin{equation}{\label{eq-def-f-B-i}}
    f_{\mathcal B_i} = \sum_{\mathbf H \in \mathcal H } \frac{ \operatorname{Aut}(\mathsf T(\mathbf H)) s^{\aleph-1} }{ n^{\aleph+(\ell-2)\iota\aleph}(\epsilon^2\lambda s)^{\ell\iota\aleph}} \sum_{ (S_1,S_2) \in \mathcal B_i(\mathbf H) } \phi_{S_1,S_2} \mbox{ for } i=1,2,3 \,.
\end{equation}

\begin{lemma}{\label{lem-control-f-B-1}}
    We have $\mathbb E_{\Pb_{\mathsf{id}}}\big[ |f_{\mathcal B_1}| \big] = o(1)$.
\end{lemma}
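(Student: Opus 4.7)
The strategy follows the same ``expectation$\times$enumeration vs.~excess'' template used in Lemma~\ref{lem-first-moment-bad-part} and in the control of \eqref{eq-Pb-var-bad-part-2}: we bound $\mathbb E_{\Pb_{\mathsf{id}}}[|\phi_{S_1,S_2}|]$ by a power of $n/(\epsilon^2\lambda s)$ depending on $|E(G_S)|$, enumerate the pairs $(S_1,S_2)\in\mathcal B_1(\mathbf H)$ via their union graph $G_S=\widetilde S_1\cup\widetilde S_2$, and then show that the defining condition $\tau(G_S)>20\aleph^2$ produces enough factors of $1/n$ to make the total $o(1)$.

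First I will observe that, by the very same conditioning argument behind Lemmas~\ref{lem-joint-moment-A-B}, \ref{lem-expectation-over-chain} and Lemma~\ref{lem-Pb-var-bad-part-2-expectation-estimate-general} (specialised to a single pair), one has for every $(S_1,S_2)$ with $S_1,S_2\vdash\mathbf H$
\[
\mathbb E_{\Pb_{\mathsf{id}}}\bigl[\,|\phi_{S_1,S_2}|\,\bigr]\;\leq\;\Bigl(\tfrac{n}{\epsilon^2\lambda s}\Bigr)^{(\aleph-1+\ell\iota\aleph)-|E(G_S)|},
\]
using $|E(S_1)|=|E(S_2)|=\aleph-1+\ell\iota\aleph$ counted with multiplicity. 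Next I will enumerate. Fix an unlabeled shape $\mathbf G$ with $|E(\mathbf G)|=x$ and $\tau(\mathbf G)=y$. Then $\#\{G_S\subset\mathcal K_n:G_S\cong\mathbf G\}\leq n^{x-y}$, and for each such $G_S$ the non-backtracking analogue of Lemma~\ref{lem-enu-decorated-trees-in-given-graph} (see Remark~\ref{cor-of-lem-4.4}) bounds the number of pairs $(S_1,S_2)$ with $\widetilde S_1\cup\widetilde S_2=G_S$ by $n^{o(1)}\,2^{O(\aleph)\,(y+1)}$. Thus the contribution of shape $\mathbf G$ to $\mathbb E_{\Pb_{\mathsf{id}}}[|f_{\mathcal B_1}|]$ is at most
\[
\tfrac{s^{\aleph-1}(\epsilon^2\lambda s)^{\ell\iota\aleph}\operatorname{Aut}(\mathsf T(\mathbf H))}{n^{\aleph+(\ell-2)\iota\aleph}}\cdot n^{o(1)}\bigl(\tfrac{n}{\epsilon^2\lambda s}\bigr)^{(\aleph-1+\ell\iota\aleph)-x} n^{x-y}
\;\leq\; n^{o(1)-y}(\epsilon^2\lambda s)^{x-(\aleph-1+\ell\iota\aleph)},
\]
after cancelling the $n$-powers and absorbing $\operatorname{Aut}(\mathsf T(\mathbf H))\leq\aleph!=n^{o(1)}$ and $s^{\aleph-1}\leq 1$.

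Finally, summing the above over $(x,y)$ with $y\geq 20\aleph^2$ (the range imposed by $\mathcal B_1(\mathbf H)$) and then over $\mathbf H\in\mathcal H$ with $|\mathcal H|=\alpha^{-\aleph(1+o(1))}=n^{o(1)}$, the $y$-range forces a factor $n^{-20\aleph^2+o(1)}$, while the $x$-sum contributes at most $(\epsilon^2\lambda s)^{\ell\iota\aleph+O(\aleph)}$ which is absorbed by the freedom $(\epsilon^2\lambda s)^\ell\leq n^{O(1)}$ coming from \eqref{eq-choice-aleph}. Since $\aleph=\omega(1)$ by \eqref{eq-choice-aleph}, $n^{-20\aleph^2+o(1)}=o(1)$ and the desired estimate follows. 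The main obstacle to watch is that the non-backtracking decomposition of $G_S\setminus\mathsf T_\cup$ may produce $O(\tau(G_S))$ pieces, each giving a $2^{O(\aleph)}$ combinatorial loss; I will handle this exactly as in the final computation \eqref{eq-Pb-f-setminus-*-relax-5} of Lemma~\ref{lem-first-moment-bad-part}, observing that every extra unit of $\tau(G_S)$ brings a clean factor of $1/n$, which dwarfs any polylogarithmic loss because $\aleph^{O(\aleph)}=n^{o(1)}$ by \eqref{eq-choice-aleph}.
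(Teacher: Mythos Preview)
Your approach has a genuine gap in the enumeration step. You invoke a ``non-backtracking analogue of Lemma~\ref{lem-enu-decorated-trees-in-given-graph}'' to bound, for fixed $G_S$, the number of pairs $(S_1,S_2)$ with $\widetilde S_1\cup\widetilde S_2=G_S$ by $n^{o(1)}\,2^{O(\aleph)(y+1)}$. No such analogue holds. The proof of Lemma~\ref{lem-enu-decorated-trees-in-given-graph} uses crucially that each $\mathsf L_i(S)$ is \emph{self-avoiding}: once its trace in $G_S$ is fixed, the path is determined by which pieces of the path-decomposition of $J\setminus\mathsf T(S)$ it uses. A \emph{non-backtracking} walk of length $\ell$, by contrast, may traverse cycles of $G_S$ many times; given its trace there can be exponentially-in-$\ell$ many realizations. (Remark~\ref{cor-of-lem-4.4}, which you cite, concerns the structural inequality of Lemma~\ref{lem-characterize-setminus-R-H-I}, not an enumeration.) The paper therefore proves a dedicated enumeration, Lemma~\ref{lem-coarse-enumeration-B-1}, by tracking new/old/returning steps along each walk; its bound is $n^{v+2\aleph+\max\{10\aleph^2,\,m/2\}+o(1)}$ with $m=\tau(G_S)$. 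Since $\aleph=o(\log\log\log n)$ while $\ell\le\log n$, one has $n^{m/2}\gg 2^{O(\aleph)m}$, so your enumeration dramatically under-counts and the argument as written is not valid. The $\max\{10\aleph^2,m/2\}$ is precisely why the cutoff $20\aleph^2$ defines $\mathcal B_1(\mathbf H)$.

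A secondary issue: the bound $\mathbb E_{\Pb_{\mathsf{id}}}[|\phi_{S_1,S_2}|]\le (n/(\epsilon^2\lambda s))^{(\aleph-1+\ell\iota\aleph)-|E(G_S)|}$ cannot be read off Lemmas~\ref{lem-joint-moment-A-B}, \ref{lem-expectation-over-chain} or~\ref{lem-Pb-var-bad-part-2-expectation-estimate-general}, all of which treat \emph{signed} expectations and rely on cancellations under the $\sigma$-average that do not survive absolute values. For $|\phi|$ one must use the crude edge-wise bound $\mathbb E\bigl[\,|A_{i,j}-\tfrac{\lambda s}{n}|^r\,|B_{i,j}-\tfrac{\lambda s}{n}|^t\,\bigr]\le 2\lambda s/n$ (any $r+t\ge 1$), leading to $(2\lambda s/n)^{|E(G_S)|}(n/(\lambda s))^{\aleph-1+\ell\iota\aleph}$; this is what the paper does. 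The discrepancy with your stated bound is at most $n^{O(\aleph)}$ and would be absorbed by the $n^{-\Omega(\aleph^2)}$ saving, but the justification you give is not correct.
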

The proof of Lemma~\ref{lem-control-f-B-1} is incorporated in Section~\ref{subsec:proof-lem-C.3}. Having established a bound for $f_{\mathcal B_1}$, we deal with $f_{\mathcal B_2}$. Define $\Xi$ to be the event that the parent graph $G$ does not contain any tangle. It was shown in \cite[Lemma~6.1]{MNS18} that $\mathbb P(\Xi)=1-o(1)$. We show the following estimation.
\begin{lemma}{\label{lem-control-f-B-2}}
    We have $\mathbb E_{\Pb_{\mathsf{id}}}\big[ \mathbf 1_{\Xi} |f_{\mathcal B_2}| \big]=o(1)$.
\end{lemma}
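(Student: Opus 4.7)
The plan is to exploit the incompatibility between the event $\Xi$ and the combinatorial definition of $\mathcal B_2(\mathbf H)$. Under $\pi_*=\mathsf{id}$, the subsampling gives $A,B\subset G$, so on $\Xi$ the union $A\cup B\subset G$ is itself tangle-free. Consequently, for any $(S_1,S_2)\in\mathcal B_2(\mathbf H)$, the combinatorial multigraph $G_S=\widetilde S_1\cup\widetilde S_2$, which is required to contain at least $40\aleph^3$ tangles, can never be realized as a subgraph of $A\cup B$. This structural constraint is what will drive the bound.

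First, I would condition on $G$ and write $\mathbb E_{\Pb_{\mathsf{id}}}[\mathbf 1_\Xi |f_{\mathcal B_2}|]=\mathbb E\bigl[\mathbf 1_\Xi\,\mathbb E[|f_{\mathcal B_2}|\mid G]\bigr]$. Conditional on $G$, the variables $(J_{i,j},K_{i,j})$ are independent Bernoulli$(s)$'s, so each factor $(A_e-\tfrac{\lambda s}{n})/\sqrt{\tfrac{\lambda s}{n}(1-\tfrac{\lambda s}{n})}$ can be expanded as a signed combination of $\mathbf 1_{A_e=1}$ and a constant, turning $\phi_{S_1,S_2}$ into a signed sum indexed by sub-multigraphs $(T_1,T_2)\subseteq(\widetilde S_1,\widetilde S_2)$ of edges present in $A,B$. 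The leading term $T_i=\widetilde S_i$ carries the event $\{\widetilde S_1\subset A,\widetilde S_2\subset B\}$, which on $\Xi$ would force $G_S$ to be tangle-free, contradicting the $\mathcal B_2$ requirement; hence this dominant contribution vanishes on $\Xi$ for every $(S_1,S_2)\in\mathcal B_2(\mathbf H)$. For each deleted edge in a ``lower-order'' term ($T_i\subsetneq\widetilde S_i$) one pays a factor of order $\sqrt{\tfrac{\lambda s}{n}}$, and Lemma~\ref{lem-upper-bound-exp} together with Lemmas~\ref{lem-joint-moment-A-B} and \ref{lem-expectation-over-chain} gives a clean bound on the conditional expectation in terms of $\tau(G_S)$ and $|E(G_S)|$. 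The enumeration of compatible $(S_1,S_2)\in\mathcal B_2(\mathbf H)$ is then controlled via Lemmas~\ref{lem-enu-decorated-trees-in-given-graph} and \ref{lem-enu-union-of-decorated-trees}, and summation over $\mathbf H\in\mathcal H$ uses the $|\mathcal H|$-bound of Lemma~\ref{lem-enu-decorated-trees}.

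The main obstacle will be quantifying the interplay between the number of tangles in $G_S$ and the number of edges that must be deleted to reconcile $(S_1,S_2)$ with a tangle-free $A\cup B$: I expect to argue that each tangle in $G_S$ that is absent from $A\cup B$ forces at least one edge-deletion (equivalently, a factor $\sqrt{\lambda s/n}$ in the expansion), so that the $40\aleph^3$ tangles translate into a multiplicative gain of $(\lambda s/n)^{20\aleph^3}$ relative to the baseline first-moment scaling. Combined with the constraint $\tau(G_S)\leq 20\aleph^2$ in $\mathcal B_2$, which together with Remark~\ref{cor-of-lem-4.4} (the non-backtracking analogue of Lemma~\ref{lem-character-non-principle}) fixes the budget of $|E(G_S)|$ up to a multiplicative factor of $n^{o(1)}$, this gain swamps the combinatorial enumeration and yields $\mathbb E_{\Pb_{\mathsf{id}}}[\mathbf 1_\Xi |f_{\mathcal B_2}|]=n^{-\Omega(\aleph^3)}=o(1)$. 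Carefully matching the power of $n$ against the normalization $n^{-\aleph-(\ell-2)\iota\aleph}(\epsilon^2\lambda s)^{-\ell\iota\aleph}$ in the definition of $f_{\mathcal B_2}$, as was done for the ``bad'' parts of the first and second moments in Sections~\ref{sec:bound-1st-moment} and \ref{sec:bound-var}, will be the most delicate book-keeping step.
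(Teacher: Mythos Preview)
Your proposal has the right high-level intuition but misidentifies the mechanism that produces the $n^{-\Omega(\aleph^3)}$ gain, and the expansion strategy you outline would not deliver it.

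The paper does not expand $\phi_{S_1,S_2}$ over present/absent edges. Instead it introduces \emph{pivotal sets}: a minimal $F\subset E(G_S)$ whose removal makes the simple graph $G_S$ tangle-free. One shows $|F|\le\tau(G_S)+2\le 100\aleph^2$, and that $\Xi\subset\bigcup_{F}\Omega_F$ where $\Omega_F=\{G_e=0:e\in F\}$. A union bound reduces the problem to estimating $\mathbb E_{\Pb_{\mathsf{id}}}[\mathbf 1_{\Omega_F}|\phi_{S_1,S_2}|]$. The crucial point is that the ``$\ge 40\aleph^3$ tangles'' condition refers to the \emph{multigraph} count: removing all copies of the edges in $F$ makes the multigraph tangle-free, so $\sum_{e\in F}(E(S_1)_e+E(S_2)_e)\ge 40\aleph^3$. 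On $\Omega_F$ one has $A_e=B_e=0$ for $e\in F$, so the corresponding normalized factors are each $\approx\sqrt{\lambda s/n}$, and they appear with total exponent $\ge 40\aleph^3$. That is where the $n^{-\Omega(\aleph^3)}$ comes from: not from many distinct deleted edges, but from a \emph{small} pivotal set whose edges carry enormous total multiplicity.

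Your sentence ``each tangle in $G_S$ that is absent from $A\cup B$ forces at least one edge-deletion'' conflates the simple-graph and multigraph pictures. In the simple graph only $O(\aleph^2)$ deletions are ever needed, so an argument that counts deletions without tracking multiplicities would yield at best $(\lambda s/n)^{O(\aleph^2)}$, which is not enough. Your proposed expansion over $(T_1,T_2)\subseteq(\widetilde S_1,\widetilde S_2)$ also runs into trouble with the absolute value (Lemma~\ref{lem-upper-bound-exp} bounds signed expectations, not $\mathbb E|\phi_{S_1,S_2}|$), and your claim that the ``leading term'' vanishes on $\Xi$ is not clearly correct: tangle-freeness of the simple graph $G_S\subset G$ does not by itself contradict the multigraph having many tangles. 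Finally, the enumeration here is handled by Lemma~\ref{lem-coarse-enumeration-B-1} (the coarse count for $\mathcal B_1$), not by Lemmas~\ref{lem-enu-decorated-trees-in-given-graph}/\ref{lem-enu-union-of-decorated-trees}, and Remark~\ref{cor-of-lem-4.4} plays no role in this case.
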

The proof of Lemma~\ref{lem-control-f-B-2} is postponed to Section~\ref{subsec:proof-lem-C.4}. Finally we bound $f_{\mathcal B_3}$.
\begin{lemma}{\label{lem-control-f-B-3}}
    We have $\mathbb E_{\Pb_{\mathsf{id}}}\big[ f_{\mathcal B_3}^2 \big]=o(1)$.
\end{lemma}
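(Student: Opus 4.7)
The plan is a direct second-moment computation for $f_{\mathcal B_3}$, parallel to the variance bound carried out in Section~\ref{subsec:bound-var-Pb} but exploiting the two structural bounds $\tau(G_S)\le 20\aleph^2$ and ``at most $40\aleph^3$ tangles'' built into the definition of $\mathcal B_3$. After reducing to $\mathbb E_{\Pb_{\mathsf{id}}}[f_{\mathcal B_3}^2]$ by symmetry over $\pi_*$, I would expand the square and invoke Lemma~\ref{lem-Pb-var-bad-part-2-expectation-estimate-general} with $\mathtt m=4(\aleph-1+\ell\iota\aleph)$ to get the uniform bound
\begin{equation*}
    \mathbb E_{\Pb_{\mathsf{id}}}[\phi_{S_1,S_2}\phi_{K_1,K_2}]\le \Bigl(\tfrac{n}{\epsilon^2\lambda s}\Bigr)^{2(\aleph-1+\ell\iota\aleph)-|E(G_\cup)|},
\end{equation*}
where $G_\cup=\widetilde S_1\cup\widetilde S_2\cup\widetilde K_1\cup\widetilde K_2$. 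Grouping by the isomorphism class of $G_\cup$ and simplifying the prefactors, the estimate reduces to controlling $\sum_{\mathbf G}(\epsilon^2\lambda s/n)^{|E(\mathbf G)|}\operatorname{ENUM}_3(\mathbf G)$, where $\operatorname{ENUM}_3(\mathbf G)$ counts admissible quadruples with $G_\cup\cong\mathbf G$.

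The next step is to transfer Lemmas~\ref{lem-character-non-principle}, \ref{lem-characterize-setminus-R-H-I} and \ref{lem-enu-union-of-decorated-trees} to the present non-backtracking setting. Remark~\ref{cor-of-lem-4.4} already delivers the baseline edge--excess inequality for non-backtracking paths, but the crucial point is that the defining condition of $\mathcal B_3$ forces at least one non-backtracking path in $(S_1,S_2)$ to fail self-avoidance, so arguing as in Case~2 of the proof of Lemma~\ref{lem-character-non-principle} yields $\tau(\widetilde S_1)\ge\iota\aleph$ or $\tau(\widetilde S_2)\ge\iota\aleph$, and similarly on the $(K_1,K_2)$ side. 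Feeding this into the proof of Lemma~\ref{lem-characterize-setminus-R-H-I} produces a strict improvement of the form
\begin{equation*}
    4\iota\aleph+c-\tau(\mathbf G)\le \tfrac{4\ell\iota\aleph+8\aleph-|E(\mathbf G)|}{\ell/2}
\end{equation*}
for some absolute constant $c\ge 1$. The bounded-tangle hypothesis enters here: it ensures that the extra cycles generated by a self-intersecting non-backtracking path lie inside a subgraph of diameter $\le\sqrt{\log n}$, so they cannot coincide with the cycles produced by decorated-tree overlap and the gain can be genuinely added to the baseline. In parallel, a routine non-backtracking adaptation of Lemma~\ref{lem-enu-decorated-trees-in-given-graph} (based on the decomposition in Lemma~\ref{lem-revised-decomposition-H-Subset-S}) together with Lemma~\ref{lem-enu-union-of-decorated-trees} controls $\operatorname{ENUM}_3(\mathbf G)$ by $n^{|V(\mathbf G)|+o(1)}(\ell\aleph)^{O(\tau(\mathbf G))}$.

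With these ingredients in place, the final summation becomes essentially identical to the calculation in \eqref{eq-Pb-var-bad-part-2-relax-I-rewrite}--\eqref{eq-Pb-var-bad-part-2-relax-II}: substituting the enumeration estimate, changing variables to $x=|E(\mathbf G)|$ and $y=\tau(\mathbf G)$, and using the condition $(\epsilon^2\lambda s)^\ell>n^4$ from \eqref{eq-choice-aleph}, the resulting geometric series collapses and the extra factor $n^{-c}$ provided by the strict improvement gives $\mathbb E_{\Pb_{\mathsf{id}}}[f_{\mathcal B_3}^2]\le n^{-c+o(1)}=o(1)$. The main obstacle I anticipate is establishing the strict improvement with the right $c$: this requires a careful dissection of $G_\cup$ to show that the extra excess forced by a non-self-avoiding non-backtracking path is disjoint from both the $\mathsf T_\cup$-overlap cycles already counted in Lemma~\ref{lem-characterize-setminus-R-H-I} and from the tangles appearing on the right-hand side. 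Making this disjointness precise is where the bounded-tangle bound of $40\aleph^3$ does the real work, since without it a single cycle could potentially serve multiple roles and negate the gain.
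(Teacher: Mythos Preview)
Your overall scheme (square, bound $\mathbb E_{\Pb_{\mathsf{id}}}[\phi_{S_1,S_2}\phi_{K_1,K_2}]$ via Lemma~\ref{lem-Pb-var-bad-part-2-expectation-estimate-general}, group by the isomorphism class of $G_\cup$, then sum) is the same as the paper's. But two pieces of the plan do not hold up.

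First, the enumeration step is not ``routine''. Lemma~\ref{lem-enu-decorated-trees-in-given-graph} works because a self-avoiding path inside a fixed host graph $J$ is determined by a subset of edges together with a choice of which of the $O(\tau(J))$ pieces in the decomposition of Lemma~\ref{lem-revised-decomposition-H-Subset-S} it uses. A non-backtracking path has no such description: once $J$ contains a short cycle, a length-$\ell$ non-backtracking path can wind around it many times, and the number of such paths with prescribed trace can be as large as $(\deg_{\max})^{\ell}$, far beyond any $(\ell\aleph)^{O(\tau(J))}$. Consequently the bound $\operatorname{ENUM}_3(\mathbf G)\le n^{|V(\mathbf G)|+o(1)}(\ell\aleph)^{O(\tau(\mathbf G))}$ cannot be obtained by adapting Lemma~\ref{lem-enu-decorated-trees-in-given-graph}. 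What the paper actually does is prove a new enumeration lemma (Lemma~\ref{lem-enu-decorated-trees-with-few-tangles}) which says that if $S$ has at most $40\aleph^3$ tangles then the number of admissible $S\subset J$ is at most $\binom{|E(J)|}{\aleph}\,n^{o(1)}\aleph^{2\iota\aleph}e^{\sqrt{\log n}(\log\log n)^4(\tau(J)+O(1))}$. The proof is a trace-type argument in the spirit of \cite{MNS18}: one records, for each visit to a vertex of degree $\ge 3$, whether the next step is along a ``short'' neighbor or not, and the tangle-free structure (after removing $\le 40\aleph^3$ edges) forces all but roughly $\ell/\sqrt{\log n}$ of these to be short, hence predictable. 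This is exactly where the $40\aleph^3$-tangle hypothesis does its work; it is an enumeration device, not a disjointness-of-cycles device.

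Second, the paper does \emph{not} seek or use a strict improvement $4\iota\aleph+c$ in the edge--excess inequality. It simply invokes Remark~\ref{cor-of-lem-4.4} to get the same bound $4\iota\aleph-1-\tau(G_\cup)\le\tfrac{2}{\ell}(4\ell\iota\aleph+4\aleph-|E(G_\cup)|)$ as in the self-avoiding case, and this together with the refined enumeration and the built-in normalisation of $f_{\mathcal B_3}$ already gives $n^{-1+o(1)}$. Your proposed route of extracting an extra $n^{-c}$ from the fact that some path is non-self-avoiding would not, by itself, close the argument: even if the improved inequality were available, without the enumeration control of Lemma~\ref{lem-enu-decorated-trees-with-few-tangles} the count of quadruples for a fixed $G_\cup$ can swamp any polynomial saving. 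So the essential missing idea is the tangle-based enumeration bound, not a sharpening of Lemma~\ref{lem-characterize-setminus-R-H-I}.
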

The proof of Lemma~\ref{lem-control-f-B-3} is postponed to Section~\ref{subsec:proof-lem-C.5}. We can now complete the proof of Lemma~\ref{lem-replace-by-non-backtracking-path}.
\begin{proof}[Proof of Lemma~\ref{lem-replace-by-non-backtracking-path}]
    Using Lemmas~\ref{lem-control-f-B-1}, \ref{lem-control-f-B-2} and \ref{lem-control-f-B-3} respectively, we see that
    \begin{align*}
         f_{\mathcal B_1}, f_{\mathcal B_2}, f_{\mathcal B_3} \to 0 \mbox{ in probability under } \Pb_{\mathsf{id}}, \mbox{ respectively} \,. 
    \end{align*}
    Plugging this into \eqref{eq-def-f-B-i} and \eqref{eq-goal-lem-5.2}, we see that
    \begin{equation*}
        \frac{f_{\operatorname{bad}}}{\mathbb E_{\Pb}[f]} = f_{\mathcal B_1} + f_{\mathcal B_2} + f_{\mathcal B_3} \to 0
    \end{equation*}
    in probability under $\Pb_{\mathsf{id}}$, as desired.
\end{proof}

\subsection{Proof of Lemma~\ref{lem-label-matching-exact}}{\label{subsec:proof-lem-C-exact}}

We first show that $\mathtt d^{I,J} \leq \mathtt m(|I|+|J|) + \mathtt f^{I,J}$. Denote $S_1^I$ and $S_2^J$ by $S_1$ and $S_2$ for simplicity. For $1 \leq i \leq |I|+|J|, a \in \{1,2\}$ and $b \in [\iota\aleph]$, denote $i \lhd (a,b)$ if and only if $\mathcal L^{I,J}_i \subset \mathsf L_b(S_a)$. By definition of $\mathcal L^{I,J}_i$, for all $1 \leq i \leq |I|+|J|$ there exists a unique pair $(a,b)$ such that $i \lhd (a,b)$. Next, denote $(a_1,b_1;c_1) \sim (a_2,b_2;c_2)$ if and only if $c_1 \lhd (a_1,b_1) $, $c_2 \lhd (a_2,b_2)$ such that $|\operatorname{EndP}(\mathcal L^{I,J}_{c_1}) \cap \operatorname{EndP}(\mathcal L^{I,J}_{c_2})| = 1$. Then
\begin{align*}
    \{ (a,b;c): 1 \leq c \leq |I| + |J| \,, c \lhd (a,b)\}
\end{align*}
is a set with $|I| + |J|$ elements by definition of $\mathcal L^{I,J}_i$, and is divided into connected components $\mathfrak{CO}_1 , \ldots , \mathfrak{CO}_{\mathtt x}$ under the binary relation $\sim$. Also, by the fact that each vertex in $V(T^{I,J})$ occurs in at most two of $\{\operatorname{EndP}(\mathcal L^{I,J}_i)\}_{1 \leq i \leq |I|+|J|}$, each component can be denoted by
\begin{align*}
    \mathfrak{CO}_i = \{ (a_{i,1} , b_{i,1}; c_{i,1}) \sim \ldots \sim (a_{i,\mathtt n_i} , b_{i,\mathtt n_i};c_{i,\mathtt n_i}) \}\,.
\end{align*}
Given $1 \leq i \leq \mathtt x$ and $1 \leq j \leq \mathtt n_i - 1$ we always have $\mathsf P_{b_{i,j}}(S_{a_{i,j}} ) \cap \mathsf P_{b_{i,j+1}}(S_{a_{i,j+1}} )$ is a singleton and we denote its element by $w_{i,j}$, and $\mathsf L_{b_{i,j}}(S_{a_{i,j}} ) \cap \mathsf L_{b_{i,j+1}}(S_{a_{i,j+1}} )$ is an arm-path of $T^{I,J}$ starting from $w_{i,j}$. Also, by Theorem~\ref{thm-desired-vertex-sets}, we have
\begin{align*}
    a_{i,j} \neq a_{i,j+1} \mbox{ and } E\Big(\mathsf L_{b_{i,j}}(S_{a_{i,j}} ) \cap\mathsf L_{b_{i,j+1}}(S_{a_{i,j+1}} ) \cap \big(\mathsf T(S_{a_{i,j}} )\cup \mathsf T(S_{a_{i,j+1}} ) \big)\Big)= \emptyset
\end{align*}
by the fact that all arm-paths in $\mathsf T(S_1)$ or $\mathsf T(S_2)$ does not contain $w_{i,j}$. Therefore, if we denote 
\begin{align*}
    \mathsf P_{b_{i,j}} (S_{a_{i,j}}) &= \{ w_{i,j-1} , w_{i,j}\} (1 \leq j \leq \mathtt n_i)\,, x_i = w_{i,0}\,, y_i = w_{i,\mathtt n_i}\,,\\
    \{ x_i' , x_i\} &= \operatorname{EndP}(\mathsf T(S_{3-a_{i,1}}) \cap \mathsf L_{b_{i,1}}(S_{a_{i,1}}))\,,\\
    \{ y_i' , y_i\} &= \operatorname{EndP}(\mathsf T(S_{3-a_{i,\mathtt n_i}}) \cap \mathsf L_{b_{i,\mathtt n_i}}(S_{a_{i,\mathtt n_i}}))
\end{align*}
(where $x_i' = x_i$ or $y_i' = y_i$ is allowed), then $(x'_1 , y'_1 , \ldots , x'_\mathtt x , y'_\mathtt x ) \in \mathcal X(\mathtt V^{I,J})$, which gives $\mathtt x = |\mathtt V^{I,J}|$. Therefore, we have
\begin{align}
    \mathtt d^{I,J} &\leq \sum_{i=1}^{\mathtt x}\mathsf{Dist}_{T^{I,J}}(x'_i , y'_i)\nonumber \\
    &= \sum_{i=1}^{\mathtt x}\Big(\mathsf{Dist}_{T^{I,J}}(x'_i , x_i) + \sum_{j=1}^{\mathtt n_i}\mathsf{Dist}_{T^{I,J}}(w_{i,j-1} , w_{i,j}) + \mathsf{Dist}_{T^{I,J}}(y_i , y'_i)\Big)\nonumber \\
    &= \mathtt m(|I|+|J|) +\sum_{i=1}^{\mathtt x}\Big(\mathsf{Dist}_{T^{I,J}}(x'_i , x_i) + \mathsf{Dist}_{T^{I,J}}(y_i , y'_i)\Big)\nonumber \\
    &= \mathtt m(|I|+|J|) + \mathtt f^{I,J}\,,\label{eq-lem-exact-distance-construct}
\end{align}
where the first equality holds by the fact that the shortest path between any two vertices on a tree is unique.

To further show that $\mathtt d^{I,J} \geq \mathtt m(|I|+|J|) + \mathtt f^{I,J}$, for $u,v \in V(T^{I,J})$ we denote $\mathfrak p_{T^{I,J}}(u,v)$ the shortest path in $T^{I,J}$ connecting $u$ and $v$. We first show that $\{\mathfrak p_{T^{I,J}}(x_i',y_i')\}_{1 \leq i \leq \mathtt x}$ are pairwise disjoint. Since $\{ \mathfrak p_{T^{I,J}}(x_i',x_i)\}_{1 \leq i \leq \mathtt x}$ and $\{ \mathfrak p_{T^{I,J}}(y_i',y_i)\}_{1 \leq i \leq \mathtt x}$ are arm-paths, it suffices to show that $\{\mathfrak p_{T^{I,J}}(x_i,y_i)\}_{1 \leq i \leq \mathtt x}$ are pairwise disjoint. Otherwise, suppose $1 \leq i_1 < i_2 \leq \mathtt x$ such that $v \in V(\mathfrak p_{T^{I,J}}(x_{i_1},y_{i_1}) \cap \mathfrak p_{T^{I,J}}(x_{i_2},y_{i_2}))$. Then since $\mathfrak{CO}_{i_1}$ and $\mathfrak{CO}_{i_2}$
are distinct connected components, we have $v \not\in \{ w_{i_1 , j}: 0 \leq j \leq \mathtt n_{i_1}\}$ and there exists $0 \leq j < \mathtt n_{i_1}$ such that $v \in V(\mathfrak p_{T^{I,J}}(w_{i_1 ,j } , w_{i_1 ,j + 1}))$. By the fact that $\mathsf{Deg}_{T^{I,J}}(u) = 2$ for any $u \in V(\mathfrak p_{T^{I,J}}(w_{i_1 ,j } , w_{i_1 ,j + 1})) \setminus \operatorname{EndP}(\mathfrak p_{T^{I,J}}(w_{i_1 ,j } , w_{i_1 ,j + 1}))$ and $\mathsf{Deg}_{T^{I,J}}(u) \geq 3$ for any $u \in \operatorname{EndP}(\mathfrak p_{T^{I,J}}(w_{i_1 ,j } , w_{i_1 ,j + 1}))$, we have $\mathfrak p_{T^{I,J}}(w_{i_1 ,j } , w_{i_1 ,j + 1}) \subset \mathfrak p_{T^{I,J}}(x_{i_2} , y_{i_2})$ which again contradicts that $\mathfrak{CO}_{i_1}$ and $\mathfrak{CO}_{i_2}$ are distinct connected components.

Next, suppose $(x''_1 , \ldots , x''_\mathtt x , y''_1 , \ldots , y''_\mathtt x )$ is an arbitrary element of $\mathcal X(\mathtt V^{I,J})$. It suffices to show 
\begin{equation}\label{eq-lem-exact-distance-ineq}
    \sum_{i=1}^{\mathtt x} \mathsf{Dist}_{T^{I,J}}(x''_{i} , y''_{i}) \geq \sum_{i=1}^{\mathtt x} \mathsf{Dist}_{T^{I,J}}(x'_{i} , y'_{i})
\end{equation}
by showing that $\bigcup_{1 \leq i \leq \mathtt x}\mathfrak{p}_{T^{I,J}}(x'_{i} , y'_{i})\subset \bigcup_{1 \leq i \leq \mathtt x}\mathfrak{p}_{T^{I,J}}(x''_{i} , y''_{i})$. Otherwise, suppose $e \in \mathfrak{p}_{T^{I,J}}(x'_{i_0} , y'_{i_0})\setminus \big( \bigcup_{1 \leq j \leq \mathtt x}\mathfrak{p}_{T^{I,J}}(x''_{j} , y''_{j})\big)$ for some $1 \leq i_0 \leq \mathtt x$. Then both connected components of $T^{I,J}\setminus \{ e\}$, say $T^{I,J}_1$ and $T^{I,J}_2$, should contain an even number of vertices in $\mathtt V^{I,J}$. However, for $i \neq i_0$, $x'_i$ and $y'_i$ are both in $T^{I,J}_1$ or both in $T^{I,J}_2$; in addition, exactly one of $x'_{i_0}$ and $y'_{i_0}$ will be in $T^{I,J}_1$. These two properties imply that both $T_1^{I, J}$ and $T_2^{I, J}$ contain odd numbers of vertices in $\mathtt V^{I, J}$, which is a contradiction. Therefore we have shown \eqref{eq-lem-exact-distance-ineq}, which gives $\mathtt d^{I,J} \geq \mathtt m(|I|+|J|) + \mathtt f^{I,J}$. Combining this with \eqref{eq-lem-exact-distance-construct} yields our desired result.

\subsection{Proof of Lemma~\ref{lem-control-f-B-1}}{\label{subsec:proof-lem-C.3}}

Note that we have the estimation 
\begin{align*}
    \mathbb E_{\Pb_{\mathsf{id}}}\big[ |A_{i,j}-\tfrac{\lambda s}{n}| \big], \mathbb E_{\Pb_{\mathsf{id}}}\big[ |B_{i,j}-\tfrac{\lambda s}{n}| \big], \mathbb E_{\Pb_{\mathsf{id}}}\big[ |(A_{i,j}-\tfrac{\lambda s}{n})(B_{i,j}-\tfrac{\lambda s}{n})| \big] \leq \tfrac{2\lambda s}{n} \,.
\end{align*}
Combined with \eqref{eq-def-f-S} and \eqref{eq-phi-as-linear-combination-of-beta}, it yields that
\begin{align*}
    & \mathbb E_{\Pb_{\mathsf{id}}}\big[ |f_{\mathcal B_1}| \big] \leq \frac{ \operatorname{Aut}(\mathsf T(\mathbf H)) s^{\aleph-1} }{ n^{\aleph+(\ell-2)\iota\aleph+o(1)}(\epsilon^2\lambda s)^{\ell\iota\aleph} } \sum_{(S_1,S_2) \in \mathcal B_1(\mathbf H)} \sum_{I,J \subset [\iota\aleph]} n^{-\ell(2\iota\aleph-|I|-|J|)/2} \mathbb E_{\Pb_{\mathsf{id}}}\big[ |\beta_{S_1^{I},S_2^{J}}| \big]  \\
    \leq\ & \sum_{(S_1,S_2) \in \mathcal B_1(\mathbf H)} \sum_{I,J \subset [\iota\aleph]} \frac{ \operatorname{Aut}(\mathsf T(\mathbf H)) s^{\aleph-1} (2\lambda s)^{2\ell\iota\aleph+2\aleph-2} }{ n^{\aleph+(\ell-2)\iota\aleph+o(1)}(\epsilon^2\lambda s)^{\ell\iota\aleph} } n^{ \frac{|E(S_1^{I})|+|E(S_2^{J})|-2|E(S_1^{I} \cap S_2^{J})|-\ell(2\iota\aleph-|I|-|J|)}{2} } \\ 
    \leq\ & \sum_{(S_1,S_2) \in \mathcal B_1(\mathbf H)} \frac{ \operatorname{Aut}(\mathsf T(\mathbf H)) s^{\aleph-1} }{ n^{\aleph+(\ell-2)\iota\aleph+o(1)}(\epsilon^2\lambda s)^{\ell\iota\aleph} } (2\lambda s)^{2\ell\iota\aleph+2\aleph-2} n^{(\aleph-1+\ell\iota\aleph)-|E(G_{S})| } \\
    \leq\ & n^{o(1)} \cdot \sum_{(S_1,S_2) \in \mathcal B_1(\mathbf H)} (2\lambda s)^{2\ell\iota\aleph+2\aleph} n^{\aleph-|E(G_S)|} \,,
\end{align*}
where the second inequality follows from (note that $E(S_1 \cap S_2) \leq |E(S_1' \cap S_2')|+ \min\{ |E(S_1 \setminus S_1')|, |E(S_2 \setminus S_2')| \}$ for all multigraphs $S_1,S_2,S_1',S_2'$)
\begin{align*}
    & |E(S_1^{I})|+|E(S_2^{J})|-2|E(S_1^{I}) \cap E(S_2^{J})|-(2\iota\aleph-|I|-|J|) \\
    =\ & (\aleph-1+\ell\iota\aleph) - 2\ell (2\iota\aleph-|I|-|J|) -2|E(S_1^{I} \cap S_2^{J}))| \\
    \leq\ & 2(\aleph-1+\ell\iota\aleph) - 2|E(S_1 \cap S_2)|= 2(\aleph-1+\ell\iota\aleph) - 2|E(G_S)| \,.
\end{align*}
In order to prove Lemma~\ref{lem-control-f-B-1}, we first show the following lemma.
\begin{lemma}{\label{lem-coarse-enumeration-B-1}}
    Recall that $G_S = \widetilde{S}_1 \cup \widetilde{S}_2$. We have 
    \begin{align*}
        \#\Big\{ (S_1,S_2)\in \mathcal B_1(\mathbf H): |V(G_S)| = v, \tau(G_S) = m \Big\} \leq n^{v+2\aleph+\max\{10\aleph^2,m/2\}+o(1)} \,.
    \end{align*}
\end{lemma}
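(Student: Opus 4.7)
I would prove the lemma via a three-stage enumeration of pairs $(S_1, S_2) \in \mathcal B_1(\mathbf H)$ with $|V(G_S)| = v$ and $\tau(G_S) = m$. First, choose the labeled vertex set $V(G_S) \subset [n]$: there are $\binom{n}{v} \leq n^v$ choices, yielding the $n^v$ factor. Second, on this vertex set, choose the graph $G_S$ with $|E(G_S)| = v+m$: the number of such graphs is $\binom{\binom{v}{2}}{v+m}$. Since $v, m \leq 2(\aleph + \ell\iota\aleph) = O(\ell\aleph)$ and $\ell\aleph = o(\log n / \log\log n)$ by \eqref{eq-choice-aleph} and \eqref{eq-choice-iota}, this simplifies to $n^{o(1)}$.

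The third and main stage enumerates the decomposition of $G_S$ into the pair $(S_1, S_2)$. Each $S_i \vdash \mathbf H$ consists of an embedded tree $\mathsf T(S_i) \subset G_S$ isomorphic to $\mathsf T(\mathbf H)$ together with $\iota\aleph$ non-backtracking paths $\mathsf M_j(S_i)$ of length $\ell$, whose endpoints are determined by $\mathsf P(S_i)$. The tree embeddings contribute at most $v^\aleph \cdot v^\aleph \leq n^{2\aleph}$, which accounts for the $n^{2\aleph}$ term in the bound. For the non-backtracking paths, I would apply Lemma~\ref{lem-revised-decomposition-H-Subset-S} with $H = \mathsf T(S_i)$ inside $J = G_S$: this decomposes $E(G_S) \setminus E(\mathsf T(S_i))$ into at most $5(m+1)$ independent cycles and self-avoiding paths. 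Any non-backtracking walk of length $\ell$ between prescribed endpoints can then be encoded by its sequence of alternating traversals through tree edges and these decomposition pieces, yielding at most $\exp(O(m+\aleph))$ walks per path and hence $\exp(O(\iota\aleph(m+\aleph)))$ over all $2\iota\aleph$ paths in $(S_1, S_2)$.

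The main obstacle is establishing the per-path bound: the naive max-degree estimate gives $D^\ell \sim (\ell\aleph)^\ell$, which is much too large, so one must leverage the sparse cycle structure of $G_S$ rather than count walks directly. With the cleaner $\exp(O(m+\aleph))$ bound in hand, a case analysis resolves both regimes: when $m \leq 20\aleph^2$, the path contribution is $\exp(O(\aleph^3))$, which is $n^{o(1)}$ (using $\aleph = o(\log\log\log n)$) and is absorbed into $n^{10\aleph^2 + o(1)}$; when $m > 20\aleph^2$, it is $\exp(O(\iota\aleph\, m))$, which is at most $n^{m/2 + o(1)}$ since $\iota\aleph = o(\log n)$ by the parameter choices. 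Unlike the self-avoiding case treated by Lemma~\ref{lem-enu-decorated-trees-in-given-graph}, a non-backtracking walk may revisit vertices and edges, so it cannot be identified with a subset of edges; this is exactly why the cycle/path decomposition encoding is essential. Combining the three stages yields the claimed bound $n^{v + 2\aleph + \max\{10\aleph^2, m/2\} + o(1)}$.
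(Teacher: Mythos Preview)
Your approach has a genuine gap in Stage~2. The claim that $\ell\aleph = o(\log n/\log\log n)$ is false: the requirement $(\epsilon^2\lambda s)^\ell > n^4$ in \eqref{eq-choice-aleph} forces $\ell = \Theta(\log n)$ (as $\epsilon^2\lambda s$ is a fixed constant), while $\aleph = \omega(1)$, so in fact $\ell\aleph = \omega(\log n)$. Consequently, with $v$ as large as $2(\aleph+\ell\iota\aleph)$ and $v+m$ of the same order, the edge-set count $\binom{\binom{v}{2}}{v+m}$ can be of size $(\ell\aleph)^{\Theta(\ell\iota\aleph)} = n^{\Theta(\iota\aleph\log\log n)}$, which is $n^{\omega(1)}$. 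When $m$ sits near the $\mathcal B_1$ threshold $20\aleph^2$, the target exponent $\max\{10\aleph^2,m/2\}\sim \aleph^2$ cannot absorb this (since $\aleph = o(\log\log\log n)\ll \log\log n$). Enumerating all edge sets on $V(G_S)$ is simply too expensive in this regime.

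Stage~3 is also unjustified as stated. A non-backtracking walk of length $\ell$ can switch between the tree and the $O(m)$ decomposition pieces $\Theta(\ell)$ times (for instance when the pieces are short), so the ``sequence of traversals'' encoding you describe yields at best $(m+\aleph)^{O(\ell)}$ choices per path, not $\exp(O(m+\aleph))$. You flag this as the main obstacle, but no mechanism is given to control the number of alternations.

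The paper sidesteps both issues by never enumerating $G_S$ in advance. It first picks $(\mathsf T(S_1),\mathsf T(S_2))$ at cost $n^{2\aleph}$ and then builds each of the $2\iota\aleph$ non-backtracking paths edge by edge, classifying every step as \emph{new} (target vertex unseen: $\le n$ choices), \emph{returning} (new edge to a seen vertex: $\le |V(G_S)|\le 3\ell\aleph$ choices), or \emph{old} (edge already present: $\le \deg_{G_S}(v_j)\le O(\aleph+m)$ choices). The global constraints $\sum_i n_i \le v$ and $\sum_i r_i \in [m-\aleph,\,m+2]$ then deliver the factor $n^v$ from new steps, a factor $(\ell\aleph)^{O(m)}$ from returning steps, and a factor $(O(\aleph+m))^{2\iota\aleph\ell}$ from old steps; the last two combine to produce $n^{\max\{10\aleph^2,m/2\}+o(1)}$. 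The essential idea you are missing is that \emph{old} steps are cheap because they are bounded by the local degree (hence by $O(m+\aleph)$), while only the total counts of new and returning steps---not where they fall along each path---are globally tied to $v$ and $m$.
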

\begin{proof}
    Clearly, the enumeration of $(\mathsf T(S_1),\mathsf T(S_2))$ is bounded by $n^{2\aleph}$ and given $(\mathsf T(S_1),\mathsf T(S_2))$, the enumeration of $(\mathsf P(S_1),\mathsf P(S_2))$ is bounded by $(2\aleph)^{4\iota\aleph} = n^{o(1)}$. Now we give an upper bound on the enumeration of 
    \begin{align*}
        ( \mathsf L_1(S_1), \ldots, \mathsf L_{\iota\aleph}(S_1); \mathsf L_1(S_2), \ldots, \mathsf L_{\iota\aleph}(S_2) ) \,.
    \end{align*}
    Denote $\mathcal L_{i+(j-1)\iota\aleph} = \mathsf L_i (S_j)\, (1\leq i \leq \iota\aleph ,1 \leq j \leq 2 )$ and suppose that $\mathcal L_i$ has $n_i$ new edges and $r_i$ returning edges with respect to $\mathsf T(S_1) \cup \mathsf T(S_2) \cup (\cup_{j \leq i-1} \mathcal L_j)$. Clearly, by $-2 \leq \tau(\mathsf T(S_1)\cup\mathsf T(S_2)) \leq \aleph$ we have
    \begin{align*}
        \sum_{i=1}^{2\iota\aleph} n_i \leq v \mbox{ and }  m+2 \geq \sum_{i=1}^{2\iota\aleph} r_i \geq m-\aleph \,.
    \end{align*}
    We now control the enumeration of $\{ \mathcal L_i \}$. Recall that $\iota < \tfrac{1}{100}$. Therefore, we have $|V(G_{S})|\leq|E(G_{S})|\leq 2(\ell \iota+1)\aleph \leq \ell\aleph$. For each $\mathcal L_i$ with $n_i$ new edges and $r_i$ returning edges, there are $\frac{\ell!}{n_i!r_i!(\ell-n_i-r_i)!}$ choices to decide the order of old, new and returning edges. Also, we may consider the choices for vertices along the path sequentially and observe the following: for each returning edge $(v_j,v_{j+1})$ there are at most $|V(S_1\cup S_2)| \leq 2(\ell-1)\iota\aleph+2\aleph \leq 3\ell\aleph$ choices for $v_{j+1}$; for each new edge $(v_j,v_{j+1})$ there are at most $n$ choices for $v_{j+1}$; for each old edge $(v_j,v_{j+1})$ there are at most $\operatorname{deg}_{G_S}(v_j)\leq \operatorname{deg}_{\mathsf T(S_1) \cup \mathsf T(S_2)}(v_j) + \sum_{i=1}^{2\iota\aleph}r_i \leq 2\aleph+m $ choices for $v_{j+1}$. Thus, the enumeration of $\mathcal L_i$ is bounded by
    \begin{align*}
        & \frac{\ell!}{n_i!r_i!(\ell-n_i-r_i)!} \cdot n^{n_i} (3\ell\aleph)^{r_i} (2\aleph+m)^{\ell-n_i-r_i} 
        \leq \frac{\ell^{\ell-n_i}n^{n_i} (3\ell\aleph)^{r_i} (2\aleph+m)^{\ell-n_i-r_i}}{(\ell-n_i-r_i)!} \\
        =\ & n^{n_i} (3\ell^2 \aleph)^{r_i} \frac{ (\ell(\aleph+m))^{\ell-n_i-r_i}}{(\ell-n_i-r_i)!} \leq n^{n_i} (3\ell^2 \aleph)^{r_i} (20(\aleph+m))^{\ell} \,,
    \end{align*}
    where the first inequality follows from $\frac{\ell!}{n_i!} \leq \ell^{\ell-n_i}$, and the last inequality follows from the fact that the function $\frac{y^x}{x!}$ is increasing in $x$ when $x\leq y$. Thus, the enumeration of $\{\mathcal L_i\}$ is bounded by 
    \begin{align*}
        & \sum_{\substack{n_1 + \dots + n_{2\iota\aleph} \leq v \\ m-\aleph \leq r_1 + \dots + r_{2\iota\aleph} \leq m+2}} \prod_{i=1}^{2\iota\aleph} \Big( n^{n_i} (3\ell^2 \aleph)^{r_i} (20(\aleph+m))^{\ell} \Big) \\
        =\ & \sum_{\substack{n_1 + \dots + n_{2\iota\aleph} \leq v \\ m-\aleph \leq r_1 + \dots + r_{2\iota\aleph} \leq m+2}} n^{v} (3\ell^2\aleph)^{m+2} (20(\aleph+m))^{2\iota\aleph\ell}  \\
        \leq\ & n^{v+o(1)} (3\ell^2\aleph)^{m} \aleph^{2\iota\aleph\ell} m^{2\iota\aleph\ell} \leq e^{2\iota\ell\aleph\log(\aleph)} n^{v+ \frac{\log(3\ell^2\aleph)}{\log n} m + \aleph\log m} \leq n^{v+\max\{10\aleph^2,m/2\}+o(1)} \,,
    \end{align*}
    where the first inequality follows from the fact that
    \begin{align*}
        & \#\big\{ (n_1,\ldots,n_{2\iota\aleph}): n_1 + \dots + n_{2\iota\aleph} \leq v \big\} \leq v^{2\iota\aleph} \overset{v=|V(G_S)|}{\leq} (\ell\aleph)^{2\iota\aleph} = n^{o(1)} \,; \\
        & \#\big\{ (r_1,\ldots,r_{2\iota\aleph}): r_1 + \dots + r_{2\iota\aleph} \leq m+2 \big\} \leq (m+2)^{2\iota\aleph} \overset{m=\tau(G_S)}{\leq} (\ell\aleph)^{2\iota\aleph} = n^{o(1)} \,,
    \end{align*}
    the second inequality follows from the fact that $\ell\leq \log n$ (see \eqref{eq-choice-aleph}), and the last inequality follows from $\aleph \log m \leq 10\aleph^2 + 0.1(\log m)^2 \leq 10\aleph^2 + 0.1m$ for $m$ larger than a sufficiently large constant. Thus, the enumeration of $(S_1,S_2)$ is bounded by
    \begin{align*}
        n^{2\aleph+o(1)} \cdot n^{v+\max\{10\aleph^2,m/2\}+o(1)} = n^{v+2\aleph+\max\{10\aleph^2,m/2\}+o(1)} \,,
    \end{align*}
    as desired.
\end{proof}
Now we complete the proof of Lemma~\ref{lem-control-f-B-1}.
\begin{proof}[Proof of Lemma~\ref{lem-control-f-B-1}]
    Note that we have 
    \begin{align*}
        & \sum_{(S_1,S_2) \in \mathcal B_1(\mathbf H)}  (2\lambda s)^{2\ell\iota\aleph+2\aleph} n^{\aleph-|E(G_S)|} \\
        \leq\ & \sum_{0 \leq v \leq 2\ell\aleph, m \geq 20\aleph^2}   (2\lambda s)^{2\ell\iota\aleph} n^{\aleph-(v+m)+o(1)} \#\big\{ (S_1,S_2): |V(G_S)|=v, \tau(G_S) = m \big\} \\
        \overset{\text{Lemma~\ref{lem-coarse-enumeration-B-1}}}{\leq}\ & \sum_{0 \leq v \leq 2\ell\aleph, m \geq 20\aleph^2}(2\lambda s)^{2\ell\iota\aleph} n^{\aleph-(v+m)} n^{v+m/2+2\aleph+o(1)} \\
        \leq\ & \sum_{0 \leq v \leq 2\ell\aleph, m \geq 20\aleph^2} n^{-m/2+\aleph^2+o(1)} = o(1) \,,
    \end{align*}
    leading to the desired result.
\end{proof}

\subsection{Proof of Lemma~\ref{lem-control-f-B-2}}{\label{subsec:proof-lem-C.4}}

Recall that for each $(S_1,S_2) \in \mathcal B_2(\mathbf H)$, we denote $G_S=\widetilde S_1 \cup \widetilde S_2$. We say a subset of unordered pairs $F\subset \mathrm{U}_n$ is \emph{pivotal} to $G_S$, if $G_S \setminus F$ is tangle-free (a graph is tangle-free if it does not contain any tangle) but $G_S \setminus F'$ contains at least one tangle for any $F' \subsetneq F$. In case that $F$ is pivotal to $G_S$, at most two connected components of $G_S \setminus F$ can be trees (otherwise since $G_S = \widetilde S_1 \cup \widetilde S_2$ has at most two connected components, an edge could be added in $G_S \setminus F$ between two of the tree components without introducing any tangle, which contradicts to our definition). Therefore, supposing that $\mathsf{Comp}_1 ,\ldots,\mathsf{Comp}_i$ are all connected components of $G_S \setminus F$, we have $|E(G_S)|-|E(F)|=\sum_{j=1}^i |E(\mathsf{Comp}_j)| \geq \sum_{j=1}^i (|V(\mathsf{Comp}_j)|-\mathbf{1}_{\{\mathsf{Comp}_j\text{ is a tree}\}})\geq |V(G_S)|-2$, which implies $|F|\leq \tau (G_S) + 2$. Denote $\mathcal F(S)$ to be the collection of pivotal sets of $G_S$. Defining
\begin{equation}{\label{eq-def-Omega-F(S)}}
    \Omega_{\mathcal F(S)} = \cup_{F \in \mathcal F(S)} \Omega_F \mbox{ where } \Omega_F = \Big\{ G_{i,j}=0 : (i,j) \in F \Big\} \,,
\end{equation}
it is clear that $\Xi \subset \Omega_{\mathcal F(S)}$: since on the event $\cap_{F \in \mathcal F(S)} \Omega_F^{c}$ there exists a tangle $H \subset S$ with $G_{i,j}=1$ for all $(i,j)\in E(H)$, contradicting to the definition of $\Xi$ (which means that the edge set $\{ (i,j): G_{i,j}=1 \}$ is tangle-free). Thus, from \eqref{eq-phi-as-linear-combination-of-beta} we have 
\begin{align}
    \mathbb E_{\Pb_{\mathsf{id}}} \big[ \mathbf 1_{\Xi} |f_{\mathcal B_2}| \big] &\leq \frac{1}{n^{\aleph+(\ell-2)\iota\aleph}} \sum_{(S_1,S_2) \in \mathcal B_2(\mathbf H)} \sum_{I,J \subset [\iota\aleph]} n^{ \frac{-\ell(2\iota\aleph-|I|-|J|)}{2} } \mathbb E_{\Pb_{\mathsf{id}}} \big[ \mathbf 1_{\Omega_{\mathcal F(S)}} |\beta_{S_1^{I},S_2^{J}}| \big] \nonumber \\
    &\leq \frac{1}{n^{\aleph+(\ell-2)\iota\aleph}} \sum_{(S_1,S_2) \in \mathcal B_2(\mathbf H)} \sum_{I,J \subset [\iota\aleph]} \sum_{F \in \mathcal F(S)} n^{ \frac{-\ell(2\iota\aleph-|I|-|J|)}{2} }\mathbb E_{\Pb_{\mathsf{id}}}  \big[ \mathbf 1_{\Omega_{F}} |\beta_{S_1^{I},S_2^{J}}| \big] \,. \label{eq-f-B-2-relax-1}
\end{align}
Since $\tau(G_S) \leq 20\aleph^2$, for all $F \in \mathcal F(S)$ we have $|F| \leq 20\aleph^2 + 2 \leq 100\aleph^2$. In addition, since $G_S$ has at least $40\aleph^3$ tangles we see that
\begin{align*}
    \sum_{(i,j)\in F} \big( E(S_1)_{i,j} + E(S_2)_{i,j} - 2 \big) \geq 40\aleph^3-200\aleph^2\geq 35\aleph^3 \mbox{ for all } F \in \mathcal F(S) \,.
\end{align*}
Thus, we see by the definition of $\beta_{S_1,S_2}$ in \eqref{eq-def-beta-S} that
\begin{align*}
    &\ \quad n^{ \frac{-\ell(2\iota\aleph-|I|-|J|)}{2} } \mathbb E_{\Pb_{\mathsf{id}}} \big[ \mathbf 1_{\Omega_{F}} |\beta_{S_1^{I},S_2^{J}}| \big] \\
    &\leq n^{ \frac{-\ell(2\iota\aleph-|I|-|J|)}{2} } \big(\tfrac{2\lambda s}{n}\big)^{\tfrac{1}{2}(\sum_{(i,j) \in F}(E(S_1^{I})_{i,j} +E(S_2^{J})_{i,j}))} \cdot \big(\tfrac{\epsilon^2\lambda s}{n}\big)^{\tfrac{1}{2}(\sum_{(i,j) \not\in F}(2-E(S_1^{I})_{i,j} -E(S_2^{J})_{i,j}))}\cdot 2^{|E(G_S)|}\\
    &\leq n^{ \frac{-\ell(2\iota\aleph-|I|-|J|)}{2} } \big(\tfrac{2\lambda s}{n}\big)^{|F|+\sum_{(i,j) \in F}(E(S_1^{I})_{i,j} +E(S_2^{J})_{i,j}-2)} \cdot \big(\tfrac{2\lambda s}{n}\big)^{\tfrac{1}{2}(\sum_{(i,j)}(2-E(S_1^{I})_{i,j} -E(S_2^{J})_{i,j}))} \cdot \big(\tfrac{2}{\epsilon^2}\big)^{4\ell\aleph}\\
    &\leq \big(\tfrac{2\lambda s}{n}\big)^{|F|+\sum_{(i,j) \in F}(E(S_1^{I})_{i,j} +E(S_2^{J})_{i,j}-2) + \ell(2\iota\aleph-|I|-|J|) + |E(G_S)|-\aleph+1-\ell\iota\aleph } \cdot \big(\tfrac{2}{\epsilon^2}\big)^{4\ell\aleph} \\
    &\leq \big(\tfrac{2\lambda s}{n}\big)^{|E(G_S)|+35\aleph^3-\aleph+1-\ell\iota\aleph} \cdot n^{\aleph^2} \leq n^{-30\aleph^3-|E(G_S)|+\ell\iota\aleph} \,,
\end{align*}
for all $F \in \mathcal F(S)$, where the first inequality holds by Lemma~\ref{lem-joint-moment-A-B}, the third inequality follows from
\begin{align*}
    \sum_{(i,j) \in E(G_S)} (2-E(S_1^{I})_{i,j} -E(S_2^{J})_{i,j}) = 2|E(G_S)| -2(\ell\iota\aleph+\aleph-1)+\ell(2\iota\aleph-|I|-|J|) \,,
\end{align*}
and the last inequality holds since for all $F \in \mathcal F(S)$ we have that
\begin{align*}
    &\ell(2\iota\aleph-|I|-|J|) + \sum_{(i,j)\in F} \big( E(S_1^{I})_{i,j} + E(S_2^{J})_{i,j} - 2 \big) \\
    \geq\ & \sum_{(i,j)\in F} \big( E(S_1)_{i,j} + E(S_2)_{i,j} - 2 \big) \geq 35\aleph^3  \,.
\end{align*}
Plugging this estimation into \eqref{eq-f-B-2-relax-1}, we get that (note that $0 \leq \tau(G_S) \leq 20\aleph^2$)
\begin{align}
    \eqref{eq-f-B-2-relax-1} &\leq n^{-25\aleph^3} \sum_{(S_1,S_2) \in \mathcal B_2} \sum_{F \in \mathcal F(S)} n^{-|E(G_S)|} \nonumber \\
    &\leq n^{-25\aleph^3} \sum_{k \leq 3\ell\iota\aleph} \sum_{m \leq 20\aleph^2} n^{-k} \cdot \#\{ (S_1,S_2): |E(\widetilde S_1 \cup \widetilde S_2)|=k, \tau(\widetilde S_1 \cup \widetilde S_2)=m \} \nonumber \\
    &\overset{\text{Lemma~\ref{lem-coarse-enumeration-B-1}}}{\leq} n^{-25\aleph^3} \sum_{k \leq 3\ell\iota\aleph} \sum_{m \leq 20\aleph^2} n^{-k} \cdot n^{ (k-m) + 12\aleph^2 + o(1) } = o(1) \,. \label{eq-f-B-2-relax-2}
\end{align}
This leads to the desired result.

\subsection{Proof of Lemma~\ref{lem-control-f-B-3}}{\label{subsec:proof-lem-C.5}}

Denote $G_{\cup}=\widetilde S_1\cup \widetilde S_2\cup \widetilde K_1 \cup \widetilde K_2$. By Lemma~\ref{lem-Pb-var-bad-part-2-expectation-estimate-general} and the fact that 
\begin{align*}
    \sum_{(i,j) \in E(G_{\cup})} ( E(S_1)_{i,j} + E(S_2)_{i,j} + E(K_1)_{i,j} + E(K_2)_{i,j} ) = 4(\aleph-1+\ell\iota\aleph)\,,
\end{align*}
we have (similarly as how we derive \eqref{eq-Pb-var-bad-part-2-relax-I}) 
\begin{align}
    \mathbb E_{\Pb_{\mathsf{id}}}\big[ f_{\mathcal B_3}^2 \big] &= \sum_{\mathbf H \in \mathcal H } \frac{ (\operatorname{Aut}(\mathsf T(\mathbf H)))^2 s^{2\aleph-2} 2^{2\iota\aleph} }{ n^{2\aleph+2(\ell-2)\iota\aleph}(\epsilon^2\lambda s)^{2\ell\iota\aleph}} \sum_{ (S_1,S_2),(K_1,K_2) \in \mathcal B_3(\mathbf H) } \E_{\P_{\mathsf{id}}} [\phi_{S_1,S_2} \phi_{K_1,K_2}] \nonumber \\
    &\leq \sum_{\mathbf H \in \mathcal H} \frac{ n^{o(1)} s^{2\aleph-2} }{ n^{2\aleph+2(\ell-2)\iota\aleph}(\epsilon^2\lambda s)^{2\ell\iota\aleph}} \sum_{ (S_1,S_2),(K_1,K_2) \in \mathcal B_3(\mathbf H) } \big( \frac{n}{\epsilon^2 \lambda s} \big)^{ 2(\aleph-1+\ell\iota\aleph)-|E(G_{\cup})| } \nonumber \\
    &\leq \frac{ n^{4\iota\aleph-2+o(1)}s^{2\aleph-2} |\mathcal H| }{ (\epsilon^2 \lambda s)^{4\ell\iota\aleph +2(\aleph - 1)} } \sum_{ (S_1,S_2),(K_1,K_2) \in \mathcal B_3(\mathbf H) } \big( \tfrac{\epsilon^2\lambda s}{n} \big)^{|E(G_{\cup})|} \nonumber \\ 
    &= \frac{ n^{4\iota\aleph-2+o(1)} }{ (\epsilon^2 \lambda s)^{4\ell\iota\aleph } } \sum_{ (S_1,S_2),(K_1,K_2) \in \mathcal B_3(\mathbf H) } \big( \tfrac{\epsilon^2\lambda s}{n} \big)^{|E(G_{\cup})|} \label{eq-f-B-3-relax-1}\,.
\end{align}
In addition, the following holds by Remark~\ref{cor-of-lem-4.4}: for $G_\cup \subset \mathsf K_n$, if
\begin{align}
    \exists (S_1,S_2),(K_1,K_2) \in \mathcal B_3(\mathbf H) \mbox{ such that } G_{\cup}= \widetilde S_1\cup \widetilde S_2\cup \widetilde K_1 \cup \widetilde K_2\mbox{ and }\mathcal L(G_{\cup}) \subset V(G_{\geq 2}) \,, \label{eq-characerize-G-cup-appendix-pre}
\end{align}
then we have
\begin{align}
    4\iota\aleph-1-\tau(G_\cup) \leq \tfrac{4\ell\iota\aleph+4\aleph-|E(G_\cup)|}{\ell/2} \,. \label{eq-characerize-G-cup-appendix}
\end{align}
We say a simple graph $\mathbf G$ is with \eqref{eq-characerize-G-cup-appendix-pre} and \eqref{eq-characerize-G-cup-appendix} if $\mathbf G \cong G_\cup$ for some $G_\cup$ satisfying \eqref{eq-characerize-G-cup-appendix-pre} and \eqref{eq-characerize-G-cup-appendix}. Using this terminology, we get that
\begin{align}
    \eqref{eq-f-B-3-relax-1} \leq\ & \frac{ n^{4\iota\aleph-2+o(1)} }{ (\epsilon^2 \lambda s)^{4\ell\iota\aleph} } \sum_{ \mathbf G \text{ with } \eqref{eq-characerize-G-cup-appendix-pre},\eqref{eq-characerize-G-cup-appendix} } \big( \tfrac{\epsilon^2\lambda s}{n} \big)^{|E(\mathbf G)|} \nonumber \\
    & \cdot \sum_{G_{\cup} \cong \mathbf G} \#\{ (S_1,S_2),(K_1,K_2) \in \mathcal B_3: \widetilde S_1\cup \widetilde S_2\cup \widetilde K_1 \cup \widetilde K_2 = G_{\cup} \} \,. \label{eq-f-B-3-relax-2}
\end{align}
We first need the following lemma. 
\begin{lemma}{\label{lem-enu-decorated-trees-with-few-tangles}}
    Given $J \subset \mathsf K_n$ with at most $\mathtt z = O(1)$ connected components, we have
    \begin{align}
        \#\big\{ S \subset J:\exists S'\mbox{ such that }(S,S') \in \mathcal B_3\big\} \leq \binom{|E(J)|}{\aleph} n^{o(1)}\aleph^{2\iota\aleph} e^{ \sqrt{\log n} (\log \log n)^4 (\tau(J)+\mathtt z) } \,. \label{eq-enu-decorated-trees-with-few-tangles}
    \end{align}
\end{lemma}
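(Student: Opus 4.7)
The strategy is to enumerate $S \in \{S \subset J : \exists S'\text{ with }(S,S') \in \mathcal B_3\}$ in three stages — tree, pairing, non-backtracking paths — in parallel with the proof of Lemma~\ref{lem-enu-decorated-trees-in-given-graph}, but to replace the crude $\tau(J)!$ factor appearing there by a sharper bound exploiting the fact that $\widetilde S$ must have low excess and few tangles (both consequences of $(S,S') \in \mathcal B_3$). First, the tree $\mathsf T(S)$, a subtree of $J$ on $\aleph$ vertices, can be chosen in at most $\binom{|E(J)|}{\aleph}$ ways; given $\mathsf T(S)$, the pairing $\mathsf P(S)$ contributes at most $\aleph^{2\iota\aleph}$ possibilities. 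The substantive step is to bound the number of $\iota\aleph$ non-backtracking paths $\mathsf M_i(S)$ of length $\ell$ lying in $J$ between the specified endpoint pairs $\{u_i,v_i\} \subset V(\mathsf T(S))$.

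To bound the NBP enumeration, fix a spanning forest $F$ of $J$, so that $J$ has exactly $\tau(J)+\mathtt z$ chord edges outside $F$. For each NBP $\gamma$ in $J$ of length $\ell$, the plan is to decompose $\gamma$ into alternating runs of $F$-edges and short excursions that either traverse a single chord of $J$ or enter-and-exit a tangle (a subgraph of diameter at most $\sqrt{\log n}$ containing at least two cycles, per Definition~\ref{def-tangle}). In the complement of the tangled regions the walk is forced to follow the unique tree path in $F$ between successive excess events, leaving no freedom. Each excess event then contributes either (i) a choice among the $\tau(J)+\mathtt z$ chords together with an orientation, or (ii) a local trajectory inside a tangle, which can be enumerated by a factor of $\exp(O(\sqrt{\log n}\log\log n))$ per tangle since the tangle has diameter at most $\sqrt{\log n}$ and $\widetilde S$ contains only $O(\aleph^3)$ tangles of bounded excess. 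Summing over the $\iota\aleph$ NBPs and observing that the total number of excess events across all paths is bounded by a constant multiple of $\tau(J)+\mathtt z$, one obtains an aggregate bound of the form $\exp(\sqrt{\log n}(\log\log n)^4(\tau(J)+\mathtt z))$.

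Multiplying the enumerations from the three stages then yields the inequality~\eqref{eq-enu-decorated-trees-with-few-tangles}. The principal obstacle is to make the chord-and-tangle decomposition of $\gamma$ rigorous without double-counting: in particular, to argue that a non-backtracking walk entering a tangle-region cannot produce more than $\exp(O(\sqrt{\log n}\log\log n))$ distinct local trajectories, and that the amortised count of excess events across the $\iota\aleph$ NBPs is essentially $\tau(J)+\mathtt z$ rather than $\iota\aleph \cdot (\tau(J)+\mathtt z)$ — the latter reduction relying on the fact that once a chord has been recorded it can be reused by subsequent NBPs without costing an additional enumeration, only a positional marker. The $(\log\log n)^4$ factor in the exponent is slack absorbing both the per-tangle local enumeration and the branching arising from vertex degrees that $\gamma$ encounters along $J$. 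Once the decomposition is established, the rest is routine bookkeeping.
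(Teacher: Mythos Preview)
Your overall three-stage framework (tree, pairing, then paths) matches the paper, but the path-enumeration step has a genuine gap. The claim that ``the total number of excess events across all paths is bounded by a constant multiple of $\tau(J)+\mathtt z$'' is false under your spanning-forest decomposition. A non-backtracking walk of length $\ell$ that enters a short cycle of length $L$ in $J$ will traverse the chord edge(s) on that cycle roughly $\ell/L$ times, so a single path can incur $\Theta(\ell)$ chord crossings even when $\tau(J)+\mathtt z = O(1)$. Your amortisation idea (``once a chord has been recorded it can be reused \ldots\ only a positional marker'') does not help: you would still have to specify \emph{where} along the $\ell$ steps each of these $\Theta(\ell)$ crossings occurs, which costs far more than the target bound. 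Similarly, the estimate $\exp(O(\sqrt{\log n}\log\log n))$ for a local trajectory inside a tangle is unjustified: the walk may remain inside a tangle for up to $\ell$ steps, and nothing in your argument rules out $\exp(\Theta(\ell))$ local trajectories.

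The paper's approach avoids this by inserting an intermediate step you are missing: first fix the \emph{trace} $\mathsf{Tr}_i = \widetilde{\mathsf M_i(S)}$ of each path (at most $2^{O(\tau(J)+1)}$ choices, via the self-avoiding-path decomposition of $J\setminus\mathsf T(S)$), and only then enumerate non-backtracking walks \emph{on that fixed trace}. On the trace the walk is forced at all degree-$\leq 2$ vertices; the only branching occurs at the at most $\tau(J)+\mathtt z$ vertices of degree $\geq 3$. The paper then classifies each visit to such a vertex as \emph{short} (following one of two designated neighbours on a short cycle in the tangle-free part of the path), \emph{long} (requiring $\geq\sqrt{\log n}$ steps to return, hence at most $\ell/\sqrt{\log n}$ such visits), or \emph{tangled} (forcing the use of an edge in a minimal tangle-destroying set $T$ with $|E(T)|\leq 40\aleph^3$). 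This yields at most $\ell/\sqrt{\log n}+40\aleph^3$ non-short visits per path, and encoding these together with the short-neighbour pairs gives the factor $e^{\sqrt{\log n}(\log\log n)^3(\tau(J)+\mathtt z)}$ per path; the extra $\log\log n$ power in the final bound absorbs the product over the $\iota\aleph$ paths via $\aleph = o(\log\log n)$. The point is that the tangle bound of $40\aleph^3$ controls the number of \emph{non-short visits to high-degree trace vertices}, not the number of chord crossings in a spanning forest of $J$ --- and that distinction is exactly what makes the argument go through.
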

\begin{proof}
    Note that the enumeration of $\mathsf T(S)$ is bounded by $\binom{|E(J)|}{\aleph}$, and given $\mathsf T(S)$ the enumeration of $\mathsf P(S)$ is bounded by $\aleph^{2\iota\aleph}$. Given $\mathsf T(S)$ and $\mathsf P(S)$, we now explain how to choose $\mathsf L_{i}(S)$. By Lemma~\ref{lem-revised-decomposition-H-Subset-S}, $J \setminus \mathsf T(S)$ can be decomposed into at most $5(\tau(J)+1)$ self-avoiding paths such that they only overlap at endpoints. Now define the trace of $\mathsf L_i(S)$ by $\mathsf{Tr}_i = \widetilde{\mathsf L_i(S)}$. Since $\mathsf{Tr}_i$ is the union of a subset of these $5(\tau(J)+1)$ self-avoiding paths, the enumeration for $\mathsf{Tr}_i$ is bounded by $2^{5(\tau(J)+1)}$. 
    
    Given $\mathsf{Tr}_i$, we will bound the enumeration of $\mathsf L_i(S)$ by constructing a surjection from certain tuples of variables (see \eqref{lemma-C-long-tuple} below) to realizations of $\mathsf L_i(S)$, and then bound the enumeration of such tuples. The following proof is similar to the arguments in \cite{MNS18}. We will introduce some definitions first to make our motivations in the following proof clearer. Recall Definition~\ref{def-tangle}. Define $\mathcal T = \mathcal T(\mathsf L_i(S))$ to be the family of minimal multigraphs $T$ such that $\mathsf L_i(S)\setminus T$ is tangle-free. By definition of $\mathcal B_3$, $\mathsf L_i(S)$ has at most $40\aleph^3$ tangles and therefore $|E(T)| \leq 40\aleph^3$ for $T \in \mathcal T(\mathsf L_i(S))$. Given any $T \in \mathcal T$, any vertex $v$ in $\mathsf{Tr}_i$ and any neighbor $u$ of $v$ in $\mathsf{Tr}_i$, we say $u$ is a \emph{short} neighbor of $v$ if it is in a cycle in $\mathsf L_i(S) \setminus T$ with length bounded by $\sqrt{\log n}$; by definition of $\mathcal T$, every vertex has at most two short neighbors. We say $u$ is a \emph{tangled} neighbor of $v$ if it is not a short neighbor of $v$, but it is in a cycle in $\mathsf L_i(S)$ with length bounded by $\sqrt{\log n}$. We say $u$ is a \emph{long} neighbor of $v$ if it is neither a short nor a tangled neighbor of $v$. Define $V^{(i)}_{\geq 3} = \{ v \in V(\mathsf{Tr}_i ) : d_{\mathsf{Tr}_i}(v)\geq 3\}$, which is determined by $\mathsf{Tr}_i$ (here $d_{\mathsf{Tr}_i}(v)$ is the degree of $v$ in $\mathsf{Tr}_i$). Also, for each $\mathsf L_i(S)$ such that $\mathsf L_i(S) =(v_0 ,v_1, \ldots ,v_\ell)$ where $(v_0 ,v_\ell) \in \mathsf P(S)$, and given any $T \in \mathcal T$ we define a list $\mathsf{List}^{(i)}_T(v)$ with $\ell$ lines $\mathsf{List}^{(i)}_T(v)_1 , \ldots , \mathsf{List}^{(i)}_T(v)_\ell$ on each $v \in V^{(i)}_{\geq 3}$ such that the following hold: (i) each line has three possible states: blank, non-short-coming and non-short-going; (ii) the state of $\mathsf{List}^{(i)}_T(v)_j$ is non-short-coming (respectively, non-short-going) if and only if $v_{j-1}$ (respectively, $v_{j+1}$) exists, $v=v_j$ and $v_{j-1}$ (respectively, $v_{j+1}$) is not a short neighbor of $v_j$. Next we prove the following claim: given $\mathsf L_i(S) = (v_0 ,v_1, \ldots ,v_\ell)$, $T \in \mathcal T(\mathsf L_i(S))$ and $v \in V^{(i)}_{\geq 3}$ (which is determined by $\mathsf L_i(S)$), there are at most $\tfrac{\ell}{\sqrt{\log n}} +40\aleph^3$ lines in $\mathsf{List}^{(i)}_T(v)$ with a non-short-coming or non-short-going state. To this end, (by symmetry) it suffices to deal with the non-short-going case. By definition of long and tangled neighbors, every time a step of $\mathsf L_i(S)$ goes to a long neighbor (of some vertex $v$) it takes at least $\sqrt{\log n}$ steps to go back to $v$, and every time a step of $\mathsf L_i(S)$ goes to a tangled neighbor (of some vertex $v$) it either takes at least $\sqrt{\log n}$ steps to go back to $v$, or traverses an edge in $T$ at least once. Therefore, our claim follows from $|E(T)|\leq 40\aleph^3$.

    Now we construct a mapping 
    \begin{align}
        \mathcal{X}: &\Big( \mathsf{Tr}_i,(v_0,v_1,v_\ell), \big(s(v),\mathsf{H}(v)\big)_{v \in V^{(i)}_{\geq 3}}, \big(w_{v,j}:1\leq j \leq \tfrac{2\ell}{\sqrt{\log n}} + 80\aleph^3,v\in V^{(i)}_{\geq 3} \big) \Big) \nonumber\\
        &\longrightarrow (v_0,v_1,\ldots,v_\ell) \,, \label{lemma-C-long-tuple} 
    \end{align}
    where the notations in \ref{lemma-C-long-tuple} are specified as follows: for any $v \in V^{(i)}_{\geq 3}$, $s(v) = \{ v',v'' \}$ where $v',v'' \in \mathsf{Nei}_{\mathsf{Tr}_i}(v)$; $\mathsf{H}(v)$ is a list with $\ell$ lines $\mathsf H(v)_1 ,\ldots,\mathsf H(v)_\ell$ such that each line has three possible states: blank, non-short coming and non-short-going, and at most $\tfrac{\ell}{\sqrt{\log n}} + 40\aleph^3$ lines of $\mathsf{H}(v)$ are in the state of non-short-coming (going); for $v\in V^{(i)}_{\geq 3}$ each $(w_{v, \cdot})$ is a vertex tuple where $w_{v,j} \in V(\mathsf{Tr}_i)$ for $1\leq j \leq \tfrac{2\ell}{\sqrt{\log n}} + 80\aleph^3$. Also, the image $(v_0,v_1,\ldots,v_\ell)$ serves as the non-backtracking path $\mathsf L_i(S)$ we hope for. The mapping is constructed as follows: First, we start from $v_0$ and choose the steps of $\mathsf L_i(S)$ one by one. For the $x$th step starting at $v_{x-1}$, if $x=1$ then $v_1$ is already in the input; else if $d_{\mathsf{Tr}_i}(v_{x-1})\leq 2$, the choice of $v_x$ is unique by the fact that $(v_0 ,v_1 ,\ldots ,v_\ell)$ is non-backtracking; else if the state of the $x$th line of $\mathsf H(v_{x-1})$ is blank, we set $v_x$ the element in $s(v_{x-1})$ such that $v_x \neq v_{x-2}$; else we set $v_x =w_{v_{x-1},y}$ if $x = \mathsf{ind}(y, \mathsf H(v))$, where $\mathsf{ind}(y, \mathsf H(v))$ is the $y$th smallest index $z$ such that $\mathsf H(v)_z$ is not in the blank state. Then we have obtained $(v_0 ,v_1 ,\ldots ,v_\ell) =\mathsf L_i (S)$. 
    \begin{table}[!ht]
    \centering    
    \begin{tabular}{|l|l|l|l|l|l|l|l|l|l|}
    \hline
        \multirow{6}*{
        \begin{tikzpicture}
        \node (A) [above=0.01cm] at (0,0.9){1};
        \node (B) [above=0.01cm, right=0.015cm] at (0.8,0.3){2};
        \node (C) [below=0.15cm, right=0.015cm] at (0.45,-0.68){3};
        \node (D) [below=0.15cm, left=0.001cm] at (-0.45,-0.68){4};
        \node (E) [above=0.01cm, left=0.015cm] at (-0.8,0.3){5};
        \node (F) [above=0.01cm] at (0.7,0.9){6};
        \draw [very thin, black] (0,0.9) -- (0.7,0.9);
        \draw [very thin, black] (0,0.9) -- (0.8,0.3);
        \draw [very thin, black] (0.45,-0.68) -- (0.8,0.3);
        \draw [very thin, red] (0.45,-0.68) -- (0,0.9);
        \draw [very thin, red] (-0.45,-0.68) -- (0,0.9);
        \draw [very thin, black] (0.45,-0.68) -- (-0.45,-0.68);
        \draw [very thin, black] (-0.45,-0.68) -- (-0.8,0.3);
        \draw [very thin, black] (0,0.9) -- (-0.8,0.3);
        \draw [very thin, red] (0.45,-0.68) -- (-0.8,0.3);
        \draw [very thin, red] (-0.45,-0.68) -- (0.8,0.3);
        \draw [very thin, red] (0.8,0.3) -- (-0.8,0.3);
        \end{tikzpicture}
        } & $v$ & $S(v)$ & $H(v)$ & $w_v$ \\ \cline{2-5}
         ~ & 1 & $\{5,2\}$ & NSG at 5th line; NSC at 10th, 13th line & [3,2,6,arbitrary] \\ \cline{2-5}
         ~ & 2 & $\{1,3\}$ & NSC at 8th line; NSG at 11th line & [4,4,arbitrary] \\ \cline{2-5}
         ~ & 3 & $\{2,4\}$ & NSC at 6th line & [5,arbitrary] \\ \cline{2-5}
         ~ & 4 & $\{3,5\}$ & NSC at 9th, 12th line & [1,1,arbitrary] \\ \cline{2-5}
         ~ & 5 & $\{4,1\}$ & NSC at 7th line & [1,2,arbitrary] \\ \hline
    \end{tabular}
    \caption{\noindent An example of a non-backtracking path and its preimage. Left: an example of $\mathsf{Tr}_i$; Right: a preimage of the non-backtracking path $\mathsf{L}_i(S) = (2,3,4,5,1,3,5,2,4,1,2,4,1,6)$, where $V^{(i)}_{\geq 3} = \{ 1,2,3,4,5\}$. We omitted $(v_0,v_\ell) = (2,6)$ and $v_1 = 3$ in the preimage table, and NSC (NSG) in the table refers to non-short-coming (non-short-going). The edges in $\mathsf{Tr}_i$ which are in a possible $T$ are colored red. The lines of $H(v)$'s which are not mentioned are in the blank state. The notation $w_v$ stands for the sequence $\{ w_{v,j} \}_{ 1 \leq j \leq \tfrac{2\ell}{\sqrt{\log n}} + 80\aleph^3  }$, and for any $1 \leq v \leq 5$, an ``arbitrary'' in the end of $w_v$ means an arbitrary sequence of vertices with appropriate length.}
    \label{table:NBP-bijection}
\end{table}
    
    Then we claim that given any $\mathsf{Tr}_i$ and $(v_0 ,v_\ell)$, $\mathcal X$ is a surjection to all non-backtracking paths $\mathsf L_i(S)$ with at most $40\aleph^3$ tangles and with $\widetilde{\mathsf L_i(S)} = \mathsf{Tr}_i$. It suffices to show that for each such $\mathsf L_i(S)$ we can construct a preimage of $\mathcal X$ with image $\mathsf L_i(S)$. To this end, consider an instance $\mathsf L_i (S) = (v_0 ,v_1 ,\ldots ,v_\ell)$ with trace $\mathsf{Tr}_i$, and an arbitrary $T \in \mathcal T(\mathsf L_i(S))$ with $|T| \leq 40\aleph^3$. Then for any $v\in V^{(i)}_{\geq 3}$, we determine the corresponding variables in our preimage by
    \begin{align*}
        s(v) &= \{ u \in \mathsf{Nei}_{\mathsf{Tr}_i}(v):u\text{ is a short neighbor of }v\text{ with respect to }T\}\,,\\
        \mathsf H(v) &= \mathsf{List}^{(i)}_T (v)\,,\, w_{v,y}=v_{\mathsf{ind}(y, \mathsf H(v))}\,.
    \end{align*}
    Now we argue that $\mathcal X$ indeed maps
    \begin{align*}
    \Big(\mathsf{Tr}_i ,(v_0 ,v_\ell ) ,v_1 , \big(s(v),\mathsf{H}(v)\big)_{v \in V^{(i)}_{\geq 3}} , \big(w_{v,j} :1\leq j \leq \tfrac{2\ell}{\sqrt{\log n}} + 80\aleph^3 ,v\in V^{(i)}_{\geq 3} \big)\Big)
    \end{align*}
    to $(v_0,v_1,\ldots,v_\ell) =\mathsf L_i (S)$. Otherwise, suppose that the image is $(v_0',v_1',\ldots,v_\ell') \neq (v_0,v_1,\ldots,v_\ell)$ and there exists a minimal $x$ such that $v_x' \neq v_x$. Then by definition of our map we have $x\geq 2$; we also have $d_{\mathsf{Tr}_i}(v_{x-1}) \geq 3$ since both $(v_0',v_1',\ldots,v_\ell')$ and $(v_0,v_1,\ldots,v_\ell)$ are non-backtracking, making it impossible for $v_{x-2} = v_{x-2}'$, $v_{x-1} = v_{x-1}'$ and $v_x \neq v_x'$ when $d_{\mathsf{Tr}_i}(v_{x-1}) \leq 2$. If the state of the $x$th line of $\mathsf H(v_{x-1}) = \mathsf{List}^{(i)}_T (v_{x-1})$ is blank, then by definition of $\mathsf{List}^{(i)}_T$ we have that both $v_{x-2}$ and $v_{x}$ are short neighbors of $v_{x-1}$, and since $(v_0 ,v_1 ,\ldots ,v_\ell)$ is non-backtracking we have $s(v_{x-1}) = \{ v_{x-2} ,v_{x}\}$. By the definition of $\mathcal X$, we have $v_x' \in s(v_{x-1}') = s(v_{x-1})$ and $v_x' \neq v_{x-2}' = v_{x-2}$, which gives $v_x' = v_x$, contradicting $v_x' \neq v_x$. Therefore, the state of the $x$th line of $\mathsf H(v_{x-1}) = \mathsf{List}^{(i)}_T (v_{x-1})$ is non-short-coming or non-short-going. However, in both cases by definition of $\mathcal X$ we have $v'_{x} = w_{v,y} = v_{\mathsf{ind}(y, \mathsf H(v))}$ where $\mathsf{ind}(y, \mathsf H(v)) =x$, which is a contradiction. Therefore, we have shown that $\mathcal X$ is a surjection, which gives that given $\mathsf{Tr}_i$ and $(v_0 ,v_\ell)$, the enumeration of non-backtracking paths $\mathsf L_i(S)$ with at most $40\aleph^3$ tangles and with $\widetilde{\mathsf L_i(S)} = \mathsf{Tr}_i$ is bounded by the enumeration of \eqref{lemma-C-long-tuple}.
    
    Finally, we bound the enumeration of \eqref{lemma-C-long-tuple} given $\mathsf{Tr}_i$ and $(v_0 ,v_\ell)$. Since $|V(\mathsf{Tr}_i)| \leq |E(\mathsf{Tr}_i)| +1 \leq \ell +1$, the enumeration of $v_1$ and $\big(w_{v,j} :1\leq j \leq \tfrac{2\ell}{\sqrt{\log n}} + 80\aleph^3 ,v\in V^{(i)}_{\geq 3} \big)$ is bounded by $(\ell +1)^{1+(2\ell/\sqrt{\log n}+80\aleph^3)|V^{(i)}_{\geq 3}|}$. Then, since for each $v\in V_{\geq 3}^{(i)}$ there are at most $\tfrac{2\ell}{\sqrt{\log n}}+80\aleph^3$ lines with states that are not blank, the enumeration of $\big(\mathsf{H}(v)\big)_{v \in V^{(i)}_{\geq 3}}$ is bounded by $(\ell +1)^{(2\ell/\sqrt{\log n}+80\aleph^3)|V^{(i)}_{\geq 3}|}$. Also, by a direct upper bound the enumeration of $\big(s(v)\big)_{v \in V^{(i)}_{\geq 3}}$ is bounded by $(2\ell +2)^{2|V^{(i)}_{\geq 3}|}$. Therefore, by the fact that $|V^{(i)}_{\geq 3}| \leq \tau(\mathsf{L}_i(S)) \leq \tau (S) + 1 \leq \tau(J) + \mathtt z$ we obtain that given $\mathsf{Tr}_i$ and $(v_0 ,v_\ell)$ the enumeration of \eqref{lemma-C-long-tuple} is bounded by
    \begin{align*}
        (2\ell+2)^{ 1+(4\ell/\sqrt{\log n}+2+160\aleph^3)(\tau(J)+\mathtt z) } \leq e^{ \sqrt{\log n} (\log\log n)^3 (\tau(J)+\mathtt z) } \,.
    \end{align*}
    Thus, combining all the enumerations above, the enumeration of $S$ is bounded by 
    \begin{align*}
        \binom{|E(J)|}{\aleph}\cdot\aleph^{2\iota\aleph}\cdot 2^{\iota \aleph \cdot 5 (\tau (J)+1)}\cdot e^{ \aleph\sqrt{\log n} (\log \log n)^3 (\tau(J)+\mathtt z) }\,,
    \end{align*}
    and the desired result follows from $\aleph = o(\log \log n)$ and $e^{ \sqrt{\log n} (\log \log n)^4 \mathtt z } = n^{o(1)}$.
\end{proof}

Now we see that
\begin{align}
    \eqref{eq-f-B-3-relax-2} \leq\ & \frac{ n^{4\iota\aleph-2+o(1)} }{ (\epsilon^2 \lambda s)^{4\ell\iota\aleph} } \sum_{ \mathbf G \text{ with } \eqref{eq-characerize-G-cup-appendix-pre},\eqref{eq-characerize-G-cup-appendix} } \big( \tfrac{\epsilon^2\lambda s}{n} \big)^{|E(\mathbf G)|} \sum_{G \cong \mathbf G}  \binom{|E(G)|}{\aleph}^4 \aleph^{8\iota\aleph} e^{ 4\sqrt{\log n} (\log \log n)^4 \tau(G) } \nonumber \\
    =\ & \frac{ n^{4\iota\aleph-2+o(1)} }{ (\epsilon^2 \lambda s)^{4\ell\iota\aleph} } \sum_{ \mathbf G \text{ with } \eqref{eq-characerize-G-cup-appendix-pre},\eqref{eq-characerize-G-cup-appendix} } \big( \tfrac{\epsilon^2\lambda s}{n} \big)^{|E(\mathbf G)|} e^{ 4\sqrt{\log n} (\log \log n)^4 \tau(\mathbf G) } \#\{ G \subset \mathsf K_n : G \cong \mathbf G \} \nonumber \\
    \leq\ & n^{4\iota\aleph-2+o(1)} \sum_{ \substack{0\leq x \leq 4\iota\ell\aleph+4\aleph\\ \frac{2x-8\aleph-4\ell\iota\aleph-\ell}{\ell}\leq y \leq 2\aleph^2} } (\epsilon^2\lambda s)^{-(4\ell\iota\aleph+4\aleph-x)} n^{-x} e^{ 4\sqrt{\log n} (\log \log n)^4 y} n^{-y+x}  \nonumber \\
    &\cdot \#\{ \mathbf G: \mathbf G \text{ with } \eqref{eq-characerize-G-cup-appendix-pre},\eqref{eq-characerize-G-cup-appendix} , |E(\mathbf G)| = x, \tau(\mathbf G) = y \} \nonumber \\
    \leq\ & n^{4\iota\aleph-2+o(1)} \sum_{0\leq x \leq 4\iota\ell\aleph+4\aleph} n^{-2\tfrac{4\ell\iota\aleph+4\aleph-x}{l}} \big(\tfrac{n}{(\ell\aleph )^2}\big)^{-\tfrac{2x-8\aleph-4\ell\iota\aleph-\ell}{\ell}} \nonumber \\
    =\ & n^{-1+o(1)} \,,
\end{align}
where in the first inequality we apply Lemma~\ref{lem-enu-decorated-trees-with-few-tangles}, the third inequality holds by \eqref{eq-characerize-G-cup-appendix} and $(\epsilon^2 \lambda s)^{4\aleph} = n^{o(1)}$, and the last inequality holds by $e^{ 4\sqrt{\log n} (\log \log n)^4 y} \leq e^{ 8\sqrt{\log n} (\log \log n)^4 \aleph^2} = n^{o(1)}$ and Lemma~\ref{lem-enu-union-of-decorated-trees}. This leads to the desired result.

\bibliographystyle{alpha}
\small

\end{document}